\tikzstyle{overbrace text style}=[font=\tiny, above, pos=.5, yshift=5pt]
\tikzstyle{overbrace style}=[decorate,decoration={brace,raise=5pt,amplitude=3pt}]
\newsavebox{\fmbox}
\definecolor{darkgreen}{rgb}{0,0.7,0}
\newcommand{\kibitz}[2]{\ifnum\Comments=1\textcolor{#1}{#2}\fi}
\newtheorem{theorem}{Theorem}[section]
\newtheorem{definition}{Definition}
\newtheorem{corollary}[theorem]{Corollary}
\newtheorem{observation}[theorem]{Observation}
\newtheorem{lemma}[theorem]{Lemma}
\newtheorem{proposition}[theorem]{Proposition}
\newcommand{\rset}{\mbox{{\normalfont I\hspace{-.4ex}R}}}
\newcommand{\np}{\mbox{{\normalfont NP}}}
\newcommand{\ppad}{\mbox{{\normalfont PPAD}}}
\newcommand{\ppa}{\mbox{{\normalfont PPA}}}
\newcommand{\pls}{\mbox{{\normalfont PLS}}}
\newcommand{\cls}{\mbox{{\normalfont CLS}}}
\newcommand{\tfnp}{\mbox{{\normalfont TFNP}}}
\newcommand{\leaf}{\mbox{{\sc Leaf}}}
\newcommand{\true}{\mbox{{\sc true}}}
\newcommand{\false}{\mbox{{\sc false}}}
\newcommand{\hide}[1]{}
\newcommand{\xb}{{\bf x}}
\newcommand{\yb}{{\bf y}}
\newcommand{\zb}{{\bf z}}
\newcommand{\eb}{{\bf e}}
\newcommand{\lplus}{\ensuremath{A_+}}
\newcommand{\lminus}{\ensuremath{A_-}}
\newcommand{\sensor}{\ensuremath{{\cal S}}}
\newcommand{\delt}{\ensuremath{\delta^{\rm tiny}}}
\newcommand{\delnt}{\ensuremath{\tilde{\delta}^{\rm tiny}}}
\newcommand{\pmeg}{\ensuremath{p^{\rm large}}}
\newcommand{\pgig}{\ensuremath{p^{\rm huge}}}
\def\tucker{{\sc Tucker}}
\def\twodtucker{2D-{\sc Tucker}}
\def\ch{{\sc Consensus-halving}}
\def\ns{{\sc Necklace-splitting}}
\def\vhdt{{\sc Variant high-D Tucker}}
\def\nvhdt{{\sc New variant high-D Tucker}}
\def\dhs{{\sc Discrete Ham Sandwich}}
\newcommand{\y}{100}
\newcommand{\x}{11$N$+1}
\renewcommand{\epsilon}{\varepsilon}
\newlength{\boxwidth}
\DeclareRobustCommand{\qed}{%
  \ifmmode 
  \else \leavevmode\unskip\penalty9999 \hbox{}\nobreak\hfill
  \fi
  \quad\hbox{\qedsymbol}}
\newcommand{\openbox}{\leavevmode
  \hbox to.77778em{%
  \hfil\vrule \vbox to.675em{\hrule width.6em\vfil\hrule}%
  \vrule\hfil}}
\newcommand{\qedsymbol}{\openbox}
\newenvironment{proof}[1][\proofname]{\par \normalfont
  \topsep6\p@\@plus6\p@ \trivlist
  \item[\hskip\labelsep\bfseries\itshape #1.]\ignorespaces }{%
  \qed\endtrivlist }
\newcommand{\proofname}{Proof}
\author[1]{Aris Filos-Ratsikas\thanks{Most of this work was performed when the author was at the University of Oxford.}}
\author[2]{Paul W. Goldberg}
\affil[1]{Department of Computer Science, 	{\'E}cole polytechnique f{\'e}d{\'e}rale de Lausanne {\tt aris.filosratsikas@epfl.ch}}
\affil[2]{Department of Computer Science, University of Oxford \ \ \ \ \ \ \ \  
	{\tt Paul.Goldberg@cs.ox.ac.uk}}
\date{%
	\today
}
\begin{document}

\title{The Complexity of Splitting Necklaces and Bisecting Ham Sandwiches}

\maketitle

\begin{abstract}
\noindent
We resolve the computational complexity of two problems known as \ns\ and
\dhs\, showing that they are PPA-complete.
For \ns, this result is specific to the important special
case in which two thieves share the necklace.
We do this via a PPA-completeness result for an approximate version of the \ch\
problem, strengthening our recent result that the problem is PPA-complete
for inverse-exponential precision. At the heart of our construction is
a smooth embedding of the high-dimensional M\"{o}bius strip in the \ch\ problem.
These results settle the status of PPA as a class that captures the complexity
of ``natural'' problems whose definitions do not incorporate a circuit.
\end{abstract}

\begin{paragraph}{Keywords:}
Computational complexity; Tucker's Lemma; TFNP; fair division
\end{paragraph}
\section{Introduction}\label{sec:intro}

The complexity classes \ppa\ and \ppad\ were introduced in a seminal paper of Papadimitriou \cite{Pap} in 1994,
in an attempt to classify several natural problems in the class \tfnp\ \cite{MP91}.
\tfnp\ is the class of \emph{total search problems} in \np\ for which a solution
exists for every instance, and solutions can be efficiently verified.
Various important problems were subsequently proven to be complete for the class \ppad, such as the
complexity of many versions of Nash equilibrium
\cite{DGP09,CDT,EGG06,Mehta14,R15,CDO15}, market equilibrium
computation \cite{CSVY08,CDDT09,VY11,CPY13,SSB}, and others \cite{DQS09,KPRST13}.
As evidence of computational hardness,
\ppa-completeness is stronger than \ppad-completeness, i.e., $\ppad\subseteq\ppa$.
Indeed, Je{\v r}{\'a}bek~\cite{J12} shows that it indicates cryptographic
hardness in a strong sense: \cite{J12} gives a randomised reduction from FACTORING to \ppa-complete problems.
This is not known for \ppad-complete problems.
For more details, and the significance of \ppa-completeness, we refer
the reader to the related discussion in~\cite{FG17}.
\ppa\ is the class of problems reducible to \leaf\ (Definition~\ref{def:leaf}), and a \ppa-complete
problem is polynomial-time equivalent to \leaf.
\begin{definition}\label{def:leaf}
An instance of the problem \leaf\ consists of an undirected graph $G$ whose vertices
have degree at most 2; $G$ has $2^n$ vertices represented by bitstrings of length $n$;
$G$ is presented concisely via a circuit that takes as input a vertex and outputs
its neighbour(s). We stipulate that vertex $0^n$ has degree 1. The challenge is
to find some other vertex having degree 1.
\end{definition}

Complete problems for the class \ppa\ seemed to be much more elusive than \ppad-complete ones,
especially when one is interested in ``natural'' problems, where ``natural'' here has the very
specific meaning of problems that do not explicitly contain a circuit in their definition.
Besides Papadimitriou~\cite{Pap}, other papers asking about the possible existence of
natural \ppa-complete problems include \cite{Grigni01,CD09,DGP09,DEFKQX}.
In a recent precursor~\cite{FG17} to the present paper we identified the first example of such a problem,
namely the approximate \ch\ problem,
dispelling the suspicion that such problems might not exist.
In this paper we build on that result and settle the complexity of two natural and important problems whose complexity
status were raised explicitly as open problems in Papadimitriou's paper itself, and in many other papers
beginning in the 1980s.
Specifically, we prove that \ns\ (with two thieves, see Definition~\ref{def:ns})
and \dhs\ are both \ppa-complete.

\begin{definition}[Necklace Splitting]\label{def:ns}
In the $k$-\ns\ problem there is an open necklace with $ka_i$ beads of colour $i$, for $1\leq i\leq n$.
An ``open necklace'' means that the beads form a string, not a cycle.
The task is to cut the necklace in $(k-1)\cdot n$ places and partition the resulting
substrings into $k$ collections, each containing precisely $a_i$ beads of colour $i$, $1\leq i\leq n$.
\end{definition}
In Definition~\ref{def:ns},
$k$ is thought of as the number of thieves who desire to split the necklace in such
a way that the beads of each colour are equally shared.
In this paper, usually we have $k=2$ and we refer to this special case as \ns.

\begin{definition}[Discrete Ham Sandwich]\label{def:dhs}
In the \dhs\ problem, there are $n$ sets of points in $n$ dimensions
having integer coordinates (equivalently one could use rationals).
A solution consists of a hyperplane that splits each set of points into
subsets of equal size (if any points lie on the plane, we are allowed to place
them on either side, or even split them arbitrarily).
\end{definition}
In Definition~\ref{def:dhs}, each point set represents an ingredient of the sandwich,
which is to be cut by a hyperplane in such a way that all ingredients are equally split.

The \emph{necklace-splitting problem}  was introduced in a 1982
paper of Bhatt and Leiserson~(\cite{BL82}, Section 5), where it arose in the
context of VLSI circuit design (the version defined in \cite{BL82} is the 2-thief
case proved \ppa-complete in the present paper).
In 1985 and 1986, the 2-thief case was shown to have guaranteed solutions
(as defined in Definition~\ref{def:ns})
by Goldberg and West~\cite{GW85} and Alon and West~\cite{AW86} and then in 1987, Alon \cite{Alon87} proved 
existence of solutions for $k$ thieves as well.
Early papers that explicitly raise its complexity-theoretic status as
an open problem are Goldberg and West~\cite{GW85} and Alon~\cite{Alon88,Alon90}.
Subsequently, the necklace-splitting problem was found to be closely related to
``paint-shop scheduling'', a line of work in which several papers such as \cite{Meunier08,MS09,MN12}
explicitly mention the question of the computational complexity of necklace-splitting.
Meunier~\cite{Meunier08} notes that the search for a {\em minimum} number of cuts admitting a fair division
(which may be smaller than the number $(k-1)n$ that is guaranteed to suffice)
is NP-hard, even for a subclass of instances of the 2-thief case.
(That is a result of Bonsma et al.~\cite{BEH06}, for the ``paint shop problem
with words'', equivalent to 2-thief \ns\ with 2 beads of each colour.)

In \cite{FG17}, we showed \ns\ to be computationally equivalent to
$\varepsilon$-\ch\ for inverse-polynomial precision parameter $\varepsilon$, but
the \ppa-completeness of $\varepsilon$-\ch\ was only shown for inverse-exponential $\varepsilon$.
\cite{FG17} established \ppad-hardness of \ns, applying the main result of \cite{FFGZ18}.
In this paper, we prove that
$\varepsilon$-\ch\ is \ppa-complete for $\varepsilon$ inversely polynomial,
thus obtaining the desired \ppa-completeness of \ns. While some structural parts 
of our reduction are extensions of those presented
in \cite{FG17}, obtaining the result for inverse-polynomial precision
is much more challenging,
as the construction needs to move to a high-dimensional space (rather than the
two-dimensional space which is sufficient for the result in \cite{FG17}). We highlight
the main new techniques that we have developed in this paper in Section \ref{sec:overview},
where we provide an overview of the reduction.
Our \ppa-completeness result gives a convincing negative answer to Meunier
and Neveu's questions~\cite{MN12} about possible polynomial-time solvability
or membership of \ppad\ for \ns; likewise it runs counter to Alon's cautious optimism
at ICM 1990~(\cite{Alon90}, Section 4) that the problem may be solvable in polynomial time.


The \emph{Ham Sandwich Theorem} \cite{ST42} is of enduring and widespread interest
due to its colourful and intuitive statement, and its relevance and applications in
topology, social choice theory, and computational geometry.
Roughly, it states that given $d$ measures in Euclidean $d$-space, there exists a
hyperplane that cuts them all simultaneously in half.
Early work on variants and applications of the theorem focused on non-constructive
existence proofs and mostly did not touch on the algorithmics.
A 1983 paper by Hill~\cite{Hill-amm83} hints at possible interest in the
corresponding computational challenge, in the context of a related land division problem.
The computational problem (and its complexity) was first properly studied
in a line of work in computational geometry beginning in the 1980s,
for example~\cite{EW86,LMS-stoc92,LMS-dcg94,M94}.
The problem envisages input data consisting of $d$ sets of $n$ points in Euclidean
$d$-space, and asks for a hyperplane that splits all point sets in half.
(The problem \dhs~(Definition~\ref{def:dhs}) as named in \cite{Pap} is essentially this,
with $d$ set equal to $n$ to emphasise that we care about the high-dimensional case.)
In this work in computational geometry, the emphasis has been on efficient algorithms
for small values of $d$; Lo et al.~\cite{LMS-dcg94} improve the dependence on $d$ but it is still exponential,
and the present paper shows for the first time that we should {\em not} expect to improve
on that exponential dependence.
More recently, Grandoni et al.~\cite{GRSZ14} apply the ``Generalized Ham
Sandwich Theorem'' to a problem in multi-objective optimisation and note that
a constructive proof would allow a more efficient algorithm to emerge.
The only computational hardness result we know of is Knauer et al.~\cite{KTW11}
who obtain a $W[1]$-hardness result for a constrained version of the problem;
\cite{KTW11} points out the importance of the computational complexity of the general problem.
The \ppa-completeness result of the present paper is the first hardness result
{\em of any kind} for \dhs, and as we noted, is a strong notion of computational
hardness. Karpic and Saha~\cite{KS17} showing a form of equivalence between the Ham Sandwich
Theorem and Borsuk-Ulam, explicitly mention the possible \ppa-completeness of
\dhs\ as an ``interesting and challenging open problem''.

We prove the \ppa-completeness of \dhs\ via a simple reduction from \ns.
Ours is not the first paper to develop the close relationship between the two problems:
Blagojevi\'{c} and Sober\'{o}n~\cite{BS17} shows a generalisation,
where multiple agents may share a ``sandwich'', dividing it into convex pieces.
Further papers to explicitly point out their computational complexity as open problems include
Deng et al.~\cite{DFK} (mentioning that both problems ``show promise to be complete for \ppa''),
Aisenberg et al.~\cite{ABB}, and Belovs et al.~\cite{BIQSY17}.\\

\noindent \textbf{Further Related Work:} The class \tfnp\ was defined in \cite{MP91} and several of its subclasses were studied over the years, such as PPA, PPAD and PPP \cite{Pap}, \pls\ \cite{JPY} and \cls\ \cite{DP11}; here we focus on the most recent results. As we mentioned earlier, in \cite{FG17} we identified the first natural complete problem for \ppa, the approximate \ch\ problem. In a recent paper, Sotiraki et al. \cite{SZZ18} identified the first natural problem for the class PPP, the class of problems whose totality is established by an argument based on the pigeonhole principle. For the class CLS, both Daskalakis et al. \cite{DTZ18} and Fearnley et al. \cite{FGMS18} identified complete problems (two versions of the Contraction Map problem, where a metric or a meta-metric are given as part of the input). In the latter paper, the authors define a new class, namely EOPL (for ``End of Potential Line''), and show that it is a subclass of CLS. Furthermore, they show that two well-known problems in CLS, the P-Matrix Linear Complementarity Problem (P-LCP), and finding a fixpoint of a piecewise-linear contraction map (\textsc{PL-Contraction}) belong to the class. The \textsc{End of Potential Line} problem of \cite{FGMS18} is closely related to the \textsc{End of Metered Line} of \cite{HY17}.

%

\section{Problems and Results}\label{sec:probres}

We present and discuss our main results, and in Section~\ref{sec:overview}
we give an overview of the proof and new techniques, in particular with respect
to the precursors~\cite{FFGZ18,FG17} to this paper.

\begin{definition}[$\varepsilon$-Consensus Halving \cite{SS03,FG17}]\label{def:ch}
An instance $I_{CH}$ incorporates, for $1\leq i\leq n$, a non-negative measure $\mu_i$
of a finite line interval $A=[0,x]$, where each $\mu_i$ integrates to 1 and $x>0$ is part of the input.
We assume that $\mu_i$ are step functions represented in a standard
way, in terms of the endpoints of intervals where $\mu_i$ is constant,
and the value taken in each such interval.
We use the bit model (logarithmic cost model) of numbers.
$I_{CH}$ specifies a value $\varepsilon\geq 0$ also using the bit model.
We regard $\mu_i$ as the value function held by agent $i$ for subintervals of $A$.

A solution consists firstly of a set of $n$ {\em cut points} in $A$ (also given in
the bit model of numbers).
These points partition $A$ into (at most) $n+1$ subintervals, and the second
element of a solution is that each subinterval is labelled $\lplus$ or $\lminus$.
This labelling is a correct solution provided that for each $i$,
$|\mu_i(\lplus)-\mu_i(\lminus)|\leq\varepsilon$, i.e.\ each agent has a value
in the range $[\frac{1}{2}-\frac{\varepsilon}{2},\frac{1}{2}+\frac{\varepsilon}{2}]$ for the
subintervals labelled $\lplus$ (hence also values the subintervals
labelled $\lminus$ in that range).
\end{definition}

We assume without loss of generality that in a valid solution, labels $\lplus$ and $\lminus$ alternate.
We also assume that the alternating label sequence begins with label
$\lplus$ on the left-hand side of $A$ (i.e. $\lplus$ denotes the leftmost label in a \ch\ solution).

The \ch\ problem of Definition~\ref{def:ch} is a computational version of the {\em Hobby-Rice theorem}~\cite{HR65}.
Most of the present paper is devoted to proving the following theorem.

\begin{theorem}~\label{thm:main}
$\varepsilon$-\ch\ is \ppa-complete for some inverse-polynomial $\varepsilon$. 
\end{theorem}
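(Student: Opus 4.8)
The plan is to prove the two inclusions separately, and almost all of the work is in the hardness direction. For membership, $\varepsilon$-\ch\ with inverse-polynomial $\varepsilon$ is a \emph{coarser} search problem than the inverse-exponential version already placed in \ppa\ in \cite{FG17}: every inverse-exponential solution is in particular an inverse-polynomial one, so I would simply invoke the membership reduction of \cite{FG17} to \leaf\ (which in turn rests on the Hobby--Rice/Borsuk--Ulam reading of a solution, interpreting interval lengths as squared coordinates of a point on a sphere and the $\lplus/\lminus$ pattern as its signs) and observe that any solution it returns is valid here. So $\varepsilon$-\ch\in\ppa.

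For \ppa-hardness I would reduce from a high-dimensional and deliberately \emph{robust} variant of Tucker's lemma — call it \nvhdt\ — which I would first prove \ppa-complete by a reduction from a standard \ppa-complete problem such as \leaf\ (or \ot). The reason for introducing a new variant, rather than reducing from ordinary \tucker, is that the cells and the antipodal boundary identification must be laid out so that they survive an encoding into \ch\ that is accurate only to within $1/\mathrm{poly}$. The heart of the construction is an embedding $\Phi$ that sends a configuration of $n$ cut points of $A=[0,x]$, together with its $\lplus/\lminus$ labelling, to a point of the domain of the \nvhdt\ instance: I would carve $A$ into one ``coordinate block'' per dimension, let the position of the cut inside a block encode one real coordinate and the surrounding label pattern encode its sign, and then build ``circuit-simulating'' agents — step-function measures assembled from generalized-circuit ($\egc$-style) arithmetic/boolean gadgets — whose consensus-halving constraints force the encoded point to satisfy the combinatorial conditions defining a solution of the Tucker instance, i.e.\ to sit on a fully-labelled cell. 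The $\mathbb{Z}_2$/antipodal structure underlying \bu\ and \tucker\ is then matched by a reflection of $A$ composed with a global flip $\lplus\leftrightarrow\lminus$; realising this identification uniformly and consistently across all coordinate blocks is precisely the ``smooth embedding of the high-dimensional M\"obius strip'' advertised in the abstract.

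Two things then have to be verified. First, correctness of the decoding in both directions: every solution of the \nvhdt\ instance should pull back under $\Phi$ to an exact \ch\ solution, and, conversely, every $\varepsilon$-\ch\ solution should decode to a point within $O(\mathrm{poly}\cdot\varepsilon)$ of a genuine fully-labelled cell, which — because \nvhdt\ is designed with cells of inverse-polynomial ``width'' — can be rounded to an exact solution in polynomial time. Second, and this is where I expect the real difficulty to lie, $\Phi$ must be \emph{smooth} in a quantitative sense: both $\Phi$ and its partial inverse must be Lipschitz with polynomially bounded constants, so that the $1/\mathrm{poly}$ slack permitted in the \ch\ solution translates into only $1/\mathrm{poly}$ error at the Tucker level. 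In the two-dimensional construction of \cite{FG17} the coordinate-encoding gadgets had to shrink geometrically with the number of encoded bits, which is exactly why only $2^{-\mathrm{poly}}$ precision could be tolerated there; spreading the encoding over $\mathrm{poly}$ many dimensions keeps every agent's ``budget'' of measure a constant fraction of the quantity it is responsible for, but arranging the coordinate blocks, the circuit gadgets, and the M\"obius-type antipodal gluing to be \emph{simultaneously} robust, mutually compatible, and cleanly analysable is the main obstacle and takes up the bulk of the paper.

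Combining the two directions yields \ppa-completeness of $\varepsilon$-\ch\ for the inverse-polynomial $\varepsilon$ produced by the reduction, and (as developed in the remainder of the paper) this then transfers to \ns\ and \dhs.
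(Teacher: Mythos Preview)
Your high-level strategy matches the paper: membership is inherited from \cite{FG17}, hardness goes via a robust high-dimensional Tucker variant (\nvhdt) reduced to $\varepsilon$-\ch\ through a coordinate-encoding scheme plus circuit-simulating agents, with the M\"obius/antipodal structure built into the encoding. Where your sketch departs from the paper --- and where the real work lies --- is in the encoding mechanism itself.

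You propose to ``carve $A$ into one coordinate block per dimension'' with one cut per block encoding one real coordinate and its sign read off from the local label pattern, and to realise antipodality by ``a reflection of $A$ composed with a global label flip''. The paper does something structurally different: there is a \emph{single} coordinate-encoding region $[0,n]$ in which (up to) $n$ cuts jointly encode a point of the $n$-simplex via the $n{+}1$ interval lengths they create. The M\"obius identification $(0,x_2,\ldots,x_{n+1})\equiv(x_1,\ldots,x_n,0)$ then arises \emph{intrinsically}: removing a cut at the left end and re-inserting it at the right end yields the same partition of the c-e region, but the simplex coordinates cycle by one, and (via the reference-sensor mechanism of Definition~\ref{def:ref}) the effective output flips. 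This is what makes the antipodal gluing continuous rather than an ad hoc reflection; a block-per-coordinate scheme with a global reflection does not obviously support such a smooth wrap-around, and reading signs from local label patterns creates its own parity issues once cuts are allowed to drift across block boundaries.

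Several further ingredients you do not mention are load-bearing. First, \nvhdt\ is obtained from \twodtucker\ (not from \leaf\ or \ot\ directly) by an iterated \emph{snake embedding} that trades exponential grid resolution for $n$ dimensions with constant side length, while also imposing a facet-colouring constraint exploited later. Second, the ``smoothness'' you correctly anticipate as the crux is achieved by a specific linear coordinate transformation on the M\"obius-simplex (Section~\ref{sec:coord-sys}): it is designed so that antipodal points have exactly negated transformed coordinates, and so that distances before and after transformation are polynomially related (Proposition~\ref{prop:poly}). Third, \emph{blanket-sensor} agents force any solution to lie in a thin ``Significant Region'' around the axis (Proposition~\ref{prop:s-r-radius}); the colouring of that region outside the embedded cube --- the ``twisted tunnel'' and the outer colour-regions --- must then be engineered so that no spurious approximate-zeroes appear at its boundary, which is the most delicate analysis (Proposition~\ref{prop:nbbsr}). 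Fourth, robustness is achieved not by one circuit but by averaging over $p^C(n)$ slightly shifted copies, so that the at most $n$ encoders receiving unreliable input (or hit by a stray cut) are outvoted.

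In short: the architecture is right, but the block-per-dimension encoding with a global reflection is not the mechanism the paper uses, and it is not clear that scheme would deliver the required continuity at the identified facets; the M\"obius-simplex encoding together with the coordinate transformation of Section~\ref{sec:coord-sys} is precisely the new idea that makes inverse-polynomial $\varepsilon$ attainable.
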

As mentioned in the introduction, in \cite{FG17} it was proven that 2-thief
\ns\ and $\varepsilon$-\ch\, for $\varepsilon$ inversely-polynomial
are computationally equivalent, i.e. they reduce to each other in polynomial time.
Therefore, by \cite{FG17} and the ``in \ppa'' result proven in Section~\ref{sec:hsinppa},
we immediately get the following corollary.

\begin{theorem}\label{thm:necklace}
\ns\ is \ppa-complete when there are $k=2$ thieves.
\end{theorem}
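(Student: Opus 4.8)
The plan is to obtain Theorem~\ref{thm:necklace} as an essentially immediate consequence of Theorem~\ref{thm:main} together with the computational equivalence between 2-thief \ns\ and inverse-polynomial $\varepsilon$-\ch\ established in~\cite{FG17}. There are two directions to assemble: \ppa-hardness of \ns, and membership of \ns\ in \ppa.

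For \ppa-hardness, I would invoke Theorem~\ref{thm:main}, which supplies some inverse-polynomial $\varepsilon_0$ for which $\varepsilon_0$-\ch\ is \ppa-hard. The reduction of~\cite{FG17} from $\varepsilon$-\ch\ (for $\varepsilon$ inverse-polynomial) to 2-thief \ns\ runs in polynomial time and comes with a polynomial-time map translating \ns\ solutions back to $\varepsilon$-\ch\ solutions; composing it with the \ppa-hard family of $\varepsilon_0$-\ch\ instances then yields \ppa-hardness of \ns. The one point I would state explicitly rather than gloss over is that the particular inverse polynomial $\varepsilon_0$ produced by Theorem~\ref{thm:main} lies in the range of precision parameters for which the~\cite{FG17} reduction is valid; since that reduction is proved for an \emph{arbitrary} inverse-polynomial precision, this holds, but it is worth noting because it is exactly the improvement over the inverse-exponential precision of the precursor that makes the argument go through.

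For membership, I would use the reverse reduction from~\cite{FG17}, from 2-thief \ns\ to $\varepsilon$-\ch\ for a suitable inverse-polynomial $\varepsilon$, and then appeal to the fact that $\varepsilon$-\ch\ is in \ppa --- the ``in \ppa'' half of Theorem~\ref{thm:main}, also addressed in Section~\ref{sec:hsinppa}. (One could instead give a direct membership proof for \ns\ by encoding a necklace into a \leaf\ instance along the Borsuk--Ulam / Tucker's-lemma route that underlies the guaranteed-existence proofs of~\cite{GW85,AW86}, but routing through \ch\ is cleaner given what is already available in the paper.) Putting the two directions together shows that 2-thief \ns\ is polynomial-time equivalent to \leaf, i.e.\ \ppa-complete.

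The only genuine obstacle sits upstream, in Theorem~\ref{thm:main}: pushing \ppa-completeness of $\varepsilon$-\ch\ down from inverse-exponential to inverse-polynomial $\varepsilon$, which is the technical heart of the paper (the smooth high-dimensional M\"obius-strip embedding) and which I am permitted to assume here. Conditioned on that, Theorem~\ref{thm:necklace} is bookkeeping: check that the precision parameters of Theorem~\ref{thm:main} match what~\cite{FG17} requires, and that the~\cite{FG17} reductions are polynomial-time in both the instance size and the solution size and preserve totality in both directions.
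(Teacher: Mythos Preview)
Your proposal is correct and matches the paper's approach: both treat Theorem~\ref{thm:necklace} as an immediate corollary of Theorem~\ref{thm:main} together with the \cite{FG17} polynomial-time equivalence between 2-thief \ns\ and inverse-polynomial $\varepsilon$-\ch. One small inaccuracy: Section~\ref{sec:hsinppa} establishes that \dhs\ (not $\varepsilon$-\ch) is in \ppa, and the paper's membership argument for \ns\ actually routes through \dhs\ via the moment-curve reduction rather than through $\varepsilon$-\ch; your route through $\varepsilon$-\ch\ (whose \ppa-membership is from \cite{FG17}) is equally valid, just not what that section contains.
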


If we knew that $k$-\ns\ belonged to \ppa\ for other values of $k$, we could of course make
the blanket statement ``\ns\ is \ppa-complete''. Alas, the proofs that \ns\ is a total
search problem for $k>2$ \cite{Alon87,Meunier14} do {\em not} seem to boil down to
the parity argument on an undirected graph!
That being said, we do manage to establish membership of \ppa\ for $k$ being a power of 2
(essentially an insight of \cite{Alon87}), see Section~\ref{sec:inppa} of the Appendix for 
the details and a related discussion. 
Of course, the $k=2$ result strongly suggests that $k$-\ns\ is a hard problem for other values of $k$.

As it happens, the \ppa-completeness of \dhs\ follows straightforwardly,
and we present that next. The basic idea of Theorem~\ref{thm:dhs} of embedding
the necklace in the moment curve appears already in \cite{RS07,Matou} and \cite{LGMN17}, p.48.

\begin{theorem}\label{thm:dhs}
\dhs\ is \ppa-complete.
\end{theorem}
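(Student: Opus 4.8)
The plan is to reduce from $2$-thief \ns, which is \ppa-complete by Theorem~\ref{thm:necklace}, using the standard ``moment curve'' (a.k.a.\ ``moment map'') embedding that is already folklore in the discrete geometry literature (\cite{RS07,Matou,LGMN17}). Membership in \ppa\ is handled separately in Section~\ref{sec:hsinppa}, so here I only need to describe the hardness reduction; totality plus the reduction from a \ppa-complete problem gives completeness.

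First I would set up the embedding. Given a $2$-thief \ns\ instance with $n$ colours and a necklace of $N = \sum_i 2a_i$ beads in a fixed left-to-right order, map the $j$-th bead to the point $\gamma(j) = (j, j^2, \dots, j^n) \in \rset^n$ on the moment curve, and let $P_i \subseteq \{\gamma(1),\dots,\gamma(N)\}$ be the set of (images of) beads of colour $i$; this is an $n$-dimensional \dhs\ instance with $n$ point sets $P_1,\dots,P_n$, each of even cardinality $2a_i$. All coordinates are integers of polynomial bit-length (the largest is $N^n$, which is fine in the logarithmic cost model since $n\cdot\log N$ is polynomial), so the instance is constructed in polynomial time. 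The key classical fact is that any hyperplane $H = \{x : \langle c,x\rangle = c_0\}$ in $\rset^n$ intersects the moment curve in at most $n$ points, because substituting $\gamma(t)$ yields a degree-$n$ univariate polynomial in $t$ with at most $n$ real roots; hence $H$ partitions $\gamma(1),\dots,\gamma(N)$ into at most $n+1$ contiguous arcs along the curve, i.e.\ it induces at most $n$ ``cuts'' of the underlying necklace. Conversely, given any set of at most $n$ cut positions on the necklace one can choose a polynomial of degree $\le n$ changing sign exactly at those positions, whose coefficient vector defines a hyperplane realising precisely that arc partition. This sets up the correspondence between \dhs\ solutions and $\le n$-cut fair divisions of the necklace.

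The heart of the reduction is then the following. Suppose $H$ is a solution to the \dhs\ instance, i.e.\ it bisects every $P_i$. By the moment-curve property, $H$ cuts the necklace into at most $n{+}1$ arcs; assign alternate arcs (in left-to-right order) to thief~$1$ and thief~$2$. Since $H$ bisects $P_i$, the union of arcs on one side of $H$ contains exactly half the beads of colour $i$ — but I must be careful about beads whose image lies exactly on $H$. This is where I would use the latitude built into Definition~\ref{def:dhs} (points on the plane may be assigned to either side or split arbitrarily): a bead on $H$ corresponds to a boundary point between two arcs and can be placed into whichever side restores the exact bisection, so in fact one can always extract a genuine $\le n$-cut equipartition of the colours into two collections. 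A $\le n$-cut equipartition is in particular an $n$-cut equipartition (pad with unused cuts if fewer than $n$ are needed, or observe that for $2$ thieves the guaranteed bound $(k-1)n = n$ is an upper bound we are free to meet or beat), which is exactly a solution to the original \ns\ instance. Hence a \dhs\ solution yields an \ns\ solution in polynomial time, completing the reduction.

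The main obstacle I anticipate is not the combinatorial core — that is essentially the textbook moment-curve argument — but the careful bookkeeping around degenerate configurations: beads mapped onto the hyperplane, arcs of length zero, and ensuring that the ``at most $n$ cuts'' produced always translate into a valid \ns\ certificate with the labelling conventions (alternating $\lplus/\lminus$, leftmost arc to a fixed thief) matching Definition~\ref{def:ns}. One must verify that the slack afforded by the ``split arbitrarily'' clause in \dhs\ is exactly enough to absorb on-hyperplane beads without creating extra cuts, and that no parity or counting mismatch arises when fewer than $n$ cuts occur. I would also need a brief remark that the reduction is parsimonious enough (or at least that it maps solutions to solutions and non-solutions' absence is preserved) for a search-problem reduction, though since both problems are total this reduces to the mapping just described. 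Given that all the ingredients are standard and the instance sizes are manifestly polynomial, I expect the proof to be short once these edge cases are dispatched.
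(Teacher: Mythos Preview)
Your proposal is correct and takes essentially the same approach as the paper: reduce from $2$-thief \ns\ via the moment-curve embedding, using the fact that a hyperplane meets the moment curve in at most $n$ points so that a bisecting hyperplane induces at most $n$ cuts and an alternating assignment to the two thieves. The paper's proof is terser---it does not spell out the degenerate cases (beads on the hyperplane, fewer than $n$ cuts) that you flag---but the argument is the same, and your more careful treatment of those edge cases is sound.
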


\begin{proof}
Inclusion in \ppa\ is shown in Section~\ref{sec:hsinppa} of the Appendix.
For \ppa-hardness, we reduce from 2-thief \ns\ which is \ppa-complete by Theorem \ref{thm:necklace}.

The idea is to embed the necklace into the \emph{moment curve}
$\gamma=\{(\alpha,\alpha^2,\ldots,\alpha^n) : \alpha\in[0,1] \}$.  
Assume all beads lie in the unit interval $[0,1]$.
A bead having colour $i\in[n]$ located at $\alpha\in[0,1]$ becomes a point 
mass of ingredient $i$ of the ham sandwich located at $(\alpha,\alpha^2,\ldots,\alpha^n)\in\rset^n$.
It is known that any hyperplane intersects the moment curve $\gamma$ in at most $n$ points,
(e.g. see \cite{Matou}, Lemma 5.4.2), therefore a solution
to \dhs\ corresponds directly to a solution to \ns, where the two thieves splitting the
necklace take alternating pieces.
(In the $k=2$ case, we may assume without loss of generality that they do
in fact take alternating pieces).
\end{proof}
A limitation to Theorem~\ref{thm:dhs} is that the coordinates may be exponentially
large numbers; they could not be written in unary.
We leave it as an open problem whether a unary-coordinate version is also \ppa-complete.
As defined in \cite{Pap}, \dhs\ stipulated that each of the $n$ sets of points is of size $2n$,
whereas Definition~\ref{def:dhs} allows polynomial-sized sets.
We can straightforwardly extend \ppa-completeness to the version of  \cite{Pap} by adding ``dummy dimensions''
whose purpose is to allow larger sets of each ingredient; the new ingredients that
are introduced, consist of compact clusters of point masses, each cluster
in general position relative to the other clusters and the subspace of dimension
$n$ that contains the points of interest.

\begin{paragraph}{Notation:}
We use the standard notation $[n]$ to denote the set $\{1,\ldots,n\}$,
and we also use $\pm[n]$ to denote $\{1,-1,2,-2,\ldots,n,-n\}$.
We often refer to elements of $\pm[n]$ as ``labels'' or ``colours''.
$\lambda$ is usually used to denote a labelling function
(so its codomain is $\pm[n]$).

We let $A$ denote the domain of an instance of \ch; if that instance has complexity
$n$ then $A$ will be the interval $[0,poly(n)]$, where $poly(n)$ is some number bounded by a polynomial
in $n$. Recall by Definition \ref{def:ch} that $\mu_a$ denotes the value function, or measure, of agent $a$ on the domain $A$, in a \ch\ instance. We also associate each agent with its own cut (recall that the number of agents and cuts is supposed to be equal) and we let $c(a)$ denote the cut associated with agent $a$.

We let $p^C(n)$ be a polynomial that represents the number of ``circuit-encoders'' that we
use in our reduction (see Section \ref{sec:forward}); we usually denote it $p^C$, dropping the $n$ from the notation.

Finally, $B$ denotes the $n$-cube (or ``box'') $[-1,1]^n$.
\end{paragraph}

\begin{paragraph}{Terminology}
In an instance of \ch, a {\em value-block} of an agent $a$ denotes a sub-interval of the 
domain where $a$ possesses positive value, uniformly distributed on that interval.
In our construction, value-blocks tend to be scattered rather sparsely along the domain.
\end{paragraph}

\subsection{Overview of the proof}\label{sec:overview}

We review the ground covered by the precursors~\cite{FFGZ18,FG17} to this paper,
then we give an overview of the technical innovations of the present paper.

\subsubsection{Ideas from~\cite{FFGZ18,FG17}}

\cite{FFGZ18} established \ppad-hardness of \ch.
An arithmetic circuit can be encoded by an instance to $\epsilon$-\ch,
by letting each gate have a cut whose location (in a solution) represents the (approximate) value
taken at that gate. Agents' valuation functions ensure that values taken at
the gates behave according to the type of gate.
A ``\ppad\ circuit'' can then be represented using an instance of \ch.

\cite{FG17} noted that the search space of solutions to instances as constructed by \cite{FFGZ18},
is oriented. A radical new idea was needed to encode the
non-oriented feature of topological spaces representable by \ppa.
That was achieved by using two cuts to represent the coordinates of a point
on a triangular region faving two sides identified to form a M\"{o}bius strip.
(These cuts are the only ones that lie in a specific subinterval of the interval $A$ of
a \ch\ instance, called the ``coordinate-encoding (c-e) region''.
The two cuts are called the ``coordinate-encoding cuts''.)
Identifying two sides in this way is done by exploiting the equivalence of taking
a cut on the LHS of the c-e region, and moving it to the RHS.
In order to embed a hard search problem into the surface of a standard 2-dimensional
M\"{o}bius strip, it was necessary to work at exponentially-fine resolution,
which immediately required inverse-exponential $\epsilon$ for instances of $\epsilon$-\ch.
In \cite{FG17} we reduced from the \ppa-complete problem \twodtucker~\cite{ABB}
(Definition~\ref{def:2d-tucker-abb} below), a search problem on an exponential-sized
2-dimensional grid.

In \cite{FG17}, the rest of $A$ is called the ``circuit-encoding region'' $R$,
and the cuts occurring within $R$ do the job of performing computation on
the location of cuts in the c-e region.
The present paper retains this high-level structure (Section~\ref{sec:bbb}).
As in \cite{FG17} we use multiple copies of the circuit that performs the
computation, each in its own subregion of $R$.
Here we use $p^C(n)$ copies where $p^C$ is a polynomial; in \cite{FG17}
we used 100 copies. Each copy is called a {\em circuit-encoder}.
The purpose of multiple copies is to make the system robust; a small
fraction of copies may be unreliable: as in \cite{FG17} we have to account for
the possibility that one of the c-e cuts may occur in the circuit-encoding region,
rendering one of the copies unreliable.
We re-use the ``double negative lemma'' of \cite{FG17} that such a cut is not too harmful.
We also adapt a result of \cite{FG17} that when a cut is moved from the one end
to the other end of the c-e region, this corresponds to identifying two facets of a simplex
to form a M\"{o}bius strip.

\cite{FG17} uses a sequence of ``sensor agents'' to identify the endpoints of intervals
labelled $\lplus$ and $\lminus$ in the coordinate-encoding region, and
feed this information into the above mentioned circuit-encoders, which perform computation on those values.
As in \cite{FG17} we use sensor agents.
We obtain a simplification with respect to \cite{FG17} which is that we do not
need the gadgets used there to perform ``bit-extraction'' (converting the
position of a c-e cut into $n$ boolean values).
In \cite{FG17}, a solution to an instance of \ch\ was associated with a sequence
of 100 points in the M\"{o}bius-simplex (referred there as the ``simplex-domain''),
and the ``averaging manoeuvre'' introduced in \cite{DGP09} was applied.
In this paper, for a polynomial $p^C(n)$, we sample a sequence of $p^C$ points in a more elegant manner,
again exploiting the inverse-polynomial precision of solutions that we care about.

\subsubsection{Technical innovations}

As in \cite{FG17}, we reduce from the \ppa-complete problem \twodtucker~\cite{ABB}
(Definition~\ref{def:2d-tucker-abb}).
That computational problem uses an exponentially-fine 2D grid,
and (unlike~\cite{FG17}), in Section~\ref{sec:snake}
we apply the {\em snake-embedding} technique invented in~\cite{CDT}
(versions of which are used in~\cite{DEFKQX,DFK} in the context of \ppa)
to convert this to a grid of fixed resolution, at the expense of going from 2 to $n$ dimensions.
The new problem, \vhdt\ (Definition~\ref{def:vhdt}) envisages a $7\times 7\times\cdots\times 7$ grid.
Here, we design the snake-embedding in such a way that \ppa-completeness
holds for instances of the high-dimensional problem that obey a 
further constraint on the way the high-dimensional grid is coloured, that we
exploit subsequently.
A further variant, \nvhdt\ (Definition~\ref{def:nvhdt}) switches to a ``dual'' version
where a hypercube is divided into cubelets, and points in the hypercube are coloured
such that interiors of cubelets are monochromatic.
A pair of points is sought having equal and opposite colours and
distant by much less than the size of the cubelets.

We encode a point in $n$ dimensions using a solution to an instance of \ch\ as follows.
Instead of having just 2 cuts in the coordinate-encoding region (as in \cite{FG17}), suppose we ensure
that up to $n$ cuts lie there.
These cuts split this interval into $n+1$ pieces whose total length is constant,
so represent a point in the unit $n$-simplex (in \cite{FG17}, the unit 2-simplex).
This ``M\"{o}bius-simplex'' (Definition~\ref{def:simplex-domain}; Figure~\ref{fig:subspaces})
has the further property that two facets are identified with each other in a way that
effectively turns the simplex into an $n$-dimensional M\"{o}bius strip.\\

\noindent In Section~\ref{sec:coord-sys} we define a crucially-important coordinate transformation 
(see Figure~\ref{fig:dir-numbers}) with the following key properties
\begin{itemize}
\item the transformation and its inverse can be computed efficiently, and distances between transformed
coordinate vectors are polynomially related to distances between un-transformed vectors;
\item at the two facets that are identified with each other, the coordinates of corresponding
points are the negations of each other; our colouring function (that respects Tucker-style
boundary conditions) has the effect that antipodal points get equal and opposite
colours, and {\em no undesired solutions are introduced at these facets}.
\end{itemize}
This is the ``smooth embedding'' referred to in the abstract.\\

\noindent With the aid of the above coordinate transformation, we divide up the M\"{o}bius-simplex:
\begin{itemize}
\item{
The {\em twisted tunnel} (Definition~\ref{def:twisted}) is an inverse-polynomially thick strip, connecting the two facets that
are identified in forming the M\"{o}bius-simplex.
It contains at its centre an embedded copy of the hypercube domain of
an instance $I_{VT}$ of \nvhdt. Outside of this embedded copy,
it is ``coloured in'' (using our new coordinate system) in a way that avoids introducing
solutions that do not encode solutions of $I_{VT}$.
}
\item{ The {\em Significant Region} contains the twisted tunnel and constitutes a somewhat
thicker strip connecting the two facets. It serves as a buffer zone between the twisted tunnel
and the rest of the M\"{o}bius-simplex.
It is subdivided into subregions where each subregion has a unique set of labels, or colours, from $\pm[n]$.
(We sometimes refer to these as ``colour-regions''.)
It is shown that any solution to an instance of \ch\ constructed as in
our reduction, represents a point in the Significant Region.
}
\item If, alternatively, a set of cuts represents a point from outside the
Significant Region, then certain agents (so-called ``blanket-sensor agents'')
will observe severe imbalances between labels $\lplus$ and $\lminus$,
precluding a solution.
\end{itemize}

In \cite{FG17}, it was relatively straightforward to integrate the subset of the 2-dimensional
M\"{o}bius-simplex that corresponds with the twisted tunnel,
with the parts of the domain where the blanket-sensor agent became active
(ruling out a solution) in a way that avoided introducing solutions that fail to
encode solutions of \tucker. In the present paper, that gap has to be ``coloured-in''
in a carefully-designed way (Section~\ref{sec:colour-f}, list item \ref{item-f-sr}),
and this is the role of the part of the Significant Region that is not the twisted tunnel.
The proofs that they work correctly (Sections \ref{sec:solution}, \ref{sec:bdry1}) become
more complicated.

\section{Snake embedding reduction}\label{sec:snake}

The purpose of this section is to establish the \ppa-completeness of
\nvhdt, Definition~\ref{def:nvhdt}.
The snake embedding construction was devised in~\cite{CDT},
in order to prove that $\varepsilon$-Nash equilibria are \ppad-complete
to find when $\varepsilon$ is inverse polynomial; without this trick the result
is just obtained for $\varepsilon$ being inverse exponential. We do a similar trick here.
We will use as a starting-point the \ppa-completeness of 2D-\tucker, from~\cite{ABB},
which is the following problem:

\begin{definition}\label{def:2d-tucker-abb} (Aisenberg et al.~\cite{ABB})
	An instance of \twodtucker\ consists of a labelling
	$\lambda:[m]\times[m]\rightarrow\{\pm 1,\pm 2\}$
	such that for $1\leq i,j\leq m$, $\lambda(i,1)=-\lambda(m-i+1,m)$
	and $\lambda(1,j)=-\lambda(m,m-j+1)$.
	A solution to such an instance of \twodtucker\ is a pair of vertices $(x_1,y_1)$, $(x_2,y_2)$
	with $|x_1-x_2|\leq 1$ and $|y_1-y_2|\leq 1$ such that $\lambda(x_1,y_1)=-\lambda(x_2,y_2)$.
\end{definition}

The hardness of the problem in Definition~\ref{def:2d-tucker-abb} arises when
$m$ is exponentially-large, and the labelling function is presented by
means of a boolean circuit.

We aim to prove the following is PPA-complete, even when the values $m_i$
are all upper-bounded by some constant (specifically, 7).

\begin{definition}\label{def:nd-tucker} (Aisenberg et al.~\cite{ABB})
	An instance of $n$D-\tucker\ consists of a labelling
	$\lambda:[m_1]\times\cdots\times[m_n]\rightarrow\{\pm 1,\cdots, \pm n\}$
	such that if a point ${\bf x}=(x_1,\ldots,x_n)$
	lies on the boundary of this grid (i.e., $x_i=1$ or $x_i=m_i$ for some $i$),
	then letting $\bar{\bf x}$ be the antipodal point of {\bf x}, we have $\lambda(\bar{\bf x})=-\lambda({\bf x})$.
	(Two boundary points are antipodal if they lie at opposite ends of a line
	segment passing through the centre of the grid.)
	A solution consists of two points {\bf z}, ${\bf z}'$ on this grid, having opposite labels
	($\lambda({\bf z})=-\lambda({\bf z}')$), each of whose coordinates differ (coordinate-wise) by at most 1.
	
	It is assumed that $\lambda$ is presented in the form of a circuit,
	syntactically constrained to give opposite labels to antipodal grid points.
\end{definition}

\begin{definition}\label{def:vhdt}
	An instance of \vhdt\ is similar to Definition \ref{def:nd-tucker} but whose
	instances obey the following additional constraints.
	The $m_i$ are upper bounded by the constant 7.
	We impose the further constraint that the facets of the cube are
	coloured with labels from $\pm[n]$ such
	that all colours are used, and opposite facets have opposite labels, and
	for $2\leq i\leq n$ it holds that the facet with label $i$
	(resp. $-i$) has no grid-point on that facet with label $i$ (resp. $-i$).
\end{definition}

\begin{theorem}\label{thm:snake}
	\vhdt\ is \ppa-complete.
\end{theorem}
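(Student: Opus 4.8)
The plan is to reduce from \twodtucker\ (Definition~\ref{def:2d-tucker-abb}), which is \ppa-complete by~\cite{ABB}, via the snake-embedding technique of~\cite{CDT}. The starting point is an exponential-sized $m\times m$ grid with $m=2^k$; our target is a $7\times7\times\cdots\times7$ grid in $n$ dimensions (with $n$ roughly $k$), where now hardness comes from the {\em circuit} describing the colouring rather than from the grid being large. The snake embedding works by threading a path through the coarse high-dimensional grid so that consecutive ``cells'' along the snake correspond to consecutive columns (then rows) of the fine 2D grid; a point in $\{0,\dots,6\}^n$ is decoded by reading off which cell of the snake it lies in (giving a coarse 2D location) together with its offset within that cell (giving the fine bits). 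The constant $7$ (rather than $3$) gives enough room in each coordinate for the snake to make its turns without self-intersection and without colour conflicts at the bends.

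The key steps, in order, are as follows. First, fix the snake path: a space-filling-style curve through $\{0,\dots,6\}^n$ whose successive unit segments, read off in blocks, enumerate the $2^k\times 2^k$ cells of the 2D grid in boustrophedon order; this is where the $\{3,4,5\}$ ``middle'' values of each coordinate carry the snake and the extreme values handle turns. Second, define the colouring $\lambda$ on the high-dimensional grid: on grid-points lying in (the interior of) a snake cell, copy the 2D-Tucker label of the corresponding 2D grid-point, mapped into $\pm[n]$ via a fixed injection of $\{\pm1,\pm2\}$; on grid-points lying in the ``padding'' region outside the snake, assign colours in a controlled, locally constant way so that no new near-monochromatic-opposite pair is created. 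Third, verify the antipodal boundary condition of Definition~\ref{def:nd-tucker}: the snake and the padding colouring must be arranged symmetrically under $\mathbf{x}\mapsto\bar{\mathbf{x}}$, using the boundary condition $\lambda(i,1)=-\lambda(m-i+1,m)$, $\lambda(1,j)=-\lambda(m,m-j+1)$ of the 2D instance to get the required point-symmetry. Fourth, impose and verify the extra facet constraints of Definition~\ref{def:vhdt}: design the padding colouring so that the facet carrying label $i$ (resp. $-i$) has {\em no} grid-point with that label on it for $2\le i\le n$ — this is arranged by reserving, on each facet $x_i=0$ or $x_i=6$, a ``buffer layer'' whose colours use only labels other than $\pm i$, which is possible because the snake never needs the entire extreme slab of a coordinate. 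Fifth, the solution-correspondence: a solution of the high-D instance is a pair of points differing by at most $1$ coordinatewise with opposite labels; such a pair must lie (up to the padding being inert) inside the snake, in the same or adjacent cells, and hence decodes to a pair of 2D grid-points within $\ell_\infty$-distance $1$ with opposite 2D-Tucker labels — a genuine 2D-Tucker solution. Both directions of the reduction are polynomial-time, since the snake routing, the decoding map, and the padding colouring are all computable by small circuits from the circuit for $\lambda$.

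Membership in \ppa\ is inherited: \nvhdt\ (and hence \vhdt, which is a syntactically-restricted special case of $n$D-\tucker) reduces to high-dimensional Tucker, which is in \ppa\ by the standard reduction to \leaf\ via Tucker's lemma / the octahedral-Tucker argument, so it suffices to establish hardness, i.e. the reduction above.

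The main obstacle I expect is the fourth step — simultaneously getting the snake embedding to fit in a grid of {\em constant} side length $7$, to respect the antipodal symmetry, {\em and} to satisfy the extra per-facet colour-exclusion constraints of Definition~\ref{def:vhdt} without creating spurious solutions in the padding. The tension is that the facet constraints want the extreme slabs of each coordinate kept ``clean'' of the label $\pm i$, while the snake needs to turn, and turns naturally want to use exactly those extreme slabs; threading this needle is exactly why one needs $7$ values per coordinate rather than fewer, and checking that no near-monochromatic opposite pair sneaks in at the bends or along the buffer layers is the delicate, case-heavy part of the argument. Everything else — the decoding map, polynomial-time computability, and the forward solution-extraction — is routine bookkeeping once the embedding is pinned down.
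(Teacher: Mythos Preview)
Your high-level plan matches the paper's: reduce from \twodtucker\ via a snake embedding to a constant-side-length high-dimensional grid. The execution differs. You describe a single space-filling curve through $\{0,\dots,6\}^n$; the paper instead applies an \emph{iterative} reduction from $k$D-\tucker\ to $(k+1)$D-\tucker. Each step sandwiches the current $k$-dimensional instance between two monochromatic layers of the new colours $\pm(k+1)$, then folds the longest dimension into thirds (with duplicated columns at each fold to prevent spurious diagonal adjacencies, and a preliminary padding to make the length divisible by $3$). This produces one new dimension of width $7$ per step while shrinking an old dimension by a factor of roughly $3$; after $O(\log m)$ steps all sides are at most $7$. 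The invariants---antipodal symmetry, no new opposite-label adjacencies, and the facet-exclusion constraint---are verified \emph{locally at each step} rather than globally. The facet constraint is bootstrapped by first widening the 2D instance from $m\times m$ to $3m\times m$ so that two opposite sides become monochromatic $\pm 2$, and then checking that sandwich-and-fold preserves the property for each newly introduced colour.

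One point in your sketch looks off: you have snake cells carrying the 2D labels ``via a fixed injection of $\{\pm1,\pm2\}$'' with everything else treated as ``padding''. In the paper's construction the labels $\pm 3,\dots,\pm n$ are not inert padding outside a thin snake---they are the sandwich layers introduced at each iteration, occupying most of the final grid and interleaved with the folded copies of the 2D data. A direct one-shot construction would have to reproduce essentially this layered structure (or something equivalent) to satisfy the antipodal condition and the facet-exclusion constraint simultaneously without creating opposite-label adjacencies; the iterative presentation buys modular verification of exactly the step you correctly flagged as the hard one.
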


\begin{paragraph}{Informal description of snake embedding}
	A snake-embedding consists of a reduction from $k$D-\tucker\ to $(k+1)$D-\tucker,
	which we describe informally as follows. See Figure~\ref{fig:se}.
	Let $I$ be an instance of $k$D-\tucker, on the grid $[m_1]\times\cdots\times[m_k]$.
	Embed $I$ in $(k+1)$-dimensional space, so that it lies in the grid
	$[m_1]\times\cdots\times[m_k]\times[1]$. Then sandwich $I$ between two layers,
	where all points in the top layer get labelled $k+1$, and points in the bottom layer
	get labelled $-(k+1)$, as in the left part of Figure \ref{fig:se}. We now have points in the grid 
	$[m_1]\times\cdots\times[m_k]\times[3]$, and notice that this construction
	preserves the required property that points on the boundary have labels
	opposite to their antipodal points.
	
	Then, the main idea of the snake embedding is the following. We fold this grid
	into three layers, by analogy with folding a sheet of
	paper along two parallel lines so that the cross-section is a zigzag line,
	and one dimension of the folded paper is one-third of the unfolded version,
	the other dimension being unchanged (see the right hand side of Figure \ref{fig:se}).
	In higher dimension, suppose that $m_1$ is the largest value of any $m_i$. 
	Then, we can reduce $m_1$ by a factor of about 3, while causing the final coordinate
	to go up from 3 to 9.
	By merging layers of label $k+1$ and $-(k+1)$, the thickness of 9 reduces to 7.
	{\em This operation preserves the labelling rule for antipodal boundary points.}\\
	
	\noindent However, there are two points that need extra care for the reduction to go through:
	\begin{itemize}
		\item Firstly, simply folding the layers such that their cross-sections are zigzag lines may
		introduce diagonal adjacencies between cubelets that were not present in the original
		instance in $k$-dimensions, i.e. we might end up generating adjacent cubelets with equal-and-opposite
		colours, see the left part of Figure \ref{fig:se2} for an illustration. To remedy this, we will ``copy'' (or ``duplicate'') the cubelets 
		at the folding points, essentially having three cubelets of the same colour, whose cross-sections are the 
		short vertical section in the right hand side of Figure \ref{fig:se}, see also the right hand side of Figure \ref{fig:se2} for an illustration. From now on, when referring to ``folding'', we will mean the version where we also duplicate the cubelets at the folding points, as described above.
		\item Secondly, the folding and duplicating operation only works if $m_1$ is a multiple of $3$,  as otherwise the $(k+1)$-dimensional instance may not satisfy the boundary conditions of Definition \ref{def:nd-tucker}, i.e. we might end up with antipodal cubelets that do not have equal-and-opposite colours. To ensure that $m_1$ is a multiple of $3$ before folding, we can add $1$ or $2$ additional layers of cubelets to $m_1$, (depending on whether the remainder of the division of $m_1$ by $3$ is either $2$ or $1$ respectively). These layers are duplicate copies of the outer layers of cubelets at opposite ends of the length-$m_1$ direction; if there is only one additional layer to be added, we can add on either side. Note that this operation does not generate any cubelets of equal-and-opposite labels that were not there before and the same will be true for the instance after the folding operation. See Figure \ref{fig:se3} for an illustration.
	\end{itemize}

\end{paragraph}

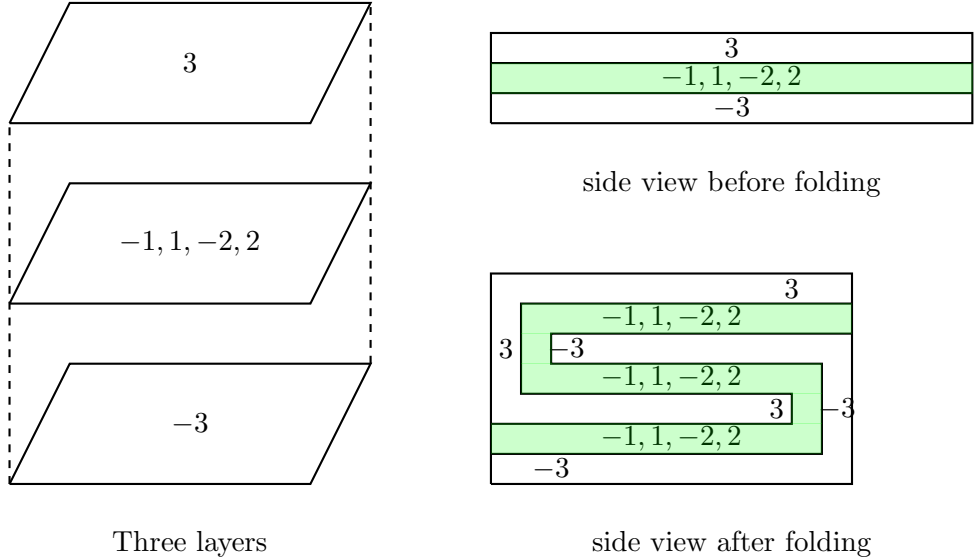
\begin{figure}
	\center{
		\begin{tikzpicture}[scale=0.8]
		\tikzstyle{xxx}=[dashed,thick]
		
		\draw[thick](0,2)--(5,2)--(6,4)--(1,4)--(0,2);
		\node at(3,3){$-3$};
		\draw[thick](0,5)--(5,5)--(6,7)--(1,7)--(0,5);
		\node at(3,6){$-1,1,-2,2$};
		\draw[thick](0,8)--(5,8)--(6,10)--(1,10)--(0,8);
		\node at(3,9){$3$};
		\draw[thick,dashed](0,2)--(0,8);\draw[thick,dashed](6,4)--(6,10);
		\node at(3,1){Three layers};
		
		\draw[thick](8,8)--(16,8)--(16,9.5)--(8,9.5)--(8,8);
		\draw[thick](8,8.5)--(16,8.5);\draw[thick](8,9)--(16,9);
		\node at(12,9.25){$3$};\node at(12,8.75){$-1,1,-2,2$};\node at(12,8.25){$-3$};
		\node at(12,7){side view before folding};
		
		\fill[fill=green, opacity=0.2] (8,8.5) rectangle (16,9);
		
		\draw[thick](8,2)--(14,2)--(14,5.5)--(8,5.5)--(8,2);
		\draw[thick](8,2.5)--(13.5,2.5)--(13.5,4)--(9,4)--(9,4.5)--(14,4.5);
		\draw[thick](8,3)--(13,3)--(13,3.5)--(8.5,3.5)--(8.5,5)--(14,5);
		\node at(13,5.25){$3$};
		\node at(11,4.75){$-1,1,-2,2$};
		\node at(8.25,4.25){$3$};\node at(9.25,4.25){$-3$};
		\node at(11,3.75){$-1,1,-2,2$};
		\node at(12.75,3.25){$3$};\node at(13.75,3.25){$-3$};
		\node at(11,2.75){$-1,1,-2,2$};
		\node at(9,2.25){$-3$};
		\node at(12,1){side view after folding};
		
		\fill[fill=green, opacity=0.2] (8,2.5) rectangle (13.5,3);
		\fill[fill=green, opacity=0.2] (8.5,3.5) rectangle (13.5,4);
		\fill[fill=green, opacity=0.2] (8.5,4.5) rectangle (14,5);
		\fill[fill=green, opacity=0.2] (13,3) rectangle (13.5,3.5);
		\fill[fill=green, opacity=0.2] (8.5,4) rectangle (9,4.5);
		
		\end{tikzpicture}
		\caption{\small{Snake embedding from 2 to 3 dimensions.}}\label{fig:se}}
\end{figure}

\begin{figure}
	\center{
		\begin{tikzpicture}[scale=0.8]
		\tikzstyle{xxx}=[dashed,thick]
		
		\draw[thick](0,2)--(6,2)--(6,5.5)--(0,5.5)--(0,2);
		\draw[thick](0,2.5)--(5.5,2.5)--(5.5,4)--(1,4)--(1,4.5)--(6,4.5);
		\draw[thick](0,3)--(5,3)--(5,3.5)--(0.5,3.5)--(0.5,5)--(6,5);
		\node at(5.25,2.75){\small{$1$}};
		\node at(5.275,3.25){\small{$2$}};
		\node at(5.15,3.7){\small{$-1$}};
		\node at(4.5,2.7){\small{$-2$}};
		\node at(4.5,3.7){\small{$-2$}};
		\node at(3,1){side view without duplication};
		
		\fill[fill=blue, opacity=0.2] (5,3) rectangle (5.5,3.5);
		\fill[fill=blue, opacity=0.2] (4.2,3.5) rectangle (5,4);

		\draw[thick](8,2)--(14,2)--(14,5.5)--(8,5.5)--(8,2);
		\draw[thick](8,2.5)--(13.5,2.5)--(13.5,4)--(9,4)--(9,4.5)--(14,4.5);
		\draw[thick](8,3)--(13,3)--(13,3.5)--(8.5,3.5)--(8.5,5)--(14,5);
		\node at(13.25,2.75){\small{$1$}};
		\node at(13.25,3.25){\small{$1$}};
		\node at(13.25,3.75){\small{$1$}};
		\node at(12.5,2.7){\small{$-2$}};
		\node at(12.65,3.75){\small{$2$}};
		
		\fill[fill=brown, opacity=0.2] (13,2.5) rectangle (13.5,4);

		\node at(11,1){side view with duplication};
		
		\end{tikzpicture}
		\caption{\small{Side views of the folding operation with and without duplications of cubelets. On the left, simply folding generates equal-and-opposite labels diagonally in the shaded cubelets. On the right, the duplication of the cubelet at the folding position in three copies prevents this from happening.}}\label{fig:se2}}
\end{figure}
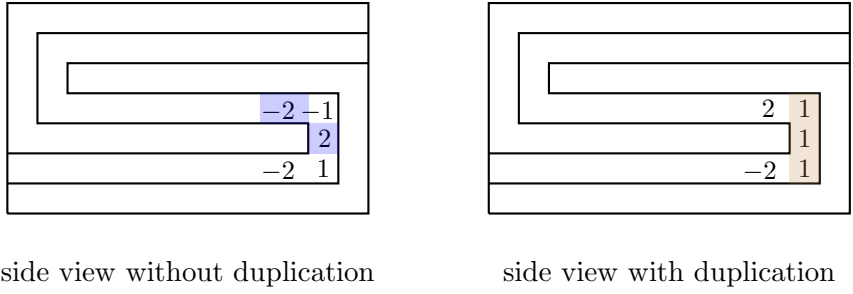

\begin{figure}
	\center{
		\begin{tikzpicture}[scale=0.8]
		\tikzstyle{xxx}=[dashed,thick]
		
		\draw[thick](0,5)--(5,5)--(6,7)--(1,7)--(0,5);

		\draw[thick](-1,5)--(6,5);
		\draw[thick](-0.75,5.5)--(6.25,5.5);
		\draw[thick](-0.5,6)--(6.5,6);
		\draw[thick](-0.25,6.5)--(6.75,6.5);
		\draw[thick](0,7)--(7,7);
		
		\draw[thick](0,7)--(-1,5);
		\draw[thick](7,7)--(6,5);
		
		\draw[thick](1,5)--(2,7);
		\draw[thick](2,5)--(3,7);
		\draw[thick](3,5)--(4,7);
		\draw[thick](4,5)--(5,7);
		
		\draw[thick, fill=green, opacity=0.2](-1,5)--(0,7) -- (1,7) -- (0,5) -- (-1,5);
		\draw[thick, fill=red, opacity=0.2](5,5)--(6,7) -- (7,7) -- (6,5) -- (5,5);
		
		\node at (0.7,5.25) {$2$};
		\node at (0.95,5.75) {$2$};
		\node at (1.2,6.25) {$1$};
		\node at (1.3,6.75) {$-2$};
		
		\node at (-0.3,5.25) {$2$};
		\node at (-0.05,5.75) {$2$};
		\node at (0.25,6.25) {$1$};
		\node at (0.4,6.75) {$-2$};
		
		\node at (4.6,5.25) {$2$};
		\node at (4.8,5.75) {$-1$};
		\node at (5.1,6.25) {$-2$};
		\node at (5.3,6.75) {$-2$};
		
		\node at (5.6,5.25) {$2$};
		\node at (5.8,5.75) {$-1$};
		\node at (6.1,6.25) {$-2$};
		\node at (6.3,6.75) {$-2$};

		\end{tikzpicture}
		\caption{\small{Extending the colouring to ensure that $m_1$ is a multiple of $3$. In the figure, the case when $m_1 \mod 3=1$ is shown, i.e. one layer needs to be added at each side.}}\label{fig:se3}}
\end{figure}
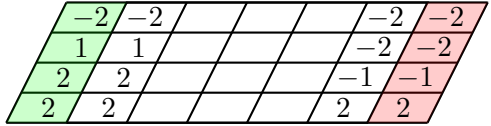

\begin{paragraph}{Formal description of snake embedding}
	Let $I$ be an instance of $k$D-\tucker\ having coordinates in ranges
	$[m_1],\ldots,[m_k]$ and label function $\lambda$.
	Select the largest $m_i$, breaking ties lexicographically.
	Assume for simplicity in what follows that $m_1$ is largest.\\
	
	\noindent \textbf{Fixing the length to a multiple of $3$}. Let $r = m_1 \mod 3$ and let $\ell = 3-r$. Consider the instance $I^{3}$ of $kD$-\tucker\ having coordinates in ranges $[m_1'], \ldots, [m_k]$, with $m_1' \mod 3 = 0$, constructed from $I$ as follows. For any point $\xb'=(x_1',\ldots,x_k')$ in $[m_1'] \times \ldots \times [m_k]$, $\xb'$ is mapped to a point $\xb$ in $[m_1] \times \ldots \times [m_k]$ and receives a colour $\lambda'(\xb')$ such that, 
	\begin{itemize}
		\item if $\ell = 0$, then  $\xb'$ is mapped to $x=(x_1',\ldots,x_k')$ and $\lambda'(\xb') = \lambda(\xb)$, i.e. $\xb'$ is mapped to itself and receives its own label, since $m_1$ is already a multiple of $3$.
		\item If $\ell = 1$, then 
		\begin{itemize}
			\item if $x_1' \leq m_1$, $\xb'$ is mapped to $\xb=(x_1',\ldots,x_k')$ and $\lambda'(\xb') = \lambda(\xb)$.
			\item if $x_1' = m_1 + 1$, $\xb'$ is mapped to $\xb=(m_1,\ldots,x_k')$ and $\lambda'(\xb') = \lambda(\xb)$.
		\end{itemize}
		In other words, points for which the first coordinate ranges from $1$ to $x_1'$, are mapped to themselves and receive their own label, and points for which the first coordinate is $m_1+1$ are mapped to the points where the first coordinate is $m_1$, receiving the label of that point. This essentially ``duplicates'' the layer of cubelets on the right endpoint of the $m_1$-direction. See Figure \ref{fig:se3} for an illustration.
		\item If $\ell = 2$, then 
		\begin{itemize}
			\item if $x_1' =1 $, $\xb'$ is mapped to $\xb=(1,\ldots,x_k')$ and $\lambda'(\xb') = \lambda(\xb)$.
			\item if $2\leq x_1' \leq m_1+1$, $\xb'$ is mapped to $\xb=(x_1'-1,\ldots,x_k')$ and $\lambda'(\xb') = \lambda(\xb)$.
			This is similar to the mapping and labelling in the previous case, except for the fact that we need to ``shift'' the labels of the points, since we essentially introduced a copy of the layer of cubelets on the left endpoint of the $m_1$-direction. See Figure \ref{fig:se3} for an illustration.
		\end{itemize}
	\end{itemize} 
	Note that by the operation of adding $\ell$ layers as above, we do not introduce any cubelets with equal-and-opposite labels that were not present before. To avoid complicating the notation, in the following we will use $m_1$ to denote the maximum size of the first coordinate (instead of $m_1'$) and we will assume that $m_1$ is a multiple of $3$. We will also use $I$ to denote the instance of $kD$-\tucker\ where $m_1$ is a multiple of $3$, instead of $I^3$ as denoted above.\\

	\noindent \textbf{From $k$ to $k+1$ dimensions}. Starting from an instance $I$ of $kD$-\tucker, we will construct an instance $I'$ of $(k+1)D-$\tucker\ as follows. Let ${\bf x}=(x_1,\ldots,x_k)$ be a point in $[m_1]\times\ldots\times[m_k]$ with labelling function $\lambda$. We will associate each such point with a corresponding point $\xb'$ in $\left[\frac{m_1}{3}+2\right]\times\ldots\times[m_k] \times [7]$ and a label $\lambda'(\xb')$ as follows.
	\begin{itemize}
		\item
		If $x_1\leq \frac{m_1}{3}$, then ${\bf x}$ is mapped to
		${\bf x}'=(x_1,\ldots,x_k,2)$, and $\lambda'({\bf x}')=\lambda({\bf x})$.
		\item
		If $x_1=\frac{m_1}{3}+1$ \emph{(the first ``folding'' point)}, then ${\bf x}$ is mapped to the following three points in $I'$ and receives the following colours (see the shaded cubelets at the right-hand side of Figure \ref{fig:se2}):
		\begin{itemize}
			\item ${\bf x}'=(\frac{m_1}{3}+1,\ldots,x_k,2)$ \emph{(the original cubelet)} and $\lambda'({\bf x}')=\lambda({\bf x})$.
			\item ${\bf x}'=(\frac{m_1}{3}+1,\ldots,x_k,3)$ \emph{(the first copy)} and $\lambda'({\bf x}')=\lambda({\bf x})$.
			\item ${\bf x}'=(\frac{m_1}{3}+1,\ldots,x_k,4)$ \emph{(the second copy)} and $\lambda'({\bf x}')=\lambda({\bf x})$.
		\end{itemize} 
		\item If $\frac{m_1}{3}+2 \leq x_1 \leq \frac{2m_1}{3}-1$, then {\bf x} is mapped to ${\bf x}'=(\frac{2m_1}{3}+2-x_1,x_2,\ldots,x_k,4)$, with $\lambda'({\bf x}')=\lambda({\bf x})$.
		\item If $x_1=\frac{2m_1}{3}$ \emph{(the second ``folding'' point)}, then ${\bf x}$ is mapped to the following three points in $I'$ and receives the following colours:
		\begin{itemize}
			\item ${\bf x}'=(2,\ldots,x_k,4)$ \emph{(the original cubelet)} and $\lambda'({\bf x}')=\lambda({\bf x})$.
			\item ${\bf x}'=(2,\ldots,x_k,5)$ \emph{(the first copy)} and $\lambda'({\bf x}')=\lambda({\bf x})$.
			\item ${\bf x}'=(2,\ldots,x_k,6)$ \emph{(the second copy)} and $\lambda'({\bf x}')=\lambda({\bf x})$.
		\end{itemize} 
		\item
		If $\frac{2m_1}{3}+1 \leq x_1 \leq m_1$, then ${\bf x}$ is mapped to
		${\bf x}'=(x_1+2-\frac{2m_1}{3},x_2,\ldots,x_k,6)$, with $\lambda'({\bf x}')=\lambda({\bf x})$.
		\item
		Set $\lambda'(1,\ldots,1)=-k$, along with any point {\bf x} connected to it via a path
		of points that have not been labelled by the above procedure.
		\item
		Set $\lambda'(\frac{m_1}{3}+2,m_2,\ldots,m_k,7)=k$,
		along with any point {\bf x} connected to it via a path
		of points that have not been labelled by the above procedure.
	\end{itemize}
\end{paragraph}
We are now ready to prove Theorem \ref{thm:snake}.

\begin{proof} [Proof of Theorem~\ref{thm:snake}]
	First, it is not hard to check that the the composition of $O(n)$ snake-embeddings
	is a polynomial-time reduction. 
	Also note that, by the way
	the high-dimensional instances is constructed, we have not introduced any adjacencies that did not already exist,
	i.e. if there is a pair of adjacent cubelets with equal-and-opposite labels in the instance $I'$ of the high-dimensional version, this pair is present in the instance $I$ of the $2D$ version as well, and it is easy to recover it in polynomial time. Therefore, it suffices to show how to obey the additional constraint of \vhdt, namely that for $i\geq 2$, a side having
	label $i$ has no grid-points with label $i$, and similarly for $-i$.
	
	We begin as in \cite{FG17} (see Figure 1 in that paper), by taking the original $2D$ instance $I$, of size $m\times m$, and extend to an
	instance of size $3m\times m$ as follows. The original instance is embedded
	in the centre of the new instance. Each region $R$ to the sides (of size $m\times m$)
	are labelled by copying the edge of $I$ facing $R$, along an adjacent edge of $R$,
	and connecting these two edges with paths that have two straight sections and
	connect 2 points of the same label, and points along that path have that label.
	The outermost path then labels a side of the new instance of length $m$,
	so these two opposite sides get opposite labels.
	We may assume (by switching $1$'s and $2$'s if needed) that these new opposite sides are labelled $\pm 2$.
	
	The $S$-fold approach shown in Figure~\ref{fig:se} (in this paper) can be checked to retain this property.
	When we sandwich a cuboid between two layers of opposite (new) colours (call them $c$ and $-c$),
	as shown in Figure~\ref{fig:se}, we label the new facets thus formed with
	$-c$ and $c$ respectively. We label the other facets with their original labels
	(each of these facets has acquired the labels $c$ and $-c$, and no other labels).
	The folding operation has a natural correspondence between the facets of the
	unfolded and folded versions of the cuboid. It can be checked that the set
	of colours of a facet before folding is the same as the set of colours of the
	corresponding facets after folding.
\end{proof}
It is convenient to define the following problem, whose \ppa-completeness follows
fairly directly from the \ppa-completeness of \vhdt.

\begin{definition}\label{def:nvhdt}
	An instance of \nvhdt\ in $n$ dimensions is presented by a boolean circuit $C_{VT}$ that
	takes as input coordinates of a point in the hypercube $B=[-1,1]^n$ and outputs a
	label in $\pm[n]$ (assume $C_{VT}$ has $2n$ output gates,
	one for each label, and is syntactically constrained such that exactly one
	output gate will evaluate to \true), having the following constraints that may be enforced syntactically.
	\begin{enumerate}
		\item Dividing $B$ into $7^n$ cubelets of edge length $2/7$ using axis-aligned
		hyperplanes, all points in the same cubelet get the same label by $C_{VT}$;
		\item Interiors of antipodal boundary cubelets get opposite labels;
		\item Points on the boundary of two or more cubelets get a label belonging
		to one of the adjacent cubelets;
		\item Facets of $B$ are coloured with labels from $\pm[n]$ such
		that all colours are used, and opposite facets have opposite labels.
		For $2\leq i\leq n$ it also holds that the facet with label $i$ (resp. $-i$)
		does not intersect any cubelet having label $i$ (resp. $-i$).
		Facets coloured $\pm 1$ are unrestricted (we call them the ``panchromatic facets'').
	\end{enumerate}
	A solution consists of a polynomial number $p^C$ of points that all lie within an inverse polynomial
	distance $\delta(n)$ of each other (for concreteness, assume $\delta(n)=1/100n$).
	At least two of those points should receive equal and opposite labels by $C_{VT}$.
\end{definition}
\nvhdt\ corresponds to the problem {\sc Variant Tucker} in~\cite{FG17}; in that
paper a solution only contained 100 points, while here we use $p^C$ points.
Here we need more points since we are in $n$ dimensions,
and our analysis needs to tolerate $n$ points receiving unreliable labels.

\section{Some building-blocks and definitions}\label{sec:bb}

Here we set up some of the general structure of instances of \ch\ constructed in our reduction.
We identify some basic properties of solutions to these instances.
We define the {\em M\"{o}bius-simplex} and the manner in which a solution
encodes a point on the M\"{o}bius-simplex.
The encoding of the circuitry is covered in Section~\ref{sec:reduction}.

\begin{paragraph}{Useful quantities:} We use the following values throughout the paper.
	\begin{itemize}
		\item $\delt$ is an inverse-polynomial quantity in $n$, chosen to be substantially smaller than
		any other inverse-polynomial quantity that we use in the reduction, apart from $\epsilon$ (below).
		\item $\delta^T$ is an inverse-polynomial quantity in $n$, which is smaller than any other inverse-polynomial quantity apart from $\delt$ and is larger than $\delt$ by an inverse-polynomial amount. The quantity $\delta^T$ denotes the width of the so-called ``twisted tunnel'' (see Definition~\ref{def:twisted}).
		\item $\pgig$ denotes a large polynomial in $n$; specifically we let $\pgig=n/\delt$. The quantity $\pgig$ represents the number of sensor agents for each circuit encoder (see Definition \ref{def:sensors}).
		\item $\pmeg$ denotes a large polynomial in $n$, which is however smaller than $\pgig$ by a polynomial factor. The quantity $\pmeg$ will be used in the definition of the ``blanket-sensor agents'' (see Definition~\ref{def:blanket}) and will quantify the extent to which the cuts in the ``coordinate-encoding region'' (Definition~\ref{def:c-e-region}) are allowed to differ from being evenly spaced, before the blanket-sensor agents become active (see Section~\ref{sec:bb}). The choice of $\pmeg$ controls the value $\delta_w$ of the radius of the Significant Region (see Proposition~\ref{prop:s-r-radius}), with larger $\pmeg$ meaning larger $\delta_w$.
		\item $\varepsilon$ is the precision parameter in the Consensus-Halving solution, i.e. each agent $i$ is satisfied with a partition as long as $|\mu_i(\lplus)-\mu_i(\lminus)|\leq\varepsilon$. Henceforth, we will set $\epsilon = \delt/10$. 
	\end{itemize}
\end{paragraph}

\subsection{Basic building-blocks}\label{sec:bbb}

We consider instances $I_{CH}$ of \ch\ that have been derived from instances
$I_{VT}$ of \nvhdt\ in $n$ dimensions.
The general aim is to get any solution of such an instance $I_{CH}$ to encode a
point in $n$ dimensions that ``localises'' a solution to $I_{VT}$, by which we mean that 
from the solution of $I_{CH}$, we will be able to find a point on the $I_{VT}$ instance that can be 
transformed to a solution of $I_{VT}$ in polynomial time and fairly straightforwardly.

\begin{definition}\label{def:c-e-region}
{\bf Coordinate-encoding region (c-e region)}
Given an instance of \vhdt\ in $n$ dimensions, the corresponding instance of \ch\ has
a {\em coordinate-encoding region}, the interval $[0,n]$,
a (prefix) subinterval of $A$.
\end{definition}
The valuation functions of agents in an instance $I_{CH}$ of \ch\ obtained by our reduction from
an instance of \nvhdt\ in $n$ dimensions, will be designed in such a way that either
$n-1$ or $n$ cuts (typically $n$) must occur in the coordinate-encoding region, in any solution.
Furthermore, the distance between consecutive cuts must be close to 1
(an additive difference from 1 that is upper-bounded by an inverse polynomial),
shown in Proposition~\ref{prop:s-r-radius}.

\begin{definition}\label{def:c-e-agents}
{\bf Coordinate-encoding agents (c-e agents).}
Given an instance of \nvhdt\ in $n$ dimensions, the corresponding instance of \ch\ has
$n$ {\em coordinate-encoding agents} denoted $\{a_1,\ldots,a_n\}$.
\end{definition}
The $n$ c-e agents have associated $n$ coordinate-encoding cuts (Definition~\ref{def:c-e-cuts}).
It will be seen that the c-e cuts typically occur in the c-e region.
The c-e agents do {\em not} have any value for the coordinate-encoding
region; their value functions are only ever positive elsewhere.
In particular, they have blocks of value whose labels $\lplus/\lminus$ are affected 
by the output gates of the circuitry that is encoded to the right of the c-e region.

\begin{definition}\label{def:c-e-cuts}
{\bf Coordinate-encoding cuts (c-e cuts).}
We identify $n$ cuts as the {\em coordinate-encoding cuts}.
In the instances of \ch\ that we construct, in any (sufficiently good approximate) solution
to the \ch\ instance, all other cuts will be constrained
to lie outside the c-e region (and it will be straightforward to see that
the value functions of their associated agents impose this constraint).
A c-e cut is not straightforwardly constrained to lie in the c-e region,
but it will ultimately be proved that in any approximate solution,
the c-e cuts do in fact lie in the c-e region.
\end{definition}
Recall that $\pgig= n / \delt$ from Section \ref{sec:probres}, which implies that the c-e region can be divided into $\pgig$ intervals of length $\delt$ (see also Figure \ref{fig:circuit1}).

\begin{definition}\label{def:shiftedversion}
	{\bf $\sigma$-shifted version.}
Given a value function $f$ (or measure) on the c-e region $[0,n]$, we say that another function $f'$
on the c-e region is a \emph{$\sigma$-shifted version} of $f$, when we have that $f'((x-\sigma)\mod n) = f(x)$.
\end{definition}

Recall that the circuit-encoding region (details in Section~\ref{sec:reduction}) contains $p^C$
{\em circuit-encoders}, mentioned in the following definitions.

\begin{definition}\label{def:sensors}
{\bf Sensor agents.}
Each circuit-encoder $C_i$, $i=1,\ldots,p^C$, has a set $\sensor_i$ of sensor agents.
$\sensor_i = \{s_{i,1},\ldots,s_{i,\pgig}\}$ where the $s_{i,j}$ are defined as follows.
When $i=1$, $s_{1,j}$ has value $\frac{1}{10}$ uniformly distributed over the interval
\[
\Bigl[(j-1)\delt,(j-1)\delt + \frac{\delt}{p^C} \Bigr].
\]
For $i>1$, $s_{i,j}$ is a $\frac{1}{p^C}(i-1)\delt$-shifted version of $s_{1,j}$.\\

\noindent Each sensor agent $s_{i,j}$ also has valuation outside the c-e region, in non-overlapping
intervals of the circuit-encoding region $R_i$ (see Section \ref{sec:forward}). This valuation consists of two valuation blocks of value $\frac{9}{20}$ each, with no other valuation block in between. These are exactly as described in~\cite{FG17}
, see also Appendix \ref{sec:app_details} and Figure \ref{fig:sensor_example} for an illustration.

This value gadget for $s_{i,j}$ causes the $j$-th input gate in the circuit-encoder $C_i$ to be
set according to the label received by $s_i$'s block of value in the c-e region, i.e. jump to the left or to the right in order to indicate that the corresponding value-block of $s_i$ in the c-e region is labelled $\lplus$ or $\lminus$.
\end{definition}

According to the definitions above, $C_1$ has a sequence of (a large polynomial number of) sensor agents that
have blocks of value in a sequence of small intervals going from left to right of
the c-e region (see Figure \ref{fig:circuit1}).
For $1<i\leq p^C$, $C_i$ has a similar sequence, shifted slightly to the right on the c-e region (by $\delt (i-1)/p^{C}$).
For $j\in[\pgig]$, the intervals defined by the value-blocks of the sensor agents $s_{1,j}, \ldots, s_{p^C,j}$ (for $C_1, \ldots, C_{p^C}$) partition the interval $[(j-1)\delt,j\delt]$.\\

\noindent \textbf{Remark:} Note that a c-e cut may divide one of the above value-blocks held by a sensor
agent in the c-e region, and in that case the input being supplied
(using the gadget of \cite{FG17}) to its
circuit-encoder is \emph{unreliable}. However, only $n$ sensor agents may be affected in
that way, and their circuit-encoders will get ``out-voted'' by the ones that receive
valid boolean inputs. This is part of the reason why we use $p^C$ circuit-encoders in total.
More details on this averaging argument are provided in Section~\ref{sec:reduction}.

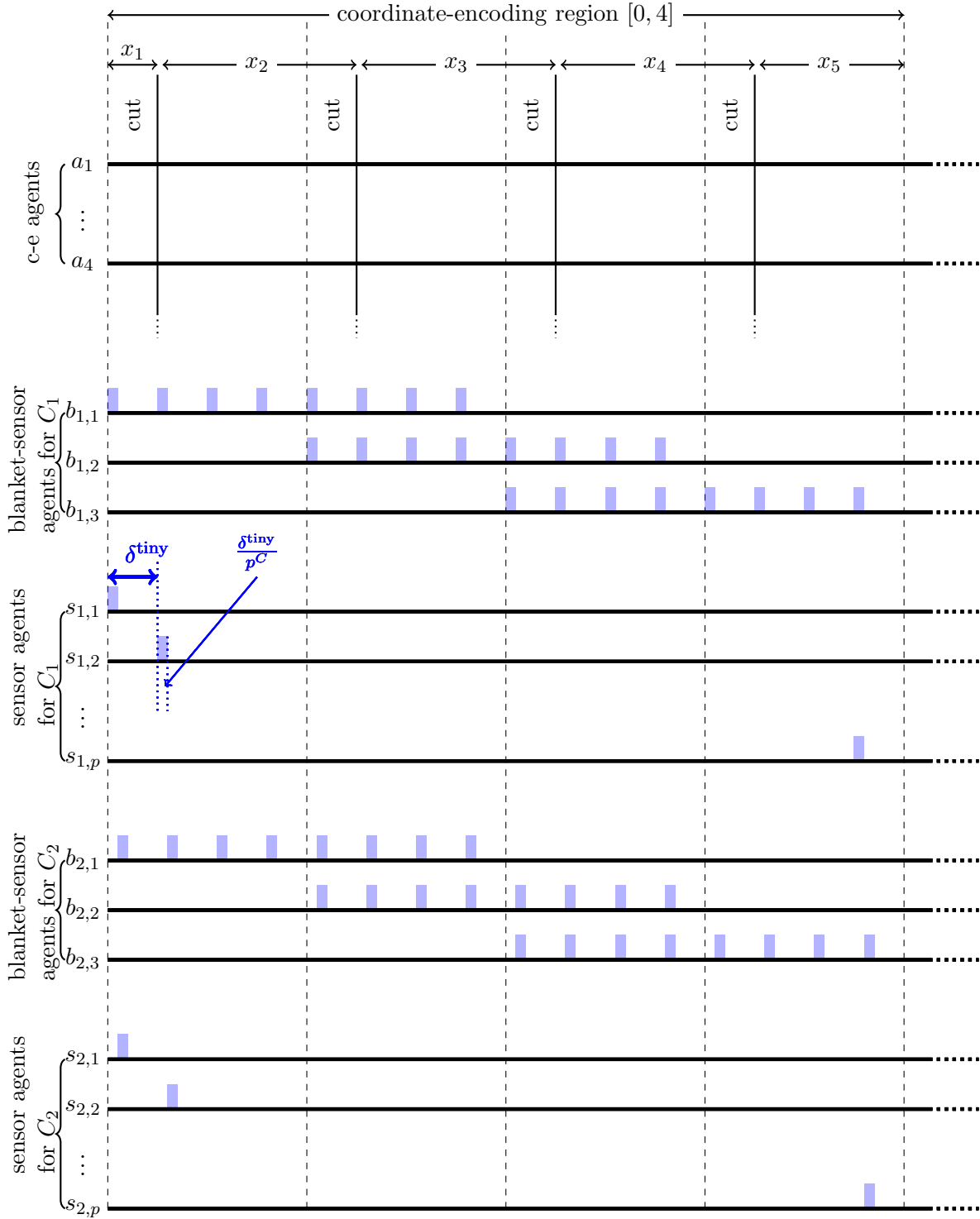
\begin{figure}
\center{
\begin{tikzpicture}[scale=0.8]
\tikzstyle{agstyle}=[ultra thick]
\tikzstyle{darrstyle}=[thick,<->,shorten >=0.1cm, shorten <=0.1cm]
\tikzstyle{cutstyle}=[black,thick]
\tikzstyle{valuestyle}=[fill=blue,opacity=0.3]

\foreach \x in {0,4,8,12,16}{
\draw[dashed](\x,-1)--(\x,23);
}

\draw[thick,<->](0,23)--(16,23);\draw[thick,<->](0,22)--(1,22);
\draw[thick,<->](1.1,22)--(5,22);\draw[thick,<->](5.1,22)--(9,22);
\draw[thick,<->](9.1,22)--(13,22);\draw[thick,<->](13.1,22)--(16,22);

\node[fill=white]at(8,23){coordinate-encoding region $[0,4]$};
\node at(0.5,22.25){$x_1$};
\node[fill=white]at(3,22){$x_2$};
\node[fill=white]at(7,22){$x_3$};
\node[fill=white]at(11,22){$x_4$};
\node[fill=white]at(14.5,22){$x_5$};

\draw [thick,decorate,decoration={brace,amplitude=4pt},xshift=4pt,yshift=0pt]
      (-1,18)--(-1,20) node [midway,left,xshift=-0.5cm,yshift=1cm,rotate=90] {c-e agents}; 

\draw [thick,decorate,decoration={brace,amplitude=4pt},xshift=4pt,yshift=0pt]
      (-1,13)--(-1,15) node [midway,left,xshift=-0.5cm,yshift=2cm,rotate=90,text width=3cm]
      {blanket-sensor agents for $C_1$}; 

\draw [thick,decorate,decoration={brace,amplitude=4pt},xshift=4pt,yshift=0pt]
      (-1,8)--(-1,11) node [midway,left,xshift=-0.5cm,yshift=2cm,rotate=90,text width=2.5cm]
      {sensor agents for $C_1$}; 

\foreach \y in {0,1,2}{
  \foreach \x in {0,1,...,7}{
    \fill[valuestyle,xshift=\x cm,yshift=-1*\y cm]
    (4*\y,15)--(4*\y+0.2,15)--(4*\y+0.2,15.5)--(4*\y,15.5)--cycle;}}
\fill[valuestyle](0,11)--(0.2,11)--(0.2,11.5)--(0,11.5)--cycle;
\fill[valuestyle](1,10)--(1.2,10)--(1.2,10.5)--(1,10.5)--cycle;
\fill[valuestyle](15,8)--(15.2,8)--(15.2,8.5)--(15,8.5)--cycle;

\node at(-0.5,20){$a_1$};\node at(-0.5,19){$\vdots$};\node at(-0.5,18){$a_4$};
\node at(-0.5,15){$b_{1,1}$};\node at(-0.5,14){$b_{1,2}$};\node at(-0.5,13){$b_{1,3}$};
\node at(-0.5,11){$s_{1,1}$};\node at(-0.5,10){$s_{1,2}$};\node at(-0.5,9){$\vdots$};
\node at(-0.5,8){$s_{1,p}$};


\draw [thick,decorate,decoration={brace,amplitude=4pt},xshift=4pt,yshift=0pt]
      (-1,4)--(-1,6) node [midway,left,xshift=-0.5cm,yshift=2cm,rotate=90,text width=3cm]
      {blanket-sensor agents for $C_2$}; 

\draw [thick,decorate,decoration={brace,amplitude=4pt},xshift=4pt,yshift=0pt]
      (-1,-1)--(-1,2) node [midway,left,xshift=-0.5cm,yshift=2cm,rotate=90,text width=2.5cm]
      {sensor agents for $C_2$};

\foreach \y in {0,1,2}{
  \foreach \x in {0,1,...,7}{
    \fill[valuestyle,xshift=\x cm,yshift=-1*\y cm]
    (4*\y+0.2,6)--(4*\y+0.4,6)--(4*\y+0.4,6.5)--(4*\y+0.2,6.5)--cycle;}}
\fill[valuestyle,xshift=0.2cm](0,2)--(0.2,2)--(0.2,2.5)--(0,2.5)--cycle;
\fill[valuestyle,xshift=0.2cm](1,1)--(1.2,1)--(1.2,1.5)--(1,1.5)--cycle;
\fill[valuestyle,xshift=0.2cm](15,-1)--(15.2,-1)--(15.2,-0.5)--(15,-0.5)--cycle;

\node at(-0.5,6){$b_{2,1}$};\node at(-0.5,5){$b_{2,2}$};\node at(-0.5,4){$b_{2,3}$};
\node at(-0.5,2){$s_{2,1}$};\node at(-0.5,1){$s_{2,2}$};\node at(-0.5,0){$\vdots$};
\node at(-0.5,-1){$s_{2,p}$};


\foreach \x in {1,5,9,13}{
\draw[cutstyle](\x,17)--(\x,21.8);
\draw[cutstyle,dotted](\x,17)--(\x,16.5);

\node[rotate=90]at(\x-0.5,21){cut};

\foreach \y in {20,18,15,14,13,11,10,8,6,5,4,2,1,-1}{
  \draw[agstyle](0,\y)--(16.5,\y);\draw[agstyle,dotted](16.5,\y)--(17.5,\y);}

\draw[ultra thick,blue,<->](0,11.7)--(1,11.7);
\draw[dotted,thick,blue](1,12)--(1,9);\node[blue] at(0.8,12.2){$\delt$};
\draw[dotted,thick,blue](1.2,10.5)--(1.2,9);\draw[thick,blue,->](3,11.7)--(1.15,9.5);
\node[blue] at(3,12.2){$\frac{\delt}{p^C}$};

}

\end{tikzpicture}
\caption{\small{Sensor illustration: example of $n=4$ c-e cuts representing
5 coordinates summing to 1 (a typical point in the M\"{o}bius-simplex).
Vertical lines depict the cuts, resulting in labels that alternate
between $\lplus$ and $\lminus$, starting with $\lplus$.
Shaded blocks over agents' lines indicate value-blocks of their value functions.
We only depict sensors for circuit-encoders $C_1$ and $C_2$.}}\label{fig:circuit1}}
\end{figure}

\begin{definition}\label{def:blanket}
{\bf Blanket-sensor agents.}
Each circuit-encoder $C_i$ shall have $n-1$ {\em blanket-sensor agents} $b_{i,2},\ldots,b_{i,n}$.
\begin{enumerate}
\item
In $C_1$, for each $j=2,\ldots,n$, blanket-sensor agent $b_{1,j}$ has value $1/10$ distributed over $[j-2,j]$
(see Figure~\ref{fig:circuit1}). This value consists of a sequence $$\left[(j-2),(j-2)+\frac{\delt}{p^C}\right],\cdots,\left[j-\delt,j-\delt+\frac{\delt}{p^C}\right]$$ of $2/\delt=2\pgig/n$ value-blocks, each of length
$\delt/p^C$ and of value $\frac{1}{10}\cdot(\delt/2)$.
\item In each $C_i$, $1<i\leq p^C$, and for each $j=2,\ldots,n$, the value function of $b_{i,j}$ that
lies in the c-e region is an $(i-1)\frac{\delt}{p^C}$-shifted version of $b_{1,j}$.
\item
The remaining value $9/10$ of each $b_{i,j}$ consists of 3 value-blocks
of width $\delt$ lying in a subinterval $I_{i,j}$ of the circuit-encoding region $R_i$ (see Section \ref{sec:forward}), such that:
\begin{itemize}
	\item[-] the value-blocks have values
	\[ \frac{9(1-\kappa)}{20}, \ \ \ \ 
	\frac{9\kappa}{10}, \ \ \ \ 
\frac{9(1-\kappa)}{20}
	\]
	respectively, where $\kappa = \Bigl(\frac{1}{10} \frac{\delt}{2}\Bigr)\pmeg$.
	\item[-] $I_{i,j}$ contains also value-blocks of agents for each gate that takes the value of $b_{i,j}$ as input (the feedback gadgetry, see Section \ref{sec:cea}).
\end{itemize}
\end{enumerate}
\end{definition}
The value of the blanket-sensor agents in $I_{i,j}$ is very similar to the gadget used in \cite{FG17}, 
see Appendix \ref{sec:app_details} (of the present paper) and Figure \ref{fig:blanket_example}.
The structure of the blanket-sensor agents in the c-e region is shown in Figure~\ref{fig:circuit1}.

\subsubsection*{Notes on the blanket-sensors}
Each blanket-sensor agent $b_{i,j}$ has an associated cut $c(b_{i,j})$ that lies in the subinterval $I_{i,j}$. Agent $b_{i,j}$ ``monitors'' an interval of length 2, namely the interval $[j-2,j]$
within which the sequence of $2/\delt$ value-blocks lie.
If, in this interval, the number of these value-blocks labelled $\lplus$ exceeds
the number labelled $\lminus$ by at least $\pmeg$
(recall that $\pmeg$ is a large polynomial which is however polynomially smaller than $\pgig$) then (in any $\varepsilon$-approximate solution
to $I_{CH}$, where, recall, $\varepsilon=\delt/10$), the cut $c(b_{i,j})$ in $I_{i,j}$ lies in either the right-hand or the left-hand value-block,
otherwise it lies in the central value-block.
Note that these three possible positions may be converted to boolean values that influence circuit-encoder $C_i$; this was referred to as a ``bit-detection gadget'' in \cite{FG17}, see Appendix \ref{sec:app_details} for more details.

\begin{definition}[Active blanket-sensor]\label{def:bsactive}
We say that blanket-sensor $b_{i,j}$ is {\em active} if $b_{i,j}$ in fact observes a sufficiently large label discrepancy
in the c-e region, that $c(b_{i,j})$ lies in one of the two outer positions, left or right, and not in the central position. We say that $b_{i,j}$ is {\em active towards $\lplus$} if $\lplus$ is the overrepresented label,
with similar terminology for $\lminus$.

When blanket-sensor agent $b_{i,j}$ is active, it provides input to $C_i$ that causes
$C_i$ to label the value held by $a_j$ and controlled by $C_i$, to be either $\lplus$ or $\lminus$; the choice depends on 
the over-represented label in $[j-2,j]$ and the parity of the index of the blanket-sensor agent. The precise feedback mechanism to the c-e agent $a_j$ by the blanket-sensor $b_{i,j}$ is described in Section~\ref{sec:bsa}.

\end{definition}
When no blanket-sensors are active, the sequence of c-e cuts encodes a point in the
Significant Region (Definition \ref{def:sig-region}).

\bigskip
\noindent In \cite{FG17}, we worked just in two dimensions and there was just one
blanket-sensor agent for the entire c-e region, for each circuit-encoder.
Note also that there, the blanket-sensor agent had a single value-block
of length 2; here we split it into a polynomial sequence of small
value-blocks. The advantage of using a polynomial sequence of
value-blocks (which could not have been done in \cite{FG17} due to the
exponential precision requirement) is that we can argue that in all but
at most $n$ circuit-encoders, the blanket-sensor agents have value-blocks
that are not cut by the c-e cuts, so we can be precise about how big
a disparity between blocks labelled $\lplus$ and $\lminus$ cause
a blanket-sensor to be active, and for at most $n$ circuit-encoders,
we regard them as having unreliable inputs
(see Definition~\ref{def:reliable} and Observation~\ref{obs:reliable}).

\subsection{Features of solutions}

The main result of this section is Proposition~\ref{prop:s-r-radius},
that in a solution to approximate \ch\ as constructed here,
the sequence of cuts in the c-e region are ``evenly spaced'' in the sense that
the gap between consecutive cuts differs from 1 by at most an inverse-polynomial.

\begin{observation}[At most $n$ cuts in the c-e region]\label{obs:cer}
Given an instance $I_{CH}$ derived by our reduction from an instance of \nvhdt\ in
$n$ dimensions, any inverse-polynomial approximate solution of $I_{CH}$ has the property that
at most $n$ cuts lie in the coordinate-encoding region.
This is because all other cuts are associated with agents who have
at least $9/10$ of their value strictly to the right of the c-e region,
thus in a solution, those cuts cannot lie in the c-e region.
\end{observation}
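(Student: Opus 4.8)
The plan is a counting argument. First, recall that in any instance produced by the reduction the number of cuts equals the number of agents; call this common number $N$. The agents fall into two groups: the $n$ coordinate-encoding agents $a_1,\dots,a_n$ of Definition~\ref{def:c-e-agents}, which have no value at all in the c-e region, and the remaining $N-n$ ``other'' agents --- the sensor agents, the blanket-sensor agents, and the gate agents of the $p^C$ circuit-encoders (Sections~\ref{sec:reduction} and~\ref{sec:forward}). The target is to show that each of these $N-n$ other agents forces a \emph{distinct} cut into the circuit-encoding region $R$; since there are only $N$ cuts in total, at most $n$ can then be left over for the c-e region.

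The key structural input I would extract from the gadget definitions is this: every other agent $a$ has at least $9/10$ of its total value located in a dedicated sub-interval $H_a\subseteq R$, with at most the remaining $1/10$ lying in the c-e region, and the intervals $H_a$ are pairwise disjoint across all other agents. For a sensor agent, $H_a$ is the interval spanning its two value-blocks of value $9/20$ each, which by Definition~\ref{def:sensors} sit in non-overlapping intervals of $R_i$ with nothing in between; for a blanket-sensor $b_{i,j}$ it is (a sub-interval of) $I_{i,j}$, where its three blocks live (Definition~\ref{def:blanket}); and for a gate agent it is the analogous slot carved out by the feedback/evaluation gadgetry (Section~\ref{sec:cea}). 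I would define $H_a$ not merely as the convex hull of $a$'s $R$-blocks but as the specific slot that the gadget forces a cut into, so that even when several agents' blocks interleave inside one $I_{i,j}$ the slots remain disjoint.

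Given this, the argument runs as follows. Fix an inverse-polynomial approximate solution and an other agent $a$. If no cut lay in $H_a$, then the entire $R$-part of $a$'s value --- at least $9/10$ of its mass --- would receive a single label, and the at-most-$1/10$ of $a$'s value lying outside $H_a$ can move the balance by at most $1/10$; hence $|\mu_a(\lplus)-\mu_a(\lminus)|\ge \tfrac{9}{10}-\tfrac{1}{10}=\tfrac{4}{5}$, which exceeds $\varepsilon$ (any inverse-polynomial $\varepsilon$, and certainly $\varepsilon=\delt/10$, is smaller than $\tfrac{4}{5}$). This contradicts $a$ being $\varepsilon$-satisfied, so there is a cut in $H_a$. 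Because the $H_a\subseteq R$ are pairwise disjoint, selecting one such cut for each of the $N-n$ other agents produces $N-n$ distinct cuts, all lying in $R$; therefore at most $N-(N-n)=n$ cuts lie in the c-e region.

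I expect the main obstacle to be the structural claim of the second paragraph --- that every non-c-e agent's valuation gadget really does confine its ``decisive'' cut to a sub-interval of $R$ disjoint from those of all other agents, including the delicate case where the blanket-sensor blocks and the feedback-gadget blocks share an interval $I_{i,j}$. This has to be read off carefully from the constructions of Sections~\ref{sec:reduction}, \ref{sec:forward} and~\ref{sec:cea}; the counting itself is then immediate. Note also that the statement bounds only the \emph{number} of cuts in the c-e region and says nothing about \emph{which} cuts they are --- establishing that the $n$ c-e cuts actually lie there, near-evenly spaced, is the separate and harder job of Proposition~\ref{prop:s-r-radius}.
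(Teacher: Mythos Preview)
Your proposal is correct and follows essentially the same reasoning as the paper, which embeds its justification directly in the observation (and elaborates slightly in the Remarks after the bullet list in Section~\ref{sec:forward}): every non-c-e agent has $\geq 9/10$ of its value confined to a dedicated small interval in $R$, disjoint from all other such intervals, so each such agent pins down a distinct cut outside the c-e region. You have simply spelled out the counting and the $\tfrac{9}{10}-\tfrac{1}{10}=\tfrac{4}{5}>\varepsilon$ imbalance bound more explicitly than the paper does, and your caution about verifying disjointness of the slots $H_a$ (including within a shared $I_{i,j}$) is well-placed but is indeed guaranteed by the gadget layout.
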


\begin{definition}[Reliable input]\label{def:reliable}
We will say that a circuit-encoder {\em receives reliable input}, if no coordinate-encoding cut passes through value-blocks of its sensor agents.
\end{definition}

\begin{observation}\label{obs:reliable}
At most $n$ circuit-encoders fail to receive reliable input (by Observation~\ref{obs:cer} and the fact that sensors of distinct circuit-encoders have value in distinct intervals).

When a circuit-encoder receives reliable input, it is straightforward to interpret
the labels allocated to its sensors, as boolean values, and simulate a
circuit computation on
those values, ultimately passing feedback to the c-e agents via value-blocks
that get labelled according to the output gates of the circuit being simulated.
This is done in a conceptually similar way to that described in \cite{FG17} (e.g. see Sections 4.4.2 and 4.6 in \cite{FG17}), see also Appendix \ref{sec:app_details} of the present paper.
\end{observation}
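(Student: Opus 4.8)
The plan is to prove the two parts of Observation~\ref{obs:reliable} separately. The first part --- that at most $n$ circuit-encoders fail to receive reliable input --- is a short counting argument built on top of Observation~\ref{obs:cer}. By Observation~\ref{obs:cer}, any inverse-polynomial approximate solution of $I_{CH}$ has at most $n$ cuts in the coordinate-encoding region. By Definition~\ref{def:reliable}, a circuit-encoder $C_i$ is unreliable only if some coordinate-encoding cut splits one of the value-blocks of a sensor agent in $\sensor_i$ into two parts of positive $\mu$-measure; so it suffices to show that a single cut can split the value-block of at most one sensor agent, whence at most $n$ cuts spoil at most $n$ encoders.

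For this I would unwind Definition~\ref{def:sensors}. For a fixed $j\in[\pgig]$, the value-blocks of $s_{1,j},\ldots,s_{p^C,j}$ all have length $\delt/p^C$, and since $s_{i,j}$ is the $\tfrac{1}{p^C}(i-1)\delt$-shifted version of $s_{1,j}$, whose block is $\bigl[(j-1)\delt,(j-1)\delt+\tfrac{\delt}{p^C}\bigr]$, these $p^C$ blocks tile $\bigl[(j-1)\delt,j\delt\bigr]$ with pairwise-disjoint interiors; the intervals $\bigl[(j-1)\delt,j\delt\bigr]$ over $j\in[\pgig]$ are in turn disjoint and together exhaust the c-e region $[0,n]$ (recall $\pgig\cdot\delt=n$). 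Hence every sensor value-block sitting in the c-e region is a subinterval of $[0,n]$, and these subintervals have pairwise-disjoint interiors. A cut is a single point, so it lies in the interior of at most one of them --- and a cut placed exactly at a shared endpoint leaves both adjacent blocks entirely to one side, splitting neither. This proves the first part.

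For the second part I would recall the feedback gadget, which is essentially that of~\cite{FG17} (see Appendix~\ref{sec:app_details} and Figure~\ref{fig:sensor_example}). If $C_i$ is reliable then each sensor agent $s_{i,j}$ has its c-e value-block labelled wholly $\lplus$ or wholly $\lminus$; the two value-blocks of value $\tfrac{9}{20}$ that $s_{i,j}$ owns in its part of the circuit-encoding region, together with the agent's own cut which must fall between them, then force that cut to one side or the other, and this position is read off as the boolean value of the $j$-th input gate of $C_i$. The standard gate gadgets of~\cite{FG17} propagate these boolean inputs through the circuit encoded inside $C_i$'s region, and the $2n$ output gates in turn determine, through their own value-blocks, whether the feedback value-blocks of the coordinate-encoding agents $a_1,\ldots,a_n$ are labelled $\lplus$ or $\lminus$; since the regions of the $p^C$ circuit-encoders are pairwise disjoint, these simulations proceed independently.

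I do not expect a genuine obstacle inside this observation: it is almost entirely bookkeeping over Definitions~\ref{def:sensors} and~\ref{def:reliable} and Observation~\ref{obs:cer}. The only point needing care is fixing the precise reading of ``passes through'' as ``splits into two positive-measure parts'', which is what makes the tiling argument clean and is consistent with a valuation being insensitive to a cut placed at the boundary of one of its blocks. The real difficulty lies downstream of this observation --- since up to $n$ encoders may send arbitrary feedback, one must later show (via the ``averaging''/out-voting argument of Section~\ref{sec:reduction} and the double-negative lemma of~\cite{FG17}) that the at least $p^C-n$ reliable encoders dominate --- but that is not part of the statement here.
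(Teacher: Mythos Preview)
Your proposal is correct and follows exactly the approach the paper has in mind: the paper justifies the first part only with the parenthetical ``by Observation~\ref{obs:cer} and the fact that sensors of distinct circuit-encoders have value in distinct intervals,'' and your tiling argument from Definition~\ref{def:sensors} is precisely the unpacking of that phrase. Your treatment of the second part likewise matches the paper, which simply defers to the gadgetry of~\cite{FG17} and Appendix~\ref{sec:app_details}; you have in fact written out more detail than the paper itself does for this observation.
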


\begin{definition}\label{def:simplex-domain}
{\bf The M\"{o}bius-simplex.}
The {\em M\"{o}bius-simplex in $n$ dimensions} consists of points $\xb$ in $\rset^{n+1}$
whose coordinates are non-negative and sum to 1.
We identify every point $(x_1,\ldots,x_n,0)$ with the point $(0,x_1,\ldots,x_n)$,
for all non-negative $x_1,\ldots,x_n$ summing to 1.
We use the following metric $d(\cdot,\cdot)$ on the M\"{o}bius-simplex,
letting $L_1$ be the standard $L_1$ distance on vectors:
\begin{equation}\label{eq:metric}
d(\xb,\xb')=\min \Bigl( L_1(\xb,\xb'),
\min_{\zb,\zb' :  \zb\equiv \zb'}(L_1(\xb,\zb)+L_1(\zb',\xb')) \Bigr)
\end{equation}
where $(0,x_1,\ldots,x_{n})\equiv(x_1,\ldots,x_n,0)$.
\end{definition}

\begin{paragraph}{How a consensus-halving solution encodes a point in the M\"{o}bius-simplex}
Let $I_{CH}$ be an instance of \ch, obtained by reduction from \nvhdt\ in $n$ dimensions,
hence having c-e region $[0,n]$. Note that, by Observation~\ref{obs:cer}, at most $n$ cuts
may lie in the c-e region.
A set of $k\leq n$ cuts of the coordinate-encoding region splits it into $k+1$ pieces.
We associate such a split with a point $\xb$ in $\rset^{n+1}$ as follows.
The first coordinate is the distance from the LHS of the consensus-halving domain
to the first cut, divided by $n$, the length of the c-e region.
For $2\leq i\leq k+1$, the $i$-th coordinate of $\xb$ is the distance between
the $i-1$-st and $i$-th cuts, divided by $n$. Remaining coordinates are 0.

If there are $n-1$ cuts in the c-e region, suppose we add a cut at either the LHS or the RHS.
These two alternative choices correspond to a pair of points that have been identified as the same point,
as described in Definition~\ref{def:simplex-domain}.
(Observation~\ref{obs:flip} makes a similar point regarding transformed coordinates.)
\end{paragraph}

\begin{observation}\label{obs:average}
Each circuit-encoder reads in ``input'' representing a point in the M\"{o}bius-simplex.
Any circuit-encoder $C_i$ ($i\in[p^C]$) behaves like $C_1$ on a point $\xb_i$,
for which (for all $i,j\in[p^C]$) $d(\xb_i,\xb_j)\leq \delt$ (recall $d$ is defined in (\ref{eq:metric})).
Consequently their collective output (the split between $\lplus$ and $\lminus$ of the
value held by the c-e agents) is the output of a single circuit-encoder averaged over
a collection of $p^C$ points in the M\"{o}bius-simplex, all within $\delt$ of each other.

This follows by inspection of the way the $p^C$ circuit-encoders differ from each other:
their sensor-agents are shifted but their internal circuitry is the same.
\end{observation}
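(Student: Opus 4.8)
The plan is to unwind Definitions~\ref{def:sensors} and~\ref{def:shiftedversion} together with the cut-to-M\"{o}bius-simplex correspondence described after Definition~\ref{def:simplex-domain}, and then to observe that passing from $C_1$ to $C_i$ is nothing more than a tiny translation of the sampling grid. Fix any placement of the cuts of the instance $I_{CH}$; by Observation~\ref{obs:cer} at most $n$ of them lie in the c-e region $[0,n]$, and the labelling there alternates starting with $\lplus$, so it is a $\lplus/\lminus$ pattern with at most $n$ flips. By Definition~\ref{def:sensors}, sensor $s_{1,j}$ of $C_1$ has its length-$(\delt/p^C)$ value-block at the left end of $[(j-1)\delt, j\delt]$, and $s_{i,j}$ is the $\tfrac{1}{p^C}(i-1)\delt$-shifted version of $s_{1,j}$ in the (cyclic, $\bmod\,n$) sense of Definition~\ref{def:shiftedversion}, so that the blocks of $s_{1,j},\dots,s_{p^C,j}$ partition $[(j-1)\delt,j\delt]$. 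The sensor gadget of Definition~\ref{def:sensors} turns the $\pgig$ labels of these blocks into the boolean input of the corresponding circuit-encoder; hence $C_i$ samples the $\lplus/\lminus$ pattern at the $\pgig$ points $(j-1)\delt+\sigma_i$, $j\in[\pgig]$, where $\sigma_i:=\tfrac{1}{p^C}(i-1)\delt$ --- that is, $C_i$ reads exactly the kind of input $C_1$ does, but on the grid of $C_1$ translated cyclically by $\sigma_i$. Reading off in which $\delt$-interval each of the at most $n$ flips occurs, with the end of $[0,n]$ treated via the identification $(x_1,\dots,x_n,0)\equiv(0,x_1,\dots,x_n)$ (precisely what the $\bmod\,n$ in the shift and the M\"{o}bius identification encode), decodes, just as for $C_1$, to a point $\xb_i$ of the M\"{o}bius-simplex; here $\xb_1$ is the point $\xb$ encoded by the cuts themselves, and by construction $C_i$ acting on the cuts behaves exactly like $C_1$ acting on $\xb_i$. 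This establishes that every circuit-encoder reads in a point of the M\"{o}bius-simplex and behaves like $C_1$ on it.

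Next I would bound $d(\xb_i,\xb_j)$. Translating the sampling grid by $\sigma_i$ is the same as translating the $\lplus/\lminus$ pattern, equivalently the cut vector, by $-\sigma_i$; so $\xb_i$ arises from $\xb_j$ by a cyclic translation of the cuts by $\sigma_i-\sigma_j$. Such a translation by $\tau$ leaves every interior gap unchanged and alters only the first and last coordinates of the corresponding simplex point, each by $|\tau|/n$; when a cut is carried past an endpoint of $[0,n]$ this is exactly the identification of Definition~\ref{def:simplex-domain}, so the estimate is valid in the metric $d$. Consequently, using $|\sigma_i-\sigma_j|=\tfrac{|i-j|}{p^C}\delt$, $|i-j|\le p^C-1$ and $n\ge2$,
\[
d(\xb_i,\xb_j)\ \le\ \frac{2\,|\sigma_i-\sigma_j|}{n}\ \le\ \frac{2(p^C-1)}{n\,p^C}\,\delt\ <\ \frac{2\delt}{n}\ \le\ \delt ,
\]
so in fact the points lie far closer than $\delt$; $\delt$ is merely the slack needed later.

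For the last assertion, recall that each c-e agent $a_j$ has, outside the c-e region, a family of value-blocks split into $p^C$ groups, the $i$-th of which is labelled $\lplus/\lminus$ according to the output gates of $C_i$ (Definition~\ref{def:c-e-agents} and the circuitry of Section~\ref{sec:reduction}). Hence $\mu_{a_j}(\lplus)-\mu_{a_j}(\lminus)$ is, after normalisation, the average over $i\in[p^C]$ of the signed contribution that a single copy $C_1$ of the circuit-encoder dictates on input $\xb_i$ --- i.e.\ the output of one circuit-encoder averaged over the $p^C$ points $\xb_1,\dots,\xb_{p^C}$, which by the displayed inequality are pairwise within $\delt$. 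That is precisely the claim.

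I expect the only slightly delicate point to be verifying that the coarse, $\delt$-grid-quantised reading of the $\lplus/\lminus$ pattern is still interpreted faithfully as a point of the M\"{o}bius-simplex even when a cut sits near the boundary of a $\delt$-interval or near an endpoint of $[0,n]$; this is where the cyclic ($\bmod\,n$) nature of the shift in Definition~\ref{def:shiftedversion} and the M\"{o}bius identification of Definition~\ref{def:simplex-domain} do the real work, and it is guaranteed by the way the circuit-encoders are built in Section~\ref{sec:reduction}. The rest is bookkeeping with definitions already in place.
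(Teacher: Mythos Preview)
Your proposal is correct and is precisely a careful elaboration of what the paper records as following ``by inspection'': the only difference between the $C_i$ is the cyclic $\sigma_i=\tfrac{i-1}{p^C}\delt$ shift of their sensors, which you correctly reinterpret as $C_1$ acting on cuts translated by $-\sigma_i$, and your $L_1$ bookkeeping (only the first and last simplex coordinates move, by $|\sigma_i-\sigma_j|/n$ each, with the M\"{o}bius identification absorbing any wrap-around) yields the stated bound $d(\xb_i,\xb_j)\le\delt$ with room to spare. The paper does not spell any of this out, so your write-up is strictly more detailed but on the same line of reasoning.
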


\begin{definition}\label{def:sig-region}
{\bf The Significant Region of the M\"{o}bius-simplex $D$.}
The {\em Significant Region} of $D$ consists of all points in $D$ where no blanket-sensors are active
(where ``blanket-sensors'' and ``active'' are defined in Definition \ref{def:bsactive}).
\end{definition}

\begin{proposition}\label{prop:s-r-radius}
There is an inverse-polynomial value $\delta^w$ such that all points $\xb = (x_1,\ldots,x_{n+1})$ in the Significant Region
have coordinates $x_i$ that for $2\leq i\leq n$ differ from $1/n$ by at most $\delta^w$,
if $\xb$ is encoded by the c-e cuts of an $\varepsilon$-approximate solution to one of our
instances of \ch. (Recall that $\varepsilon = \delt/10$).

Thus, if an instance $I_{CH}$ of \ch\ (obtained using our reduction)
has a solution $S_{CH}$, then all the c-e cuts in $S_{CH}$ have the property that the distance between
two consecutive c-e cuts differs from 1 by at most some inverse-polynomial amount.
\end{proposition}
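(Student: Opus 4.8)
The plan is to analyze what it means for a point $\xb=(x_1,\dots,x_{n+1})$ encoded by the c-e cuts of an $\varepsilon$-approximate solution to lie in the Significant Region, i.e.\ for \emph{no} blanket-sensor to be active. Fix some $j$ with $2\le j\le n$ and consider the blanket-sensor agent $b_{i,j}$ in some circuit-encoder $C_i$ that receives reliable input (there are at least $p^C-n$ such encoders by Observation~\ref{obs:reliable}). Agent $b_{i,j}$ has a sequence of $2/\delt$ equally-spaced value-blocks filling the interval $[j-2,j]$ of the c-e region, each of tiny width $\delt/p^C$. The key quantitative fact is that the number of these value-blocks receiving label $\lplus$ versus $\lminus$ is, up to an additive $O(1)$ error caused by the at most $n$ c-e cuts that actually lie in $[j-2,j]$ (and by the fact that $b_{i,j}$ is a shifted copy), controlled by how the total $\lplus$-length and $\lminus$-length are distributed across $[j-2,j]$. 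Since labels alternate starting from $\lplus$, the $\lplus$/$\lminus$ imbalance seen by $b_{i,j}$ is essentially determined by the positions of the cuts falling in $[j-2,j]$ relative to the evenly-spaced grid of $b_{i,j}$'s blocks.

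Next I would translate "not active" into a constraint on the cut spacing. By Definition~\ref{def:bsactive}, $b_{i,j}$ is \emph{inactive} precisely when the discrepancy between $\lplus$-labelled and $\lminus$-labelled value-blocks in $[j-2,j]$ is strictly less than $\pmeg$; when it is at least $\pmeg$, the associated cut $c(b_{i,j})$ is forced into an outer value-block of $I_{i,j}$ (here one uses that $\varepsilon=\delt/10$ is small relative to the block values $9\kappa/10$ and $9(1-\kappa)/20$, with $\kappa=(\tfrac{1}{10}\tfrac{\delt}{2})\pmeg$, so the approximate-solution slack cannot mask the imbalance). Since each value-block of $b_{i,j}$ has width $\delt/p^C$ and there are $2/\delt$ of them per unit length, a discrepancy of fewer than $\pmeg$ blocks corresponds to the $\lplus$-mass and $\lminus$-mass within $[j-2,j]$ differing by at most roughly $\pmeg \cdot \delt/p^C$ in measure, hence (dividing by $n$ to pass to M\"obius-simplex coordinates, and recalling $\pgig=n/\delt$) by at most an inverse-polynomial amount $\delta^w$ whose size is governed by $\pmeg$ — exactly as anticipated in the "Useful quantities" paragraph. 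Because $[j-2,j]$ is covered by the two coordinate-slots around $x_j$ (more precisely, the cuts $x_{j-1}$ and $x_j$ in the encoding), an inactive $b_{i,j}$ forces $x_j$ (for $2\le j\le n$) to differ from $1/n$ by at most $\delta^w$: if $x_j$ were too large or too small, one of the windows monitored by a $b_{\cdot,j}$ would see an imbalance exceeding $\pmeg$.

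The second claim follows from the first almost immediately: by Observation~\ref{obs:cer} at most $n$ cuts lie in the c-e region $[0,n]$; combined with the valuation functions of the c-e agents $a_1,\dots,a_n$, any solution forces $n-1$ or $n$ cuts into $[0,n]$, and Observation~\ref{obs:average} plus the reasoning above (applied to the $\ge p^C-n$ reliable encoders, whose majority vote cannot be overturned by $\le n$ unreliable ones) shows that any solution $S_{CH}$ encodes a point of the Significant Region. Then the coordinates $x_2,\dots,x_n$ are within $\delta^w$ of $1/n$, and since all $n{+}1$ coordinates are non-negative and sum to $1$, the remaining two coordinates $x_1,x_{n+1}$ are also pinned to within $O(n\delta^w)$ of $1/n$. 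Translating back via the encoding (each gap between consecutive c-e cuts is $n$ times the corresponding coordinate), the distance between consecutive c-e cuts differs from $1$ by at most an inverse polynomial.

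The main obstacle I expect is the careful bookkeeping in the first two paragraphs: making the "$O(1)$ error from the $\le n$ interfering cuts" and the "shifted-copy" discrepancies precise, and — crucially — verifying that the $\varepsilon$-slack really cannot hide an imbalance, which forces one to check the inequalities among $\varepsilon=\delt/10$, $\kappa$, and the three block values $\tfrac{9(1-\kappa)}{20},\tfrac{9\kappa}{10},\tfrac{9(1-\kappa)}{20}$ governing the bit-detection gadget. Getting the polynomial $\pmeg$ (and hence $\delta^w$) chosen so that it is simultaneously large enough that at most $n$ unreliable encoders are harmless (the averaging argument of Observation~\ref{obs:average}) and small enough relative to $\pgig$ that the block-counting argument is valid is the delicate balance the proof must strike.
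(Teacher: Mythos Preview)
Your high-level intuition is right, but there is a genuine gap in the core step. You claim that an inactive $b_{i,j}$ directly forces $x_j$ to be within $\delta^w$ of $1/n$, because ``if $x_j$ were too large or too small, one of the windows monitored by a $b_{\cdot,j}$ would see an imbalance exceeding $\pmeg$.'' This inference is not valid as stated. The quantity the blanket-sensor $b_{i,j}$ measures is the $\lplus$/$\lminus$ balance over the \emph{whole} length-2 window $[j-2,j]$, and that balance depends on \emph{all} cuts landing in $[j-2,j]$, not just on the single gap $n\cdot x_j$ between cuts $j-1$ and $j$. In particular, two cuts can be very close together inside $[j-2,j]$ (so some $x_j$ is tiny) while a third or fourth cut in the same window restores near-perfect $\lplus/\lminus$ balance there; your measure-counting argument would then say $b_{i,j}$ is inactive and conclude nothing about $x_j$.

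The paper handles exactly this obstruction by a combinatorial propagation argument rather than a local one. It proves the contrapositive in two cases. If two consecutive cuts are more than $1+2\delta$ apart, one finds directly a monochromatic subinterval of length $>1+\delta$ inside some monitored window. The delicate case is when two consecutive cuts are closer than $1-2n\delta$: then some window $[j-2,j]$ contains at least three cuts, and the paper uses two pigeonhole-style lemmas (an interval of length $k$ with $\le k-2$ cuts forces an active sensor; with exactly $k-1$ cuts the cuts must sit $\delta$-close to the midpoints of the monitored subintervals) applied to the complementary intervals $[0,j-2]$ and $[j,n]$. These lemmas pin down the positions of the remaining cuts tightly enough that a further parity/case analysis on the 3 or 4 cuts inside $[j-2,j]$ shows some sensor---possibly $b_{i,j\pm 1}$ rather than $b_{i,j}$---must be active. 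The point is that the activating sensor is found \emph{non-locally}, and your direct $x_j\leftrightarrow b_{i,j}$ link does not capture this. Your last paragraph anticipates bookkeeping difficulties with $\varepsilon,\kappa,\pmeg$, but the real missing idea is this global cut-counting argument.
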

Before we proceed with the proof of the proposition, we will state a few simple lemmas that will be used throughout the proof. We start with the following definition.

\begin{definition}[Cut $\delta$-close to integer point]
	For $\ell \in \{0,\ldots,n\}$, we will say that a cut $c$ is $\delta$-\emph{close to integer point} $\ell$, if it lies in $[\ell-\delta,\ell+\delta]$. We will say that cut $c$ is \emph{$\delta$-close to an integer point} if there is some integer $\ell \in \{0,\ldots,n\}$ such that $c$ is $\delta$-close to integer point $\ell$.
\end{definition}
Intuitively, cuts that are $\delta$-close to integer points lie close
(within distance at most $\delta$) to either the endpoints or the midpoint of
some monitored interval $[j-2,j]$.

Note that, by Definition \ref{def:blanket}, a blanket-sensor agent will be active when at least $\pgig/n + \pmeg$ value-blocks of volume $\frac{1}{10}\cdot \frac{\delt}{2}$ in an interval monitored by the blanket-sensor agent receive the same label. This will happen if there is a union $\bigcup_{\ell}I_{\ell}$ of subintervals $I_\ell$ of some monitored subinterval $[j-2,j]$, for some $j\in \{2,\ldots,n\}$, which will have total length larger than $1+\delta$, where $\delta > \pmeg \cdot \delt$. This means that in $[j-2,j]$, there will be at least $\pgig/n + \pmeg$ value-blocks of volume $\frac{1}{10}\cdot \frac{\delt}{2}$ that receive the same label, since by construction, there are at least $\pgig/n + \pmeg$ such value-blocks in any interval of length at least $1+\pmeg\cdot \delt$. In such a case, the blanket-sensor agent $b_{1,j}$ is active and the set of cuts is not a solution to $I_{CH}$. In the following, we will consider $\delta$ such that $\pmeg \cdot \delt < \delta < 1/2n$.

\begin{definition}[Monochromatic interval of label $A_j$]
An interval $I$ is a \emph{monochromatic interval} if it is not intersected by any cuts (which means that it receives a single label). Additionally, if for $A_j \in \{\lplus,\lminus\}$, $I$ is labelled with $A_j$, then we will say that $I$ is a \emph{monochromatic interval of label $A_j$}.
\end{definition}
It should be clear that if any monitored interval $[j-2,j]$ has a large enough (larger than $1+\delta$) monochromatic subinterval, then the blanket-sensor agent $b_{1,j}$ is active. 

\begin{lemma}\label{lem:44twomissingcuts}
For some $\ell \in \{0,n-k\}$,with $k>1$, consider the interval $I=[\ell, \ell+k]$ of length $k$ and assume that there are at most $k-2$ cuts in this interval.
Then at least one of the blanket-sensors monitoring the subintervals in $I$ will be active.
\end{lemma}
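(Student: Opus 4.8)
The plan is to argue by a pigeonhole/counting argument that too few cuts in an interval of length $k$ forces a long monochromatic subinterval inside one of the monitored length-$2$ windows $[j-2,j]$ that are contained in $I$, which in turn activates the corresponding blanket-sensor. First I would set up the arithmetic: the interval $I=[\ell,\ell+k]$ with $\ell\in\{0,n-k\}$ contains the $k-1$ consecutive monitored windows $[\ell,\ell+2],[\ell+1,\ell+3],\ldots,[\ell+k-2,\ell+k]$ (here I use that these are exactly the windows $[j-2,j]$ for the blanket-sensors $b_{1,j}$ with $j$ ranging over the integers in $(\ell,\ell+k]$, and that by Observation~\ref{obs:cer} all such sensors exist in the construction). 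With at most $k-2$ cuts inside $I$, the $k-2$ cuts split $I$ into at most $k-1$ subintervals, so by averaging at least one of them has length $\geq k/(k-1) > 1$; more carefully, I would show at least one maximal uncut (monochromatic) subinterval $J\subseteq I$ has length at least $1+\delta$ — this is where the precise bookkeeping is needed, since I need a clean $1+\delta$ and not merely "$>1$".

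The key quantitative step is: with $m\le k-2$ cuts, the largest gap among the $m+1\le k-1$ pieces of $I$ (length $k$) has length $\ge k/(m+1)\ge k/(k-1)=1+\tfrac1{k-1}$. Since $k\le n$ (coordinates are bounded by the grid size) and $\delta<1/2n$, we have $1/(k-1)\ge 1/(n-1) > \delta$ provided $\delta$ is chosen appropriately; given the hypothesis $\pmeg\cdot\delt<\delta<1/2n$ stated just before the lemma, I would note $1/(k-1)\ge 1/(n-1)>1/2n>\delta$ for $n\ge 2$, so the largest monochromatic gap $J$ has length $\ge 1+\delta$. Then I would show $J$ (or a sub-block of it of length $\ge 1+\delta$) lies inside one of the monitored windows $[\ell+t-1,\ell+t+1]$: since $J$ has length $\ge 1+\delta < 2$ and $J\subseteq[\ell,\ell+k]$, and the windows $[\ell,\ell+2],\ldots,[\ell+k-2,\ell+k]$ cover $I$ with consecutive windows overlapping in a length-$1$ interval, any subinterval of $I$ of length $<2$ is contained in at least one such window — this is a simple covering observation about overlapping length-$2$ windows that shift by $1$. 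Therefore that window $[j-2,j]$ contains a monochromatic subinterval of length $\ge 1+\delta$, which by the discussion preceding the lemma (a monochromatic subinterval of length $>1+\delta$ forces at least $\pgig/n+\pmeg$ same-labelled value-blocks of $b_{1,j}$) makes $b_{1,j}$ active.

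The main obstacle I anticipate is handling the boundary edge cases cleanly: when $J$ touches an endpoint of $I$ (i.e.\ $J$ is the leftmost or rightmost gap and $\ell=0$ or $\ell=n-k$), I must be sure the window containing $J$ is genuinely one of the monitored windows — this is fine because $b_{1,j}$ is defined for all $j\in\{2,\ldots,n\}$ and the relevant $j$ falls in this range since $\ell\in\{0,n-k\}$, but I need to spell it out. A secondary technical point is that a "gap of length $\ge 1+\delta$" only activates a sensor if it is genuinely monochromatic, i.e.\ not intersected by any cut; this is automatic since I defined $J$ as a maximal uncut subinterval, but I should also note that a cut lying exactly at an endpoint of $J$ does not reduce the monochromatic interior, and that the value-block counting argument (length $\ge 1+\pmeg\cdot\delt$ contains $\ge\pgig/n+\pmeg$ blocks) applies verbatim since $\delta>\pmeg\cdot\delt$. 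Once these edge cases are dispatched, the conclusion that some blanket-sensor is active is immediate, contradicting that the configuration is a solution.
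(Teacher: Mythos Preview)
Your approach is essentially the paper's: pigeonhole gives a monochromatic gap $J$ of length at least $k/(k-1)=1+1/(k-1)>1+\delta$, and then one argues that some monitored length-$2$ window sees at least $1+\delta$ of one label, activating the corresponding blanket-sensor. The quantitative step and the use of $\delta<1/(2n)$ are exactly as in the paper (the paper writes the slightly weaker $1+1/k$, but the argument is identical).

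There is one concrete error in your write-up. The ``simple covering observation'' that \emph{any subinterval of $I$ of length $<2$ is contained in at least one window $[j-2,j]$} is false: with $I=[0,4]$ and windows $[0,2],[1,3],[2,4]$, the interval $[0.5,2.3]$ has length $1.8<2$ but is not contained in any of the three windows. What is true (and what both you and the paper actually need) is that $J$ \emph{overlaps} some window in a subinterval of length at least $1+\delta$. Writing $a=\lfloor a\rfloor+\alpha$ with $\alpha\in[0,1)$ for the left endpoint of $J$, the overlaps of $J$ with the windows $[\lfloor a\rfloor,\lfloor a\rfloor+2]$ and $[\lfloor a\rfloor+1,\lfloor a\rfloor+3]$ have lengths $\min(L,2-\alpha)$ and $\min(L,\alpha+L-1)$ respectively, and $\max(2-\alpha,\alpha+L-1)\ge (1+L)/2\ge 1+\tfrac{1}{2(k-1)}>1+\delta$ since $\delta<1/(2n)\le 1/(2(k-1))$. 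That overlap is a monochromatic subinterval of length $\ge 1+\delta$ inside a monitored window, which is what forces the sensor to be active. With this fix your plan goes through and coincides with the paper's proof.
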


\begin{proof}
	In $I$, there are at least $k-1$ intervals monitored by blanket-sensor agents and we only have at most $k-2$ cuts at our disposal. With $k-2$ cuts, we can partition an interval of length $k$ in at most $k-1$ intervals, the largest of which, call it $I_\text{max}$, will have length at least $1+1/k$. Since $\delta < 1/2n$, the length of $I_{max}$ is actually larger than $1+\delta$. The lemma follows then from the fact that, since the monitored intervals partition the interval $I$, $I_{max}$ will contain a monochromatic interval of length at least $1+\delta$, which will be entirely contained in some monitored interval, and the corresponding blanket-sensor agent will be active.
\end{proof}

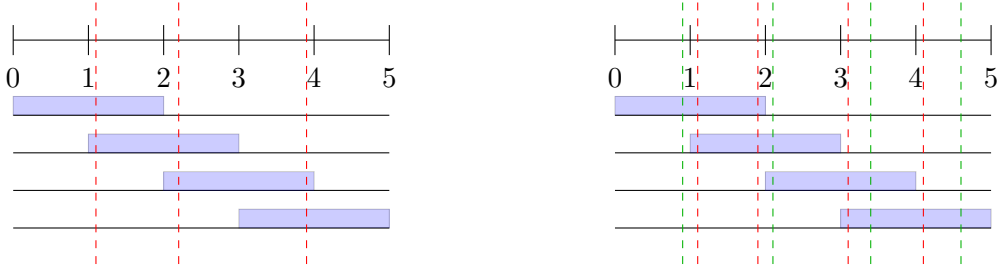
\begin{figure}
	\centering
	\begin{tikzpicture}
	
	\draw (0,0) -- (5,0);
	\draw (0,0.2) -- (0,-0.2);
	\draw (1,0.2) -- (1,-0.2);
	\draw (2,0.2) -- (2,-0.2);
	\draw (3,0.2) -- (3,-0.2);
	\draw (4,0.2) -- (4,-0.2);
	\draw (5,0.2) -- (5,-0.2);
	
	\draw (8,0) -- (13,0);
	\draw (8,0.2) -- (8,-0.2);
	\draw (9,0.2) -- (9,-0.2);
	\draw (10,0.2) -- (10,-0.2);
	\draw (11,0.2) -- (11,-0.2);
	\draw (12,0.2) -- (12,-0.2);
	\draw (13,0.2) -- (13,-0.2);

	\node at (0,-0.5) {$0$};
	\node at (1,-0.5) {$1$};
	\node at (2,-0.5) {$2$};
	\node at (3,-0.5) {$3$};
	\node at (4,-0.5) {$4$};
	\node at (5,-0.5) {$5$};
	
	\node at (8,-0.5) {$0$};
	\node at (9,-0.5) {$1$};
	\node at (10,-0.5) {$2$};
	\node at (11,-0.5) {$3$};
	\node at (12,-0.5) {$4$};
	\node at (13,-0.5) {$5$};
	
	
	\draw (0,-1) -- (5,-1);
	\draw (0,-1.5) -- (5,-1.5);
	\draw (0,-2) -- (5,-2);
	\draw (0,-2.5) -- (5,-2.5);
	
	\draw (8,-1) -- (13,-1);
	\draw (8,-1.5) -- (13,-1.5);
	\draw (8,-2) -- (13,-2);
	\draw (8,-2.5) -- (13,-2.5);
	
	\draw[fill=blue,opacity=0.2] (0, -1) rectangle (2,-0.75);
	\draw[fill=blue,opacity=0.2] (1, -1.5) rectangle (3,-1.25);
	\draw[fill=blue,opacity=0.2] (2, -2) rectangle (4,-1.75);
	\draw[fill=blue,opacity=0.2] (3, -2.5) rectangle (5,-2.25);
	
	\draw[fill=blue,opacity=0.2] (8, -1) rectangle (10,-0.75);
	\draw[fill=blue,opacity=0.2] (9, -1.5) rectangle (11,-1.25);
	\draw[fill=blue,opacity=0.2] (10, -2) rectangle (12,-1.75);
	\draw[fill=blue,opacity=0.2] (11, -2.5) rectangle (13,-2.25);

	\draw[dashed,color=red] (1.1,0.5) -- (1.1,-3);
	\draw[dashed,color=red] (2.2,0.5) -- (2.2,-3);
	\draw[dashed,color=red] (3.9,0.5) -- (3.9,-3);
	
	\draw[dashed,color=red] (9.1,0.5) -- (9.1,-3);
	\draw[dashed,color=red] (9.9,0.5) -- (9.9,-3);
	\draw[dashed,color=red] (11.1,0.5) -- (11.1,-3);
	\draw[dashed,color=red] (12.1,0.5) -- (12.1,-3);
	
	\draw[dashed,color=darkgreen] (8.9,0.5) -- (8.9,-3);
	\draw[dashed,color=darkgreen] (10.1,0.5) -- (10.1,-3);
	\draw[dashed,color=darkgreen] (11.4,0.5) -- (11.4,-3);
	\draw[dashed,color=darkgreen] (12.6,0.5) -- (12.6,-3);
	
	\end{tikzpicture}
	\caption{The case of an interval of length $k$ being intersected by $k-2$ (left) or $k-1$ (right) cuts, here $k=5$. The monitored subintervals are depicted in blue. On the left, an interval of length $5$ is cut by only $3$ cuts.  The interval defined by the second and third cuts is of length larger than $1+1/k = 6/5$. On the right, an interval of length $5$ is cut by $4$ cuts. It is possible to achieve an approximately balanced partition, but only if all cuts are $\delta$-close to integer coordinates and specifically to midpoints of the monitored subintervals, which is indicated by the red cuts. A case where this does not happen is indicated by the green cuts, where the blanket-sensor agent of interval $[2,4]$ is active. Note that while in the figure, in both cases, the interval of length $k$ contains only full monitored intervals, the same arguments go through if it contains half intervals instead, e.g. considering the interval $[1,5]$ and $2$ cuts (left) and $3$ cuts (right).}\label{fig:44one}
\end{figure}

\begin{lemma}\label{lem:44onemissingcut}
	For some $\ell \in \{0,n-k\}$, with $k>1$, consider the interval $I=[\ell, \ell+k]$ of length $k$ and assume that there are $k-1$ cuts in this interval. Then either
	\begin{itemize}
		\item[-] each of the $k-1$ cuts in $I$ will be $\delta$-close to a different integer point and these integer points will be the midpoints of the monitored subintervals contained entirely in $I$ or 
		\item [-] at least one of the blanket-sensors monitoring the subintervals in $I$ will be active.
	\end{itemize}
\end{lemma}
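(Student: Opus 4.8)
\textbf{Proof plan for Lemma~\ref{lem:44onemissingcut}.}
The plan is to argue about how $k-1$ cuts can possibly partition the length-$k$ interval $I$ into pieces none of which contains a monochromatic subinterval of length exceeding $1+\delta$, and to show that this forces every cut to sit $\delta$-close to a midpoint of a monitored subinterval. First I would recall (from the discussion preceding Lemma~\ref{lem:44twomissingcuts}) that if any monitored subinterval $[j-2,j]\subseteq I$ contains a monochromatic run of length $>1+\delta$ then $b_{1,j}$ is active, so in the non-active case every such run has length $\le 1+\delta$. With $k-1$ cuts, $I$ is split into exactly $k$ subintervals (counting the two pieces at the ends of $I$; note that since $\ell\in\{0,n-k\}$, the ends of $I$ are either genuine endpoints of the c-e region or behave like cut points for the neighbouring monitored intervals, so each of the $k$ pieces is a monochromatic run lying inside some monitored subinterval). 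Their lengths are non-negative and sum to $k$, so the average length is exactly $1$.

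The key combinatorial step is: if all $k$ pieces have length $\le 1+\delta$ and they sum to $k$, then in fact each piece has length $\ge 1-k\delta$ (a shortfall anywhere must be compensated by a surplus elsewhere, and the total available surplus is at most $k\delta$). Since $\delta<1/2n$ and $k\le n$, this means every piece has length within $k\delta < 1/2$ of $1$; combined with the upper bound, every cut lies within $k\delta\le\delta'$ (for a slightly larger inverse-polynomial $\delta'$, or one simply re-runs the argument with the $\delta$ in the statement chosen to absorb the factor $k$) of the point at distance an integer-plus-something from $\ell$. More precisely: the $i$-th cut (counting from the left end of $I$) lies within $k\delta$ of $\ell+i$ is \emph{not} quite what we want — I would instead track that the cut positions must be $\{\ell+1,\ell+2,\ldots,\ell+(k-1)\}$ up to additive error $k\delta$ only if the end-pieces are full, but when $\ell$ is such that $I$ starts or ends at a half-monitored-interval, the relevant "integer points" are the midpoints of the monitored subintervals, i.e. the points $\ell+1/2+m$ or similar. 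The cleanest framing, matching the figure, is: the monitored subintervals contained entirely in $I$ have midpoints that are themselves at integer coordinates (since monitored intervals are $[j-2,j]$ with midpoint $j-1$), and the cuts must land $\delta$-close to exactly these midpoints, one cut per midpoint, because that is the only configuration of $k-1$ points in $[\ell,\ell+k]$ for which all $k$ gaps (to the ends and between consecutive cuts) simultaneously stay in $[1-k\delta,\,1+\delta]$.

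Thus the argument structure is: (i) reduce to the case where no blanket-sensor monitoring a subinterval of $I$ is active, so every monochromatic run inside a monitored subinterval has length $\le 1+\delta$; (ii) observe $I$ is cut into exactly $k$ monochromatic runs of lengths summing to $k$; (iii) apply the averaging/pigeonhole bound to get each run length in $[1-k\delta,1+\delta]\subseteq[1-1/2,1+1/2]$; (iv) conclude the unique (up to $\delta$-perturbation) cut positions are the midpoints of the monitored subintervals fully contained in $I$, giving the first alternative; otherwise, if the cuts are \emph{not} positioned this way, some run exceeds $1+\delta$ and the corresponding blanket-sensor is active, giving the second alternative. The main obstacle I anticipate is bookkeeping the boundary behaviour at $\ell=0$ and $\ell=n-k$ — making sure that the two end-pieces of $I$ are correctly counted as (parts of) monochromatic runs inside monitored intervals, and that "integer point" versus "midpoint of monitored subinterval" is used consistently (they coincide here, which is why the lemma can be phrased either way). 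A secondary subtlety is that one really gets error $k\delta$ rather than $\delta$ in step (iii), which is harmless because $k\le n$ and the eventual $\delta$ in Proposition~\ref{prop:s-r-radius} is chosen with this slack in mind, but it should be flagged rather than swept under the rug.
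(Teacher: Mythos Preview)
Your averaging approach is genuinely different from the paper's propagation argument, and it can be made to work, but there is a real gap in step~(i). You assert that ``each of the $k$ pieces is a monochromatic run lying inside some monitored subinterval'' and from this conclude that every piece has length at most $1+\delta$. The containment claim is false: the monitored intervals $[j-2,j]$ have length $2$ but are spaced $1$ apart, so a piece such as $[0.9,\,2.3]$ (length $1.4$) lies in neither $[0,2]$ nor $[1,3]$. What you actually need is the implication ``if some piece has length exceeding $1+2\delta$ then a blanket-sensor activates''; this \emph{is} true, but proving it requires looking at the two overlapping monitored intervals $[\lfloor a\rfloor,\lfloor a\rfloor+2]$ and $[\lfloor a\rfloor+1,\lfloor a\rfloor+3]$ straddled by a piece $[a,a+L]$ and checking that it contributes at least $\min(L,\,2-\{a\})$ of its label to the first and at least $\{a\}+L-1$ to the second --- for $L>1+2\delta$ one of these exceeds $1+\delta$. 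This is essentially the content of Case~1 in the paper's proof of Proposition~\ref{prop:s-r-radius}, which you have not supplied. (Your parenthetical about the ends of $I$ behaving like cut points does not help here and is itself shaky: only \emph{one} end of $I$ is a genuine endpoint of the c-e region.)

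Once that gap is filled, your steps (ii)--(iv) go through with the $O(k\delta)$ slack you already flagged, and your hesitation about ``$\ell+i$'' versus ``midpoints'' was unnecessary --- the midpoints of the monitored intervals $[j-2,j]\subseteq I$ are precisely the integers $\ell+1,\ldots,\ell+(k-1)$. For comparison, the paper argues by propagation rather than averaging: if some midpoint $j-1$ has no $\delta$-close cut, then $[j-2,j]$ needs at least two cuts to stay balanced, which steals a cut from an adjacent monitored interval, cascading the shortage until the $k-1$ cuts are exhausted and some sensor activates.
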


\begin{proof}
	In $I$, there are at least $k-1$ intervals monitored by blanket-sensor agents and we have $k-1$ cuts at our disposal. Assume that there exists some integer point $j-1$ which is a midpoint of some monitored interval $[j-2,j] \subseteq I$ such that there exists no cut that is $\delta$-close to $j-1$. This implies that either $[j-2,j]$ is intersected by at least $2$ cuts, or the blanket-sensor agent corresponding to $[j-2,j]$ will be active. To see this, note that if there were not any cuts in $[j-2,j]$ then obviously the whole interval $[j-2,j]$ would be monochromatic and the corresponding blanket-sensor would be active. If $[j-2,j]$ was intersected by a single cut, since the cut lies at distance at least $\delta$ from the midpoint of the interval, there would exist a monochromatic subinterval of $[j-2,j]$ of length at least $1+\delta$, activating the corresponding blanket-sensor agent $b_{1,j}$.
	
	However, for the blanket-sensor agent $b_{1,j}$ to not be active, it would have to be the case that some other cut in the interval $[j-2,j]$ is also not $\delta$-close to the midpoint of one of the adjacent monitored intervals, therefore generating an imbalance in labels that has to be compensated with at least one additional cut in that interval. Given that there are only $k-1$ cuts in the interval $I$, it follows that in some monitored subinterval $[j',j'-2]$ there will be a large enough imbalance, i.e a monochromatic subinterval of length at least $1+\delta$, and the corresponding blanket-sensor agent $b_{1,j'}$ will be active. See the right-hand side of Figure \ref{fig:44one} for an illustration.
\end{proof}
We are now ready to proceed with the proof of Proposition \ref{prop:s-r-radius}.
\begin{proof}[Proof of Proposition \ref{prop:s-r-radius}]
	First, recall that by Observation \ref{obs:cer}, at most $n$ cuts can lie in the c-e region. Also recall that from Definition \ref{def:blanket}, for the circuit encoder $C_1$, the blanket-sensor agent $b_{1,j}$, $j \in \{2,\ldots,n\}$ has valuation only in the interval $[j-2,j]$ of the c-e region, i.e. it ``monitors'' the interval $[j-2,j]$. The blanket-sensor agent $b_{i,j}$ for $i \in\{2,\ldots,p^C\}$ is a $(i-1)\frac{\delt}{p^C}$-shifted version of $b_{1,j}$. We will make the argument for the blanket-sensor agents of the circuit-encoder $C_1$; the argument for any $C_i$, with $i \neq 1$ is very similar.
	%
	
	It suffices to prove that if consecutive cuts are too far apart or too close together, some blanket-sensor agent will be active. \\
	
	\noindent \textbf{Case 1: The cuts are too far apart}. First consider the case when two consecutive cuts are too far apart (by more than $1$ plus some inverse-polynomial amount $2\delta$). More formally, assume that there are two cuts $c_1$ and $c_2$ such that 
	$c_2 > c_1$ and $c_2 - c_1 > 1+2\delta$. Then, as we explain below, there is some $j \in \{2,\ldots,n\}$ such that some subinterval $I_j=[j_1,j_2] \subseteq [j-2,j]$ with $j_2-j_1 > 1+\delta$ will receive a single label, either $\lplus$ or $\lminus$. In particular, we have the following cases:
	\begin{itemize}
		\item[-] There is a $j$ such that $[c_1,c_2] \subseteq [j-2,j]$. In that case, $[c_1,c_2]$ is such a monochromatic subinterval.
		\item[-] There is a $j$ such that $[j-2,j] \subseteq [c_1,c_2]$. In that case, the whole monitored subinterval $[j-2,j]$ is such a monochromatic subinterval. 
		\item[-] For all $j$, the interval $[j-2,j]$ is intersected by at most one cut $c_\ell$, $\ell \in \{1,2\}$. Obviously, both cuts will intersect some interval, since they lie in the c-e region. Consider cut $c_1$ and let $[j-2,j]$ be an interval that is intersected by $c_1$. If $c_1$ lies in $[j-2,j-1]$, then, since there exists no other cut between $c_1$ and $c_2$ and since $c_2$ does not intersect $[j-2,j]$ by assumption, the interval $[c_1,j]$ will be a monochromatic interval of length at least $1+\delta$ and we are done. If $c_1$ lies in $[j-1,j]$, then first observe that $j \neq n$, as otherwise both cuts $c_1$ and $c_2$ would have to lie in $[n-2,n]$ violating the assumption of the case. Therefore, we can look at the interval $[j-1,j+1]$ and notice that again by the assumption of the case, since cut $c_1$ does intersect the interval $[j-1,j+1]$, we must have that $c_2 > j+1$. This is either impossible (when $j=n-1$) or otherwise $[c_1,j+1]$ is a monochromatic interval of length at least $1+\delta$, and we are done.
	\end{itemize} 
	\textbf{Case 2: The cuts are too close together}. Now consider the case when two consecutive cuts are too close together, closer than $1-2n\delta$. More formally, assume that there are two consecutive cuts $c_1$ and $c_2$ in the c-e region such that $c_2>c_1$ and $c_2-c_1<1-2n\delta$. Since the cuts are close together, there exists a monitored interval that is intersected by both $c_1$ and $c_2$ and let $[j-2,j]$ be such an interval. Notice that if there exists no other cut that intersects $[j-2,j]$, then $[j-2,c_1] \cup [c_2,j]$ is a union of subintervals of length at least $1+\delta$ that receive the same label, and we are done. Therefore there must exist at least $3$ cuts that lie in $[j-2,j]$. We consider three cases. \\
	
	\noindent \emph{There are 5 or more cuts in $[j-2,j]$}. This is an easy case to argue, as if that happens, there will be some interval, either $[0,j-2]$ or $[j,n]$ of length $k$ that is only intersected by at most $k-2$ cuts. By Lemma \ref{lem:44twomissingcuts}, some blanket-sensor agent will be active and we are done. \\
	
	\noindent \emph{There are 4 cuts in $[j-2,j]$}. Consider the intervals $[0,j-2]$ and $[j,n]$. If either $[0,j-2]$ is intersected by at most $(j-2)-2$ cuts or $[j,n]$ is intersected by at most $n-j-2$ cuts, then by Lemma \ref{lem:44twomissingcuts}, some blanket-sensor will be active and we are done. Note also for completeness that, if $j=2$ (respectively $j=n$), it is necessarily the case that $[j-2,n]$ (respectively $[0,j-2]$) is intersected by $n-4$ cuts and Lemma \ref{lem:44twomissingcuts} again applies. Therefore, we can assume that $j \in \{3,\ldots,n-1\}$, and that there are exactly $(j-2)-1$ cuts in $[0,j-2]$ and $n-j-1$ cuts in $[j,n]$. 
	
	Consider the interval $[j,n]$ without loss of generality, as the argument for $[0,j-2]$ is symmetric. By Lemma \ref{lem:44onemissingcut}, we know that the cuts in $[j,n]$ are $\delta$-close to integer points and particularly, they are $\delta$-close to the midpoints of the monitored intervals $[j,j+2], \ldots, [n-2,n]$. This implies that in the monitored subinterval $[j,j+2]$, the subinterval $[j,j+1-\delta]$ will be a monochromatic interval of label $A_j$ for some $A_{j} \in \{\lplus,\lminus\}$, 
	
	In turn, this implies that $[j-1,j]$ has a monochromatic subinterval of length at least $1-\delta$ that receives the label $A_{-j}$, where $A_{-j} \in \{\lplus,\lminus\}$ is the complementary label to $A_j$, for the blanket-sensor agent to not be activated, which is only possible if one of the $4$ cuts in $[j-2,j]$ is $\delta$-close to the integer point $j$. Propagating the effect of this cut sequence/labelling into the monitored interval $[j-2,j]$ in question, we obtain that $[j-2,j-1]$ also contains a monochromatic interval of length at least $1-\delta$ and label $A_j$, as otherwise blanket-sensor agent $b_{1,j}$ would be active. From this discussion, it follows that:
	\begin{itemize}
		\item [-] all the cuts in $[j-2,j]$ are $\delta$-close to integer coordinates within the interval and
		\item [-] there is at least one cut in $[j-2,j]$ that is $\delta$-close to the midpoint $j-1$ of the monitored interval, one cut that is $\delta$-close to the right endpoint $j$ of the monitored interval and at least one cut that is $\delta$-close to the left endpoint $j-2$ of the monitored interval, 
	\end{itemize}
	where the very last statement follows from the symmetric argument to the one developed above, for the interval $[0,j-2]$. See Figure \ref{fig:44two} for an illustration.\\
	
	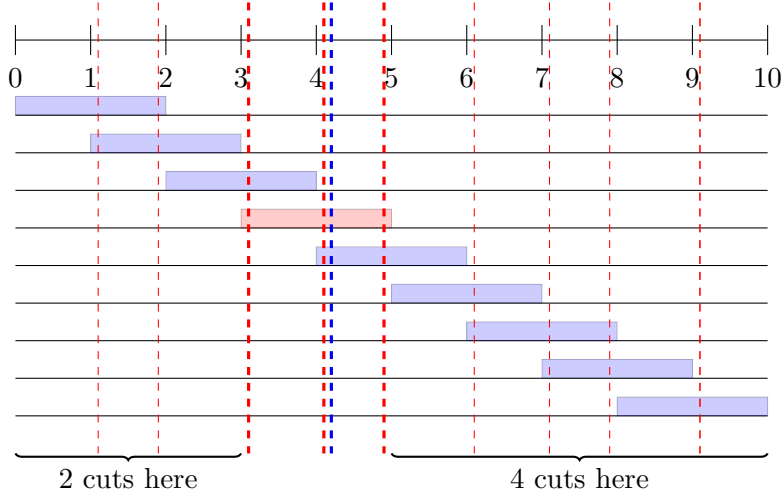
\begin{figure}
		\centering
		\begin{tikzpicture}
		
		
		\draw (0,0) -- (10,0);
		\draw (0,0.2) -- (0,-0.2);
		\draw (1,0.2) -- (1,-0.2);
		\draw (2,0.2) -- (2,-0.2);
		\draw (3,0.2) -- (3,-0.2);
		\draw (4,0.2) -- (4,-0.2);
		\draw (5,0.2) -- (5,-0.2);
		\draw (6,0.2) -- (6,-0.2);
		\draw (7,0.2) -- (7,-0.2);
		\draw (8,0.2) -- (8,-0.2);
		\draw (9,0.2) -- (9,-0.2);
		\draw (10,0.2) -- (10,-0.2);
		\node at (0,-0.5) {$0$};
		\node at (1,-0.5) {$1$};
		\node at (2,-0.5) {$2$};
		\node at (3,-0.5) {$3$};
		\node at (4,-0.5) {$4$};
		\node at (5,-0.5) {$5$};
		\node at (6,-0.5) {$6$};
		\node at (7,-0.5) {$7$};
		\node at (8,-0.5) {$8$};
		\node at (9,-0.5) {$9$};
		\node at (10,-0.5) {$10$};
		
		
		\draw (0,-1) -- (10,-1);
		\draw (0,-1.5) -- (10,-1.5);
		\draw (0,-2) -- (10,-2);
		\draw (0,-2.5) -- (10,-2.5);
		\draw (0,-3) -- (10,-3);
		\draw (0,-3.5) -- (10,-3.5);
		\draw (0,-4) -- (10,-4);
		\draw (0,-4.5) -- (10,-4.5);
		\draw (0,-5) -- (10,-5);
		
		\draw[fill=blue,opacity=0.2] (0, -1) rectangle (2,-0.75);
		\draw[fill=blue,opacity=0.2] (1, -1.5) rectangle (3,-1.25);
		\draw[fill=blue,opacity=0.2] (2, -2) rectangle (4,-1.75);
		\draw[fill=red,opacity=0.2] (3, -2.5) rectangle (5,-2.25);
		\draw[fill=blue,opacity=0.2] (4, -3) rectangle (6,-2.75);
		\draw[fill=blue,opacity=0.2] (5, -3.5) rectangle (7,-3.25);
		\draw[fill=blue,opacity=0.2] (6, -4) rectangle (8,-3.75);
		\draw[fill=blue,opacity=0.2] (7, -4.5) rectangle (9,-4.25);
		\draw[fill=blue,opacity=0.2] (8, -5) rectangle (10,-4.75);
		
		
		\draw[dashed,color=red] (1.1,0.5) -- (1.1,-5.5);
		\draw[dashed,color=red] (1.9,0.5) -- (1.9,-5.5);
		
		\draw[dashed, very thick, color=red] (3.1,0.5) -- (3.1,-5.5);
		\draw[dashed, very thick, color=red] (4.1,0.5) -- (4.1,-5.5);
		\draw[dashed, very thick, color=blue] (4.2,0.5) -- (4.2,-5.5);
		\draw[dashed, very thick, color=red] (4.9,0.5) -- (4.9,-5.5);
		
		\draw[dashed,color=red] (6.1,0.5) -- (6.1,-5.5);
		\draw[dashed,color=red] (7.1,0.5) -- (7.1,-5.5);
		\draw[dashed,color=red] (7.9,0.5) -- (7.9,-5.5);
		\draw[dashed,color=red] (9.1,0.5) -- (9.1,-5.5);
		
		\draw[dashed,color=red] (9.1,0.5) -- (9.1,-5.5);
		
		\draw [
		thick,
		decoration={
			brace,
			mirror,
			raise=0.5cm
		},
		decorate
		] (0,-5) -- (3,-5) 
		node [pos=0.5,anchor=north,yshift=-0.55cm] {$2$ cuts here}; 
		
		\draw [
		thick,
		decoration={
			brace,
			mirror,
			raise=0.5cm
		},
		decorate
		] (5,-5) -- (10,-5) 
		node [pos=0.5,anchor=north,yshift=-0.55cm] {$4$ cuts here}; 
		\end{tikzpicture}
		\caption{The case in which there are $4$ cuts in the interval $[j-2,j]$ (shown in red), here for $j=5$. The $3$ cuts that lie in $\delta$-distance from the left endpoint, midpoint and right endpoint of $[j-2,j]$ are depicted with thick dashed red lines. The other cut in the interval (based on the position of which the different cases are considered) is depicted by a thick dashed blue line and in the particular case, it is shown to be $\delta$-close to the midpoint of the interval. Notice that the positioning of the cuts in $[0,3]$ and in $[5,10]$ is such that the cuts are $\delta$-close to integer coordinates which are the midpoints of the monitored subintervals. If that was not the case, then some subintervals would be sufficiently imbalanced and the corresponding blanket-sensor agent would be active.}
		\label{fig:44two}
	\end{figure}
	
	\noindent Now, we consider three cases with respect to the positions of the $4$ cuts in $[j-2,j]$, illustrated in Figure \ref{fig:44three}. From the discussion above, we know that three of the cuts will be $\delta$-close to the left-endpoint, midpoint and right-endpoint of $[j-2,j]$ respectively, so it suffices to consider the cases depending on the position of the fourth cut. Henceforth, we use $c_1,c_2,c_3$ to denote these three cuts, from left to right in terms of their position within the interval and $\tilde{c}$ to denote the aforementioned fourth cut.
	\begin{enumerate}[label=(\Alph*)]
		\item $\tilde{c}$ is $\delta$-close to $j-1$. In that case, assuming wlog that $\tilde{c} < c_2$, due to the parity of the cut sequence, the union of intervals $[c_1,\tilde{c}] \cup [c_2,c_3]$ contains monochromatic intervals of the same label and length at least $1+\delta$ and therefore blanket-sensor $b_{1,j}$ will be active. See the left-hand side of Figure \ref{fig:44three}.\label{case:A}
		\item $\tilde{c}$ is $\delta$-close to $j$. In that case, it is possible that $[j-2,j]$ does not contain a union of monochromatic intervals of the samel label of length at least $1+\delta$. However, by the parity of the cut sequence, in the interval $[j-1,j+1]$, now most of the interval $[j-1,j+1]$ will receive the same label, and $[j-1,j+1]$ will contain a union of monochromatic intervals of the same label of total length at least $1+\delta$, activating the blanket-sensor $b_{1,j+1}$. See the right-hand side of Figure \ref{fig:44three}.\label{case:B}
		\item $\tilde{c}$ is $\delta$-close to $j-2$. This case is symmetric to Case \ref{case:B} above. 
	\end{enumerate}
	
	\begin{figure}
		\centering
		\begin{tikzpicture}
		
		
		\draw [dotted] (1,0) -- (2,0);
		\draw (2,0) -- (6,0);
		
		\draw [dotted] (9,0) -- (10,0);
		\draw (10,0) -- (14,0);
		
		\draw (2,0.2) -- (2,-0.2);
		\draw (3,0.2) -- (3,-0.2);
		\draw (4,0.2) -- (4,-0.2);
		\draw (5,0.2) -- (5,-0.2);
		\draw (6,0.2) -- (6,-0.2);
		\draw [dotted] (6,0) -- (7,0);
		
		\draw (10,0.2) -- (10,-0.2);
		\draw (11,0.2) -- (11,-0.2);
		\draw (12,0.2) -- (12,-0.2);
		\draw (13,0.2) -- (13,-0.2);
		\draw (14,0.2) -- (14,-0.2);
		\draw [dotted] (14,0) -- (15,0);
		
		\node at (2,-0.5) {$j-3$};
		\node at (3,-0.5) {$j-2$};
		\node at (4,-0.5) {$j-1$};
		\node at (5,-0.5) {$j$};
		\node at (6,-0.5) {$j+1$};
		
		\node at (10,-0.5) {$j-3$};
		\node at (11,-0.5) {$j-2$};
		\node at (12,-0.5) {$j-1$};
		\node at (13,-0.5) {$j$};
		\node at (14,-0.5) {$j+1$};
		

		\draw (1.5,-1.5) -- (6.5,-1.5);
		\draw (1.5,-2) -- (6.5,-2);
		\draw (1.5,-2.5) -- (6.5,-2.5);
		\draw (1.5,-3) -- (6.5,-3);
		
		\draw (9.5,-1.5) -- (14.5,-1.5);
		\draw (9.5,-2) -- (14.5,-2);
		\draw (9.5,-2.5) -- (14.5,-2.5);
		\draw (9.5,-3) -- (14.5,-3);
		
		\draw[fill=blue,opacity=0.2] (2, -2) rectangle (4,-1.75);
		\draw[fill=red,opacity=0.2] (3, -2.5) rectangle (5,-2.25);
		\draw[fill=blue,opacity=0.2] (4, -3) rectangle (6,-2.75);
		
		\draw[fill=blue,opacity=0.2] (10, -2) rectangle (12,-1.75);
		\draw[fill=red,opacity=0.2] (11, -2.5) rectangle (13,-2.25);
		\draw[fill=blue,opacity=0.2] (12, -3) rectangle (14,-2.75);

		
		\draw[dashed,color=red] (3.1,0.5) -- (3.1,-4.5);
		\draw[dashed,color=red] (4.1,0.5) -- (4.1,-4.5);
		\draw[dashed,color=red] (4.9,0.5) -- (4.9,-4.5);
		\draw[dashed,color=blue] (3.9,0.5) -- (3.9,-4.5);
		
		\draw[dashed,color=red] (11.1,0.5) -- (11.1,-4.5);
		\draw[dashed,color=red] (12.1,0.5) -- (12.1,-4.5);
		\draw[dashed,color=red] (12.9,0.5) -- (12.9,-4.5);
		\draw[dashed,color=blue] (12.8,0.5) -- (12.8,-4.5);
		\draw[dashed,color=red] (13.9,0.5) -- (13.9,-4.5);
		
		
		
		
		\node at (2.5,-4) {$\lplus$};
		\node at (3.5,-4) {$\lminus$};
		\node at (4.5,-4) {$\lminus$};
		\node at (5.5,-4) {$\lplus$};
		
		\draw[->] (4,-4.7) -- (4,-4.5);
		\node at (4,-5) {$\lplus$};
		
		\node at (10.5,-4) {$\lplus$};
		\node at (11.5,-4) {$\lminus$};
		\node at (12.5,-4) {$\lplus$};
		\node at (13.5,-4) {$\lplus$};
		\node at (14.5,-4) {$\lminus$};
		
		\draw[->] (12.85,-4.7) -- (12.85,-4.5);
		\node at (12.85,-5) {$\lminus$};
		
		\end{tikzpicture}
		\caption{The two subcases of the case when there are $4$ cuts in the interval $[j-2,j]$. The three cuts $c_1,c_2$ and $c_3$ that are $\delta$-close to the integer points $j-2$, $j-1$ and $j$ in the interval are shown in red, the other cut $\tilde{c}$ is shown in blue. On the left, when $\tilde{c}$ is $\delta$-close to the midpoint $j-1$ of the interval, most of $[j-2,j]$ is coloured with the same label, here $\lminus$, by the parity of the cut sequence. On the right, $\tilde{c}$ is $\delta$-close to the right endpoint $j$ of the interval which means that, by the parity of the cut sequence, most of $[j-1,j+1]$ receives the label $\lplus$, since if there is another cut in the interval, it is constrained by the arguments of the proof to be $\delta$-close to the right endpoint $j+1$ (shown in red here).}
		\label{fig:44three}
	\end{figure}
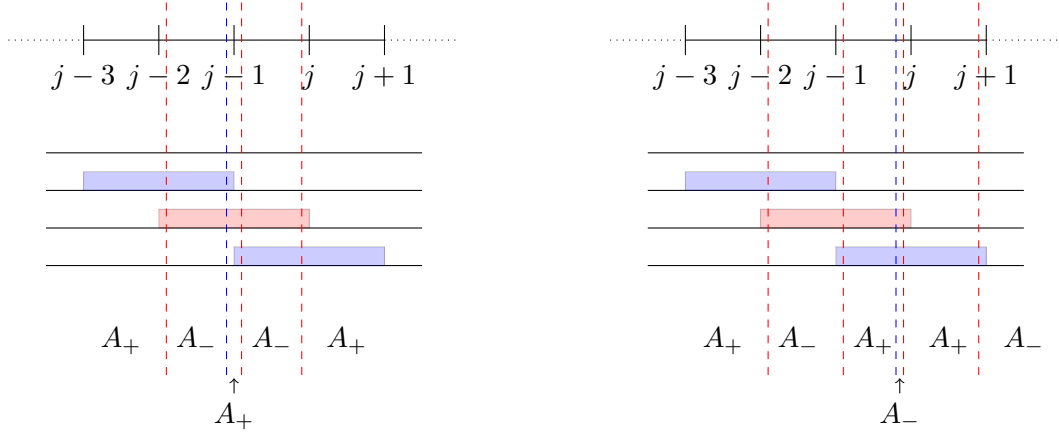
	
	\noindent \emph{There are $3$ cuts in $[j-2,j]$}. Again, considering the intervals $[0,j-2]$ and $[j,n]$ as we did in the case of $4$ cuts in $[j-2,j]$, we can now observe that one of the intervals will be intersected by at most $k-1$ cuts, where $k \in \{j-2,n-j\}$ is its length. Furthermore, if it is intersected by fewer than $k-1$ cuts, by Lemma \ref{lem:44twomissingcuts} some blanket-sensor agent will be active and we are done. Therefore, we will consider the case when one of the intervals is intersected by exactly $k-1$ cuts and let $[j,n]$ be that interval, without loss of generality, as the argument for $[0,j-2]$ is symmetric.
	
	Following exactly the same arguments as in the second and third paragraph of the case of $4$ cuts above, we can establish a very similar statement, namely that:
	\begin{itemize}
		\item [-] all the cuts in $[j-2,j]$ are $\delta$-close to integer coordinates within the interval and
		\item [-] there is at least one cut in $[j-2,j]$ that is $\delta$-close to the midpoint $j-1$ of the monitored interval, and one cut that is $\delta$-close to the right endpoint $j$ of the monitored interval.
	\end{itemize}
	
	\noindent Again, letting $c_1$ and $c_2$ denote the two cuts mentioned in the second item above from left to right in terms of their positions, we will consider some cases depending on the position of the third cut, which we will denote by $\tilde{c}$. 
	\begin{enumerate}[label=(\alph*)]
		\item $\tilde{c}$ is $\delta$-close to $j-2$. In that case, considering the intervals $[\tilde{c},c_1]$ and $[c_1,c_2]$, we observe that since the cuts $\tilde{c}$, $c_1$ and $c_2$ are $\delta$-close to the integer points $j-2,j-1$ and $j$ respectively, both intervals have length at least $1-2\delta$. However, this contradicts the assumption of the case, namely that there exists two cuts in $[j-2,j]$ that are within distance at most $1-n\delta$ from each other. See Figure \ref{fig:44four}, left-hand side.
		\item $\tilde{c}$ is $\delta$-close to $j-1$. In that case, similarly to  Case \ref{case:A} for the case of $4$ cuts, the parity of the cut sequence is such that most of $[j-2,j]$ will receive the same label and in particular $[j-2,j]$ will contain a union of monochromatic intervals of the same label with total length at least $1+\delta$, activating blanket-sensor $b_{1,j}$. See Figure \ref{fig:44four}, middle.
		\item $\tilde{c}$ is $\delta$-close to $j$. Again, similarly to Case \ref{case:B} for the case of $4$ cuts, it is possible that $[j-2,j]$ does not contain a union of monochromatic intervals of the samel label of length at least $1+\delta$. However, by the parity of the cut sequence, in the interval $[j-1,j+1]$, now most of the interval $[j-1,j+1]$ will receive the same label, and $[j-1,j+1]$ will contain a union of monochromatic intervals of the same label of total length at least $1+\delta$, activating the blanket-sensor $b_{1,j+1}$. See Figure \ref{fig:44four}, right-hand side.
	\end{enumerate}
	
	\begin{figure}
		\centering
		\begin{tikzpicture}[thick,scale=0.8, every node/.style={transform shape}]
		
		
		\draw [dotted] (1,0) -- (2,0);
		\draw (2,0) -- (6,0);
		
		\draw [dotted] (8,0) -- (9,0);
		\draw (9,0) -- (13,0);
		
		\draw [dotted] (15,0) -- (16,0);
		\draw (16,0) -- (20,0);
		
		\draw (2,0.2) -- (2,-0.2);
		\draw (3,0.2) -- (3,-0.2);
		\draw (4,0.2) -- (4,-0.2);
		\draw (5,0.2) -- (5,-0.2);
		\draw (6,0.2) -- (6,-0.2);
		\draw [dotted] (6,0) -- (7,0);
		
		\draw (9,0.2) -- (9,-0.2);
		\draw (10,0.2) -- (10,-0.2);
		\draw (11,0.2) -- (11,-0.2);
		\draw (12,0.2) -- (12,-0.2);
		\draw (13,0.2) -- (13,-0.2);
		\draw [dotted] (13,0) -- (14,0);
		
		\draw (16,0.2) -- (16,-0.2);
		\draw (17,0.2) -- (17,-0.2);
		\draw (18,0.2) -- (18,-0.2);
		\draw (19,0.2) -- (19,-0.2);
		\draw (20,0.2) -- (20,-0.2);
		\draw [dotted] (20,0) -- (21,0);
		
		\node at (2,-0.5) {$j-3$};
		\node at (3,-0.5) {$j-2$};
		\node at (4,-0.5) {$j-1$};
		\node at (5,-0.5) {$j$};
		\node at (6,-0.5) {$j+1$};
		
		\node at (9,-0.5) {$j-3$};
		\node at (10,-0.5) {$j-2$};
		\node at (11,-0.5) {$j-1$};
		\node at (12,-0.5) {$j$};
		\node at (13,-0.5) {$j+1$};
		
		\node at (16,-0.5) {$j-3$};
		\node at (17,-0.5) {$j-2$};
		\node at (18,-0.5) {$j-1$};
		\node at (19,-0.5) {$j$};
		\node at (20,-0.5) {$j+1$};
		

		\draw (1.5,-1.5) -- (6.5,-1.5);
		\draw (1.5,-2) -- (6.5,-2);
		\draw (1.5,-2.5) -- (6.5,-2.5);
		\draw (1.5,-3) -- (6.5,-3);
		
		\draw (8.5,-1.5) -- (13.5,-1.5);
		\draw (8.5,-2) -- (13.5,-2);
		\draw (8.5,-2.5) -- (13.5,-2.5);
		\draw (8.5,-3) -- (13.5,-3);
		
		\draw (15.5,-1.5) -- (20.5,-1.5);
		\draw (15.5,-2) -- (20.5,-2);
		\draw (15.5,-2.5) -- (20.5,-2.5);
		\draw (15.5,-3) -- (20.5,-3);
		
		\draw[fill=blue,opacity=0.2] (2, -2) rectangle (4,-1.75);
		\draw[fill=red,opacity=0.2] (3, -2.5) rectangle (5,-2.25);
		\draw[fill=blue,opacity=0.2] (4, -3) rectangle (6,-2.75);
		
		\draw[fill=blue,opacity=0.2] (9, -2) rectangle (11,-1.75);
		\draw[fill=red,opacity=0.2] (10, -2.5) rectangle (12,-2.25);
		\draw[fill=blue,opacity=0.2] (11, -3) rectangle (13,-2.75);
		
		\draw[fill=blue,opacity=0.2] (16, -2) rectangle (18,-1.75);
		\draw[fill=red,opacity=0.2] (17, -2.5) rectangle (19,-2.25);
		\draw[fill=blue,opacity=0.2] (18, -3) rectangle (20,-2.75);

		
		
		\draw[dashed,color=red] (4.9,0.5) -- (4.9,-4.5);
		\draw[dashed,color=red] (4.1,0.5) -- (4.1,-4.5);
		\draw[dashed,color=blue] (3.1,0.5) -- (3.1,-4.5);

		\draw[dashed,color=red] (11.9,0.5) -- (11.9,-4.5);
		\draw[dashed,color=red] (11.1,0.5) -- (11.1,-4.5);
		\draw[dashed,color=blue] (10.9,0.5) -- (10.9,-4.5);
		
		\draw[dashed,color=red] (18.9,0.5) -- (18.9,-4.5);
		\draw[dashed,color=red] (17.9,0.5) -- (17.9,-4.5);
		\draw[dashed,color=blue] (18.8,0.5) -- (18.8,-4.5);
		\draw[dashed,color=red] (19.9,0.5) -- (19.9,-4.5);
		
		
		
		
		
		\node at (2.5,-4) {$\lplus$};
		\node at (3.55,-4) {$\lminus$};
		\node at (4.5,-4) {$\lplus$};
		\node at (5.5,-4) {$\lminus$};
		

		\node at (10.5,-4) {$\lminus$};
		\node at (11.5,-4) {$\lminus$};
		\node at (12.5,-4) {$\lplus$};

		\draw[->] (11,-4.7) -- (11,-4.5);
		\node at (11,-5) {$\lplus$};
		
		\node at (17.5,-4) {$\lminus$};
		\node at (18.35,-4) {$\lplus$};
		\node at (19.5,-4) {$\lplus$};
		\node at (20.5,-4) {$\lminus$};

		\draw[->] (18.85,-4.7) -- (18.85,-4.5);
		\node at (18.85,-5) {$\lminus$};

		\end{tikzpicture}
		\caption{The three subcases of the case when there are $3$ cuts in the interval $[j-2,j]$. The two cuts $c_1,c_2$ and that are $\delta$-close to the integer points $j-1$ and $j$ in the interval are shown in red, the other cut $\tilde{c}$ is shown in blue. On the left, when $\tilde{c}$ is $\delta$-close to the left endpoint $j-2$ of the interval, at least one of the subintervals defined by the cuts will have length at least $1-2\delta$, contradicting the assumption of the case. In the middle, $\tilde{c}$ is $\delta$-close to the midpoint $j-1$ of the interval $[j-2,j]$ and by the parity of the cut sequence, most of the interval receives the same label, here $\lminus$. Finally on the right, $\tilde{c}$ is $\delta$-close to the right endpoint $j$ of the interval which means that, by the parity of the cut sequence, most of $[j-1,j+1]$ receives the label $\lplus$, since if there is another cut in the interval, it is constrained by the arguments of the proof to be $\delta$-close to the right endpoint $j+1$ (shown in red here).}
		\label{fig:44four}
	\end{figure}
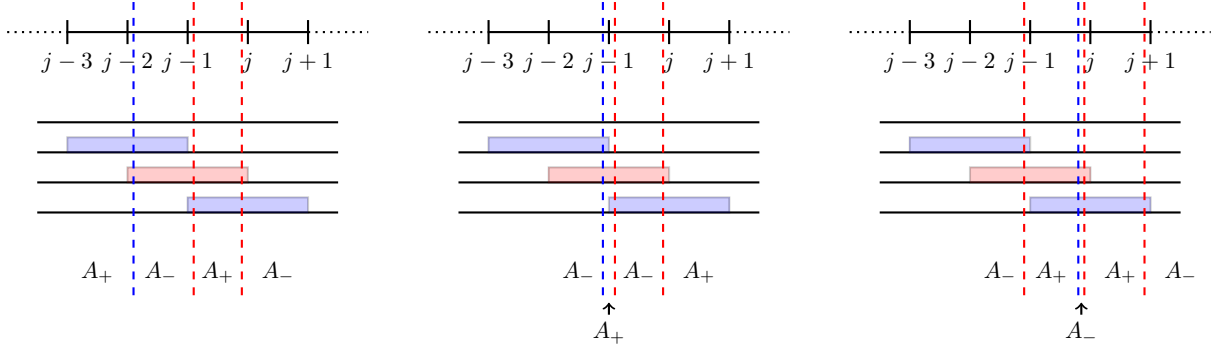
	\noindent This completes the proof.
\end{proof}

\noindent {\bf Remarks:}
Looking ahead, certain points in the Significant Region encode \nvhdt\ (namely,
the ones in the ``twisted tunnel'', Definition~\ref{def:twisted}).
The Significant Region contains the twisted tunnel, being a somewhat wider 1-dimensional ``tunnel'' of
inverse-polynomial width at most $1/p_w(n)$,
whose central axis is the set of points $(\alpha,1/n,\ldots,1/n,1/n-\alpha)$,
where the endpoints are identified together (noting Definition~\ref{def:simplex-domain}).
Topologically, the Significant Region is a high-dimensional M\"{o}bius strip.
\bigskip

\section{Reducing from \nvhdt\ to \ch}\label{sec:reduction}

In Sections~\ref{sec:forward} we give an overview of aspects of how we
construct an instance $I_{CH}$ of $\varepsilon$-\ch\ (in poly-time)
from an instance of \nvhdt, for inverse polynomial $\varepsilon$.
Section~\ref{sec:coord-sys} describes the new coordinate system
for the M\"{o}bius-simplex $D$ and establishes key properties.
Section~\ref{sec:colour-f} presents a colouring function of $D$
in terms of the coordinate system of Section~\ref{sec:coord-sys}.
Section~\ref{sec:backward} describes how to construct a purported
solution to $n$-dimensional \nvhdt\ from a solution to $\varepsilon$-\ch.
In Section~\ref{sec:prove-reduction} we prove that a solution
to \nvhdt\ that is obtained by reducing to $\varepsilon$-\ch,
solving it, and converting that solution to a solution to $n$-dimensional
\nvhdt, really is a valid solution.

\subsection{Overview of the construction of an instance of $\varepsilon$-\ch\ from
an instance of \nvhdt}\label{sec:forward}

We define the reduction from \nvhdt\ (Definition~\ref{def:nvhdt}) to $\varepsilon$-\ch.

Let $I_{VT}$ be an instance of \nvhdt\ in $n$ dimensions; let $C_{VT}$ be the
boolean circuit that represents it.
$I_{CH}$ will be the corresponding instance of \ch.
We list ingredients of $I_{CH}$ and give notation to represent them, as follows.
$A$ is the consensus-halving domain, an interval of the form $[0,poly(n)]$.
Any agent $a$ has a measure $\mu_a:A\longrightarrow\rset$
represented as a step function (thus having a polynomial number of steps).

\begin{itemize}
\item
$I_{CH}$ has $n$ {\em coordinate-encoding agents}
$a_1,\ldots,a_n$ (Definition \ref{def:c-e-agents}).
See Figure~\ref{fig:overview_general}.
\item
The consensus-halving domain $A$ of $I_{CH}$ has a
{\em coordinate-encoding region} (c-e region) (Definition \ref{def:c-e-region}) consisting of the interval $[0,n]$.
\item
$I_{CH}$ has $p^C$ {\em circuit-encoders} (Sections \ref{sec:c1}, \ref{sec:c2plus}), $C_1,\ldots,C_{p^C}$.
\begin{itemize}
\item
Each $C_i$ has a set ${\cal A}_i$ of agents (see Figure~\ref{fig:overview_general})
which includes $C_i$'s sensor agents, also circuit-encoding agents (below).
\item
Each $C_i$ has an associated circuit-encoding region $R_i$ of $A$;
each $R_i$ is an interval of polynomial length, and the $R_i$ do not intersect
with each other or with the coordinate-encoding region.
\item
${\cal A}_i$ contains a polynomial number of {\em circuit-encoding agents} (one for each gate of $C_{VT}$),
having value in $R_i$.
\item
Each $C_i$ has $\pgig$ {\em sensor agents} as defined in Definition~\ref{def:sensors} each of which
has a block of value $1/10$ in a small subinterval of the
c-e region as specified in Definition~\ref{def:sensors}, and further value in region $R_i$.
\item
Each $C_i$ has $n-1$ {\em blanket-sensor agents} as in Definition \ref{def:blanket}.
\end{itemize}
\end{itemize}

\noindent {\bf Remarks:}
We associate one cut with each agent; let $c(a)$ be the cut associated with agent $a$.
The cuts $c(a_i)$ for coordinate-encoding agents, are called the {\em coordinate-encoding cuts} (or c-e cuts).
A straightforward consequence of Proposition~\ref{prop:s-r-radius}
is that in any solution, either all $n$, or $n-1$, of the coordinate-encoding
cuts must lie in the coordinate-encoding region.
All other cuts must lie in the regions $R_i$, indeed, every cut, other than the c-e cuts,
is constrained by the value of its associated agent, to lie in a small interval that does
not overlap any other such intervals.
In the event that a c-e cut lies outside the c-e region, we refer to it as a ``stray
cut'', and while such a cut may initially appear to interfere with the functioning
of the circuitry, similarly to \cite{FG17} we have that the duplication of the circuit using $p^C$
circuit-encoders, allows the circuitry to be robust to this problem. See Appendix \ref{sec:app_details} for more details.
\bigskip

\begin{figure}
\begin{tikzpicture}[scale=1]

\foreach \y in {10,-10,-30,-50,-70,-90,-130,-150}{
  \draw (4pt,\y pt)--(360pt,\y pt);
}

\foreach \y in {7,-13,-33,-53,-73,-93,-133,-153}{
  \foreach \x in {4,30,110,190,270,360}{
    \draw(\x pt,\y pt)--(\x pt,6+\y pt);
  }
}

\foreach \y in {2,-37,-77.5,-137.5}{
  \foreach \x in {17,69,149,229,309}{
    \node at (\x pt,\y pt){$\vdots$};
  }
}

\draw[fill=blue,opacity=0.2] (30pt, 13pt) rectangle (360pt,7pt);
\draw[fill=blue,opacity=0.2] (30pt, -7pt) rectangle (360pt,-13pt);

\draw[fill=blue,opacity=0.2] (4pt, -33pt) rectangle (110pt, -27pt);
\draw[fill=blue,opacity=0.2] (4pt, -53pt) rectangle (110pt, -47pt);

\draw[fill=blue,opacity=0.2] (4pt, -73pt) rectangle (30pt, -67pt);
\draw[fill=blue,opacity=0.2] (110pt, -73pt) rectangle (190pt, -67pt);

\draw[fill=blue,opacity=0.2] (4pt, -93pt) rectangle (30pt, -87pt);
\draw[fill=blue,opacity=0.2] (110pt, -93pt) rectangle (190pt, -87pt);

\draw[fill=blue,opacity=0.2] (4pt, -133pt) rectangle (30pt, -127pt);
\draw[fill=blue,opacity=0.2] (270pt, -133pt) rectangle (360pt, -127pt);

\draw[fill=blue,opacity=0.2] (4pt, -153pt) rectangle (30pt, -147pt);
\draw[fill=blue,opacity=0.2] (270pt, -153pt) rectangle (360pt, -147pt);

\node (a) at (-13pt, -107pt){$\vdots$};
\node (b) at (230pt, 30pt) {\ldots};

\draw [thick,decoration={brace,raise=5pt},decorate] (4pt,20pt) -- (30pt,20pt)node [pos=0.5,anchor=south,yshift=5pt] {\small c-e region};

\draw [thick,decoration={brace,raise=5pt},decorate] (32pt,20pt) -- (110pt,20pt)node [pos=0.5,anchor=south,yshift=5pt] {$R_1$};

\draw [thick,decoration={brace,raise=5pt},decorate] (112pt,20pt) -- (190pt,20pt)
node [pos=0.5,anchor=south,yshift=5pt] {$R_2$};
\draw [thick,decoration={brace,raise=5pt},decorate] (271pt,20pt) -- (360pt,20pt)
node [pos=0.5,anchor=south,yshift=5pt] {$R_{p^C}$};
\draw [thick,decoration={brace,raise=5pt},decorate] (33pt,40pt) -- (360pt,40pt)node [pos=0.5,anchor=south,yshift=5pt] {$R$};

\draw [thick,decoration={brace,mirror,raise=5pt},decorate] (-3pt,10pt) -- (-3pt,-10pt)
node [pos=0.5,anchor=east,xshift=-5pt] {\scriptsize{c-e agents}};
\node at(-1pt,10pt){\scriptsize{$a_1$}};\node at(-1pt,-10pt){\scriptsize{$a_n$}};

\draw [thick,decoration={brace,mirror,raise=5pt},decorate] (3pt,-27pt) -- (3pt,-50pt)node [pos=0.5,anchor=east,xshift=-5pt] {\scriptsize{$\mathcal{A}_1$}};

\draw [thick,decoration={brace,mirror,raise=5pt},decorate] (3pt,-67pt) -- (3pt,-90pt)node [pos=0.5,anchor=east,xshift=-5pt] {\scriptsize{$\mathcal{A}_2$}};

\draw [thick,decoration={brace,mirror,raise=5pt},decorate] (3pt,-130pt) -- (3pt,-150pt)
node [pos=0.5,anchor=east,xshift=-5pt] {\scriptsize{$\mathcal{A}_{p^C}$}};

\end{tikzpicture}
\caption{\small{An overview of $I_{CH}$, denoting all the different regions and the agents
of $C_1 \ldots, C_n$, as well as the coordinate-encoding agents. The highlighted areas
denote that the corresponding agent has non-zero value on these regions.}}\label{fig:overview_general}
\end{figure}
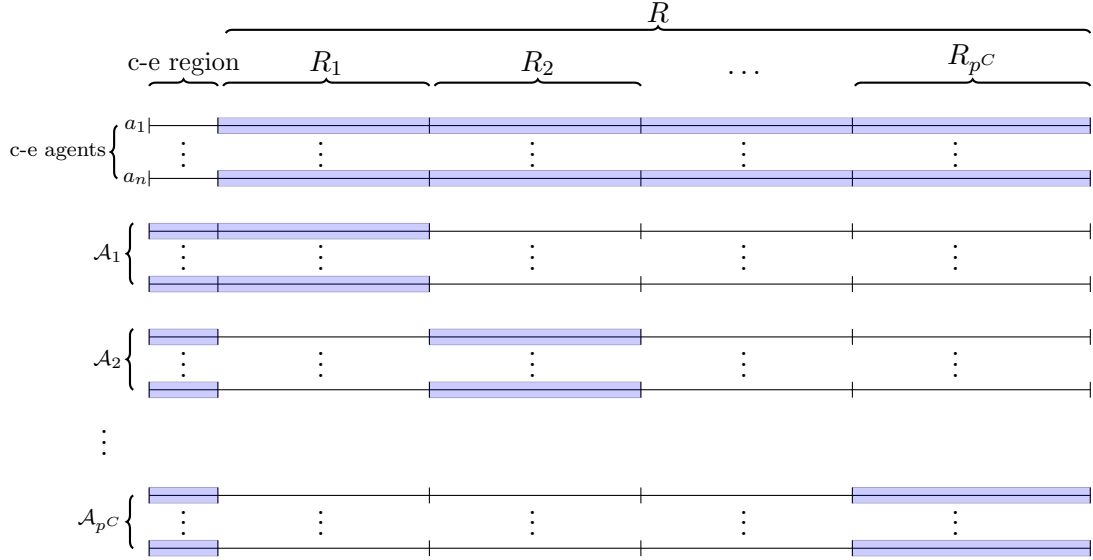

\subsubsection{Construction of $C_1$}\label{sec:c1}

Recall $C_{VT}$ is the boolean circuit in the instance $I_{VT}$ of \nvhdt.
\begin{itemize}
\item
We assume that $C_{VT}$ has $2n$ output gates
$g_1,\ldots,g_n$ and $g_{-1},\ldots,g_{-n}$ having the property
that exactly one of them will take value \true\ (this may be enforced
syntactically). $g_i$ getting value 1 (\true) means that the point at coordinates
represented by the input gets coloured $i$.
\item
$C_{VT}$ has $n\cdot{\rm polylog}(n)$ input gates, representing the coordinates of a point
in $B=[-1,1]^n$, each represented with inverse-polynomial precision.
\end{itemize}
We describe how circuit-encoder $C_1$ is derived from $C_{VT}$.
The subsequent circuit-encoders can then be specified in terms of $C_1$.
Each gate $g$ of $C_{VT}$ is simulated using a ``gate agent'' $a(g)$, as constructed in~\cite{FG17} (see Appendix \ref{sec:app_details} for a more detailed exposition).
$a(g)$'s cut $c(a(g))$ occupies a right position, or a left position, representing \true\
or \false, as a function of the cut(s) that represent boolean inputs to $g$.

The circuit-encoding agents ${\cal A}_1$ of $C_1$ thus include $2n$ gate agents whose
corresponding cuts simulate the values of the output gates of $C_{VT}$,
provided that the input represented by the c-e cuts lies in the ``Significant Region''
(Definition~\ref{def:sig-region}).
The positions of these cuts affect the labels of blocks of value held by
the $n$ coordinate-encoding agents, as detailed in Section~\ref{sec:cea}.

\begin{definition}\label{def:ref}
{\bf Reference sensor-agent.}
Noting from Definition~\ref{def:sensors} that the sensor agents for $C_i$
are denoted $\sensor_i = \{s_{i,1},\ldots,s_{i,\pgig}\}$, we let $s_{1,1}$ be the
{\em reference sensor-agent}:
Outputs produced by the circuit $C_i$ are taken with reference to the value\footnote{We are using $s_{1,1}$ to denote the boolean value taken by
	$s_{1,1}$ as well as the sensor itself.}
$s_{1,1}$,
in the sense that after simulating $C_{VT}$ we take the exclusive-or with $s_{1,1}$.
\end{definition}
We used this crucial technique of Definition~\ref{def:ref} in \cite{FG17}:
it performs the task of disorienting the domain while at the same time ensuring
continuity when we move a cut from the left-hand side of the c-e region to
the right-hand side.

\begin{paragraph}{Preprocessing, prior to simulating $C_{VT}$}
For each $C_i$,
we take all $\pgig$ input bits, which appear in up to $n+1$ blocks of consecutive 1's
and 0's, and convert them into the coordinates of a point in the M\"{o}bius-simplex
(Definition~\ref{def:simplex-domain}). As noted earlier (Observation~\ref{obs:reliable}) at most $n$ circuit-encoders may receive ill-defined inputs caused
by c-e cuts cutting through value-blocks in the c-e region that belong to
their sensor agents; we simply assume that the output of those agents
is unreliable, indeed adversarially chosen.

We then perform a coordinate transformation described in Section~\ref{sec:coord-sys}.
A subset of points in the M\"{o}bius-simplex maps to a copy of the domain $B$ of the instance $I_{VT}$ (recall Definition~\ref{def:nvhdt}).
These points get their coordinates passed directly to a copy of $C_{VT}$, and the outputs of $C_{CV}$ are used to provide feedback to the c-e agents as described in Section~\ref{sec:cea} (and discussed in Observations~\ref{obs:reliable} and \ref{obs:average}). Other points get coloured in a
manner that avoids allowing bogus solutions to $I_{CH}$ (i.e. ones that do not encode solutions to $I_{VT}$).
\end{paragraph}

\subsubsection{Output gates of $C_1$, and the feedback they provide to
the coordinate-encoding agents}\label{sec:cea}

The following generalises~\cite{FG17}.
$C_{VT}$ has output gates $g_j$, $j\in\pm[n]$, with the property
that when inputs are well-defined, exactly one output gate evaluates to \true.
A circuit-encoder simulates $C_{VT}$ using the gate gadgets introduced in \cite{FFGZ18,FG17} (see Appendix \ref{sec:app_details}).
Let $x_{REF} \in\{\true,\false\}$ be the negation of the value of the reference sensor (Definition~\ref{def:ref}).
We use additional gates $g'_j$, $j\in\pm[n]$ where
\begin{itemize}
\item
if $g_j=g_{-j}=\false$ then $g'_{|j|}=\true$ and $g'_{-|j|}=\false$;
\item
if $j>0$ and $g_j=\true$ (so $g_{-j}=\false$) then $g'_j=g'_{-j}=\true \oplus x_{REF}$;
\item
if $j<0$ and $g_j=\true$ (so $g_{-j}=\false$) then $g'_j=g'_{-j}=\false \oplus x_{REF}$;
\end{itemize}
Each of the c-e agents $a_1,\ldots,a_n$ has $2$ value-blocks of value
$1/(2p^C)$ in region $R_1$, and each gate $g'_j$ of $C_1$ is able to select
the label of one of these value-blocks (recall that the boolean value at a
gate is represented by two positions that may be taken by the corresponding
cut, so that a block of value lies between these two positions.)
Figure~\ref{fig:gate-example} in Appendix \ref{sec:app_details} shows an example of how this feedback works.

\subsubsection{How $C_1$'s blanket-sensors affect the feedback mechanism}\label{sec:bsa}
Let $A_j \in \{\lplus,\lminus\}$ and let $A_{-j} \in \{\lplus,\lminus\}$, $A_{-j} \neq A_{j}$ be the complementary label. The blanket-sensor agents $b_{1,2},\ldots,b_{1,n}$ affect the output of
the circuit-encoders as follows:
\begin{enumerate}
\item\label{rule1} If none are active, the $2n$ outputs of $C_1$ are computed as
described in Section~\ref{sec:cea}.
\item\label{rule2} If $j$ is \emph{odd} and $b_{1,j}$ is active in direction $A_j$ then the output gates $g'_j,g'_{-j}$
are both set to the value that causes c-e agent $a_j$ to observe more $A_j$.
\item\label{rule3} If $j$ is \emph{even} and $b_{1,j}$ is active in direction $A_{j}$ then the output gates $g'_j,g'_{-j}$
are both set to the value that causes c-e agent $a_j$ to observe more $A_{-j}$.
\end{enumerate}
Rules \ref{rule2} and \ref{rule3} override Rule \ref{rule1}, which allocates
values that directly encode values output by $C_{VT}$. Note that the gadgetry of the circuit
can ensure that either an excess of $\lplus$ or an excess of $\lminus$ can be shown to the
corresponding c-e agent as feedback, as the circuit can convert the input value encoded by
the value gadget of the blanket-sensor agent in $R_1$ to either a ``right'' or ``left'' 
output position, depending on the parity of the index.
Also, if more than one blanket-sensor agent is active, they all affect their
corresponding gates. The reason for requiring blanket-sensors of different
parities to feedback different labels to the c-e agents is to be consistent
with the definition of ``consistent colours'', see Definition \ref{def:consistentcolour}.

Note that we do not define the behaviour of the blanket-sensor agents
in terms of the reference sensor. They essentially look for an imbalance between
$\lplus$ and $\lminus$ within some interval of length 2, and when they find a
sufficiently large imbalance, they force the circuit $C_1$ to show their associated c-e agent more of
the over-represented or under-represented label, depending on their parity, which can be done using gadgetry of \cite{FG17}.

\begin{paragraph}{Comment.}
Consider the operation of moving a cut from near the left-hand side of the c-e region
to the right-hand side, which corresponds to two points in the M\"{o}bius-simplex that
are close to each other via a path through the facets that have
been identified according to Definition \ref{def:simplex-domain}.
Suppose also that within the c-e region, we do not change the label of any point.
Then the blanket-sensor agents behave the same way: if some blanket-sensor agent
sees an excess of $\lplus$ in its interval then it will continue to see an excess of $\lplus$.
Regarding the (non-blanket) sensor agents, our reduction will make them ``want'' to produce opposite outputs, but
due to the flipping of $x_{REF}$, the reference sensor value, the final output
values produced by $g'_j$, $j\in\pm[n]$ are the same, and we will have continuity across this facet.
\end{paragraph}

\subsubsection{Construction of circuit-encoders $C_2,\ldots,C_{p^C}$}\label{sec:c2plus}

We next describe how the $p^C$ circuit-encoders differ from each other.
Each $C_i$ has a set of circuit-encoding agents ${\cal A}_i$, which
contains $C_i$'s sensor agents $\sensor_i$.
For $i\in[n]$ let ${\cal A}_i$ be the agents $a_{i,1},\ldots,a_{i,p}$ for
some polynomial $p$.
\begin{itemize}
\item
For all $i,j$, $\mu_{a_{i,j}}(x)=\mu_{a_{1,j}}(y)$ where $x$ and $y$ are
corresponding points in $R_i$ and $R_j$. By ``corresponding points'' here we mean
points that lie in the same distance from the left-endpoint of the respective intervals $R_i$
and $R_j$; see \cite{FG17}, Section 4.4.3 for the precise definition.
\item
For all $i,j$, all $x$ in the c-e region,
$\mu_{a_{i,j}}(x)$, is specified in Definition~\ref{def:blanket}.
\end{itemize}
The second of these items says that in the c-e region, the valuation
function of the agents that make up $C_i$ differ from those of
$C_1$ by having been shifted to the right by $\delt(i-1)$, where
this shift wraps around in the event that we shift beyond $n$
(the right-hand point of the c-e region).
In other respects, $C_i$ is an exact copy of $C_1$, save that $C_i$'s
internal circuitry lies in $R_i$ rather than $R_1$.

For each $C_i$, the c-e agents have a further 
$2n$ value-blocks of value $1/(2p^C)$ in region $R_i$, whose
labels are governed by the outputs produced by $C_i$ in the same way
as for $C_1$. Consequently we have the following observation.

\begin{observation}
The value that is labelled $\lplus$ held by any c-e agent $a_j$, is the average
of the output values that the $C_i$'s allocate to $a_j$.
If, say, all the $C_i$ receive inputs representing a point in the significant
region with label $\ell$, then $a_\ell$ observed an imbalance between
$\lplus$ and $\lminus$, but $a_j$ for $j\not=\ell$ will have $g'_j$ output
the opposite value to $g'_{-j}$, resulting in $a_j$'s value-blocks
receiving opposite labels.
\end{observation}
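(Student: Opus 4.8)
The plan is to read both assertions off the definitions of the c-e agents' value-blocks (Sections~\ref{sec:cea} and~\ref{sec:c2plus}) together with the output-gate rules of Sections~\ref{sec:cea} and~\ref{sec:bsa}. For the first sentence, I would note that a c-e agent $a_j$ has, in every region $R_i$ ($i\in[p^C]$), exactly two value-blocks of value $1/(2p^C)$ — one whose label is selected by output gate $g'_j$ of $C_i$, the other by $g'_{-j}$ — and no value anywhere else (none in the c-e region, by Definition~\ref{def:c-e-agents}), so these $2p^C$ blocks account for all of $a_j$'s value $1$. Thus $a_j$'s value is partitioned into $p^C$ equal shares of $1/p^C$, one controlled by each circuit-encoder, and the fraction of the $i$-th share labelled $\lplus$ depends only on the pair $(g'_j,g'_{-j})$ of $C_i$: it is $1$ if both gates push $a_j$ towards $\lplus$, $0$ if both push towards $\lminus$, and $1/2$ if they disagree. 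Calling this fraction ``the output $C_i$ allocates to $a_j$'', we get that $\mu_{a_j}(\lplus)$ is exactly the average of these $p^C$ numbers.

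For the second sentence, I would start from the hypothesis: each $C_i$ reads (well-defined) input encoding a point of the Significant Region whose label is $\ell$, say $\ell>0$. Because the point is in the Significant Region, Definition~\ref{def:sig-region} says no blanket-sensor of any $C_i$ is active, so Rule~\ref{rule1} of Section~\ref{sec:bsa} applies and the $g'$-outputs of each $C_i$ are exactly those computed in Section~\ref{sec:cea}. By the syntactic ``exactly one output gate is true'' property of $C_{VT}$, in every $C_i$ we have $g_\ell=\true$ and $g_k=\false$ for all other $k\in\pm[n]$. Feeding this into the $g'$-rules: for the index $\ell$, $g'_\ell=g'_{-\ell}=\true\oplus x_{REF}$, so $a_\ell$'s two blocks in $R_i$ get the \emph{same} label; for any $j\neq\ell$, $g_j=g_{-j}=\false$ forces $g'_{|j|}=\true$, $g'_{-|j|}=\false$, so $a_j$'s two blocks in $R_i$ get \emph{opposite} labels. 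Since $x_{REF}$ is read off the single reference sensor $s_{1,1}$ (Definition~\ref{def:ref}), it is the same for every $C_i$, so the label appearing in excess at $a_\ell$ is the same in all $R_i$.

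Putting the two parts together: for $a_\ell$ the per-$C_i$ output is $0$ (or $1$) uniformly, so all of $a_\ell$'s value gets one label, i.e.\ $|\mu_{a_\ell}(\lplus)-\mu_{a_\ell}(\lminus)|=1$ (so such a configuration cannot be an $\varepsilon$-solution); for $j\neq\ell$ the per-$C_i$ output is $1/2$ uniformly, so $\mu_{a_j}(\lplus)=\mu_{a_j}(\lminus)=1/2$, with $a_j$'s two blocks in each $R_i$ carrying opposite labels. I regard this as a bookkeeping verification rather than a statement with a hard core; the only things to watch are (i) that the per-circuit-encoder shares of a c-e agent are genuinely equal, so the word ``average'' is literal, and (ii) that $x_{REF}$ is derived from one fixed sensor and hence consistent across all $C_i$ — which is precisely what makes the individual imbalances at $a_\ell$ reinforce each other and keeps $a_j$ balanced for $j\neq\ell$.
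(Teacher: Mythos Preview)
Your verification is correct and is exactly what the paper intends: the statement is recorded as an Observation with no proof, meant to follow directly from the value-block bookkeeping of Sections~\ref{sec:cea}--\ref{sec:c2plus} and the $g'$-rules, which is precisely what you unpack. One small caveat: when you say ``$x_{REF}$ is read off the single reference sensor $s_{1,1}$\ldots it is the same for every $C_i$'', note that each $C_i$ is built as an exact copy of $C_1$ with its own (shifted) sensors, so the more natural reading is that $C_i$ uses $s_{i,1}$; under the hypothesis of the Observation (all $C_i$ read the same point in the Significant Region) these agree anyway, so your conclusion stands, but the literal ``single shared sensor'' justification is slightly stronger than what the construction guarantees.
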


\subsection{An alternative coordinate system for the M\"{o}bius-simplex}\label{sec:coord-sys}

Recall that the M\"{o}bius-simplex $D$ is the $n$-simplex consisting of points $(x_1,\ldots,x_{n+1})$
whose components are non-negative and sum to 1. Furthermore, a typical point in $D$ is directly
encoded via the positions of $n$ cuts in the c-e region.

Here we specify a transformed coordinate system that is needed in order to encode instances of \nvhdt.
We will embed the hypercube-shaped domain of an instance of \nvhdt\ in a hypercube in the
transformed coordinates, and then use properties of the transformed coordinate system to
extend the labelling function to the rest of the domain in a way that does not introduce bogus solutions
(i.e. fixpoints of the extended function that lie outside the hypercube and do not encode solutions of \nvhdt).

Let $F_0$ be the set of points in $D$ of the form $(0,x_2,\ldots,x_n,0)$; thus $F_0$ is a $(n-2)$-face of $D$.
See Figure~\ref{fig:subspaces}.
For $\tau\in[0,1]$, let $\xb_\tau$ be the point
\[
\xb_\tau := \tau(1,0,\ldots,0)+(1-\tau)(0,\ldots,0,1) = (\tau,0,\ldots,0,1-\tau).
\]
(So, $\xb_0$ and $\xb_1$ are the endpoints of the 1-dimensional edge of $D$ that
is not contained in $F_0$.)
Let $D_\tau$ be the $(n-1)$-simplex consisting of convex combinations of $F_0$, and $\xb_\tau$.
Thus $D_0$ and $D_1$ are the two facets of $D$ that have been identified together
as in Definition \ref{def:simplex-domain}.

\begin{figure}
\center{
\begin{tikzpicture}[scale=0.6]

\def\vxtopx{15} \def\vxtopy{13}
\def\vxbotx{15} \def\vxboty{0}
\def\vxleftx{2} \def\vxlefty{10}
\def\vxblxtau{6} \def\vxblytau{4.667}
\def\vxblx{8} \def\vxbly{2}


\draw[gray,ultra thick]({0.33*(\vxtopx+\vxbotx+\vxleftx)},{0.33*(\vxtopy+\vxboty+\vxlefty)})--({0.33*(\vxtopx+\vxbotx+\vxblx)},{0.33*(\vxtopy+\vxboty+\vxbly)}); 

\draw[blue!50,fill=blue!50]({0.33*(\vxtopx+\vxbotx+\vxleftx)},{0.33*(\vxtopy+\vxboty+\vxlefty)}) circle (.8ex); 
\draw[blue!50,fill=blue!50]({0.33*(\vxtopx+\vxbotx+\vxblxtau)},{0.33*(\vxtopy+\vxboty+\vxblytau)}) circle (.8ex); 
\draw[blue!50,fill=blue!50]({0.33*(\vxtopx+\vxbotx+\vxblx)},{0.33*(\vxtopy+\vxboty+\vxbly)}) circle (.8ex); 

\node at(10.7,8){${\bf 0}_0$};
\node at(12.3,6.2){${\bf 0}_\tau$};
\node at(13,5.2){${\bf 0}_1$};

\draw[->](18,11)--(10.8,7.3);\node at(19,11.5){axis};

\draw[gray, ultra thick](2,10)--(8,2);\node at(1.5,10){$\xb_0$};\node at(1,9){$(0,...,0,1)$};
\node at(5.5,4.67){$\xb_\tau$};\node at(7.5,2){$\xb_1$};\node at(7,1){$(1,0,...,0)$};
\draw[blue!50, ultra thick](\vxleftx,\vxlefty)--(\vxtopx,\vxtopy)--(\vxbotx,\vxboty)--cycle;
\draw[white, line width=5pt,shorten >=0.4cm, shorten <=0.4cm](6,4.67)--(15,13);
\draw[blue!50, ultra thick](\vxblxtau,\vxblytau)--(\vxtopx,\vxtopy)--(\vxbotx,\vxboty)--cycle;
\draw[white,line width=5pt,shorten >=0.4cm, shorten <=0.4cm](8,2)--(15,13);
\draw[blue!50, ultra thick](\vxblx,\vxbly)--(\vxtopx,\vxtopy)--(\vxbotx,\vxboty)--cycle;

\node at(5,11.2){$D_0$};
\node at(9.9,5.8){$D_1$};
\node at(8,7.3){$D_\tau$};

\draw [decorate,decoration={brace,amplitude=4pt},xshift=0.5cm,yshift=0pt]
      (15,13) -- (15,0) node [midway,right,xshift=.1cm] {$F_0$}; 
\node[text width=2cm] at(19,5){$F_0$ is points of the form $(0,x_2,\ldots,x_n,0)$};

\end{tikzpicture}
\caption{\small{Subspaces of the M\"{o}bius-simplex $D$: $D_0$ is the triangle spanned by $\xb_0$ and $F_0$, and ${\bf 0}_0$ is its centre;
similarly for $D_\tau$ and $D_1$.}}\label{fig:subspaces}}
\end{figure}
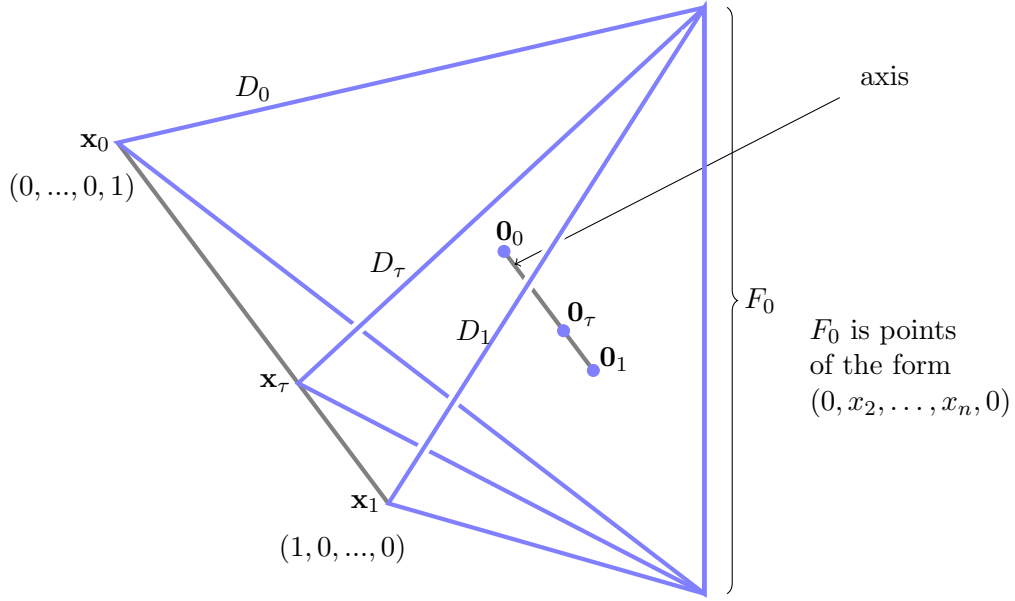

$D_\tau$ contains the point ${\bf 0}_\tau = (\tau/n,1/n,\ldots,1/n,(1-\tau)/n)$, which we regard as the origin of $D_\tau$.
The set of points $\{ {\bf 0}_\tau : 0\leq \tau \leq 1 \}$ will be referred to as the {\em axis};
it will transpire that all solutions must lie within an inverse polynomial distance from the axis
(in particular will be in the Significant Region).

We then refer to points in $D_\tau$ by means of the coordinates in a coordinate system
that itself is a linear function of $\tau$.
With respect to any fixed $\tau\in[0,1]$ we define $n-1$ vectors $(d^\tau_2,\ldots,d^\tau_n)$ as follows.
A key feature is that $(d^\tau_2,\ldots,d^\tau_n)$ form a basis of $D_\tau$ (so that with respect to the origin ${\bf 0}_\tau$,
any point in $D_\tau$ has unique coordinates).
The other key feature (Observation~\ref{obs:flip}) is that at $\tau=0$
the coordinate/directions are ``equal and opposite'' to the coordinates at $\tau=1$.
Also, the coordinate system varies suitably smoothly.\\

\noindent As a warm-up we start by considering $d^\tau_2$:
\[
d^\tau_2 := (1-\tau)(0,1,-1,0,\ldots,0)+\tau(-1,1,0,\ldots,0).
\]
$d^\tau_2$ consists of increasing the second coordinate at the expense of its
neighbours. For small $\tau$ we increase mainly at the expense of the third coordinate
and as $\tau$ increases, we increase the second coordinate more at the expense of
the first. Notice in particular that at $\tau=\frac{1}{2}$ we have
$d^\tau_2=(-\frac{1}{2},1,-\frac{1}{2},0,\ldots,0)$.\\

\noindent Generally, for $2\leq i\leq n$ we define
\begin{equation}\label{eq:dtaui}
d^\tau_i := (1-\tau)(\underbrace{0,\ldots,0}_{i-1~{\rm zeroes}},1,-1,\underbrace{0,\ldots,0}_{n-i})
       + \tau (\underbrace{0,\ldots,0}_{i-2~{\rm zeroes}},-1,1,\underbrace{0,\ldots,0}_{n-i+1})
\end{equation}
Thus, again this consists of the $i$-th coordinate increasing at the expense of
its neighbours, and we have in particular
\[
d^{\frac{1}{2}}_i = (\underbrace{0,\ldots,0}_{i-2~{\rm zeroes}},-\frac{1}{2},1,-\frac{1}{2},0,\ldots,0)
\]
For $i=2,\ldots,n$, define
\begin{equation}\label{eq:dtau-i}
d^\tau_{-i} := -d^\tau_i.
\end{equation}
Observation~\ref{obs:linearity} makes the important point that by linearity,
the vectors $d^\tau_i$, $i=2,\ldots,n$, can be used as a coordinate system to refer to points in $D_\tau$.

\begin{observation}\label{obs:linearity}
Any point $\xb$ in $D_\tau$ can be uniquely expressed as a sum
\begin{equation}\label{eq:transform}
\xb = {\bf 0}_\tau + \sum_{i=2}^n \alpha_i d^\tau_i.
\end{equation}
To see this, note first that $D_0$ is points in $D$ of the form $(0,x_2,\ldots,x_{n+1})$,
and $D_1$ is points of the form $(x_1,\ldots,x_n,0)$. Note that the observation certainly works for $\tau=0$ or $\tau=1$.
To see that it works for intermediate $\tau$, note that the vectors $d^\tau_i$ are linearly independent,
and the reason why they span $D_\tau$ is that any vector  $d^\tau_i$ is equal to $(1-\tau)$ multiplied by a vector in $D_0$,
added to $\tau$ multiplied by an equal-length vector in $D_1$. So these vectors do indeed lie in $D_\tau$.
\end{observation}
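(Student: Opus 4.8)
The plan is to show that the $n-1$ vectors $d^\tau_2,\ldots,d^\tau_n$ span the linear tangent space of the simplex $D_\tau$ and are linearly independent; granting this, for any $\xb\in D_\tau$ the displacement $\xb-{\bf 0}_\tau$ lies in that tangent space (since ${\bf 0}_\tau\in D_\tau$) and has a unique expansion in the basis, which is exactly the claim. First I would record that $D_\tau$ genuinely has dimension $n-1$: its vertices are $\xb_\tau$ together with the $n-1$ standard unit vectors that span $F_0$, and these $n$ points are affinely independent because the unit vectors are, while $\xb_\tau$ — having nonzero first and last coordinates — does not lie in the affine hull of $F_0$. Hence the tangent space of $D_\tau$ has dimension exactly $n-1$, and it suffices to check that (i) each $d^\tau_i$ is tangent to $D_\tau$, and (ii) the $d^\tau_i$ are linearly independent; being $n-1$ in number, they then form a basis.

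For (ii) I would use a short telescoping argument. Writing $d^\tau_i = e_i-\tau e_{i-1}-(1-\tau)e_{i+1}$ (with $e_j$ the $j$-th standard basis vector of $\rset^{n+1}$) and reading off coordinate $\ell$ of a relation $\sum_{i=2}^n c_i d^\tau_i=0$ gives $c_\ell=\tau c_{\ell+1}+(1-\tau)c_{\ell-1}$ for every $\ell$, under the convention $c_j=0$ for $j\notin\{2,\ldots,n\}$. Coordinate $\ell=1$ forces $\tau c_2=0$ and coordinate $\ell=n+1$ forces $(1-\tau)c_n=0$; using whichever of these equations is non-degenerate and then propagating the recursion inward, one obtains $c_i=0$ for all $i$. (Equivalently, the projection of the $d^\tau_i$ onto coordinates $2,\ldots,n$ is a tridiagonal matrix with $1$ on the diagonal and $-\tau,-(1-\tau)$ off it, whose principal minors $\Delta_k$ satisfy $\Delta_k=\Delta_{k-1}-\tau(1-\tau)\Delta_{k-2}$ with $\Delta_0=\Delta_1=1$ and are easily seen to stay positive since $\tau(1-\tau)\le 1/4$.) The endpoint cases $\tau\in\{0,1\}$ are also immediate: $d^0_i=e_i-e_{i+1}$ and $d^1_i=e_i-e_{i-1}$ are visibly bases of the tangent spaces of the two identified facets $D_0$ and $D_1$, and comparing them exhibits the ``equal and opposite'' relation recorded in Observation~\ref{obs:flip}.

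Step (i) is the one I expect to be the crux. The statement points the way: $d^\tau_i=(1-\tau)(e_i-e_{i+1})+\tau(e_i-e_{i-1})$, where $e_i-e_{i+1}$ is a tangent vector of the facet $D_0$ and $e_i-e_{i-1}$ one of $D_1$, so one wants to conclude that such a $(1-\tau),\tau$-weighted mixture is tangent to the slice $D_\tau$. For the internal directions ($3\le i\le n-1$) only the coordinates $i-1,i,i+1$ are touched and the verification against the defining equations of the affine hull of $D_\tau$ is routine. The end directions $d^\tau_2$ and $d^\tau_n$ are the delicate ones, because one of their two summands involves an extreme coordinate that in $D_\tau$ is tied to the apex $\xb_\tau$ (the vector $e_1$ in the case of $d^\tau_2$, the vector $e_{n+1}$ in the case of $d^\tau_n$), so the mixture need not automatically be ``matched'' at the two identified facets. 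Here I would re-express the offending unit vector against the basis $\{\xb_\tau,e_2,\ldots,e_n\}$ of $D_\tau$ and check carefully that the component picked up along the axis direction is harmless, so that $d^\tau_2$ and $d^\tau_n$ really are tangent to $D_\tau$ — equivalently, that the affine chart $(\tau,\alpha_2,\ldots,\alpha_n)\mapsto{\bf 0}_\tau+\sum_i\alpha_i d^\tau_i$ has non-singular Jacobian and the coordinate system varies continuously across the identified facets. This compatibility of the two end directions with the slice is where the real content of the observation sits.
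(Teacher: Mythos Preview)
Your approach is more careful than the paper's inline sketch, and your linear-independence argument via the tridiagonal recursion is correct (the principal minors work out to $\Delta_k=[\tau^{k+1}-(1-\tau)^{k+1}]/(2\tau-1)>0$). You are also right that the tangency of the end vectors $d^\tau_2$ and $d^\tau_n$ to $D_\tau$ is where the real content lies --- but had you actually carried out that check you would have found that it \emph{fails}. For $\tau\in(0,1)$ the tangent space of $D_\tau$ is $\{v:\sum_i v_i=0,\ (1-\tau)v_1=\tau v_{n+1}\}$; since $d^\tau_2=-\tau e_1+e_2-(1-\tau)e_3$ has $v_1=-\tau$ and $v_{n+1}=0$ (for $n\ge 3$), the second condition reads $-\tau(1-\tau)=0$, which is violated. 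Thus for $n\ge3$ the vectors $d^\tau_2,\ldots,d^\tau_n$ do \emph{not} span the tangent space of $D_\tau$, and the observation as literally stated is false: e.g.\ for $n=3$, $\tau=\tfrac12$, the point $(0.15,0.4,0.3,0.15)\in D_{1/2}$ admits no representation ${\bf 0}_{1/2}+\alpha_2 d^{1/2}_2+\alpha_3 d^{1/2}_3$. The paper's own one-line justification --- that a $(1-\tau),\tau$-mixture of a $D_0$-tangent vector and a $D_1$-tangent vector is automatically tangent to $D_\tau$ --- is exactly the step that breaks: it tacitly requires the first coordinate of the $D_1$-vector to equal the last coordinate of the $D_0$-vector, which holds for the interior indices $3\le i\le n-1$ but not for $i\in\{2,n\}$.

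What salvages the construction, and what your closing parenthetical correctly isolates, is the weaker fact that the \emph{full} chart $(\tau,\alpha_2,\ldots,\alpha_n)\mapsto{\bf 0}_\tau+\sum_i\alpha_i d^\tau_i$ is a local diffeomorphism onto a neighbourhood of the axis in $D$: along the axis its Jacobian has columns $\tfrac1n(e_1-e_{n+1})$ together with $d^\tau_2,\ldots,d^\tau_n$, and the same tridiagonal analysis you gave shows these $n$ vectors are independent in the tangent space $\{\sum v_i=0\}$ of $D$. This is the statement that underlies Proposition~\ref{prop:poly} and the explicit forward/backward formulas of Section~\ref{sec:ct}; the stronger claim that the $\alpha$-slice at fixed $\tau$ parametrises exactly $D_\tau$ is not correct for $n\ge 3$.
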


\begin{definition}\label{def:tc}
For a point $\xb\in D_\tau$ as in (\ref{eq:transform}) we say that the
{\em transformed coordinates} of $\xb$ are $(\alpha_2,\ldots,\alpha_n)$.
More generally, a point $\xb\in D$ can be expressed as $(\tau;\alpha_2,\ldots,\alpha_n)$,
where $\tau$ is chosen such that $\xb\in D_\tau$.
We use the following metric $\tilde{d}(\cdot,\cdot)$ on transformed coordinate vectors,
where similarly to (\ref{eq:metric}), $L_1$ denotes the standard $L_1$ distance on vectors.
\begin{equation}\label{eq:metric2}
\tilde{d}(\xb,\xb')=\min \Bigl( L_1(\xb,\xb'),
\min_{\zb,\zb' : \zb\equiv\zb'}(L_1(\xb,\zb)+L_1(\zb',\xb')) \Bigr)
\end{equation}
where $(0;\alpha_2,\ldots,\alpha_n)\equiv(1;-\alpha_2,\ldots,-\alpha_n)$.
\end{definition}

\begin{observation}\label{obs:flip}
With regard to Definition~\ref{def:tc}, consider two points
$\xb=(0;\alpha_2,\ldots,\alpha_n)$, $\xb'=(1;-\alpha_2,\ldots,-\alpha_n)$, that have been
equated with each other.
Assume these points are near the axis, specifically $|\alpha_j|<1/10n$ for all $j$.
Notice that
\begin{itemize}
\item the cuts in the c-e region for $\xb$ and $\xb'$ partition the c-e region in the same way.
\item (with reference to Figure~\ref{fig:dir-numbers}) when we
move from a point in $D_{1-\varepsilon}$ to a nearby point in $D_\varepsilon$,
for any $j\in\{2,\ldots,n\}$, the direction of increasing $\alpha_j$ segues {\em smoothly}
to the direction of {\em decreasing} $\alpha_{j}$.
\end{itemize}
Our assumption that $|\alpha_j|<1/10n$ ensures that cuts are fairly evenly-spaced,
and movement in any of the directions $d^\tau_j$ does not cause the cuts to cross each other.
\end{observation}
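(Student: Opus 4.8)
The statement has two bullet points to verify, plus the final remark about cuts not crossing. I would handle each in turn by direct computation with the definitions, since Observation~\ref{obs:flip} is essentially an unpacking of Definitions~\ref{def:simplex-domain}, \ref{def:tc} and equations~(\ref{eq:dtaui})--(\ref{eq:dtau-i}), together with Observation~\ref{obs:linearity}.

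\emph{First bullet (cuts partition the c-e region in the same way).} I would start from the fact that $\xb = (0;\alpha_2,\ldots,\alpha_n)$ lies in $D_0$, hence by Observation~\ref{obs:linearity} it has the form $(0,x_2,\ldots,x_{n+1})$ with $\sum_{i\ge 2} x_i = 1$; concretely $\xb = {\bf 0}_0 + \sum_{i=2}^n \alpha_i d^0_i$, where ${\bf 0}_0 = (0,1/n,\ldots,1/n,0)$ and $d^0_i = (0,\ldots,0,1,-1,0,\ldots,0)$ with the $1$ in coordinate $i$. Similarly $\xb' = (1;-\alpha_2,\ldots,-\alpha_n)$ lies in $D_1$, so $\xb' = {\bf 0}_1 + \sum_{i=2}^n(-\alpha_i) d^1_i$ with ${\bf 0}_1 = (1/n,\ldots,1/n,0)$ and $d^1_i = (0,\ldots,0,-1,1,0,\ldots,0)$ with the $-1$ in coordinate $i-1$ and the $1$ in coordinate $i$. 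A short computation shows that the first $n$ coordinates of $\xb$ and $\xb'$, read as gap-lengths, give the same ordered list of subinterval lengths of the c-e region, once one accounts for the identification $(0,x_1,\ldots,x_n)\equiv(x_1,\ldots,x_n,0)$ from Definition~\ref{def:simplex-domain}; i.e.\ the partition of $[0,n]$ into subintervals is literally the same set of cut locations. This is the computation that makes the ``moving a cut from the LHS to the RHS'' picture precise.

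\emph{Second bullet (smoothness of the direction reversal).} Here I would simply observe that by~(\ref{eq:dtaui}) the vector $d^\tau_i$ is an affine (indeed linear) function of $\tau$ on $[0,1]$, and by~(\ref{eq:dtau-i}) the ``$-i$'' direction is its negation. As $\tau\to 1^-$ we have $d^\tau_i\to d^1_i$, and under the identification $(1;-\alpha_2,\ldots,-\alpha_n)\equiv(0;\alpha_2,\ldots,\alpha_n)$, the direction $d^1_i$ at the $D_1$-side corresponds to $-d^0_i = d^0_{-i}$ at the $D_0$-side, so increasing $\alpha_j$ on the $D_{1-\varepsilon}$ side continues as decreasing $\alpha_j$ on the $D_\varepsilon$ side; continuity of $\tau\mapsto d^\tau_i$ is exactly the required smoothness. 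I would phrase this by noting that the map $(\tau;\alpha_2,\ldots,\alpha_n)\mapsto {\bf 0}_\tau + \sum_i \alpha_i d^\tau_i\in\rset^{n+1}$ is jointly continuous (in fact piecewise-linear) in $(\tau,\alpha)$, and that it intertwines the two identified charts.

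\emph{Final remark (no crossing of cuts).} The coordinates of ${\bf 0}_\tau$ are all equal to $1/n$ except the first and last, which are $\tau/n$ and $(1-\tau)/n\in[0,1/n]$. Each $d^\tau_i$ has entries in $\{-1,0,1\}$ affecting at most coordinates $i-1,i,i+1$ (for $\tau\in\{0,1\}$ only two of these). Hence for $\xb = {\bf 0}_\tau + \sum_{i=2}^n \alpha_i d^\tau_i$ with $|\alpha_j| < 1/10n$, each of the first $n$ coordinates of $\xb$ differs from its value in ${\bf 0}_\tau$ by at most $2\cdot\frac{1}{10n} < \frac{1}{n}$ in absolute value (a coordinate is touched by at most two of the $d^\tau_i$'s), so every gap between consecutive cuts stays strictly positive and bounded away from the degenerate configurations; in particular no two cuts coincide and moving in any direction $d^\tau_j$ preserves the cyclic order of the cuts. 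I would just state the bound $|\,x_k - (\text{$k$-th coord of }{\bf 0}_\tau)\,| \le 2/10n$ and conclude.

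I expect the only mildly fiddly step to be the bookkeeping in the first bullet: matching up the coordinate vectors of $\xb\in D_0$ and $\xb'\in D_1$ as cut-configurations requires keeping track of the shift by one coordinate and the wrap-around identification of Definition~\ref{def:simplex-domain}. Everything else is immediate from linearity of $d^\tau_i$ in $\tau$. None of it is deep; the observation is really a sanity-check that the coordinate system of Section~\ref{sec:coord-sys} behaves correctly at the identified facets, which is what ``smooth embedding'' in the abstract refers to.
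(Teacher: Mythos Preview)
Your plan is correct and is in fact more detailed than anything the paper provides: in the paper this is stated purely as an Observation, with the two bullet points serving as their own justification and no separate proof environment. Your direct-computation approach (writing out $\xb\in D_0$ and $\xb'\in D_1$ explicitly and checking that $\xb'$ equals the shifted vector $(x_2,\ldots,x_{n+1},0)$ identified with $\xb$) is exactly the intended verification.

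Two small slips worth fixing before you write it out. First, ${\bf 0}_0=(0,1/n,\ldots,1/n,1/n)$, not $(0,1/n,\ldots,1/n,0)$: the last coordinate is $(1-\tau)/n=1/n$ at $\tau=0$. (Symmetrically ${\bf 0}_1=(1/n,1/n,\ldots,1/n,0)$, which you have right.) This does not affect your argument, since once you compute $\xb=(0,\,1/n+\alpha_2,\,1/n+\alpha_3-\alpha_2,\ldots,\,1/n-\alpha_n)$ and $\xb'=(1/n+\alpha_2,\,1/n+\alpha_3-\alpha_2,\ldots,\,1/n-\alpha_n,\,0)$ the identification is immediate. Second, for general $\tau$ the nonzero entries of $d^\tau_i$ are $(-\tau,1,-(1-\tau))$ rather than elements of $\{-1,0,1\}$; your crossing bound only needs $|$entry$|\le 1$, so just phrase it that way.
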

Proposition~\ref{prop:poly} says that if we perturb a point $\xb \in D$ that lies close to the axis, then the
total perturbation of the transformed coordinates of $\xb$ is polynomially related to the
total perturbation of the untransformed coordinates; $d$ and $\tilde{d}$ are polynomially related.

\begin{proposition}[Polynomial distance relation]\label{prop:poly}
There is some polynomial $p(n)$ such that
for all $\xb,\xb'\in D$ within Euclidean distance $1/10n^2$ of the axis,
letting $\tilde{\xb}$ and $\tilde{\xb}'$ be their transformed coordinates, and
letting $d,\tilde{d}$ be the metrics defined as in (\ref{eq:metric}),(\ref{eq:metric2}), we have
\[
\frac{1}{p(n)} \leq \frac{d(\xb,\xb')}{\tilde{d}(\tilde{\xb},\tilde{\xb}')} \leq p(n).
\]
\end{proposition}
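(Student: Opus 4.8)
The plan is to argue that the map sending a point $\xb \in D_\tau$ (with $\tau$ fixed) to its transformed coordinates $(\alpha_2,\ldots,\alpha_n)$ is a linear isomorphism whose matrix has polynomially-bounded norm and polynomially-bounded inverse norm, uniformly in $\tau$, and then to patch together the two "branches" $\tau$ near $0$ and $\tau$ near $1$ using the identification in Observation~\ref{obs:flip}. Concretely, from (\ref{eq:dtaui}) the vector $d^\tau_i$, restricted to the coordinates that matter, is $(1-\tau)(e_i - e_{i+1}) + \tau(e_{i} - e_{i-1})$ in a suitable indexing; stacking these as rows (or columns) gives, for each $\tau$, an $(n-1)\times(n-1)$ matrix $M_\tau$ that is tridiagonal-like, with entries in $\{-1,0,1-\tau,\tau,1\}$ and in particular all of absolute value at most $1$. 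Hence $\|M_\tau\|_{1\to 1}$ (and $\|M_\tau\|_\infty$) is bounded by a small constant times $n$, which immediately gives the upper bound $\tilde d(\tilde\xb,\tilde\xb') \leq p(n)\, d(\xb,\xb')$ restricted to a single $D_\tau$: moving by $\sum \alpha_i d^\tau_i$ changes the untransformed coordinates by $M_\tau \alpha$, so $L_1(\xb,\xb') = \|M_\tau(\alpha-\alpha')\|_1 \leq O(n)\,\|\alpha-\alpha'\|_1$, i.e. $d \leq O(n)\,\tilde d$.

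For the reverse inequality I would show $\|M_\tau^{-1}\|$ is polynomially bounded uniformly in $\tau\in[0,1]$. The cleanest route is to observe that $M_\tau$ is continuous in $\tau$ on the compact set $[0,1]$, and that $\det M_\tau \neq 0$ for every $\tau$ (the $d^\tau_i$ are linearly independent by Observation~\ref{obs:linearity}); since $M_\tau^{-1}$ is then continuous on $[0,1]$, $\sup_\tau \|M_\tau^{-1}\|$ is attained and finite. To make it genuinely polynomial (not just "some constant") one can compute $\det M_\tau$ explicitly: the tridiagonal structure gives a short recurrence for the leading principal minors, and one checks the determinant is a fixed polynomial in $\tau$ bounded away from $0$ on $[0,1]$ (in fact, by the endpoint cases $\tau=0,1$ where $M_\tau$ is unipotent-triangular with determinant $\pm 1$, together with a monotonicity or no-real-root argument for intermediate $\tau$), and then Cramer's rule bounds the entries of $M_\tau^{-1}$ by (cofactor $\le$ product of $O(n)$ entries each of size $\le 1$, hence $\le \mathrm{poly}(n)$ — more carefully, by a Hadamard-type bound $(n-1)!^{?}$ is too crude, so one instead uses the tridiagonal cofactor recurrence to get a genuine polynomial bound) over $|\det M_\tau| \ge c > 0$. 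This yields $\|\alpha - \alpha'\|_1 \le p(n)\, L_1(\xb,\xb')$, i.e. $\tilde d \le p(n)\, d$, within a single $D_\tau$.

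It remains to upgrade from "within a single $D_\tau$" to the full metrics $d, \tilde d$ of (\ref{eq:metric}) and (\ref{eq:metric2}), which minimise over paths that may pass through the identified facets. Since both metrics are defined by exactly the same "take a shortcut through $\zb\equiv\zb'$" recipe, and since the identification $(0;\alpha_2,\ldots,\alpha_n)\equiv(1;-\alpha_2,\ldots,-\alpha_n)$ is an $L_1$-isometry on transformed coordinates (negation preserves $L_1$), the polynomial equivalence on each branch transfers term-by-term to the $\min$ of two-leg paths: each leg lives in some $D_\tau$ (or, for $\xb,\xb'$ not in a common $D_\tau$, one splits the straight segment at the point where $\tau$ passes through a common value — but within the $1/10n^2$ neighbourhood of the axis guaranteed by the hypothesis, Observation~\ref{obs:flip} tells us the cuts never cross, so the relevant $\tau$-range is an interval and the segment can be handled as a single piece or two pieces with the same uniform bound). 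Taking $p(n)$ to be the larger of the two polynomials from the two directions finishes the argument.

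The main obstacle I anticipate is the uniform-in-$\tau$ lower bound on $|\det M_\tau|$ and the resulting polynomial (not merely constant) bound on $\|M_\tau^{-1}\|$: one has to actually exploit the tridiagonal/bidiagonal structure of $M_\tau$ — for instance via the standard recurrence $\theta_k = b_k \theta_{k-1} - a_k c_{k-1}\theta_{k-2}$ for principal minors — to see that neither the determinant degenerates nor the cofactors blow up super-polynomially on $[0,1]$. The rest (the norm upper bound, the isometry of the identification, stitching the metrics) is routine.
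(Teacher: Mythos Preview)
Your approach via matrix norms of the basis $\{d^\tau_i\}$ is a natural idea but has a genuine gap: it only treats the transformation \emph{within a single $D_\tau$}. The transformed coordinate vector is $(\tau;\alpha_2,\ldots,\alpha_n)$, so $\tau$ itself is one of the coordinates whose perturbation must be controlled, and the full map $\xb\mapsto(\tau;\alpha_2,\ldots,\alpha_n)$ is \emph{not} linear --- $\tau$ is the nonlinear function $x_1/(x_1+x_{n+1})$, and the $\alpha_i$ depend on $\tau$ as well as on the $x_j$. Your equation $L_1(\xb,\xb')=\|M_\tau(\alpha-\alpha')\|_1$ is therefore false whenever $\xb,\xb'$ lie in different $D_\tau$'s, which is the generic case. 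The sentence about ``splitting the straight segment at the point where $\tau$ passes through a common value'' does not repair this: every point on the segment has its own $\tau$, and you still need to bound how a perturbation in $x$ changes $\tau$ (this costs a factor $\Theta(n)$ near the axis, since $x_1+x_{n+1}\approx 1/n$) and how a change in $\tau$ feeds back into the $\alpha_i$ through the $\tau$-dependence of both ${\bf 0}_\tau$ and the $d^\tau_i$. In other words you need the full $n\times n$ Jacobian of $\Phi:(\tau,\alpha)\mapsto \xb$, not just the $(n-1)\times(n-1)$ block $M_\tau$.

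A second gap is the one you yourself flag: the polynomial bound on $\|M_\tau^{-1}\|$. Compactness on $[0,1]$ gives only an $n$-dependent constant, not a polynomial in $n$; and your sketch (``endpoints give determinant $\pm1$, plus monotonicity or no-real-root'') is not an argument --- the determinant of a tridiagonal matrix with entries $(-\tau,1,-(1-\tau))$ is a degree-$(n-1)$ polynomial in $\tau$ and there is no a priori reason it cannot dip to something like $2^{-n}$ somewhere in $(0,1)$. The tridiagonal minor recurrence would have to be worked out and bounded explicitly.

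The paper sidesteps both issues by writing down explicit forward and inverse formulas for the full map (Section~\ref{sec:ct}), including $\tau$. The key manoeuvre is to split into the cases $\tau\ge 1/2$ and $\tau\le 1/2$: in the first case one solves a forward recurrence $\alpha_2,\alpha_3,\ldots$ whose only potentially dangerous denominator is $\tau\ge 1/2$ and whose propagation coefficient $(1-\tau)/\tau$ is at most $1$; in the second case one runs the recurrence backwards from $\alpha_n$ with denominator $1-\tau\ge 1/2$. This directly yields that an $\epsilon$-perturbation of the $x_i$ produces at most an $O(n\epsilon)$ perturbation of each $\alpha_j$ (and of $\tau$, using $x_1+x_{n+1}\ge c/n$ near the axis), and similarly in the reverse direction. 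The facet identification is then handled exactly as you suggest. So your final paragraph is fine; what is missing is a treatment of the $\tau$-coordinate and an actual, rather than hoped-for, polynomial bound on the inverse --- both of which the paper obtains by writing the inverse map out by hand rather than by inverting a matrix.
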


\begin{proof}
In Subsection~\ref{sec:ct}, we show that for points near the axis
(i.e. within Euclidean distance $1/10n^2$ of the axis), the computation
of the coordinate transformation ---and its inverse--- have the property
that small perturbations of the input values lead to inverse-polynomial upper bounds
on the resulting perturbations of the output values.
Since we have this for both the transformation and its inverse, it follows
that there are also inverse-polynomial lower bounds on the resulting perturbations
of the output values.

The identification of transformed coordinates
$(0;\alpha_2,\ldots,\alpha_n)$ and $(1;-\alpha_2,\ldots,-\alpha_n)$
is of course equivalent to the identification of untransformed coordinates
$(0,x_1,\ldots,x_{n})$ and $(x_1,\ldots,x_n,0)$ in (\ref{eq:metric}).
If, say, $\xb$ and $\xb'$ are very close together due to being linked via
$\zb,\zb'$ for which $\zb\equiv\zb'$, then the transformed versions $\tilde{\zb},\tilde{\zb}'$
would cause $\tilde{d}(\tilde{\xb},\tilde{\xb}')$ to be very close.
So the ``polynomially related'' result for this space in which these two facets have not
been identified with each other, carries over to a ``polynomially related'' result
in which they have been identified with each other.
\end{proof}
Note that a similar result would hold if in the definitions of the metrics $d$ and $\tilde{d}$,
we replace the $L_1$ metric with, say, $L_2$ or $L_\infty$, since these are
polynomially related to $L_1$.

\bigskip
\noindent Note that Proposition~\ref{prop:poly} does not hold for all points in $D$;
the restriction to a neighbourhood of the axis is needed.
For points on $F_0$, all values of $\tau$ are equivalent, and for points close to $F_0$,
perturbed versions of them could result in large perturbations of $\tau$.\\

\noindent We show in the next section that the coordinate transformation, and its inverse, can be computed in polynomial time, for points in $D$
that are within some inverse polynomial distance from the axis.

\begin{figure}
\center{
\begin{tikzpicture}[scale=0.8]
\tikzstyle{xxx}=[dashed,thick]

\newcommand{\yytt}{15.5}
\newcommand{\yyt}{11.5}
\newcommand{\yym}{7.5}
\newcommand{\yyu}{3.5}
\newcommand{\yyuu}{-0.5}

\draw[thick,<->](0,19)--(16,19);
\node[fill=white]at(8,19){coordinate-encoding region $[0,4]$};
\draw[dashed](0,0)--(0,19);\draw[dashed](16,0)--(16,19);

\foreach \y in {\yyt+0.25,\yym+0.25,\yyu+0.25,\yytt+0.25,\yyuu+0.25}{\draw[ultra thick](0,\y)--(16,\y);}

\foreach \x in {0,4,8,12}{\draw[thick,red](\x,\yytt-0.5)--(\x,\yytt+2.5);}
\foreach \x in {1,5,9,13}{\draw[thick,red](\x,\yyt-0.5)--(\x,\yyt+2.5);}
\foreach \x in {2,6,10,14}{\draw[thick,red](\x,\yym-0.5)--(\x,\yym+2.5);}
\foreach \x in {3,7,11,15}{\draw[thick,red](\x,\yyu-0.5)--(\x,\yyu+2.5);}
\foreach \x in {4,8,12,16}{\draw[thick,red](\x,\yyuu-0.5)--(\x,\yyuu+2.5);}

\draw[->,ultra thick](3,2+\yytt)--(5,2+\yytt);
\node at(2.5,2+\yytt){$d^\tau_2$};
\draw[->,ultra thick](7,1.5+\yytt)--(9,1.5+\yytt);
\node at(6.5,1.5+\yytt){$d^\tau_3$};
\draw[->,ultra thick](11,1+\yytt)--(13,1+\yytt);
\node at(10.5,1+\yytt){$d^\tau_4$};

\draw[<-,ultra thick](0.75,2+\yyt)--(1.25,2+\yyt);\draw[->,ultra thick](4.25,2+\yyt)--(5.75,2+\yyt);
\draw[thick,dotted](1.25,2+\yyt)--(4.25,2+\yyt);
\node[fill=white] at(3,2+\yyt){$d^\tau_2$};
\draw[<-,ultra thick](4.75,1.5+\yyt)--(5.25,1.5+\yyt);\draw[->,ultra thick](8.25,1.5+\yyt)--(9.75,1.5+\yyt);
\draw[thick,dotted](5.25,1.5+\yyt)--(8.25,1.5+\yyt);
\node[fill=white] at(7,1.5+\yyt){$d^\tau_3$};
\draw[<-,ultra thick](8.75,1+\yyt)--(9.25,1+\yyt);\draw[->,ultra thick](12.25,1+\yyt)--(13.75,1+\yyt);
\draw[thick,dotted](9.25,1+\yyt)--(12.25,1+\yyt);
\node[fill=white] at(11,1+\yyt){$d^\tau_4$};

\draw[<-,ultra thick](1.5,2+\yym)--(2.5,2+\yym);\draw[->,ultra thick](5.5,2+\yym)--(6.5,2+\yym);
\draw[thick,dotted](2.5,2+\yym)--(5.5,2+\yym);
\node[fill=white] at(4,2+\yym){$d^\tau_2$};
\draw[<-,ultra thick](5.5,1.5+\yym)--(6.5,1.5+\yym);\draw[->,ultra thick](9.5,1.5+\yym)--(10.5,1.5+\yym);
\draw[thick,dotted](6.5,1.5+\yym)--(9.5,1.5+\yym);
\node[fill=white] at(8,1.5+\yym){$d^\tau_3$};
\draw[<-,ultra thick](9.5,1+\yym)--(10.5,1+\yym);\draw[->,ultra thick](13.5,1+\yym)--(14.5,1+\yym);
\draw[thick,dotted](10.5,1+\yym)--(13.5,1+\yym);
\node[fill=white] at(12,1+\yym){$d^\tau_4$};

\draw[<-,ultra thick](2.25,2+\yyu)--(3.75,2+\yyu);\draw[->,ultra thick](6.75,2+\yyu)--(7.25,2+\yyu);
\draw[thick,dotted](3.75,2+\yyu)--(6.75,2+\yyu);
\node[fill=white] at(5,2+\yyu){$d^\tau_2$};
\draw[<-,ultra thick](6.25,1.5+\yyu)--(7.75,1.5+\yyu);\draw[->,ultra thick](10.75,1.5+\yyu)--(11.25,1.5+\yyu);
\draw[thick,dotted](7.75,1.5+\yyu)--(10.75,1.5+\yyu);
\node[fill=white] at(9,1.5+\yyu){$d^\tau_3$};
\draw[<-,ultra thick](10.25,1+\yyu)--(11.75,1+\yyu);\draw[->,ultra thick](14.75,1+\yyu)--(15.25,1+\yyu);
\draw[thick,dotted](11.75,1+\yyu)--(14.75,1+\yyu);
\node[fill=white] at(13,1+\yyu){$d^\tau_4$};

\draw[<-,ultra thick](3,2+\yyuu)--(5,2+\yyuu);
\node at(2.5,2+\yyuu){$d^\tau_2$};
\draw[<-,ultra thick](7,1.5+\yyuu)--(9,1.5+\yyuu);
\node at(6.5,1.5+\yyuu){$d^\tau_3$};
\draw[<-,ultra thick](11,1+\yyuu)--(13,1+\yyuu);
\node at(10.5,1+\yyuu){$d^\tau_4$};

\node[text width=2cm]at(-1.5,\yytt+0.5){$D_\tau$\\ ($\tau \approx 0$)};
\node[text width=2cm]at(-1.5,\yyt+0.5){$D_\tau$\\ (small $\tau$)};
\node[text width=2cm]at(-1.5,\yym+0.5){$D_\tau$\\ ($\tau\approx \frac{1}{2}$)};
\node[text width=2cm]at(-1.5,\yyu+0.5){$D_\tau$\\ (large $\tau$)};
\node[text width=2cm]at(-1.5,\yyuu+0.5){$D_\tau$\\ ($\tau \approx 1$)};

\node[fill=white]at(2,\yytt-0.25){$x_2$};
\node[fill=white]at(6,\yytt-0.25){$x_3$};\node[fill=white]at(10,\yytt-0.25){$x_4$};
\node[fill=white]at(14,\yytt-0.25){$x_5$};

\node at(0.5,\yyt-0.25){$x_1$};\node[fill=white]at(3,\yyt-0.25){$x_2$};
\node[fill=white]at(7,\yyt-0.25){$x_3$};\node[fill=white]at(11,\yyt-0.25){$x_4$};
\node[fill=white]at(14.5,\yyt-0.25){$x_5$};

\node[fill=white]at(1,\yym-0.25){$x_1$};\node[fill=white]at(4,\yym-0.25){$x_2$};
\node[fill=white]at(8,\yym-0.25){$x_3$};\node[fill=white]at(12,\yym-0.25){$x_4$};
\node[fill=white]at(15,\yym-0.25){$x_5$};

\node[fill=white]at(1.5,\yyu-0.25){$x_1$};\node[fill=white]at(5,\yyu-0.25){$x_2$};
\node[fill=white]at(9,\yyu-0.25){$x_3$};\node[fill=white]at(13,\yyu-0.25){$x_4$};
\node at(15.5,\yyu-0.25){$x_5$};

\node[fill=white]at(2,\yyuu-0.25){$x_1$};\node[fill=white]at(6,\yyuu-0.25){$x_2$};
\node[fill=white]at(10,\yyuu-0.25){$x_3$};\node[fill=white]at(14,\yyuu-0.25){$x_4$};

\end{tikzpicture}
\caption{\small{The diagram shows (for $n=4$) sets of cuts (in red) that correspond to points on
the axis, for various values of $\tau$. It also shows how movements of the cuts correspond
to movement of a point in $D$ away from the axis. For example, for small $\tau$, a move
in direction $d^\tau_2$ corresponds to moving the second cut to the right and the first
only slightly to the left.
Generally, a movement in direction $d^\tau_i$ tends to increase $x_i$ at the expense of
$x_i$'s neighbours $x_{i-1}$ and $x_{i+1}$. As $\tau$ increases, the movement in direction $d^\tau_i$
tends increasingly to moving the cut to the left of interval $x_i$ to the left, as opposed to moving the
cut to the right of interval $x_i$ to the right. In the limit as $\tau$ approaches 0 from above,
the direction $d^\tau_i$ approaches the {\em negative} of the limit approached by $d^\tau_i$ 
when $\tau$ approaches 1 from below.}}\label{fig:dir-numbers}}
\end{figure}
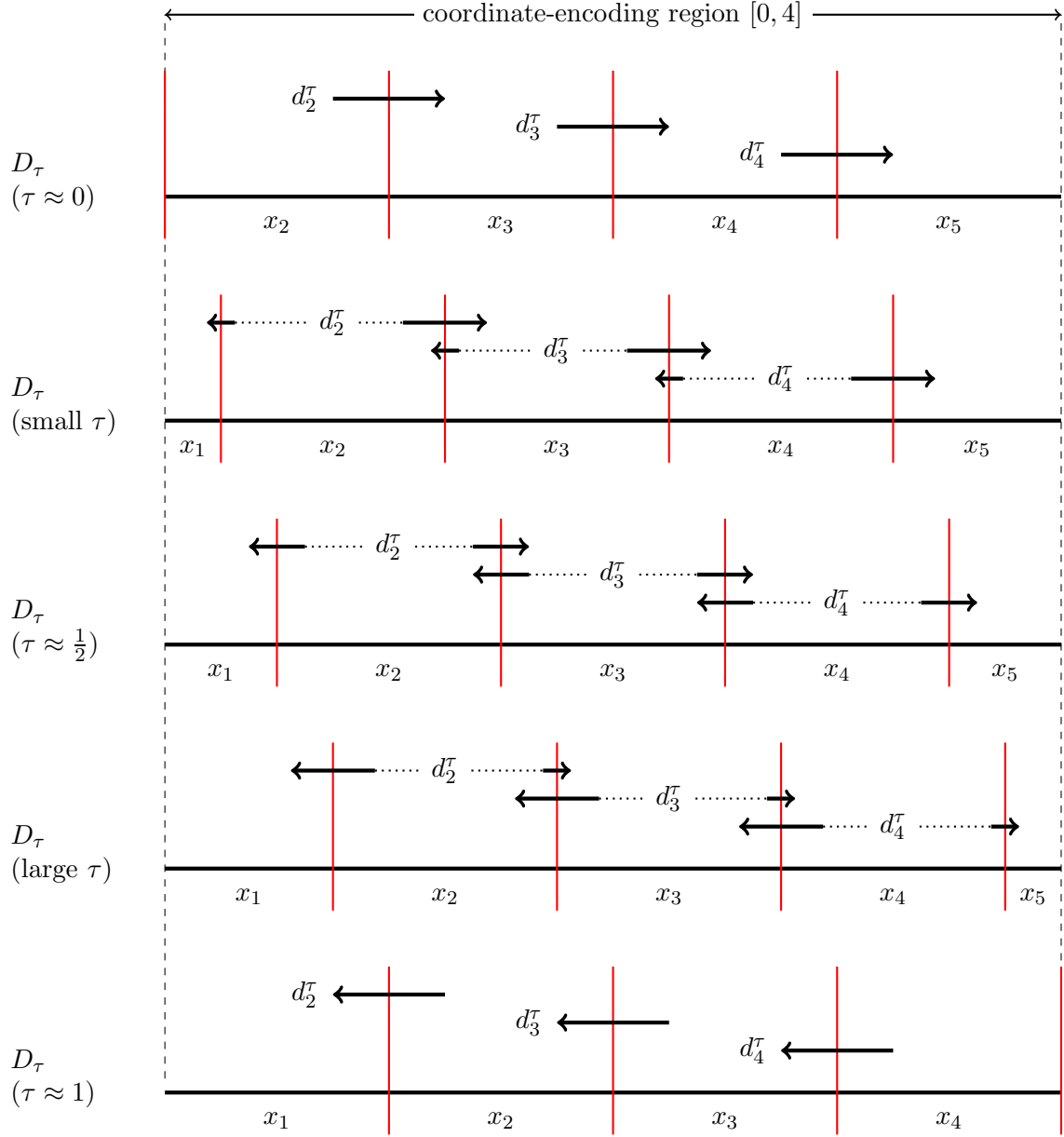

\subsubsection{Computation of the transformation, and its inverse}\label{sec:ct}
We verify here that (for points in the vicinity of the axis), our transformation may
be performed efficiently, and that small
perturbations of inputs lead to small perturbations of the outputs (in either direction).
The easy direction is the computation of $(x_1,\ldots,x_{n+1})$ from $(\tau;\alpha_2,\ldots,\alpha_n)$. 
Recall that a point $\xb$ on the original domain can be expressed in terms of the transformed coordinates $(\tau;\alpha_2,\ldots,\alpha_n)$
and the origin ${\bf 0}_\tau = (\frac{\tau}{n}, \frac{1}{n}, \ldots, \frac{1}{n},\frac{1-\tau}{n})$ as 
$\xb = {\bf 0}_\tau + \sum_{i=2}^n \alpha_i d^\tau_i$. Therefore we have:
\begin{eqnarray*}
	x_1 &=& \frac{\tau}{n} - \tau.\alpha_2 \\
	x_2 &=& \frac{1+\tau}{n} + (1-\tau)\alpha_2 - \tau \alpha_3 - x_1 \\ 
	x_3 &=& \frac{2+\tau}{n} + (1-\tau)\alpha_3 - \tau \alpha_4 - (x_1 + x_2) \\
	&\vdots& \\
	x_{n} &=& \frac{n-1+\tau}{n} + (1-\tau)\alpha_n - \sum_{i=1}^{n-1} x_i \\
	x_{n+1} &=& 1 - \sum_{i=1}^{n} x_i
\end{eqnarray*}
In the other direction, given $(x_1,\ldots,x_{n+1})$ we first compute the value of $\tau$ for the transformed
coordinate system: Note that $(x_1,\ldots,x_{n+1})$ must be a convex combination of $x_\tau$ and $F_0$ (where recall that $x_\tau=(\tau,0,\ldots,0,1-\tau)$ and $F_0$ is points of the form $(0,x_2,\ldots,x_n,0)$), therefore $\tau$ can be computed as the solution to the equation 
\[ \frac{\tau}{1-\tau} = \frac{x_1}{x_{n+1}}.\]
Note that the dependence of $\tau$ on $x_1$ and $x_{n+1}$ is not
excessively sensitive near the axis, since $x_1+x_{n+1}$ is close to $1/n$. Having computed $\tau\in[0,1]$, we 
could simply solve the equations above (the ones used to compute $x_1,\ldots,x_{n+1}$) for $\alpha_2$, $\alpha_3$ and so on successively, using the derived formulas for $\alpha_i$ for the computation of $\alpha_{i+1}$ and express each $\alpha_i$ as a function of only $\tau$ and the values $x_1, x_2, \ldots x_{n+1}$. However, in the extremal cases of $\tau=0$ and $\tau=1$, some of the $\alpha_i$ values might ``disappear''; for example, for $\tau=0$, expressing $\alpha_3$ in terms of only $\tau$, $x_1$ and $x_2$ is not possible, since  for $\tau=0$ we do not obtain a formula for $\alpha_2$ to substitute into the equation for $x_2$. To remedy this, we consider two cases:\\

\noindent \textbf{Case 1:} $\tau\geq\frac{1}{2}$. We compute $\alpha_2, \ldots, \alpha_n$ as follows:
\begin{eqnarray*}
	\alpha_2 &=& \frac{1}{n} - \frac{1}{\tau}\cdot x_1 \\
	\alpha_3 &=& \frac{1+1/\tau}{n} + \frac{(1-\tau)}{\tau}\cdot \alpha_2 - \frac{1}{\tau} \cdot (x_1+x_2)\\
	&\vdots&
\end{eqnarray*}
and so on for $\alpha_4, \ldots, \alpha_n$.\\

\noindent \textbf{Case 2:} $\tau\leq\frac{1}{2}$. We compute $\alpha_2,\ldots,\alpha_n$ starting at the opposite end:
\begin{eqnarray*}
	\alpha_n &=&  - \frac{n-1+\tau}{n(1-\tau)} + \frac{\sum_{i=1}^{n} x_i}{1-\tau} \\ 
	\alpha_{n-1} &=& \frac{\tau}{1-\tau}\cdot \alpha_n - \frac{n-2+\tau}{n(1-\tau)} + \frac{\sum_{i=1}^{n-1} x_i}{1-\tau}\\
	&\vdots&
\end{eqnarray*}	
and so on for $\alpha_{n-2}, \ldots, \alpha_2$.

Note that inverse polynomial-size perturbations of the $x_i$ lead to inverse
polynomial-size perturbations of the transformed coordinates. 
As a sanity check, note that at the boundary (points with $\tau=0$ are the same
as points with $\tau=1$), if we move a cut at the LHS to the RHS
(so $(0,x_2,\ldots,x_{n+1})$ becomes $(x_2,\ldots,x_{n+1},0)$),
it can be checked that the $\alpha_i$ get negated.

\bigskip
Note that these computations should be done with a precision (or rounding error) polynomially smaller than $\delt$.

\subsection{A (poly-time computable) partial colouring function $f:D\rightarrow \{-1,0,1\}^n$}\label{sec:colour-f}

This section defines a partial function $f:D\rightarrow \{-1,0,1\}^n$
($D$ being the $n$-dimensional M\"{o}bius-simplex (Definition~\ref{def:simplex-domain})).
$f$ is constructed in polynomial time based on an instance $I_{VT}$ of \nvhdt\ in
$n$ dimensions, defined using circuit $C_{VT}$.
$f$ is defined in the Significant Region (Definition~\ref{def:sig-region}) which is the
set of points where no blanket-sensor agents (Definition \ref{def:blanket}) are active,
thus it includes the twisted tunnel $T$ (Definition~\ref{def:twisted}).
$f$ is computable by a circuit $C$, that is used to define
the operations of the circuit-encoders in a derived instance of \ch.
Within the Significant Region, $f$ determines the outputs of the circuit-encoders.

The function $f$ maps a point $\xb$ in the Significant Region to a vector of length $n$, $e_f(\xb)$, where $e_f(\xb)_j = 1$ means that the point receives colour $j$, $e_f(\xb)_j = -1$ means that the point receives colour $-j$ and $e_f(\xb)_j = 0$ means that the point does not receives colour $j$ or $-j$. In general, $e_f(\xb)$ may have multiple non-zero entries. We will use the term \emph{the colour of $\xb$} for points that only receive a single colour (and therefore their outputs are vectors with only one non-zero entry).  

The function $f$ will have a corresponding vector-valued function $f'$ (Section~\ref{sec:borsuk-f})
that more closely represents choices of labels $\lplus$/$\lminus$ that the circuit shows to the c-e agents.
We will do this in such a way that no ``bogus'' solutions result from the transition to parts of $D$
where blanket-sensor agents are active.
By construction, there are no solutions where blanket-sensor agents are active, so all solutions occur where $f$ is defined.

In Section~\ref{sec:borsuk-f} we then define a vector-valued ``Borsuk-Ulam style'' function $F$ in terms of $f$.
Letting $I_{VT}$ be an instance of \nvhdt, $F(\xb)$ will be
approximately zero iff given $\xb$, we can derive an approximate consensus-halving solution to $I_{VT}$.
It will be shown that approximate zeroes of $F$ provide solutions to $I_{VT}$.

Recall that $D$ is the set of points $(x_1,\ldots,x_{n+1})$ whose components
are non-negative and sum to 1.
And, the ``Significant Region'' (Definition~\ref{def:sig-region}) of $D$ consists of points $(x_1,\ldots,x_{n+1})$ for
which coordinates $x_i$ ($2\leq i\leq n-1$) differ from $1/n$ by at most an inverse
polynomial $\delta^w = 1/p^w(n)$ (Proposition~\ref{prop:s-r-radius}).
($\delta^w$ represents an upper bound on the thickness of the Significant Region.)

Let $B$ be the $n$-dimensional ``box'' associated with $I_{VT}$
(recall $I_{VT}$ is represented by circuit $C_{VT}$ that maps points in $B$ to $\pm[n]$).
We embed a copy of $B$ in $D$ as follows.
Recall the way facets of $B$ are coloured in Definition~\ref{def:nvhdt}.
Let $(x_1,\ldots,x_n)$ denote a typical point in $B$, and assume that the facets
of $B$ with maximum and minimum $x_1$ (i.e. $x_1=1$ and $x_1=-1$ respectively)
are the panchromatic facets of $B$
(as in Definition~\ref{def:nvhdt}), and for $i\geq 2$ the facet of $B$ with maximum $x_i$ ($x_i=1$)
consists of points that do not have colour $i$, and the the facet of $B$
with minimum $x_i$ ($x_i=-1$) consists of points that do not have colour $-i$.

\begin{definition}\label{def:twisted}
The {\em twisted tunnel} $T$ is defined as follows.
The {\em axis} of $T$ is the set of all points ${\bf 0}_\tau$ as defined in Section~\ref{sec:coord-sys}.
The twisted tunnel is the set of all points with transformed coordinates
$(\tau;\alpha_2,\ldots,\alpha_n)$ such that for all $i$, $|\alpha_i|<\delta^T$.
Note that $\delta^T$ is an inverse polynomial quantity sufficiently small that $T$ is a
subset of the Significant Region; this is achieved since by definition, $\delta^T$ is
polynomially smaller than $\delta^w$ of Proposition~\ref{prop:s-r-radius}.
Thus, $T$ has (with respect to the transformed coordinates)
a $(n-1)$-cube-shaped intersection with any $D_\tau$.
\end{definition}
We define the behaviour of $f$ over the Significant Region (Definition \ref{def:sig-region})
in 3 stages, as follows.

\begin{enumerate}
\item{
{\bf Embedding $B$ in $D$ (recall $B=[-1,1]^n$)}

A point $\xb=(x_1,\ldots,x_n)$ in $B$ is mapped to a point $g(\xb)$ in $D$ as follows.
$g(\xb)$ lies in $D_\tau$, where we choose $\tau=\frac{1}{2}+\delta^T\cdot x_1$.
Then (noting (\ref{eq:transform})) we set
$g(\xb)$ equal to ${\bf 0}_\tau + \sum_{i=2}^n \delta^T\cdot x_i d^\tau_i$
(i.e. $g(\xb)$ has transformed coordinates $(\frac{1}{2}+\delta^T x_1;\delta^T x_2,\ldots,\delta^T x_n)$).
$g(\xb)$ will receive a single colour; the colour of $g(\xb)$ --- i.e. the non-zero entry of $f(g(\xb))$ --- is set equal to the colour allocated to $\xb$ in $B$ by $I_{VT}$.
(Notice that the centre of $B$ is mapped to $(1/2n,1/n,\ldots,1/n,1/2n)$,
which is the origin of $D_{\frac{1}{2}}$, and the centre of the Significant Region.
This point has (recalling Definition~\ref{def:tc}) transformed coordinates $(\frac{1}{2};0,\ldots,0)$
where the first entry is the value of $\tau$.)
}
\item\label{item-f-strips}{
{\bf Extending $f$ to be defined on $T$}

We also colour other points in $T$ as follows --- these will also receive single colours.
Suppose $\yb$ belongs to $D_\tau$, where $\tau<\frac{1}{2}-\delta^T$ or $\tau>\frac{1}{2}+\delta^T$.
According to (\ref{eq:transform}), $\yb = {\bf 0}_\tau + \sum_{i=2}^n \alpha_i d^\tau_i$, and
$\yb$ has transformed coordinates $(\tau;\alpha_2,\ldots,\alpha_n)$.
Suppose all the $\alpha_i$ lie in the range $[-\delta^T,\delta^T]$.
Then if $\tau<\frac{1}{2}-\delta^T$, we set the colour of $\yb$ to
the colour of a point $\yb'=(\frac{1}{2}-\delta^T;\alpha_2,\ldots,\alpha_n)$.
Thus $\yb'\in D_{\frac{1}{2}-\delta^T}$, and the other transformed coordinates
(Definition \ref{def:tc}) are the same for $\yb$ and for $\yb'$.
We do a similar thing for points in $D_\tau$ for $\tau > \frac{1}{2}+\delta^T$.
That is, if $\yb$ has transformed coordinates $(\tau;\alpha_2,\ldots,\alpha_n)$ where
$\tau > \frac{1}{2}+\delta^T$ and the $\alpha_i$ are all at
most $\delta^T$ in absolute value, then $\yb$ gets the same colour as a point $\yb'$ whose
transformed coordinates are $(\frac{1}{2}+\delta^T;\alpha_2,\ldots,\alpha_n)$.
}
\item\label{item-f-sr}{
{\bf Extending $f$ to the Significant Region}

The Significant Region (Definition \ref{def:sig-region}) is points in $D$ where no blanket-sensor agents
are active, a subset of points that are close to the axis in the sense of Proposition~\ref{prop:s-r-radius}.
Consider $\xb\in D\setminus T$ with transformed coordinates $(\tau;\alpha_2,\ldots,\alpha_n)$.
\begin{enumerate}
\item
For each $j\in\{2,\ldots,n\}$ if $\alpha_j>\delta^T$ then $\xb$ gets colour $-j$;
\item
For each $j\in\{2,\ldots,n\}$ if $\alpha_j<-\delta^T$ then $\xb$ gets colour $j$;
\item
these are not mutually exclusive, $\xb$ gets at least one colour, possibly more.
\end{enumerate}
Notice that (within the subspace $D_\tau$) the side(s) of the twisted tunnel $T$ closest to $\xb$ 
is guaranteed not to be opposite to any colour of $\xb$.

For a subset $S$ of colours, let $R(S)$ be the region with colours in $S$.
We call these the ``outer regions''.
}
\end{enumerate}
Proposition~\ref{prop:strips} notes that when colour-regions meet each other at
opposite ends of $T$ (which have been identified with each other according to the
definition of the Significant Region), they will have equal and opposite colours.

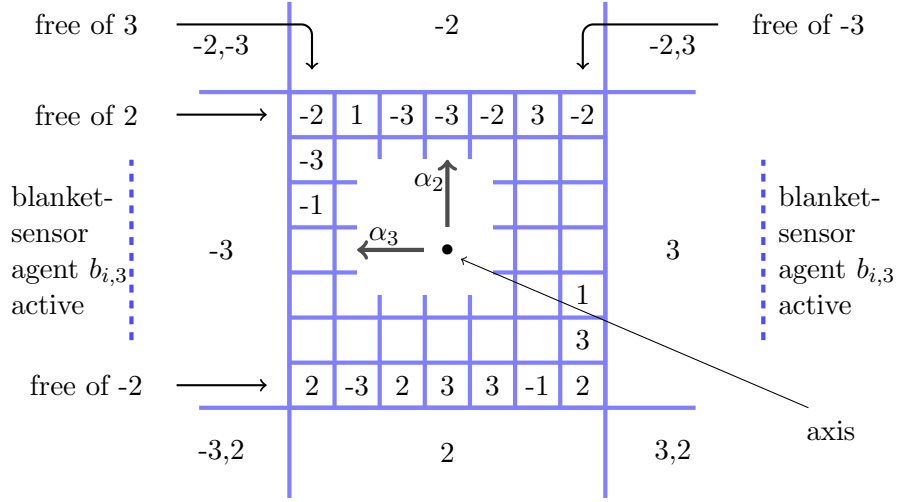
\begin{figure}
\center{
\begin{tikzpicture}[scale=0.6]

\foreach \x in {0,1,2,3,4,5,6,7}{
\draw[blue!50,ultra thick](-3.5,-0.5+\x)--(3.5,-0.5+\x);
}
\foreach \x in {-3.5,-2.5,-1.5,-0.5,0.5,1.5,2.5,3.5}{
\draw[blue!50,ultra thick](\x,-0.5)--(\x,6.5);
}

\draw[blue!50, ultra thick](-3.5,8.5)--(-3.5,-2.5);
\draw[blue!50, ultra thick](3.5,8.5)--(3.5,-2.5);
\draw[blue!50, ultra thick](-5.5,6.5)--(5.5,6.5);
\draw[blue!50, ultra thick](-5.5,-0.5)--(5.5,-0.5);

\node at(-5,7.5){-2,-3};\node at(5,7.5){-2,3};\node at(-5,-1.5){-3,2};\node at(5,-1.5){3,2};

\node at(0,8){-2};\node at(0,-1.5){2};
\node at(-5,3){-3};\node at(5,3){3};

\node at(-3,6){-2};\node at(-2,6){1};\node at(-1,6){-3};\node at(0,6){-3};\node at(1,6){-2};\node at(2,6){3};\node at(3,6){-2};
\node at(-3,5){-3};
\node at(-3,4){-1};

\node at(3,2){1};
\node at(3,1){3};
\node at(-3,0){2};\node at(-2,0){-3};\node at(-1,0){2};\node at(0,0){3};\node at(1,0){3};\node at(2,0){-1};\node at(3,0){2};

\node at(-8,6){free of 2};\draw[thick,->](-6,6)--(-4,6);
\node at(-8,0){free of -2};\draw[thick,->](-6,0)--(-4,0);

\node at(-8,8){free of 3};\draw[thick,rounded corners,->](-6,8)--(-3,8)--(-3,7);
\node at(8,8){free of -3};\draw[thick,rounded corners,->](6,8)--(3,8)--(3,7);

\draw[blue!70,ultra thick,dashed](7,1)--(7,5);
\draw[blue!70,ultra thick,dashed](-7,1)--(-7,5);

\node[text width=2cm] at(9,3){blanket-sensor agent $b_{i,3}$ active};
\node[text width=2cm] at(-8,3){blanket-sensor agent $b_{i,3}$ active};

\fill[color=white] (1,2)--(1,5)--(-2,5)--(-2,2)--cycle;
\draw[black!70, ultra thick,<->](-2,3)--(0,3)--(0,5);\node at(-1.4,3.3){$\alpha_3$};\node at(-0.4,4.5){$\alpha_2$};

\fill[color=white] (-0.5,2.5)--(0.5,2.5)--(0.5,3.5)--(-0.5,3.5)--cycle;\node at(0,3){\small $\bullet$};
\draw[->](8,-0.5)--(0.3,2.8);\node at(8.5,-1){axis};

\end{tikzpicture}
\caption{\small{Cross-section of the twisted tunnel (for $n=3$), with examples of
possible labels of regions. Note that the outer regions of the Significant Region
are not adjacent to any cubelet having an opposite colour to that outer region.
For example, the left-hand column is free of cubelets with colour $3$, and
is adjacent to the outer region with colour $-3$.}\label{fig:cross-section}}}
\end{figure}

\subsection{How to compute a solution to \nvhdt\ from a solution to \ch}\label{sec:backward}

Suppose we have a solution $S_{CH}$ to an instance $I_{CH}$ of \ch, derived by our
reduction from an instance $I_{VT}$ of \nvhdt.

Let $\xb$ be the point in the M\"{o}bius-simplex represented by the c-e cuts of $S_{CH}$.
Proposition~\ref{prop:s-r-radius} already tells us that $\xb$ must lie within some inverse-polynomial
distance of the axis, since if not, some blanket-sensor agent will be active.
We prove in Section~\ref{sec:prove-reduction} that $\xb$ must lie within,
or very close to, the twisted tunnel.

From this we identify two colour-regions that have equal and opposite colours, as follows.
Let $\xb$ have transformed coordinates $(\tau;\alpha_2,\ldots,\alpha_n)$, which
can be computed with inverse-polynomial precision from the c-e cuts.

If $\xb$ occurs within the embedded copy of $B$ (Section~\ref{sec:colour-f})
then identify two circuit-encoders $C_i$ and $C_{i'}$ that both receive reliable inputs
and have equal and opposite outputs. 
The proof of Proposition~\ref{prop:tt} tells us that this is always possible.

Now we have two points $\xb'$ and $\xb''$ within distance $\delt$ of each other,
that lie in oppositely coloured cubelets.
With respect to transformed coordinates, $\xb$ and $\xb'$ are within some
distance $\delnt$ that we can assume by Proposition~\ref{prop:poly}
to be much smaller than the widths of the cubelets and other colour-regions.
So these two cubelets are adjacent and we are done.

If $\xb$ lies in $T$ but not in the embedded copy of $B$, then we similarly
find two distinct colour-regions that are adjacent and with opposite colours.
Since these colour-regions are just extensions of the cubelets that lie
on the panchromatic facets of the embedded instance of \nvhdt, we are done.
Formally, if $S_{CH}$ represents a point with transformed coordinates
$(\tau;\alpha_2,\ldots,\alpha_n)$ where $\tau>\frac{1}{2}+\delta^T$,
we take the point $(\frac{1}{2}+\delta^T;\alpha_2,\ldots,\alpha_n)$,
and similarly,
if $S_{CH}$ represents a point with transformed coordinates
$(\tau;\alpha_2,\ldots,\alpha_n)$ where $\tau<\frac{1}{2}-\delta^T$,
we take the point $(\frac{1}{2}-\delta^T;\alpha_2,\ldots,\alpha_n)$.

\section{Completing the Proof of Theorem~\ref{thm:main}}\label{sec:prove-reduction}

Let $I_{VT}$ be an instance of \nvhdt\ in $n$ dimensions,
given in terms of circuit $C_{VT}$; suppose \ch\ instance $I_{CH}$ 
is derived from it by our reduction.
Recall that $\varepsilon=\delt/10$. We show that any $\varepsilon$-approximate solution $S_{CH}$
to $I_{CH}$ allows a solution to $I_{VT}$ to be recovered using Section~\ref{sec:backward}.

In $S_{CH}$, only the c-e cuts may lie in the c-e region (Observation~\ref{obs:cer}),
and every other cut $c(a)$ for $a$ not a c-e agent, must lie in some interval outside the c-e region
(in order for $a$'s value to be evenly split).
It follows from Proposition~\ref{prop:s-r-radius} that at least $n-1$ c-e cuts must lie in the c-e region,
and they are evenly-spaced (the gaps between them differ from 1 by an inverse polynomial).
The remaining cut may occur elsewhere, in which case it becomes what we called
a ``stray cut'' in~\cite{FG17}, and in that case, the ``double negative lemma'' of~\cite{FG17}
may be applied to prove that it has little effect on the quality of a solution:
it causes a single circuit-encoder to have unreliable output.

Recalling Observation~\ref{obs:average}, the c-e cuts of $S_{CH}$ encode
a collection of $p^C$ points $\xb_1,\ldots,\xb_{p^C}$ in the M\"{o}bius-simplex
where $d(\xb_i,\xb_j)\leq \delt$ ($d$ is the metric defined in (\ref{eq:metric})).
In transformed coordinates we have $\tilde{d}(\xb_i,\xb_j)\leq \delnt$
($\tilde{d}$ as in (\ref{eq:metric2})), and by Proposition~\ref{prop:poly}, $\delnt$
is much smaller than other inverse-polynomial quantities that we work with.
Recall also that at most $n$ of these points are incorrectly labelled since all but
$n$ of the circuit-encoders receive reliable inputs.
Alternatively, up to $n-1$ circuit-encoders receive unreliable input and one circuit-encoder
is affected by the stray cut.
We proceed by case analysis.
Note that there is an inverse-polynomial gap between the twisted tunnel and the
boundary of the Significant Region, which is much larger than $\delt$.
So the cases to consider are:
\begin{itemize}
\item[-] Most or all of the points are in the twisted tunnel. In that case it will be
proved that the procedure of Section~\ref{sec:backward} identifies a solution to \nvhdt;
see Sections \ref{sec:borsuk-f}, \ref{sec:solution}.
\item[-] Most or all are in the outer regions. In that case we are not at a solution since
the colours cannot cancel each other.
\item[-] Some of the points are outside the Significant Region. In that case we are
far from the twisted tunnel, and Section~\ref{sec:bdry1} argues that no solution
is possible here.
\item[-] All points are outside the Significant Region. Then, we are certainly far from
a solution since various blanket-sensor agents will be active, and the points are
so close together (within $\delt$) that the directions in which they are active, cannot cancel.
\end{itemize}

\subsection{A Borsuk-Ulam-style function $F:D\rightarrow[-1,1]^n$}\label{sec:borsuk-f}

Recall that $D$ denotes the $n$-dimensional M\"{o}bius-simplex (Definition~\ref{def:simplex-domain}).
We start by defining a function $f':D\rightarrow [-1,1]^n$ based on $f$ defined as in Section~\ref{sec:colour-f}.
$f'$ simulates the effect of the blanket-sensor agents as described in Section~\ref{sec:bsa}. Let $\xb\in D$.

\begin{enumerate}
\item
{\bf When no blanket-sensor agents are active at $\xb$.}
Here we are in the Significant Region.
\begin{itemize}
\item
In $T$, $f'$ behaves like $f$ in the sense that
if $f$ assigns the colour $i$ to $\xb$ (recall that it assigns a single colour to points in $T$), then set $f'(\xb):=\eb_i$ if $i>0$, $f'(\xb):=-\eb_{|i|}$ for $i<0$.
\item
Outside $T$, in outer region $R(S)$, $f'(\xb):=\sum_{i\in S,i>0} \eb_i +\sum_{i\in S,i<0} -\eb_{|i|}$.
\end{itemize}
\item
{\bf When one or more blanket-sensor agents are active.}
If the $j$-th blanket-sensor agent of $C_1$, $b_{1,j}$ is active towards $\lplus$ (respectively $\lminus$) then the
$j$-th entry of $f'(\xb)$ is set to $1$ if $j$ is odd and to $-1$ if $j$ is even (respectively, to $-1$ if $j$ is odd and $1$ is $j$ is even).
This is done for all active blanket-sensor agents, thus $f'(\xb)$ can contain multiple 1's and $-1$'s.
\end{enumerate}
The following points are similar to Observation~\ref{obs:average}:

\begin{observation}\label{obs:61}
Suppose that circuit-encoder $C_i$ (some $i\in [p^C]$) of $I_{CH}$ receives reliable inputs.
(Observation~\ref{obs:reliable} tells us that at most $n$ of them fail to receive reliable inputs.)
Then $C_i$ computes $f'$ at a point within distance $\delt$ from the $\xb\in D$ encoded
by the c-e cuts, in the sense that the value observed by each c-e agent $a_j$
that is labelled by $\lplus$, minus the amount labelled $\lminus$, restricted
to that part of $a_j$'s value that lies in $R_i$ and so is governed by the output of $C_i$,
is the $j$-th component of $f'$.
\end{observation}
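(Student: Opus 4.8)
The plan is to prove the observation by tracing the construction of the circuit-encoder $C_i$ (Sections~\ref{sec:c1}--\ref{sec:c2plus}) stage by stage and checking that what it computes, and the feedback it delivers to the c-e agents in $R_i$, agree with the definition of $f'$ in Section~\ref{sec:borsuk-f}. First I would pin down what ``receives reliable input'' gives us: by Definition~\ref{def:reliable} no c-e cut cuts a value-block of a sensor agent $s_{i,\cdot}$, so each of the $\pgig$ sensor value-blocks of $C_i$ in the c-e region carries a single label; reading these as the input bits of $C_i$ through the sensor gadget (Definition~\ref{def:sensors}) and running the preprocessing step of Section~\ref{sec:c1} yields a well-defined point $\xb_i$ in the M\"obius-simplex. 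That $d(\xb_i,\xb)\le\delt$, where $\xb$ is the point encoded by the c-e cuts themselves, is exactly Observation~\ref{obs:average}: the sensor blocks of $C_i$ with index $j$ all lie inside $[(j-1)\delt,j\delt]$, so the partition of the c-e region that they ``see'' differs from the true one by at most $\delt$ per coordinate, hence by at most $\delt$ in the metric $d$.

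Next I would follow the computation $C_i$ performs on $\xb_i$. It applies the coordinate transformation of Section~\ref{sec:coord-sys}, which by Section~\ref{sec:ct} is poly-time computable (with rounding error polynomially smaller than $\delt$) for points near the axis, and then branches according to which of the three regions of Section~\ref{sec:colour-f} the transformed point lies in: inside the embedded copy of $B$ it invokes a copy of $C_{VT}$; inside the twisted tunnel but outside that copy, or in an outer region, it applies the explicit colouring rules of items~(\ref{item-f-strips}) and~(\ref{item-f-sr}). By construction this is precisely the colour (or set of colours) $f(\xb_i)$. On top of this, $C_i$ applies the blanket-sensor override of Section~\ref{sec:bsa} (Rules~\ref{rule2}--\ref{rule3}), which is exactly the second clause in the definition of $f'$, and --- when no blanket-sensor is active --- it sets the auxiliary output gates $g'_j$, $j\in\pm[n]$, from $f(\xb_i)$ and $x_{REF}$ as in Section~\ref{sec:cea}, which is the first clause of $f'$ together with the reference-sensor twist. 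Each gate $g'_j$ then selects the label of one of the two value-blocks of $a_j$ that lie in $R_i$, so the net $\lplus$-minus-$\lminus$ value of $a_j$ restricted to $R_i$ works out, by the case analysis in Section~\ref{sec:cea}, to $+1$, $-1$, or $0$ times the total value $1/p^C$ that $a_j$ has in $R_i$, matching $\eb_j$, $-\eb_j$, or $0$ respectively: a single colour $\pm j$ at $\xb_i$ drives both blocks of $a_j$ to the same label, while the absence of both colours $j$ and $-j$ leaves them oppositely labelled. (This is why the statement may say the net value ``is'' the $j$-th component of $f'$ up to the global normaliser $1/p^C$.)

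The step I expect to need the most care is verifying that the $x_{REF}$-dependence introduced in Section~\ref{sec:cea} is annihilated by the way colours flip across the two facets of $D$ that are identified (Observation~\ref{obs:flip}, Definition~\ref{def:simplex-domain}), so that the value reported to $a_j$ does not depend on which of the two identified descriptions of $\xb$ one uses. Concretely, when $\xb$ sits near the identified facet, passing from its $D_0$-description to its $D_1$-description simultaneously (i)~negates the transformed coordinates and hence, by the Tucker boundary condition built into $C_{VT}$ and the colouring-extension rules of Section~\ref{sec:colour-f}, negates the colour $f(\xb)$, and (ii)~flips $x_{REF}$; a short case check on the three bullet points defining $g'_j$ shows that the composition of these two flips acts as the identity on the $g'_j$, so the feedback to each $a_j$, and therefore the quantity identified with the $j$-th component of $f'$, is unchanged. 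This is the one place where continuity across the facet is invoked, and it is exactly the purpose of the reference-sensor mechanism of Definition~\ref{def:ref}; the remaining verifications (that the coordinate transformation is computed with adequate precision, that the branching among the three regions of Section~\ref{sec:colour-f} is implementable by a polynomial-size circuit, and that the bit-detection gadget correctly identifies the active blanket-sensors) are routine and largely inherited from~\cite{FG17}.
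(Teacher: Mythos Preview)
Your proposal is correct and follows the same approach as the paper, namely tracing the construction of $C_i$ and checking that its outputs agree with the definition of $f'$. The paper's own justification is a single sentence (``This follows from the construction of Section~\ref{sec:cea} and the association of boolean values $\true,\false$ with the labels $\lplus$ and $\lminus$''), so your version is considerably more thorough than what the authors provide.

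One remark on scope: your third paragraph, verifying that the $x_{REF}$-flip cancels the colour-flip across the identified facets of $D$, goes beyond what this particular observation asserts. Observation~\ref{obs:61} is only claiming that $C_i$ computes $f'(\xb_i)$ for \emph{some} point $\xb_i$ with $d(\xb_i,\xb)\le\delt$; it does not by itself need well-definedness of $f'$ across the facet. The paper defers that continuity check to Section~6.4 (``No bogus approximate-zeroes of $F$ due to the connecting facet''), where exactly your argument appears. So your discussion is correct but belongs to a later step of the overall proof rather than to this observation.
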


This follows from the construction of Section~\ref{sec:cea} and the association of boolean
values $\true,\false$ with the labels $\lplus$ and $\lminus$.
For $\xb \in D$, $F(\xb)$ is the average of the outputs of the $C_i$; Proposition~\ref{prop:62} provides the details.

\begin{proposition}\label{prop:62}
$I_{CH}$ computes a function $F$ in the following sense.
Let $\xb$ be the point encoded by the c-e agents.
Suppose all agents other than the c-e agents have error (i.e. discrepancy between
$\lplus$ and $\lminus$ that they observe) at most $\varepsilon$.
Then the error of the c-e agents is within additive distance $1/n$ from
the average value of $f'$, averaged over a set of points all within $\delt$ of $\xb$.
\end{proposition}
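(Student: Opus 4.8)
The plan is to unwind the definitions so that the claimed error bound becomes a simple averaging statement combined with a counting bound on the number of ``bad'' circuit-encoders. First I would recall that every c-e agent $a_j$ has, inside each region $R_i$, exactly two value-blocks of value $1/(2p^C)$ whose labels are selected by the output gates $g'_j$ of circuit-encoder $C_i$ (Sections~\ref{sec:cea},~\ref{sec:c2plus}), and that $a_j$ has no value in the c-e region. Hence the total discrepancy $\mu_{a_j}(\lplus)-\mu_{a_j}(\lminus)$ observed by $a_j$ is $\frac{1}{p^C}\sum_{i=1}^{p^C} o_{i,j}$, where $o_{i,j}\in[-1,1]$ is the (normalised) contribution of $C_i$ to $a_j$. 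By Observation~\ref{obs:61}, whenever $C_i$ receives reliable input, $o_{i,j}$ equals the $j$-th component of $f'$ evaluated at the point $\xb_i\in D$ that $C_i$ reads, and by Observation~\ref{obs:average} we have $d(\xb_i,\xb_j)\leq\delt$ for all $i,j$, so all the $\xb_i$ lie within $\delt$ of the point $\xb$ encoded by the c-e cuts.

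The next step is the counting argument. By Observation~\ref{obs:reliable}, at most $n$ of the $p^C$ circuit-encoders fail to receive reliable input (this includes the at-most-one encoder disturbed by a stray cut). For each such unreliable $C_i$, the contribution $o_{i,j}$ is an arbitrary value in $[-1,1]$, whereas for a reliable $C_i$ it equals $f'(\xb_i)_j\in[-1,1]$. Therefore, writing $\overline{F}(\xb)_j := \frac{1}{p^C}\sum_{i\ \mathrm{reliable}} f'(\xb_i)_j$ for the average of $f'$ over the reliable points (or, more precisely, replacing the unreliable contributions by the corresponding $f'$ values and calling the resulting quantity the ``average of $f'$ over a set of points all within $\delt$ of $\xb$''), the difference between the true c-e-agent error and this average is
\[
\Bigl|\,\tfrac{1}{p^C}\sum_{i=1}^{p^C} o_{i,j} - \tfrac{1}{p^C}\sum_{i=1}^{p^C} f'(\xb_i)_j\,\Bigr|
\ \leq\ \tfrac{1}{p^C}\sum_{i\ \mathrm{unreliable}} \bigl|o_{i,j}-f'(\xb_i)_j\bigr|
\ \leq\ \tfrac{2n}{p^C}.
\]
Recalling from Section~\ref{sec:probres} that $p^C$ is a large polynomial (in particular we may take $p^C\geq 2n^2$, which is consistent with the freedom in choosing $p^C$), the right-hand side is at most $1/n$, giving exactly the claimed additive bound. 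I would phrase the proposition's ``$F$'' as this averaged quantity $\overline{F}(\xb)$, noting that it depends only on $\xb$ (via the $\delt$-cluster of points the circuit-encoders read) and on the fixed colouring circuit $C$ realising $f'$, which is what it means to say ``$I_{CH}$ computes $F$''.

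The main obstacle is bookkeeping rather than mathematics: one must be careful that the hypothesis ``all agents other than the c-e agents have error at most $\varepsilon$'' is what licenses us to assume the gate gadgets in each reliable $R_i$ correctly simulate $C_{VT}$ and the feedback gates $g'_j$, so that Observation~\ref{obs:61} genuinely applies — this is the point where the $\varepsilon=\delt/10$ choice and the gadget analysis imported from~\cite{FG17} (Appendix~\ref{sec:app_details}) are invoked. A secondary subtlety is that the ``set of points all within $\delt$ of $\xb$'' in the statement must be allowed to include the (arbitrary but bounded) pseudo-points attributed to the unreliable encoders, or else the at-most-$n$ unreliable encoders must be absorbed into the $1/n$ slack exactly as above; I would make that explicit so the statement of the proposition is literally true as written. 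Once these two points are dispatched, the proof is the three-line averaging estimate displayed above.
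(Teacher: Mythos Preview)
Your proposal is correct and follows essentially the same approach as the paper: the paper's proof simply cites Observation~\ref{obs:average} for the averaging over points within $\delt$, Observation~\ref{obs:reliable} together with the stray-cut discussion for the bound of at most $n$ bad circuit-encoders, and the choice $p^C\geq 2n^2$ to turn $2n/p^C$ into the claimed $1/n$ slack. Your write-up is a more explicit unwinding of exactly these ingredients, with the displayed averaging estimate making the arithmetic precise; the only minor quibble is that the stray cut is technically a distinct failure mode from ``unreliable input'' rather than a subcase of it, but the total count of bad encoders is still at most $n$ (either $n$ with unreliable input, or $n-1$ with unreliable input plus one hit by the stray cut), so your bound is unaffected.
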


\begin{proof}
We put together various observations about the way $I_{CH}$ is constructed.
Observation~\ref{obs:average} told us that the values observed by the c-e agents are the
average of a set of points all within distance $\delt$ of each other.
The additive distance $1/n$ results from the existence of up to $n$ circuit-encoders
that either fail to receive good inputs (Observation~\ref{obs:reliable}), or are affected by the stray cut,
taken in conjunction with the fact that we average over $p^C$ points,
where $p^C$ can be taken to be at least $2n^2$.
\end{proof}

\begin{proposition}\label{obs:invpoly}
$\delt$ can be chosen to be sufficiently small (but still inverse-polynomial) that
given a set of $p^C$ points $\xb_1,\ldots,\xb_{p^C}\in D$ within distance $\delt$ of each other,
when we compute their transformed coordinates $\yb_1,\ldots,\yb_{p^C}$, we have:\smallskip

\noindent Every pair of points $\yb,\yb' \in \{ \yb_1,\ldots,\yb_{p^C} \}$ has the property that they either lie in the same
colour-region, or adjacent colour-regions (where a ``colour-region'' is one of
monochromatic regions of Section~\ref{sec:colour-f}), or one of the outer regions.
\end{proposition}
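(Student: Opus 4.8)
The plan is to derive the statement from Proposition~\ref{prop:poly} together with a direct pigeonhole argument on the way the Significant Region is decomposed into colour-regions in Section~\ref{sec:colour-f}. First I would record what Proposition~\ref{prop:poly} gives us: the points $\xb_1,\dots,\xb_{p^C}$ to which this proposition is applied lie close to the axis (Proposition~\ref{prop:s-r-radius}), hence inside the neighbourhood of the axis in which Proposition~\ref{prop:poly} is valid, so for every pair we get $\tilde{d}(\yb_i,\yb_j)\le p(n)\,d(\xb_i,\xb_j)\le p(n)\,\delt$, a quantity we denote $\delnt$ as elsewhere. Since $p(n)$ is a fixed polynomial and $\delta^T$ is an inverse polynomial, I can fix $\delt$ to be a small enough inverse polynomial that $\delnt<\tfrac{1}{7}\delta^T$ (the cubelet edge length in transformed coordinates, see below); this is consistent with the stipulation in the ``Useful quantities'' paragraph that $\delt$ be substantially smaller than every other inverse-polynomial quantity used in the reduction.

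Next I would describe the colour-regions as a decomposition of the relevant part of $D$ into blocks whose boundaries, in the transformed coordinate system $(\tau;\alpha_2,\dots,\alpha_n)$, are axis-aligned hyperplanes. Inside the embedded copy of $B$ (item~1 of Section~\ref{sec:colour-f}), the map $\xb\mapsto g(\xb)$ scales each coordinate direction of $B$ by $\delta^T$, so the $7^n$ cubelets of $B$ become boxes of side $\tfrac{2}{7}\delta^T$ in each of the $n$ directions $\tau,\alpha_2,\dots,\alpha_n$; the ``strip'' extension of item~\ref{item-f-strips} inherits exactly these cubelet boundaries in the $\alpha$-directions for all the relevant values of $\tau$; and the outer regions $R(S)$ of item~\ref{item-f-sr} are delimited by the hyperplanes $\alpha_j=\pm\delta^T$. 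The crucial observation is that, across all three pieces and across the interfaces between them, any two parallel region-boundary hyperplanes are separated by at least $\tfrac{2}{7}\delta^T$ in the corresponding coordinate. Hence two points whose transformed coordinates are within $L_\infty$-distance $\delnt<\tfrac{2}{7}\delta^T$ of each other can, in each coordinate, straddle at most one such hyperplane, so the blocks containing them have indices differing by at most $1$ in every coordinate — which is precisely the relation of being equal or (diagonally) adjacent.

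Finally I would finish the case analysis. Since $\tilde{d}$ of~(\ref{eq:metric2}) is an $L_1$-type metric that already accounts for the identification $(0;\alpha_2,\dots,\alpha_n)\equiv(1;-\alpha_2,\dots,-\alpha_n)$, the bound $\tilde{d}(\yb,\yb')\le\delnt$ splits into two cases: either $L_1(\yb,\yb')\le\delnt$, in which case $L_\infty(\yb,\yb')\le\delnt$ and the previous paragraph applies directly; or there are $\zb\equiv\zb'$ with $L_1(\yb,\zb)+L_1(\zb',\yb')\le\delnt$, in which case $\yb$ is near $D_0$ and $\yb'$ near $D_1$, the clamping in item~\ref{item-f-strips} sends them to antipodal cubelets on the panchromatic facets of the embedded $B$, and these cubelets are glued together in $D$ (the same identification exploited in Observation~\ref{obs:flip}), hence are adjacent colour-regions. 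In every case $\yb$ and $\yb'$ lie in the same colour-region, in adjacent colour-regions, or in one of the outer regions, as claimed. I expect the only delicate point to be the bookkeeping of feature sizes across the three pieces of the colouring and their interfaces — verifying that the minimum separation between parallel region-boundary hyperplanes really is $\tfrac{2}{7}\delta^T$ everywhere, so that a single inverse-polynomial choice of $\delt$ (equivalently of $\delnt$) suffices uniformly; everything else is an application of Proposition~\ref{prop:poly} followed by an $L_\infty$ pigeonhole.
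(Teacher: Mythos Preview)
Your proposal is correct and takes essentially the same approach as the paper: colour-region membership is determined by comparing transformed coordinates against a finite set of inverse-polynomially separated threshold values, and Proposition~\ref{prop:poly} lets one choose $\delt$ so that transformed distances fall below the minimum threshold gap. The paper's own proof is considerably terser---it does not name the specific bound $\tfrac{2}{7}\delta^T$, does not separate out the wrap-around case, and simply asserts that ``the smallest difference between any pair of them is inverse-polynomial''---but the logic is identical, and your more explicit bookkeeping (cubelet edge length, the $L_\infty$ pigeonhole, the $D_0/D_1$ identification) is a faithful fleshing-out of that sketch.
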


\begin{proof}
In identifying which colour-region a point with transformed coordinates $\yb$ belongs to,
for the colour-regions in the twisted tunnel $T$, we compare coordinates with certain threshold values.
These threshold values differ from each other by inverse-polynomial amounts,
and the smallest difference between any pair of them is inverse-polynomial.
Applying Proposition~\ref{prop:poly}, we can keep the $\yb_i$ closer to each other than this.
(Colour-regions lie in the Significant Region, so these points lie within $1/10n^2$ of
the axis, so Proposition~\ref{prop:poly} is applicable.)

This applies also to the outer regions $R(S)$ for sets of colours $S$.
Identifying which $R(S)$ a point $\yb$ belongs to, uses the same information on comparisons of its coordinates with inverse-polynomials.
\end{proof}

\begin{paragraph}{Observations on $F$}
\begin{itemize}
\item[-] We call $F$ a Borsuk-Ulam-style function --- The suffix ``style'' is to note
that we define a kind of function that has desirable properties similar
to those of a Borsuk-Ulam function, but for example the domain of
the function is $D$ as opposed to a sphere. Also, the function $F$ is ``approximately
Lipschitz'' rather then truly continuous, which is good enough for our purposes.
\item[-] $|F(\xb)|\leq \varepsilon$ (here, $|F|$
denotes the $L_\infty$ or ``maximum'' norm of $F$) iff $\xb$ encodes an approximate \ch\ solution.
Regarding this point, $F$ is not simulating a Borsuk-Ulam function,
but rather simulating a function consisting of the difference between the values taken
by a Borsuk-Ulam function, at two antipodal points.
\end{itemize}

\end{paragraph}

\subsection{Encoding the output of $F$ with a \ch\ solution}\label{sec:solution}

\begin{proposition}\label{prop:tt}
Let $S_{CH}$ be an $\varepsilon$-approximate solution to $I_{CH}$.
Suppose that the c-e cuts of $S_{CH}$ represent a point $\xb$ that lies in the twisted tunnel.
Then we can reconstruct a solution to $I_{VT}$ in polynomial time.
\end{proposition}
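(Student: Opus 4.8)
The plan is to feed the point $\xb$ encoded by the c-e cuts into the machinery already set up in Sections~\ref{sec:colour-f}, \ref{sec:backward}, and \ref{sec:borsuk-f}, and use Proposition~\ref{prop:62} together with the Borsuk--Ulam-style properties of $F$ to extract two antipodal cubelets with opposite colours. First I would recall that by Proposition~\ref{prop:s-r-radius}, $\xb$ lies within inverse-polynomial distance of the axis, so that the coordinate transformation of Section~\ref{sec:coord-sys} and Proposition~\ref{prop:poly} apply to $\xb$ and to all of the $p^C$ points $\xb_1,\ldots,\xb_{p^C}$ encoded by the various circuit-encoders (which by Observation~\ref{obs:average} are within $\delt$ of each other, hence within $\delnt$ in transformed coordinates by Proposition~\ref{prop:poly}). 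Compute the transformed coordinates $(\tau;\alpha_2,\ldots,\alpha_n)$ of $\xb$ in polynomial time using Section~\ref{sec:ct}.

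Next I would invoke Proposition~\ref{prop:62}: since $S_{CH}$ is an $\varepsilon$-approximate solution, every non-c-e agent is $\varepsilon$-balanced, so the error observed by the c-e agents is within additive $1/n$ of the average of $f'$ over the $p^C$ points $\yb_1,\ldots,\yb_{p^C}$ (the transformed coordinates of the $\xb_i$). Because the c-e agents are also $\varepsilon$-balanced in $S_{CH}$, this forces that average to have $L_\infty$ norm at most $\varepsilon+1/n$, which is small. By Proposition~\ref{obs:invpoly}, all the $\yb_i$ lie in the same colour-region or in adjacent ones; combined with the fact that at most $n$ of the circuit-encoders are unreliable (Observation~\ref{obs:reliable}) and $p^C\geq 2n^2$, the reliable points alone must already have their $f'$-values nearly cancelling. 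Split into the two cases of Section~\ref{sec:backward}: either $\xb$ lies in the embedded copy of $B$, in which case the reliable circuit-encoders are evaluating $C_{VT}$ on points of $B$ within $\delnt$ of each other, and cancellation of the $\pm\eb_i$ vectors forces two of them to output equal-and-opposite colours $\ell$ and $-\ell$ — these correspond (via $g^{-1}$ and the fact that $\delnt$ is much smaller than the cubelet width $2/7$ mapped down by $\delta^T$) to two adjacent, oppositely-coloured cubelets of $I_{VT}$, i.e. a solution; or $\xb$ lies in $T$ but outside the embedded $B$ (so $\tau\notin[\tfrac12-\delta^T,\tfrac12+\delta^T]$), in which case, by list item~\ref{item-f-strips} of Section~\ref{sec:colour-f}, the colouring there is just a copy of the colouring on the panchromatic facet $\tau=\tfrac12\mp\delta^T$, and the same cancellation argument applied to the point $(\tfrac12\mp\delta^T;\alpha_2,\ldots,\alpha_n)$ produces two adjacent cubelets on a panchromatic facet with opposite colours — again a solution, per the last paragraph of Section~\ref{sec:backward}.

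The main obstacle I anticipate is the bookkeeping in translating ``average of $f'$ over the reliable points is small in $L_\infty$'' into ``two reliable points have exactly opposite single colours.'' One has to argue that when $\xb$ is in the embedded $B$, each reliable point receives a genuine single colour from $C_{VT}$ (a basis vector $\pm\eb_\ell$), that the $n$ unreliable points can shift each coordinate of the average by at most $n/p^C < 1/(2n)$, and that with a total mass that must be within $\varepsilon + 1/n + 1/(2n)$ of zero in every coordinate, the reliable $\pm\eb_\ell$ contributions cannot all agree — forcing both a $+\eb_\ell$ and a $-\eb_\ell$ among them for some $\ell$. This uses the syntactic ``exactly one output gate is true'' constraint on $C_{VT}$ (Definition~\ref{def:nvhdt}) so that each reliable label really is a single colour, and it uses $\delnt \ll 2\delta^T/7$ so that the corresponding cubelets are genuinely adjacent in $B$. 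The mixed sub-case where one circuit-encoder is corrupted by a stray cut rather than an unreliable sensor input is handled identically, since that contributes to the same budget of $n$ bad points. A careful but routine calculation with these constants completes the argument, after which Section~\ref{sec:backward} supplies the polynomial-time reconstruction.
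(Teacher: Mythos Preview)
Your proposal is correct and follows essentially the same approach as the paper's proof: both invoke the averaging argument (Proposition~\ref{prop:62} packaging Observation~\ref{obs:61}), the bound of at most $n$ unreliable circuit-encoders, Proposition~\ref{obs:invpoly} to confine the reliable points to adjacent colour-regions, and then defer to Section~\ref{sec:backward} for the reconstruction. Your explicit split into the two sub-cases (inside the embedded copy of $B$ versus elsewhere in $T$) and your discussion of the cancellation bookkeeping are more detailed than the paper's version, but the underlying argument is the same.
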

Recall that $S_{CH}$, $I_{CH}$ and $I_{VT}$ and $\varepsilon$ are as introduced at
the start of Section~\ref{sec:prove-reduction}.

\begin{proof}
Observation \ref{obs:61} tells us that if a circuit-encoder $C_i$ receives reliable
inputs, it outputs the colour of a point in the M\"{o}bius-simplex that lies
within $\delt$ of $\xb$.

We note next that the feedback received by the c-e agents in $I_{CH}$ corresponds
to the average (over $i\in[p^C]$) of the feedback received by the individual circuit-encoders $C_i$.
In detail, the $i$-th coordinate of a typical point in $B=[-1,1]^n$ is obtained by taking c-e agent $a_i$, and
(given any attempt at a consensus-halving solution $S$)
subtracting $a_i$'s value for the parts of the consensus-halving
domain labelled $\lminus$ according to $S$, from those labelled $\lplus$.
The resulting point is at the centre (or origin) of $B$ iff the c-e agents have balanced allocations
of $\lplus$ and $\lminus$ (as required for a consensus-halving solution), and more
generally, a point in $[-1,1]^n$ is close to the centre of $B$ iff the c-e agents have approximately
balanced allocations of $\lplus$ and $\lminus$.

Observation~\ref{obs:reliable} tells us that at most $n$ circuit-encoders fail to receive reliable input.
If all circuit-encoders received reliable input, then the total error at a solution would be at most $\varepsilon$, i.e. the precision parameter of the $I_{CH}$ instance. However, since at most $n$ of them receive unreliable input, we might have an added discrepancy of at most $n/p^C$ when taking the average and therefore we need to get within distance $\varepsilon + \frac{2n}{p^C}$ of the centre of $B$. For this to be possible, we need some of the cancelling
to take place amongst the outputs of the circuit-encoders that received reliable inputs,
so we really can find a pair of correctly oppositely-coloured points.

Proposition~\ref{obs:invpoly} tell us that if $\xb\in D$ is the point in $D$ represented
by some \ch\ solution, then provided $\xb$ lies in the twisted tunnel,
the corresponding cluster of $p^C$ points must be mapped by the
circuit-encoders $C_i$ that mostly cancel each other out, so we find
pairs of points that belong to oppositely-labelled colour-regions,
from which Section~\ref{sec:backward} tells us how to recover
two oppositely-coloured cubelets of $I_{VT}$.
\end{proof}
It remains to rule out the possibility of $\xb$ occurring outside the twisted tunnel.

\subsection{No bogus approximate-zeroes of $F$ at boundary of Significant Region}\label{sec:bdry1}

Proposition \ref{prop:tt} tells us that
$\varepsilon$-approximate zeroes of $F$ (inputs for which $F$ has value in $[-\varepsilon,\varepsilon]^n$)
 within the twisted tunnel $T$ must encode solutions.
Around $T$, there are outer regions $R(S)$; note that if $i\in S$ then $-i \not\in S$
and moreover there is an inverse-polynomial lower bound on the distance between
any pair of points belonging to outer regions containing opposite colours.
But we have to rule out points with colour $j\in S$ being averaged with nearby
points that are ``coloured'' $-j$ due to a blanket-sensor agent.
In more detail, if $\xb\in R(S)$ and $\xb$ is within $\delt$
of $\xb'$ for which the $j$-th blanket-sensor agent is active and provides feedback corresponding to $-j$,
then we will prove that $S$ contains some other colour $k\not=j$ and no point in a $\delt$-neighbourhood
of $\xb$ activates the $k$-th blanket-sensor $b_{1,k}$ to provide feedback corresponding to $-k$. 

In the following, we will refer to cuts in the following manner: ``cut $i$'' refers to the $i$-th cut (from left to right) in the c-e region. Also, recall that the width $\delta^T$ of the twisted tunnel is smaller than any other
inverse-polynomial quantities of interest, apart from $\delt$, which itself is
smaller than all other inverse-polynomials of interest, including $\delta^T$. We provide the following definition
of a \emph{consistent colour}.

\begin{definition}[Consistent Colour]\label{def:consistentcolour}
For $\xb= (\tau;\alpha_2,\ldots,\alpha_n)$ in the Significant Region, colour $j\in\{\pm 2,\ldots,\pm n\}$, let $A_j\in \{\lplus,\lminus\}$ be the label that tends to increase in interval $[j-2,j]$ when the $|j|$-th coordinate $\alpha_j$ of $\xb$ is increased if $j>0$, or decreased if $j<0$.
($A_j$ depends on the sign and parity of $j$.) We say that $\xb$ has {\em consistent colour} $j$ if
\begin{enumerate}
\item if $j>0$ then $\alpha_j>2\delta^T$; if $j<0$ then $\alpha_{|j|}<-2\delta^T$;\label{con:1}
\item at least $\frac{1}{2}-\frac{\pmeg}{2\pgig}$ of the interval $[j-2,j]$ gets the label $A_j$.\label{con:2}
\end{enumerate}
\end{definition}
Condition \ref{con:1} says that at $\xb \in R(S)$, colour $j$ is a member of $S$ and the corresponding transformed coordinate is sufficiently far from the twisted tunnel. Condition \ref{con:2} says that we are at least some (small but significant) distance from triggering the $j$-th blanket-sensor in a direction that corresponds to excessive colour $-j$. 
In other words, for the $j$-th blanket-sensor to become active in direction $-j$, we would have to increase $A_{-j}$ by an inverse-polynomial amount.

The following proposition establishes that for points in the outer regions $R(S)$, consistent colours exist.

\begin{proposition}\label{prop:nbbsr}
Suppose $\xb = (\tau;\alpha_2,\ldots,\alpha_n)$ belongs to outer region $R(S)$ and that $\xb$ is within distance $\delt$ of the boundary of the Significant Region.
Then $\xb$ has a consistent colour in $\{\pm2,\ldots,\pm n\}$.
\end{proposition}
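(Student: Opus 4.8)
The plan is to show that at a point $\xb=(\tau;\alpha_2,\ldots,\alpha_n)$ in an outer region $R(S)$ that is close to the boundary of the Significant Region, at least one of the colours in $S$ satisfies both conditions of Definition~\ref{def:consistentcolour}. Condition~\ref{con:1} is essentially automatic for \emph{every} colour in $S$: by the definition of the outer regions in Section~\ref{sec:colour-f} (list item~\ref{item-f-sr}), $j\in S$ with $j>0$ means $\alpha_{|j|}<-\delta^T$ and $j<0$ means $\alpha_{|j|}>\delta^T$; the only gap is pushing the threshold from $\delta^T$ up to $2\delta^T$. So the real content is Condition~\ref{con:2}: I must find a colour $j\in S$ for which the monitored interval $[|j|-2,|j|]$ is so far from being balanced towards $A_{-j}$ that it would take an inverse-polynomial perturbation to flip the $|j|$-th blanket-sensor to fire in direction $-j$.

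\textbf{First I would} translate the geometric hypothesis into a statement about cut positions. Since $\xb$ is in the Significant Region (within $\delt$ of its boundary), Proposition~\ref{prop:s-r-radius} tells us the $n$ c-e cuts are evenly spaced, each gap within $\delta^w$ of $1$; and since $\xb$ lies in an outer region but is \emph{close to the boundary}, at least one coordinate $\alpha_{|j|}$ must be of magnitude close to $\delta^w$ (up to the slack coming from $\delt$ and the distance-distortion of Proposition~\ref{prop:poly}). Pick such a ``most extreme'' coordinate; call its index $m$, so $|\alpha_m|$ is within $\delt$-related slack of $\delta^w$, which is polynomially \emph{larger} than $2\delta^T$ — so Condition~\ref{con:1} holds comfortably for the colour $j$ with $|j|=m$ and the appropriate sign. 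Now I use the dictionary of Section~\ref{sec:coord-sys} (Figure~\ref{fig:dir-numbers}): moving in direction $d^\tau_m$ shifts the cuts bounding coordinate $x_m$ in the c-e region, and the label $A_j$ is by definition the one that \emph{grows} in $[m-2,m]$ under this move. Since $\alpha_m$ is pushed to an extreme value in exactly the direction that makes $A_j$ abundant, the fraction of $[m-2,m]$ labelled $A_j$ exceeds $\tfrac12$ by an amount proportional to $\delta^w/n$ (the coordinate displacement, translated through the evenly-spaced cut structure).

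\textbf{The key calculation} is then to check that this excess of $A_j$ is at least $\tfrac{\pmeg}{2\pgig}$, i.e.\ that $\delta^w$ (the Significant Region radius) dominates $\pmeg\cdot\delt$ (the blanket-sensor activation threshold expressed as a length — recall $\pgig=n/\delt$, so $\tfrac{\pmeg}{2\pgig}=\tfrac{\pmeg\delt}{2n}$). This is precisely the relationship guaranteed by the choice of constants in Section~\ref{sec:bb}: $\pmeg$ is polynomially smaller than $\pgig$, and by Proposition~\ref{prop:s-r-radius} ``larger $\pmeg$ means larger $\delta_w$'', so $\delta^w$ is chosen comfortably above $\pmeg\delt$. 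The slack absorbed by $\delt$-perturbations (from Observation~\ref{obs:average}/Proposition~\ref{prop:poly}) is negligible against $\delta^w$ by the ordering of the ``useful quantities''. Hence colour $j$ satisfies Condition~\ref{con:2} as well, and $\xb$ has consistent colour $j$.

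\textbf{The main obstacle} I anticipate is the bookkeeping in the second step: relating the transformed coordinate $\alpha_m$ to the actual surplus of label $A_{-j}$ versus $A_j$ inside $[m-2,m]$, given that (i) the cut positions are only evenly spaced \emph{approximately} (within $\delta^w$ each), (ii) the labelling $A_j$ depends on sign \emph{and} parity of $j$ through the blanket-sensor feedback rules of Section~\ref{sec:bsa}, and (iii) the point $\xb$ is actually a cluster of $p^C$ points within $\delt$ of one another, so I must ensure the ``extreme coordinate'' argument survives averaging. I would handle (iii) by noting $\delt\ll\delta^w$ so all cluster points share the same extreme coordinate up to a harmless fraction of $\delta^w$; (ii) is purely a matter of tracking which of $\lplus,\lminus$ plays the role of $A_j$ — the definition of $A_j$ in Definition~\ref{def:consistentcolour} is precisely engineered so that the colour assigned to $\xb$ in item~\ref{item-f-sr} of Section~\ref{sec:colour-f} matches the over-abundant label; (i) is controlled because the deviation from perfect spacing is bounded by $\delta^w$, which only halves the surplus at worst, still leaving it above $\tfrac{\pmeg\delt}{2n}$. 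Assembling these, the proposition follows.
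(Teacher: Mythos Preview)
Your plan has two genuine gaps.

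\textbf{A sign error in Condition~1.} You assert that Condition~1 is ``essentially automatic for every colour in $S$''. Compare carefully: for $j>0$, membership $j\in S$ (by item~3 of Section~\ref{sec:colour-f}) means $\alpha_j<-\delta^T$, whereas Condition~1 for $j>0$ demands $\alpha_j>2\delta^T$. The signs are \emph{opposite}, not merely differing by a threshold. If you pick the index $m$ with largest $|\alpha_m|$ and set $j=m$ when $\alpha_m>0$ and $j=-m$ when $\alpha_m<0$ (which is what the paper does, and presumably your ``appropriate sign''), then Condition~1 does hold --- but this $j$ is the \emph{negative} of a colour in $S$, not itself a member of $S$. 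The Proposition only asks for \emph{some} consistent colour, so this is not fatal, but your opening paragraph misidentifies which colours are candidates.

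\textbf{The most-extreme-coordinate argument for Condition~2 fails in one regime.} You argue that since $|\alpha_m|$ is maximal and large, the label $A_j$ dominates $[m-2,m]$. The trouble is that the labelling of $[m-2,m]$ depends not only on $\alpha_m$ but on $\alpha_{m-1}$ and $\alpha_{m+1}$, which may be equally large (you only know $|\alpha_{m\pm1}|\le|\alpha_m|$). When $j>0$ (cuts $m-1$ and $m$ spreading apart) one can check that even the worst case leaves a unit-length $A_j$-interval inside $[m-2,m]$. But when $j<0$ and $\tau$ is close to $0$ or $1$, a large negative $\alpha_{m+1}$ can push cut $m{+}1$ \emph{across the integer threshold $m$} into the monitored interval $[m-2,m]$, injecting a fresh chunk of the bad label $A_{-j}$ on the right-hand side and destroying the surplus you computed. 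In that case the colour you chose is \emph{not} consistent. The paper handles this by a cascading argument: once cut $m{+}1$ has crossed threshold $m$, the colour $-(m{+}1)$ satisfies Condition~1, and one examines whether it satisfies Condition~2 on $[m-1,m+1]$; if not, cut $m{+}2$ must have crossed threshold $m{+}1$, and one recurses, terminating at $[n-2,n]$ where a consistent colour is forced. Your ``obstacle (i)'' paragraph treats the deviations from even spacing as small perturbations bounded by $\delta^w$ that ``only halve the surplus at worst'', but the threshold-crossing phenomenon is a qualitative change in \emph{which cuts lie in which monitored interval}, not a perturbative correction, and it is precisely what drives the case analysis the paper has to carry out.
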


\begin{proof}
Let $\ell\in\arg\max_{i \in \{2,\ldots,n\}}|\alpha_i|$ be the index of 
a transformed coordinate with maximum absolute value. 
We may assume there is an inverse-polynomial quantity $\delta^+$ 
such that for any point $\xb=(\tau;\alpha_2,\ldots,\alpha_n)$ within 
distance $\delt$ of the boundary of the Significant Region, we have $|\alpha_\ell | \leq \delta^+$.
Moreover, the width $\delta^T$ of the twisted tunnel is chosen to be much smaller that $\delta^+$
(by an inverse-polynomial amount) but much larger than $\delt$ (by an inverse-polynomial amount), as explained earlier.
Let 
\[
j = 
\begin{cases}
\ \ \ell, & \text{if } \alpha_\ell > 0 \\
-\ell, & \text{if } \alpha_\ell <0
\end{cases}
\]
and let $\delta=|\alpha_\ell|$,
so $\xb$ is displaced distance $\delta>0$ from the axis in direction $d^\tau_j$.
Recall that $A_j\in\{\lplus,\lminus\}$ denotes the label that increases in the
c-e region when we move in direction $d^\tau_j$. Also, let $A_{-j} \in \{+,-\}$, $A_{-j} \neq A_{j}$, be
the complementary label.
For $j>0$, this involves cuts $j$ and $j-1$ moving away from each other;
for $j<0$, this involves them moving towards each other. We consider two main cases, 
depending on the sign of $j$.

\bigskip
\noindent {\bf Case 1:} $j>0$ (i.e., $\alpha_j>0$). In this case, moving in direction $d^\tau_j$ causes cuts $j-1$ and $j$ to move away from each other; this is illustrated in Figure~\ref{fig:nbbsr}.\smallskip

\noindent We claim that $j$ is a consistent colour for $\xb$.
Note first that $\alpha_j > 2\delta^T$ and therefore Condition \ref{con:1} is satisfied, 
since $j >0$ in this case. $\alpha_j > 2\delta^T$ follows from the fact that $j \in \arg\max_{i \in \{2,\ldots,n\}}|\alpha_i|$, we are close to the boundary of the Significant Region, and the width of Significant Region is polynomially larger than that of the twisted tunnel. In order to be close to the boundary of the Significant Region, we must have moved more than $2\delta^T$ in some direction from $\bf{0}_\tau$ and by the choice of $j$, it holds that $\alpha_j > 2\delta^T$.

For Condition \ref{con:2}, recall first that $\xb$ has transformed 
coordinates $(\tau;\alpha_2,\ldots,\alpha_n)$,
and that the origin of $D_\tau$ has transformed coordinates ${\bf 0}_\tau=(\tau;0,\ldots,0)$.
The $(j-1)$-st and $j$-th cuts corresponding to the point ${\bf 0}_\tau$ are located at positions $j-2+\tau$ and $j-1+\tau$ respectively, and are shown in red in Figure~\ref{fig:nbbsr}.
Near the axis, where the cuts are evenly-spaced (see Proposition~\ref{prop:s-r-radius}),
movement in direction $d^\tau_j$ corresponds to moving the $(j-1)$-st and $j$-th cuts
(in the c-e region) away from each other. We will consider moving from ${\bf 0}_\tau$ to $\xb$ via 
a point $\xb_j$ in which we will only have increased the transformed coordinate $\alpha_j$. 

First, consider moving from ${\bf 0}_\tau$ to point $\xb_j=(\tau;0,\ldots,0,\alpha_j,0,\ldots,0)$ for $\alpha_j>0$. In this process, we move the $(j-1)$-st cut to the left by $\alpha_j \cdot\tau$
and the $j$-th cut to the right by $\alpha_j \cdot (1-\tau)$; all this takes place within
the interval $[j-2,j]$, see Figure~\ref{fig:nbbsr}. Now consider moving from $\xb_j$ to $\xb$. In this process, the $(j-1)$-st cut moves to the right by $\alpha_{j-1}\cdot(1-\tau)$
and the $(j+1)$-st cut moves to the left by $\alpha_{j+1}\cdot\tau$.
From the choice of $j$ to be  $j \in\arg\max_{i \in \{2,\ldots,n\}}|\alpha_i|$, it follows that
$\alpha_{j-1}\leq \alpha_j$ and $\alpha_{j+1}\leq \alpha_j$.
Then, there is a sub-interval of $[j-2,j]$ that contains the unit-length interval $I=[j-2+\tau+\alpha_j\cdot(1-2\tau),j-1+\tau+\alpha_j\cdot(1-2\tau)]$
which ends up coloured entirely $A_j$, implying Condition \ref{con:2}.
Overall, we obtain that $j$ is a consistent colour.\\

\noindent {\bf Case 2:} $j<0$ (i.e., $\alpha_j<0$). In this case, moving in direction $d^\tau_j$ causes cuts $|j|-1$ and $|j|$ to move towards each other; this is illustrated in Figure~\ref{fig:nbbsr2}.\\

\noindent {\bf Case 2a:} $\tau\in[1/2n,1-(1/2n)]$. In this case, all movements of the cuts, in and around the Significant Region, are in distances upper-bounded by $\delta^w$, which by Proposition~\ref{prop:s-r-radius} is smaller than $1/2n$ by an inverse-polynomial amount. This means that if we start at ${\bf 0}_\tau$ and re-set individual transformed coordinates to those of $\xb$, in any order (i.e. going through any intermediate point $\xb_j$, similarly to above), the movement of the cuts will never force them to cross integer-valued thresholds. In other words, in moving from ${\bf 0}_\tau$ to $\xb$, only the relevant cuts $j-1$ and $j$ will lie in the interval $[j-2,j]$. This case can be seen in the illustration of Figure~\ref{fig:nbbsr}, if one reverses the direction of the arrows, switches the labels $A_j$ to $A_{-j}$ and vice-versa, and substitutes $j$ by $|j|$ in the labelling of cuts. The argument establishing the existence of a consistent colour is exactly symmetric to that of Case 1 above.\\

\noindent {\bf Case 2b:} $\tau \in [0,1/2n] \cup [1-(1/2n),1]$. Here, we consider the case where $\tau\in[0,1/2n]$; the other case is similar by symmetry. This case is illustrated in Figure~\ref{fig:nbbsr2}; note that the sequence of labels $A_j/A_{-j}$ is switched to make $A_j$ the label
that increases when we move in direction $d^\tau_j$. 

Moving in direction $d^\tau_j$ causes an increase of the label $A_j$ in the interval $[|j|-2,|j|]$.
For $j$ not to be a consistent colour, we should observe an excess of the label $A_{-j}$ in this interval.
In generating cut locations from coordinates of $\xb$, the amount of $A_{-j}$ in $[|j|-2,|j|]$ can be raised in the following ways (see Figure~\ref{fig:nbbsr2}):
\begin{itemize}
	\item By increasing the transformed coordinate $\alpha_{|j|-1}$  in the negative direction,
	moving in direction $d^\tau_{-(|j|-1)}$. This causes cuts $|j|-2$ and $|j|-1$ to move towards each other and therefore importantly for us here, cut $|j|-1$ to move to the left (towards integer point $|j|-2$).
	\item By increasing the transformed coordinate $\alpha_{j+1}$ in the negative direction, moving in direction $d^\tau_{-(|j|+1)}$. This causes cuts $|j|$ and $|j|+1$ to move towards each other. 
\end{itemize}
Note that what may happen in this last case, is that cut $|j|+1$ which used to lie to the right of the integer point $|j|+2$ before moving in direction $d^\tau_{-(|j|+1)}$, now lies to the left of the integer point $|j|+2$ after the movement, therefore increasing the label $A_{-j}$ at the {\em right-hand-side} of $[|j|-2,|j|]$. We consider two more cases, depending on whether or not this is the case.\\

\noindent {\emph{Case 2b(i)}:} At $\xb$, cut $|j|+1$ is to the right of location $|j|$ or at location $|j|$.\\

\noindent There are two ways to restore the deficit of $A_{-j}$ that resulted from moving
in direction $d^\tau_j$ from ${\bf 0}_\tau$ to $\xb_j$. Moving in direction $d^\tau_{-(|j|-1)}$ moves cut $|j|-1$
to the left, and moving in direction $d^\tau_{-(|j|+1)}$ moves cut $|j|$ to the right.
(Note that the movement of cut $|j|+1$ to the left has not changed the balance of $A_j$ and $A_{-j}$ in the interval $[|j|-2,|j|]$ any further, by the assumption of the case).
Since $j$ was chosen to be in $\arg\max_{i \in \{2,\ldots,n\}}|\alpha_i|$, it is easy to verify that
\begin{itemize}
	\item Cut $|j|-1$ has moved to the left as a result of moving in direction $d^\tau_{-(|j|-1)}$ \emph{at most as much} as cut $|j|$ has moved to the left as a result of moving in direction $d^\tau_j$ (from ${\bf 0}_\tau$ to $\xb_j$).
	\item Cut $|j|$ has moved to the right as a result of moving in direction $d^\tau_{-(|j|+1)}$ \emph{at most as much} as cut $|j|-1$ has moved to the right as a result of moving in direction $d^\tau_j$ (from ${\bf 0}_\tau$ to $\xb_j$). 
\end{itemize}
Therefore a large enough subinterval of $[|j|-2,|j|]$ has been coloured with $A_j$, which means $j$ is a consistent colour.\\

\noindent {\emph{Case 2b(ii)}:} At $\xb$, cut $|j|+1$ is to the left of location $|j|$.\\

\noindent In this case, we have $\alpha_{|j|+1}<0$ and movement in direction
$d^\tau_{-(|j|+1)}$ causes cuts $|j|$ and $|j|+1$ to move towards each other.
Note also that besides the effect of the movement in direction $d^\tau_{-(|j|+1)}$, cut $|j|+1$ may move to the left due to movement in direction $d^\tau_{|j|+2}$, since such a movement would cause cuts $|j|+1$ and $|j|+2$ to move away from each other and therefore, cut $|j|+1$ to move to the left. However, the distance moved in direction $d^\tau_{|j|+2}$ is small; it is at most $\tau\cdot |\alpha_j|$, which is at most $\tau \cdot \delta^+$. 
Therefore, we need movement at least $\tau(1-\delta^+)$ in direction $d^\tau_{-(|j|+1)}$ in order to cover the distance moved in direction $d^\tau_j$.

First, we verify that $-(|j|+1)$ satisfies Condition \ref{con:1} of Definition \ref{def:consistentcolour}, i.e. that $\alpha_{|j|+1} < -2\delta^T$ (at this point we know that $\alpha_{j+1}$ is a negative quantity). We consider two cases, depending on whether $\tau$ is ``small'' or ``large'' (relatively to the small interval $[0,1/2n]$).
\begin{itemize}
	\item In the case when $\tau<\frac{1}{4}|\alpha_j|$, the largest part of the deficit of $A_{-j}$ introduced by moving from ${\bf 0}_\tau$ to $\xb_j$ results from moving cut $|j|$ to the left. However, letting $c(|j|-1)$ denote the position of cut $|j|-1$ after this movement, the interval $[|j|-2,c(|j|-1)]$ is too small for the movement of cut $|j|-1$ in direction $d^\tau_{-(|j|-1)}$ to compensate. In other words, even if movement in direction $d^\tau_{-(|j|-1)}$ moves cut $|j|-1$ to the left endpoint of the interval $[|j|-2,|j|]$, this is not enough to make up for the deficit of $A_{-j}$ introduced from the movement in direction $d^\tau_j$. This means that cut $|j|+1$ needs to move to the left as well and in particular, it needs to move by more than $\tau/4$ to the left of location $j$. This is only possible if $\alpha_{|j|+1} < -2\delta^T$.

\item In the case when $\tau\geq \frac{1}{4}|\alpha_j|$, since $\tau$ is large enough, cut $|j|+1$ needs to move a substantial distance to the left, in order to end up positioned to the left of integer position $|j|$. In particular, it needs to move at least $\frac{1}{4}|\alpha_j|-\tau\cdot \delta^+$ to the left. This implies that Condition \ref{con:1} is satisfied for colour $-(|j|+1)$.
\end{itemize}
Now consider what needs to happen in order for the second condition to fail.
Consider the interval $[|j|-1,|j|+1]$ (which is monitored by the $(j+1)$-st blanket-sensor). Since cut $|j|+1$ is located to the left of location $|j|$
(the midpoint of this interval), there exists a subinterval of length at most $1$ labelled $A_j$,
within $[|j|-1,|j|+1]$. This means that either
\begin{itemize}
	\item[-] the colour $-(|j|+1)$ is a consistent colour and we are done, or 
	\item[-] there is an additional amount of label $A_j$ within interval $[|j|-1,|j|+1]$ and the total number of value-blocks labelled $A_j$ outnumbers that of those labelled $A_{-j}$ by at least $\pmeg$. The only way this can happen is if cut $|j|+2$
	lies to the left of the integer location $|j|+1$, and in fact, it has to lie an inverse-polynomial distance,
	at least $\frac{\pmeg}{2\pgig}$, to the left of $|j|+1$.
\end{itemize} 
In case that happens, we move on to consider interval $[j,j+2]$ and we apply the same argument.
Again, $\alpha_{j+2}$ is negative, and since cut $|j|+2$ is to the left of location $j+1$
by a margin $\frac{\pmeg}{2\pgig} < 2\delta^T$, $-(|j|+2)$ satisfies Condition \ref{con:1} to be a consistent colour. It will also satisfy Condition \ref{con:2}, unless
cut $|j|+3$ lies to the left of location $|j|+2$ by an inverse-polynomial distance, at least $\frac{\pmeg}{2\pgig}$, similarly to before.

Continuing like this, we will either find a consistent colour in some interval $[j-2,j]$ with $j<n$, or we will reach interval $[n-2,n]$. When we reach interval $[n-2,n]$, cut $n$ has had to move to the left
of integer location $n-1$ in order to prevent $-(n-1)$ from being a consistent colour (as otherwise we
would have identified a consistent colour in some already examined interval).
But then $-n$ is a consistent colour, since we have moved an inverse-polynomial
distance (at least  $\frac{\pmeg}{2\pgig} <2\delta^T$) in direction $d^\tau_{-n}$ (Condition \ref{con:1}),
and at least $1/2$ of the interval $[n-2,n]$ is coloured in a way that agrees with this (Condition \ref{con:2}).
\end{proof}

\begin{figure}
\center{
\begin{tikzpicture}[scale=0.8]
\tikzstyle{xxx}=[dashed,thick]

\draw[thick,<->](0,8)--(8,8);\node[fill=white]at(4,8){$[j-2,j]$};
\draw[ultra thick](-1,5)--(10,5);
\draw[dotted,thick](0,5)--(0,8);
\draw[dotted,thick](8,5)--(8,8);

\node at(-1,7){$A_{-j}$};\node at(3,7){$A_j$};\node at(7,7){$A_{-j}$};
\draw[red,thick](1,3)--(1,7.5);\draw[red,thick](5,3)--(5,7.5);
\node[rotate=90,color=red]at(0.7,6.5){cut ${j-1}$};\node[rotate=90,color=red]at(4.7,6.5){cut $j$};

\draw[blue,thick](2,3)--(2,7.5);\draw[blue,thick](6,3)--(6,7.5);
\node[rotate=90,color=blue]at(1.75,6.5){cut ${j-1}$};\node[rotate=90,color=blue]at(5.7,6.5){cut $j$};

\draw[<-,thick](0.5,4.3)--(1,4.3);\draw[->,thick](5,4.3)--(6.5,4.3);
\draw[dotted](1,4.3)--(5,4.3);\node[fill=white]at(3,4.3){$d^\tau_j$};

\draw[dashed,thick](0.5,5)--(0.5,2);\draw[dashed,thick](6.5,5)--(6.5,2);

\draw[->,thick](0.5,2)--(2,2);\draw[<-,thick](6,2)--(6.5,2);
\node at(1.5,1.5){$d^\tau_{j-1}$};\node at(6.5,1.5){$d^\tau_{j+1}$};

\draw[dotted,thick](2,3)--(2,0);\draw[dotted,thick](6,3)--(6,0);

\draw[<->,thick](2,1)--(6,1);\node[fill=white]at(4,1){$I$};

\end{tikzpicture}
\caption{\small{Illustration for Proposition~\ref{prop:nbbsr}, Case 1.
For $i\in[n]$, cut $i$ denotes the $i$-th (c-e) cut from the left. The cuts $j-1$ and $j$ coloured red
correspond to positions encoding part of ${\bf 0}_\tau$. The dashed lines (to the left and to the right of the
positions of the red cuts respectively) correspond to positions encoding part of point $\xb_j$, after we have only moved $\alpha_j$ in direction $d^\tau_j$. The cuts coloured blue correspond to positions resulting from subsequent movement in directions $d^\tau_{j-1}$ and $d^{\tau}_{j+1}$, which encode part of $\xb$.
In the figure, the case where the average of the movements in $d^\tau_{j-1}$, $d^\tau_j$ and $d^{\tau}_{j+1}$ forces both cuts to move to the right, compared to their original positions in the encoding of ${\bf 0}_\tau$, is shown.}\label{fig:nbbsr}}}
\end{figure}
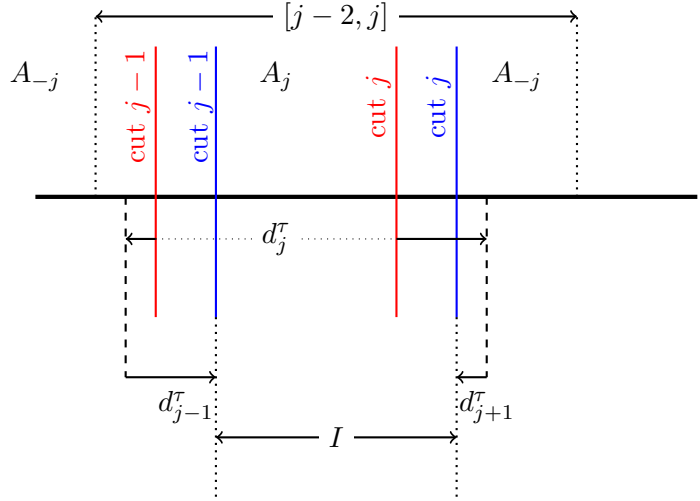

\begin{figure}
\center{
\begin{tikzpicture}[scale=0.8]
\tikzstyle{xxx}=[dashed,thick]

\draw[thick,<->](0,8)--(8,8);\node[fill=white]at(4,8){$[|j|-2,|j|]$};
\draw[ultra thick](-1,5)--(10,5);
\draw[dotted,thick](0,5)--(0,8);
\draw[dotted,thick](8,5)--(8,8);

\node at(-1.5,7){$A_j$};\node at(2,7){$A_{-j}$};\node at(7,7){$A_j$};
\draw[red,thick](0.25,3)--(0.25,7.5);\draw[red,thick](4.25,3)--(4.25,7.5);\draw[red,thick](9.75,3)--(9.75,7.5);\node[rotate=90,color=red]at(0.6,6.5){cut ${|j|-1}$};\node[rotate=90,color=red]at(3.8,6){cut $|j|$};
\node[rotate=90,color=red]at(9.5,6.5){cut ${|j|+1}$};

\draw[red,blue](-0.5,3)--(-0.5,7.5);\draw[blue,thick](3.25,3)--(3.25,7.5);\draw[blue,thick](8.75,3)--(8.75,7.5);\node[rotate=90,color=blue]at(-1,6.5){cut ${|j|-1}$};\node[rotate=90,color=blue]at(2.8,6){cut $|j|$};
\node[rotate=90,color=blue]at(8.5,6.5){cut ${|j|+1}$};

\draw[->,thick](0.25,4)--(0.5,4);\draw[->,thick](4.25,4)--(3.25,4);
\draw[dotted](0.5,4)--(3.25,4);\node[fill=white]at(2,4){$d^\tau_{-|j|}$};

\draw[dotted,thick](0.5,5)--(0.5,2);\draw[dotted,thick](3.25,5)--(3.25,2);

\draw[->,thick](0.5,3)--(-0.5,3);\draw[->,thick](2.75,3)--(3.25,3);\draw[<-,thick](8.75,3)--(9.75,3);
\node at(-0.5,2.5){$d^\tau_{-(|j|-1)}$}; \node at(3,2.5){$d^\tau_{-(|j|+1)}$};\node at(9,2.5){$d^\tau_{-(|j|+1)}$};


\end{tikzpicture}
\caption{\small{Illustration for Proposition~\ref{prop:nbbsr}, Case 2b(i). In this case, $j$ is negative and therefore 
we use $|j|$ to represent the $j$-th cut from the left. Moving in direction $d^\tau_j$ causes cuts $|j|-1$ and $|j|$ to move towards each other.
Again, the red cuts correspond to part of ${\bf 0}_\tau$ and the blue cuts correspond to the encoding of part of $\xb$, after averaging over the movement in directions $d^\tau_{-(|j|-1)}$, $d_j = d_{-|j|}$ and $d^\tau_{-(|j|+1)}$. The figure shows a case where cut $|j|-1$ has moved outside the interval $[|j|-2,|j|]$ to the left, in which case the whole subinterval $[|j|-2,c(|j|)]$ (the interval between $|j|-2$ and the position of the blue cut $|j|$) receives the label $A_{-j}$. Note however that $|j|+1$ does not intersect the interval $[|j|-2,|j|]$ and therefore there is no additional amount of $A_{-j}$ introduced to the right-hand side of $[|j|-2,|j|]$, therefore we are in Case 2b(i). The increase of $A_{-j}$ due to the movement of cut $|j|-1$ to the left is entirely compensated by the decrease of $A_{-j}$ because of the movement of cut $|j|$ to the left.}\label{fig:nbbsr2}}}
\end{figure}

\begin{figure}
	\center{
		\begin{tikzpicture}[scale=0.8]
		\tikzstyle{xxx}=[dashed,thick]
		
		\draw[thick,<->](0,8)--(8,8);\node[fill=white]at(4,8){$[|j|-2,|j|]$};
		\draw[ultra thick](-1,5)--(10,5);
		\draw[dotted,thick](0,5)--(0,8);
		\draw[dotted,thick](8,5)--(8,8);
		
		\node at(-1.5,7){$A_j$};\node at(2,7){$A_{-j}$};\node at(7,7){$A_j$}; \node at (10,7){$A_{-j}$};
		\draw[red,thick](0.25,3)--(0.25,7.5);\draw[red,thick](4.25,3)--(4.25,7.5);\draw[red,thick](8.75,3)--(8.75,7.5);\node[rotate=90,color=red]at(0.6,6.5){cut ${|j|-1}$};\node[rotate=90,color=red]at(3.8,6){cut $|j|$};
		\node[rotate=90,color=red]at(8.5,6.5){cut ${|j|+1}$};
		
		\draw[red,blue](-0.5,3)--(-0.5,7.5);\draw[blue,thick](3.25,3)--(3.25,7.5);\draw[blue,thick](7.75,3)--(7.75,7.5);\node[rotate=90,color=blue]at(-1,6.5){cut ${|j|-1}$};\node[rotate=90,color=blue]at(2.8,6){cut $|j|$};
		\node[rotate=90,color=blue]at(7.5,6.5){cut ${|j|+1}$};
		
		\draw[->,thick](0.25,4)--(0.5,4);\draw[->,thick](4.25,4)--(3.25,4);
		\draw[dotted](0.5,4)--(3.25,4);\node[fill=white]at(2,4){$d^\tau_{-|j|}$};
		
		\draw[dotted,thick](0.5,5)--(0.5,2);\draw[dotted,thick](3.25,5)--(3.25,2);
		
		\draw[->,thick](0.5,3)--(-0.5,3);\draw[->,thick](2.75,3)--(3.25,3);\draw[<-,thick](7.75,3)--(8.75,3);
		\node at(-0.5,2.5){$d^\tau_{-(|j|-1)}$}; \node at(3,2.5){$d^\tau_{-(|j|+1)}$};\node at(9,2.5){$d^\tau_{-(|j|+1)}$};
		
		
		\end{tikzpicture}
		\caption{\small{Illustration for Proposition~\ref{prop:nbbsr}, Case 2b(ii). In this case, $j$ is negative and therefore we use $|j|$ to represent the $j$-th cut from the left. Moving in direction $d^\tau_j$ causes cuts $|j|-1$ and $|j|$ to move towards each other.
		Again, the red cuts correspond to part of ${\bf 0}_\tau$ and the blue cuts correspond to the encoding of part of $\xb$, after averaging over the movement in directions $d^\tau_{-(|j|-1)}$, $d_j = d_{-|j|}$ and $d^\tau_{-(|j|+1)}$. The figure shows a case where cut $|j|-1$ has moved outside the interval $[|j|-2,|j|]$ to the left, in which case the whole subinterval $[|j|-2,c(|j|)]$ (the interval between $|j|-2$ and the position of the blue cut $|j|$) receives the label $A_{-j}$. Additionally, cut $|j|+1$ has moved to the left and now intersects the interval $[|j|-2,|j|]$ introducing an  additional amount of $A_{-j}$ to the right-hand side of $[|j|-2,|j|]$. By the argument of Case 2b(ii), either $-(|j|+1)$ will be a consistent colour, or there will be some interval $[l-2,l]$ ($l >0$, possibly $[n-2,n]$) for which the overlap between $[l-2,l]$ and the cut $l+1$ will be bounded by $\pmeg/2\pgig$ and we will have a consistent colour.}\label{fig:nbbsr3}}}
\end{figure}

\begin{corollary}\label{cor:nbbsr}
A solution $S_{CH}$ to $I_{CH}$ cannot encode a point
$\xb$ in the Significant Region, within distance $\delt$ of the boundary
(where blanket-sensor agent(s) become active).
\end{corollary}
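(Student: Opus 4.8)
The plan is to obtain the corollary directly from Proposition~\ref{prop:nbbsr}, by the averaging argument already used in the proof of Proposition~\ref{prop:tt}. I would argue by contradiction: suppose $S_{CH}$ is an $\varepsilon$-approximate solution whose coordinate-encoding cuts encode a point $\xb$ lying in the Significant Region within distance $\delt$ of its boundary. Since there is an inverse-polynomial gap, polynomially larger than $\delt$, between the twisted tunnel $T$ and the boundary of the Significant Region, $\xb$ lies outside $T$, hence in some outer region $R(S)$; Proposition~\ref{prop:nbbsr} then hands us a consistent colour $k\in\{\pm2,\ldots,\pm n\}$ for $\xb$. The whole point is that a consistent colour is ``visible'' uniformly across a $\delt$-neighbourhood of $\xb$, which will force coordinate-encoding agent $a_{|k|}$ to observe a macroscopic $\lplus$/$\lminus$ imbalance, incompatible with $\varepsilon=\delt/10$.

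The first step I would carry out is the robustness claim: $f'(\yb)_{|k|}=\operatorname{sign}(k)$ for every $\yb$ with $d(\xb,\yb)\le\delt$ ($d$ as in~(\ref{eq:metric})). Condition~\ref{con:1} of Definition~\ref{def:consistentcolour} gives $\alpha_{|k|}(\xb)>2\delta^T$ when $k>0$ and $\alpha_{|k|}(\xb)<-2\delta^T$ when $k<0$; since $\delt$ is polynomially smaller than $\delta^T$, the same strict inequality (with margin $\delta^T$ and the same sign) holds at every $\yb$ in the $\delt$-ball, so wherever the circuit's colouring treats $\yb$ by the outer-region rule of Section~\ref{sec:colour-f} (list item~\ref{item-f-sr}) it assigns $\yb$ colour $k$ and never $-k$, i.e.\ $f'(\yb)_{|k|}=\operatorname{sign}(k)$. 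Condition~\ref{con:2} says at least $\tfrac12-\tfrac{\pmeg}{2\pgig}$ of $[|k|-2,|k|]$ carries label $A_k$ at $\xb$; because $\tfrac{\pmeg}{2\pgig}=\tfrac{\pmeg}{2n}\delt$ is polynomially larger than $\delt$ ($\pmeg$ being a large polynomial in $n$), displacing the cuts by at most $\delt$ cannot create in $[|k|-2,|k|]$ an imbalance of $\pmeg$ value-blocks towards the complement of $A_k$, so blanket-sensor $b_{1,|k|}$ is never driven to feed back colour $-k$ anywhere in the $\delt$-ball. Combining this with the two cases of the definition of $f'$ (Section~\ref{sec:borsuk-f}) and rules~\ref{rule2} and~\ref{rule3} of Section~\ref{sec:bsa}, $f'(\yb)_{|k|}=\operatorname{sign}(k)$ whether or not a blanket-sensor is active at $\yb$; and $\yb$ cannot lie in $T$ since $\xb$ is far from $T$.

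Next I would run the averaging estimate. By Observation~\ref{obs:average} the $p^C$ circuit-encoders compute $f'$ at points $\xb_1,\ldots,\xb_{p^C}$, each within $d$-distance $\delt$ of $\xb$, and by Observation~\ref{obs:reliable} all but at most $n$ of them receive reliable input (the possible stray cut absorbed by the double-negative lemma, still leaving at most $n$ affected). By the robustness claim, $f'(\xb_i)_{|k|}=\operatorname{sign}(k)$ for every reliable $i$, while each of the at most $n$ unreliable ones contributes some value in $[-1,1]$. Hence, using Observation~\ref{obs:61} and Proposition~\ref{prop:62}, the $\lplus$/$\lminus$ discrepancy observed by $a_{|k|}$ is at least
\[
\Bigl|\tfrac1{p^C}\sum_{i=1}^{p^C} f'(\xb_i)_{|k|}\Bigr|-\tfrac1n \;\ge\; \Bigl(1-\tfrac{2n}{p^C}\Bigr)-\tfrac1n \;\ge\; 1-\tfrac2n,
\]
where we used that at least $p^C-n$ of the $f'(\xb_i)_{|k|}$ equal $\operatorname{sign}(k)$ and that $p^C\ge2n^2$. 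For $n\ge4$ this is at least $\tfrac12$, which dwarfs $\varepsilon=\delt/10$; so $S_{CH}$ is not an $\varepsilon$-approximate solution, a contradiction.

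The step I expect to be the main obstacle is the robustness claim of the second paragraph. One must verify that the two safety margins built into Definition~\ref{def:consistentcolour} --- the $2\delta^T$ slack in Condition~\ref{con:1} and the $\tfrac{\pmeg}{2\pgig}$ slack in Condition~\ref{con:2} --- genuinely dominate the $\delt$-scale displacement between $\xb$ and each $\xb_i$, so that no point of the $\delt$-ball can shift the label balance in $[|k|-2,|k|]$ enough to activate $b_{1,|k|}$ in the colour-$(-k)$ direction; this entails keeping track of the parity conventions of Section~\ref{sec:bsa} and handling separately the sub-case where $\xb_i$ remains in the Significant Region and the sub-case where it drifts just outside it. This is exactly what the hierarchy of inverse-polynomial parameters of Section~\ref{sec:bb} (with $\delt\ll\delta^T\ll\delta^w$ and $\delt\ll\tfrac{\pmeg}{\pgig}$) was designed to deliver, after which the displayed inequality finishes the proof.
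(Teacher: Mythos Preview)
Your proof is correct and follows essentially the same route as the paper's: invoke Proposition~\ref{prop:nbbsr} to obtain a consistent colour $k$, argue that the margins in Definition~\ref{def:consistentcolour} force every point in the $\delt$-ball about $\xb$ to contribute $\operatorname{sign}(k)$ in coordinate $|k|$ (whether via the outer-region rule or via an active blanket-sensor), and then apply the averaging of Observation~\ref{obs:61}/Proposition~\ref{prop:62} to conclude that $a_{|k|}$ sees a macroscopic imbalance. Your write-up is more explicit than the paper's on both the robustness claim and the final arithmetic, but the underlying argument is the same.
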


\begin{proof}
Observation~\ref{obs:61} tells us that the $k$-th component of $f'$ is the difference between $\lplus$ and $\lminus$ observed by c-e agent $a_k$, and Proposition~\ref{prop:62} tells us that all these components, averaged over a set of points within $\delt$ of $\xb$ need to be close to zero, at a solution.

Proposition \ref{prop:nbbsr} tells us that $\xb$ has some consistent colour $k$.
All points within $\delt$ of $\xb$ cause two outputs 
(gates $g'_k$, $g'_{-k}$ as defined in Section~\ref{sec:cea}) of the circuit-encoders to
represent colour $k$. This includes points where blanket-sensor agents are
active, since by the properties of consistent colours, we are at least an inverse-polynomial
distance from any point where any $b_{i,k}$ can be active in the wrong direction.
In Section~\ref{sec:bsa} the blanket-sensor agents are designed to agree with the definition of consistent colour,
Definition \ref{def:consistentcolour}.
So c-e agent $a_k$ observes a large imbalance between $\lplus$ and $\lminus$.
\end{proof}

\subsection{No bogus approximate-zeroes of $F$ due to the connecting facet}

\begin{proposition}\label{prop:strips}
Let $\xb$, $\xb'$ be points in the Significant Region having transformed coordinates
$(\tau;\alpha_2,\ldots,\alpha_n)$ and $(1-\tau;-\alpha_2,\ldots,-\alpha_n)$ respectively,
for $\tau<\frac{1}{2}-\delta^T$. Then $f(\xb)=-f(\xb')$.
\end{proposition}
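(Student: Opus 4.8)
The key point is that $\xb$ and $\xb'$ are images, under the identification of $D_0$ with $D_1$ described in Definition~\ref{def:simplex-domain}, of points that are ``the same point'' of the M\"obius-simplex; more precisely, they lie in $D_\tau$ and $D_{1-\tau}$ with transformed coordinates that are negatives of each other, and $\tau < \tfrac12 - \delta^T$ puts both of them in the ``outer'' regime of the colouring construction (stage~\ref{item-f-strips} of Section~\ref{sec:colour-f}), \emph{not} inside the embedded copy of $B$. The plan is to trace through the three stages of the definition of $f$ and check that the value assigned to $\xb$ is the negation of the value assigned to $\xb'$, using the antipodality of the Tucker-style colouring of $I_{VT}$ together with Observation~\ref{obs:flip}.

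First I would split on whether $\xb$ lies in the twisted tunnel $T$ or in an outer region $R(S)$; by construction the same dichotomy holds for $\xb'$, because the defining inequalities on the $\alpha_i$ are symmetric under negation (the twisted tunnel is $\{|\alpha_i|<\delta^T\ \forall i\}$ and the outer regions are cut out by $\alpha_j \gtrless \pm\delta^T$).

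\emph{Case $\xb\notin T$ (outer region).} If $\xb\in R(S)$ then by stage~\ref{item-f-sr} of Section~\ref{sec:colour-f}, $j\in S$ iff $\alpha_j>\delta^T$ (giving colour $-j$) or $\alpha_j<-\delta^T$ (giving colour $j$). Negating all the $\alpha_j$ exactly swaps these two conditions, so $\xb'\in R(S')$ where $S'=-S$, and hence $f(\xb')=-f(\xb)$ directly from the definition of the colour-vector of an outer region. This is the easy case and requires only bookkeeping.

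\emph{Case $\xb\in T$ (so $\xb'\in T$ as well).} Since $\tau<\tfrac12-\delta^T$, we are in the regime of stage~\ref{item-f-strips}: $\xb$ gets the colour of the point $\yb = (\tfrac12-\delta^T;\alpha_2,\ldots,\alpha_n)$, and $\xb'$ (which has $\tau' = 1-\tau > \tfrac12+\delta^T$) gets the colour of $\yb'=(\tfrac12+\delta^T;-\alpha_2,\ldots,-\alpha_n)$. Now $\yb$ and $\yb'$ both lie on the boundary slabs of the embedded copy of $B$: $\yb$ corresponds via $g$ to the point $(x_1,\ldots,x_n)\in B$ with $x_1=-1$ and $x_i = \alpha_i/\delta^T$ for $i\ge2$, while $\yb'$ corresponds to $(+1, -\alpha_2/\delta^T,\ldots,-\alpha_n/\delta^T)$, i.e.\ to the antipodal point $-(x_1,\ldots,x_n)\in B$. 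Here I would invoke the definition of $g$ (stage~1 of Section~\ref{sec:colour-f}): $g$ takes $(x_1,\ldots,x_n)\in B$ to transformed coordinates $(\tfrac12+\delta^T x_1;\delta^T x_2,\ldots,\delta^T x_n)$, so indeed $x_1=-1\mapsto\tau=\tfrac12-\delta^T$ and $x_1=+1\mapsto\tau=\tfrac12+\delta^T$, with the remaining coordinates scaling linearly. Since the colouring of $B$ by $C_{VT}$ satisfies the antipodal boundary condition of Definition~\ref{def:nvhdt} (item~2: interiors of antipodal boundary cubelets get opposite labels) and $x_1=\pm1$ are exactly the panchromatic (hence boundary) facets, the colour assigned to $(x_1,\ldots,x_n)$ is the negation of the colour assigned to $-(x_1,\ldots,x_n)$. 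Translating colours back to $\{-1,0,1\}^n$-vectors via the standard encoding (colour $i>0\mapsto \eb_i$, colour $-i\mapsto -\eb_i$), this gives $f(\xb)=f(g(x))=-f(g(-x))=-f(\xb')$.

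\emph{Main obstacle.} The delicate point is the boundary case $|\alpha_j|=\delta^T$ for some $j$ (or $\tau$ exactly at a threshold), i.e.\ making sure the dichotomy ``$\xb\in T$ iff $\xb'\in T$'' and the region assignments are genuinely symmetric and well-defined at the seams. I would handle this by noting (as the paper does elsewhere) that the threshold comparisons are made with inverse-polynomial slack and can be taken to use consistent tie-breaking, or alternatively that one only needs the identity $f(\xb)=-f(\xb')$ on the \emph{interiors} of the colour-regions, which is all that is used downstream (the relevant points in a solution cluster lie strictly inside colour-regions by Proposition~\ref{obs:invpoly}). The second potential subtlety is checking that Observation~\ref{obs:flip} genuinely applies --- i.e.\ that $\xb$ and $\xb'$, being within the Significant Region, are close enough to the axis ($|\alpha_j|<1/10n$) for the ``directions get negated'' statement to hold; this follows from Proposition~\ref{prop:s-r-radius}, since all points of the Significant Region have $\alpha$-coordinates bounded by the inverse-polynomial $\delta^w \ll 1/10n$. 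Once these two points are pinned down, the proposition is just the combination of the antipodality built into $C_{VT}$ with the symmetry of the coordinate transformation under $\tau\leftrightarrow 1-\tau$, $\alpha\leftrightarrow-\alpha$.
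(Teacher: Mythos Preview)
Your proposal is correct and follows the same approach as the paper, which relies on tracing the colouring rules of Section~\ref{sec:colour-f} back to the antipodality constraint of \nvhdt. In fact your write-up is more complete: the paper's proof is a three-sentence sketch that only explicitly treats the case $\xb\in T$ (Item~\ref{item-f-strips}), whereas you also spell out the outer-region case (Item~\ref{item-f-sr}), where the conclusion follows directly from the sign-symmetry of the defining inequalities $\alpha_j\gtrless\pm\delta^T$. Your identification of the two points $\yb,\yb'$ with antipodal points on the panchromatic facets $x_1=\pm1$ of $B$ is exactly the mechanism the paper has in mind when it says the colour-regions ``extend the cubelets that lie on the panchromatic facets''; and your remarks on tie-breaking at $|\alpha_j|=\delta^T$ and on the applicability of Observation~\ref{obs:flip} via Proposition~\ref{prop:s-r-radius} are sound and go beyond what the paper bothers to say.
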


\begin{proof}
The proposition extends Observation~\ref{obs:flip}. The points
$\xb$ and $\xb'$ have been coloured according to Item~\ref{item-f-strips} of Section~\ref{sec:colour-f},
and they belong to two long thin colour-regions that extend the cubelets that lie
on the panchromatic facets of the cube embedded at the centre of $T$, all the
way to the ends of $T$.
From the boundary conditions on the colouring of box $B$ in \nvhdt,
and the way $f$ is constructed above, their colours are equal and opposite.
\end{proof}
{\bf Remark:} The $\xb$, $\xb'$ in Proposition~\ref{prop:strips} will ``approach each other''
as $\tau\rightarrow 0$. That is, they correspond to sequences of \ch\ cuts where
the left-hand cut in the c-e region ``wraps around'' to the right-hand side of the c-e region.
Proposition~\ref{prop:strips} may thus seem to create Borsuk-Ulam directions that are
in conflict with each other as we cross from facet $D_0$ to $D_1$, but in fact the flip
of labels in \ch\ that occurs when we move from $D_0$ to $D_1$
will mean that they are in agreement with each other.

\bigskip
\noindent We consider the case where the set of $p^C$ points in $D$ represented
by the solution $S_{CH}$ to $I_{CH}$, contains points on opposite sides of the facets
of $D$ that have been identified with each other.
Proposition~\ref{prop:strips} tells us that
colour-regions are adjacent to colour-regions having the opposite colour.
We need to verify that for a pair $\xb,\xb'$ of points
that are close together but have opposite colours (due to lying in such
a pair of colour-regions) the same (and not opposite) feedback is provided to the c-e agents.
(So, in contrast with a pair of opposite-colour points that represent
a solution, whose feedback to the c-e agents cancel each other out.)

In reasoning about these elements $\xb,\xb'\in D$, it is helpful to depart from our convention
that the label-sequence begins with $\lplus$, and suppose that for
$\xb'$, the label-sequence begins with $\lminus$.
Suppose $\xb,\xb'$ have corresponding circuit-encoders $C_i,C_{i'}$ and assume
that $C_i$ and $C_{i'}$ receive reliable inputs, recalling that only $n$ circuit-encoders
may fail to receive reliable inputs.
Notice that if $\xb$ causes a blanket-sensor agent $b_{i,j}$ to be active
in direction $\lplus$, then $\xb'$ typically causes $b_{i',j}$ to be active
in direction $\lplus$ also (the over-represented label is fed back to
c-e agent $a_j$).

In the case that no blanket-sensor agents are active, if $\xb,\xb'$ receive
opposite colours from $C_i,C_{i'}$, then, reverting to our convention
that the shared label-sequence begins with $\lplus$, we note that
their reference-sensor agents get opposite labels, which causes $C_i$
and $C_{i'}$ to agree with each other.\smallskip

\noindent \textbf{Remark.}
For intuition, it is possibly helpful to think about the move from $\xb$ to
$\xb'$ in terms of operations on the coordinate-encoding cuts.
At $\xb$, there is a cut on the right-hand side of the c-e region, and in moving to $\xb'$
we move that cut to the left-hand side. If we move the cut while leaving the labels
of the c-e region unchanged (apart from at the ends) we expect the circuit to
behave as before, but since we have switched the roles of labels $\lplus$ and $\lminus$,
the feedback to agents $a_1,\ldots,a_n$ gets inverted.
We re-invert this feedback by reversing the colour, and hence the output of $f'$.
This is very similar (in fact, an $n$-dimensional analogue) to the handling of the
``wrap-around points'' in the sequence via the interpretation in terms of the virtual cuts in \cite{FG17}.

\section{Further work}

What is the computational complexity of $k$-thief \ns, for $k$ not a power of 2?
As discussed in \cite{Meunier14,LZ06}, the proof that it is a total search problem,
does {\em not} seem to boil down to the \ppa\ principle.
Right now, we do not even even know if it belongs to PTFNP~\cite{GP18}.

Interestingly, Papadimitriou in \cite{Pap} (implicitly) also defined a number of computational
complexity classes related to \ppa, namely \ppa-$p$, for a parameter $p \geq 2$. \ppa-$p$ is defined 
with respect to an input bipartite graph and a given vertex with degree which is not a multiple of $p$,
and the goal is to find another vertex with degree which is not a multiple of $p$ 
(it follows that \ppa=\ppa-$2$). This was done in the context of classifying the 
computational problem related to Ch{\'e}valley's Theorem from number theory, 
and it was proven that for prime $p$, \textsc{Chevally} mod $p$ is in \ppa-$p$ \cite{Pap}. 
Given the discussion above, it could possibly be the case that the the principle
associated with \ns\ for $k$-thieves is the \ppa-$k$ principle instead.

What about the computational hardness of the problem? Is 3-thief \ns\ hard for \ppa? 
At first glance, it seems like a more complicated problem, but there this is not obvious; for
example, there is no way to cause the third thief to be a dummy agent and therefore a straightforward
reduction is unlikely.
However, it is worth mentioning here that the computational equivalence between
$\varepsilon$-\ch\ and \ns\ that was proven in \cite{FG17} is actually
established between the Necklace Splitting problem for any $k$ and the
corresponding approximate $1/k$-Division problem, a generalization of
$\varepsilon$-\ch\ (see \cite{SS03}); a \ppa-hardness or \ppa-membership
result for $k>2$ for the latter problem would imply a corresponding result for \ns\ with $k>2$.
En route to either of these results, a possible approach that was suggested in \cite{SS03},
would be to define an appropriate generalisation of Tucker's Lemma and prove it \emph{constructively}
(see Section 8 in \cite{SS03}).

We have left open the questions of whether $\varepsilon$-\ch\ remains \ppa-complete
for constant $\varepsilon$, and whether \dhs\ remains \ppa-complete when coordinates
of points are given in unary. Recall that for the former problem, a \ppad-hardness result
is known from \cite{FFGZ18}; it would be quite interesting to settle this, to verify whether
it is possible for the precision parameter to play such an important role in the problem classification.

In classifying a problem as polynomial-time solvable versus \np-complete,
this is usually seen as a statement about its computational (in)tractability.
The distinction between \ppad-completeness and \ppa-completeness is one
of expressive power: we believe that \ppad-complete problems are hard,
meanwhile \ppa-complete problems are ``at least as hard'', but of
course are still in \np. The expressive power of totality principles that underpin \tfnp\ problems
is a topic of enduring interest \cite{BCEIP,GP18}; note also the related work on
Bounded Arithmetic discussed in~\cite{GP18}. Our results highlight the distinction
between computational (in)tractability and expressive power.
In analysing the relationships between these complexity classes,
it may be fruitful to focus on expressive power.

Finally, \cite{MN12} initiates an interesting experimental study of path-following algorithms
for 2-thief \ns, obtaining positive results when the number of bead colours is not too large.
However, path-following seems to be inapplicable for, say, 3 thieves.
The \ns\ problem may constitute an interesting class of challenge-instances
for SAT-solvers, now that it is known to be a very hard total search problem.

\begin{paragraph}{Acknowledgements}
We thank Alex Hollender for detailed and insightful proof-reading of earlier versions of this paper.
\end{paragraph}

\appendix

\newpage
\section*{APPENDIX}
\section{Details from \cite{FG17}.} \label{sec:app_details}

In this section, we include several details from \cite{FG17} that are being used (or extended) in the present paper as well. 

\subsection*{Bit detection gadgets}

First, the ability to detect the position of the cuts in the c-e region and feed this information to the circuit lies in the presence of gadgets developed in \cite{FFGZ18} and used in \cite{FG17}, referred to as ``bit detection gadgets'' in \cite{FG17}. A bit detection gadget consists typically of two \emph{thin} and \emph{dense} valuation blocks of relatively large height and relatively small length, situated next to each other (e.g., see the rightmost set of value-blocks in Figure \ref{fig:sensor_example} or Figure \ref{fig:gate-example}, top). These value-blocks constitute most of the agent's valuation over the related interval. The point of these gadgets is that if the discrepancy between \lplus\ and \lminus\ is (significantly) in the favour of one against the other, there will be a cut intersecting one of the two valuation blocks; which block is intersected will correspond to a $0/1$ value, i.e. a bit that indicates the ``direction'' of the discrepancy in the two labels.\\

\noindent These gadgets are used in several parts of the reduction, e.g., 
\begin{itemize}
	\item[-] at the value-blocks of the sensor agents (Definition \ref{def:sensors}) in the region $R_i$ (see Figure \ref{fig:sensor_example}) that assumes the right or left position depending on whether their small value-blocks in the c-e region are labelled \lplus or \lminus,
	\item[-] in the encoding of the circuit $C_{VT}$ of \nvhdt\, using the circuit-encoding agents $C_i$ (also see Section \ref{sec:forward}),
	\item[-] at the value-blocks of the blanket-sensor agents (Definition \ref{def:blanket}) in the region $R_i$, with the difference that in that case, a small value-block (of value $9\kappa/10$) lies between the two thin and dense valuation blocks of the bit detection gadget (see Figure \ref{fig:blanket_example}).
\end{itemize}

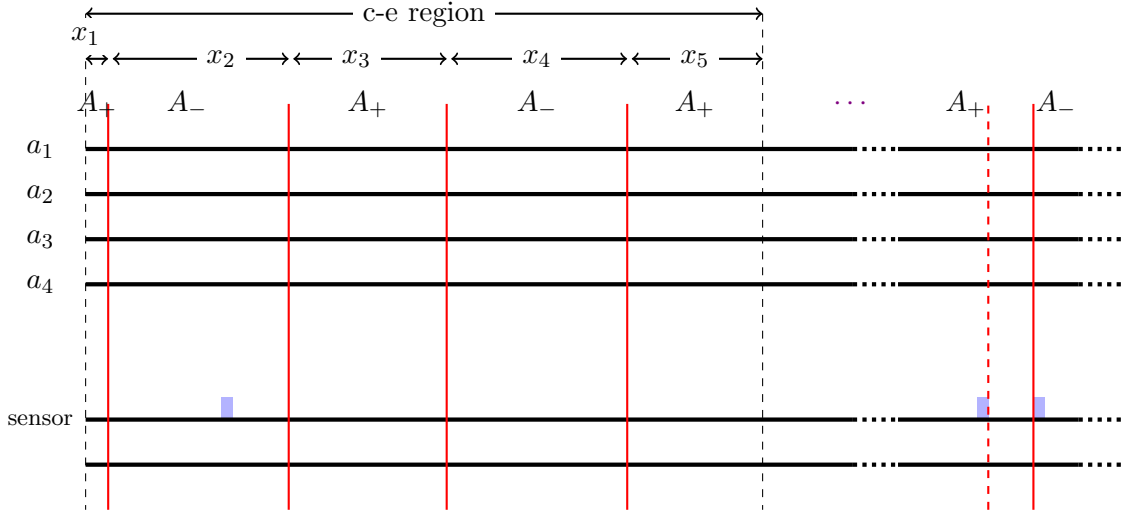
\begin{figure}
	\center{
		\begin{tikzpicture}[scale=0.6]
		\tikzstyle{line-style}=[black,ultra thick]
		\tikzstyle{cut-style}=[red,thick]
		\tikzstyle{valuestyle}=[fill=blue,opacity=0.3]
		
		\draw[dashed](0,23)--(0,12); \draw[dashed](15,23)--(15,12);
		\draw[thick,<->](0,23)--(15,23);\draw[thick,<->](0,22)--(0.5,22);
		\draw[thick,<->](0.6,22)--(4.5,22);\draw[thick,<->](4.6,22)--(8,22);
		\draw[thick,<->](8.1,22)--(12,22);\draw[thick,<->](12.1,22)--(15,22);
		
		\node[fill=white]at(7.5,23){c-e region};\node[fill=white]at(0,22.5){$x_1$};
		\node[fill=white]at(3,22){$x_2$};\node[fill=white]at(6,22){$x_3$};
		\node[fill=white]at(10,22){$x_4$};\node[fill=white]at(13.5,22){$x_5$};
		
		\fill[valuestyle,yshift=0cm](3,14)--(3,14.5)--(3.25,14.5)--(3.25,14)--cycle;

		\foreach \x/\y in {20/14}{
			\fill[valuestyle,xshift=\x cm,yshift=\y cm](0,0)--(0,0.5)--(-0.25,0.5)--(-0.25,0)--cycle;
			\fill[valuestyle,xshift=\x cm,yshift=\y cm](1,0)--(1,0.5)--(1.25,0.5)--(1.25,0)--cycle;
		}
		
		
		\foreach \y in {20,19,18,17,14,13}{
			\draw[line-style](0,\y)--(17,\y);
			\draw[ultra thick,dotted](17,\y)--(18,\y);
			\draw[line-style](18,\y)--(22,\y);
			\draw[ultra thick,dotted](22,\y)--(23,\y);
		}
		
		\node[yshift=0cm]at(-1,20){$a_1$};\node[yshift=0cm]at(-1,19){$a_2$};
		\node[yshift=0cm]at(-1,18){$a_3$};\node[yshift=0cm]at(-1,17){$a_4$};
		\node[yshift=0cm]at(-1,14){$\footnotesize{\text{sensor}}$}; 

		\foreach \x in {0.5,4.5,8,12,21}{\draw[cut-style,yshift=0cm](\x,12)--(\x,21);}
		\draw[cut-style,dashed,yshift=0cm](20,12)--(20,21);
		\node[color=black,yshift=0cm]at(0.25,21){$\lplus$};\node[color=black,yshift=0cm]at(2.25,21){$\lminus$};\node[color=black,yshift=0cm]at(6.25,21){$\lplus$};
		\node[color=black,yshift=0cm]at(10,21){$\lminus$};
		\node[color=black,yshift=0cm]at(13.5,21){$\lplus$};
		\node[color=violet,yshift=0cm]at(17,21){$\cdots$};
		\node[color=black,yshift=0cm]at(19.5,21){$\lplus$};
		

		\node[color=black,yshift=0cm]at(21.5,21){$\lminus$};

		
			\end{tikzpicture}}
			\caption{\small{An example of how the input of a sensor agent is processed into a boolean value that will be used by the encoding of the circuit. On of the two value-blocks on the right-hand side of the picture (the bit-detection gadget) is interesected by the cut corresponding to the sensor, depending on whether the value-block on the left-hand side is labelled \lplus or \lminus. In the figure, the block is labelled \lplus and therefore the cut intersects the rightmost value-block on the right-hand side. The other (unused) option is depicted by a red dashed ine.}}\label{fig:sensor_example}
	\end{figure}
	
	\begin{figure}
		\center{
			\begin{tikzpicture}[scale=0.6]
			\tikzstyle{line-style}=[black,ultra thick]
			\tikzstyle{cut-style}=[red,thick]
			\tikzstyle{valuestyle}=[fill=blue,opacity=0.3]
			
			\draw[dashed](0,23)--(0,12); \draw[dashed](15,23)--(15,12);
			\draw[dashed](4,23)--(4,12);	\draw[dashed](8,23)--(8,12);
			\draw[thick,<->](0,23)--(15,23);\draw[thick,<->](0,22)--(0.5,22);
			\draw[thick,<->](0.6,22)--(4.5,22);\draw[thick,<->](4.6,22)--(8,22);
			\draw[thick,<->](8.1,22)--(12,22);\draw[thick,<->](12.1,22)--(15,22);
			
			\node[fill=white]at(7.5,23){c-e region};\node[fill=white]at(0,22.5){$x_1$};
			\node[fill=white]at(3,22){$x_2$};\node[fill=white]at(6,22){$x_3$};
			\node[fill=white]at(10,22){$x_4$};\node[fill=white]at(13.5,22){$x_5$};
			
			
			\foreach \y in {1}{
				\foreach \x in {-1,0,1,2,3,4,5,6}{
					\fill[valuestyle,xshift=\x cm,yshift=-1*\y cm]
					(\y,15)--(\y+0.2,15)--(\y+0.2,15.5)--(\y,15.5)--cycle;}}
			
			\foreach \x/\y in {20/14}{
				\fill[valuestyle,xshift=\x cm,yshift=\y cm](0,0)--(0,0.5)--(-0.25,0.5)--(-0.25,0)--cycle;
				\fill[valuestyle,xshift=\x cm,yshift=\y cm](1,0)--(1,0.5)--(1.25,0.5)--(1.25,0)--cycle;
				\fill[valuestyle,xshift=\x cm,yshift=\y cm](0.2,0)--(0.2,0.2)--(0.8,0.2)--(0.8,0)--cycle;
			}
			
			
			\foreach \y in {20,19,18,17,14,13}{
				\draw[line-style](0,\y)--(17,\y);
				\draw[ultra thick,dotted](17,\y)--(18,\y);
				\draw[line-style](18,\y)--(22,\y);
				\draw[ultra thick,dotted](22,\y)--(23,\y);
			}
			
			\node[yshift=0cm]at(-1,20){$a_1$};\node[yshift=0cm]at(-1,19){$a_2$};
			\node[yshift=0cm]at(-1,18){$a_3$};\node[yshift=0cm]at(-1,17){$a_4$};
			\node[yshift=0cm]at(-1,14.5){$\footnotesize{\text{blanket}}$}; 
			\node[yshift=0cm]at(-1,14){$\footnotesize{\text{sensor}}$};

			\foreach \x in {0.58,12}{\draw[cut-style,yshift=0cm](\x,12)--(\x,21);}
			\draw[cut-style,yshift=0cm](4.5,12)--(4.5,21);
			\draw[cut-style,dashed,yshift=0cm](20,12)--(20,21);
			\draw[cut-style,dashed,yshift=0cm](21,12)--(21,21);
			\draw[color=blue,yshift=0cm](20.5,12)--(20.5,21);
			\node[color=black,yshift=0cm]at(0.25,21){$\lplus$};\node[color=black,yshift=0cm]at(2.25,21){$\lminus$};\node[color=black,yshift=0cm]at(6.25,21){$\lplus$};
			\node[color=black,yshift=0cm]at(10,21){$\lminus$};
			\node[color=black,yshift=0cm]at(13.5,21){$\lplus$};
			\node[color=violet,yshift=0cm]at(17,21){$\cdots$};
			\node[color=black,yshift=0cm]at(19.5,21){$\lplus$};
			

			\node[color=black,yshift=0cm]at(21.5,21){$\lminus$};

			
			\end{tikzpicture}}
		\caption{\small{An example on how the blanket-sensor agents provide input to the circuit for their monitored intervals. Depending on the balance in labels for the value-blocks on the left-hand side, the bit-detection gadget of the blanket-sensor agent on the right-hand side assumes the leftmost, middle, or rightmost positions. In the particular example, the number of value-blocks on the left-hand side for each label is balanced, and therefore the cut on the right-hand side (shown in blue) interesects the middle value-block of the bit-detection gadget. In this case, the blanket-sensor is not active. The other two possible positions for the cut on the right-hand side, when the blanket sensor is active, are depicted by dashed red lines.}}\label{fig:blanket_example}
	\end{figure}
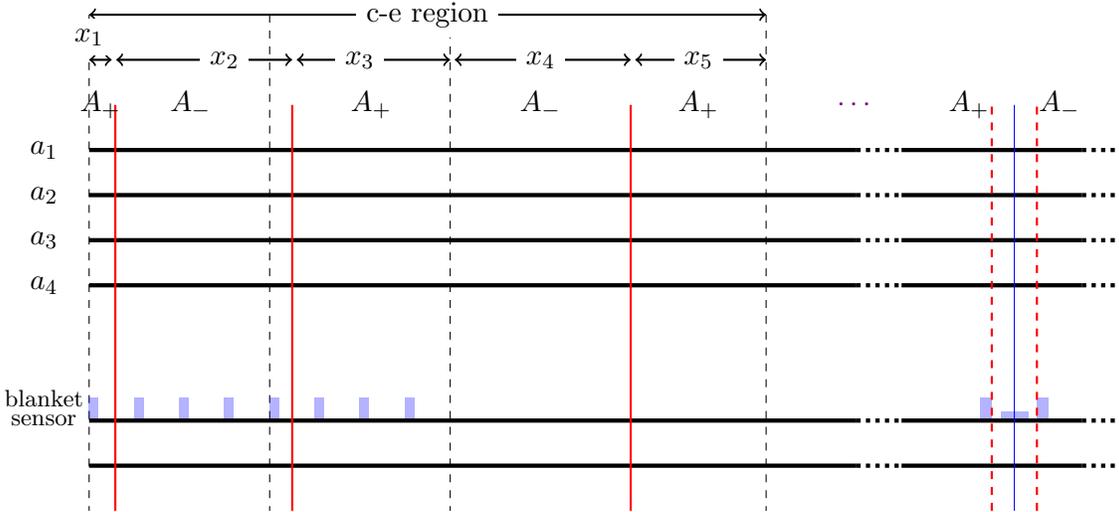

\subsection*{Sensor and blanket sensor agents}

The formal definitions of the sensor agents and the blanket sensor agents were given in the main text, see Definition \ref{def:sensors} and Definition \ref{def:blanket} respectively. Here, we explain in more detail how these agents make use of the bit-detection gadgets (which lie in the region $R_i$) to detect the positions of the cuts (for the sensor agents) or to detect large discrepancies on the volumes of the two labels in the c-e region (for the blanket sensor agents).

Starting from the sensor agents, recall that each such agent of $S_i \subset C_i$ has a small value-block (of value $1/10$) in the c-e region and its remaining value ($9/10$) lies in the circuit encoding region and particularly, in the sub-region $R_i$ (recall that the circuit-encoding region $R$ is partitioned into sub-regions $R_i$, one for each circuit encoder $C_i$, where most of the gadgetry of the encoder lies). In particular, the sensor agent has two thin blocks of value $9/20$ in the c-e region and this is precisely the bit detection gadget of the agent, as described in the previous subsection - see Figure \ref{fig:gate-example}, top, for an illustration. If the value-block on the left-hand side (in the c-e region) is labelled \lminus, then the cut on the right-hand side intersects the rightmost value-block (i.e., ``jumps'' to the right) and if it is \lplus, then it ``jumps'' to the left. 
This information is then passed on to the next level of circuit encoding agents, those that implement the pre-processing unit of the circuit (see Section \ref{sec:c1}) and the subsection following this one. In \cite{FG17}, we referred to this information as ``raw data'' (although the cut extraction mechanism there had to be more elaborate, due to the inversely-exponential precision). The pre-processing unit is responsible for converting the raw data into appropriate inputs for the circuit $C_{VT}$, which encode coordinate of points on the M\"{o}bius-simplex. These inputs are then ``propagated'' through the encoding of the circuit $C_{VT}$, to produce the appropriate labels at the output gates $g_j$, $j \in \pm [n]$, as described in th following subsection.

The blanket sensor agents use very similar bit detection gadgets in their outputs (i.e., in their value-blocks in region $R_i$), but between their thin and dense value-blocks, they have an additional small value-blocks (the block of value $9\kappa/10$ in Definition \ref{def:blanket}). This is because the blanket sensor agents needs to be able to assume three states: ``excess of \lplus'', ``excess of \lminus'' and ``(approximately) balanced labels''. The latter option corresponds to the cut associated with the blanket sensor agent intersecting the middle value-block (therefore not ``jumping'' to either side), whereas the other two options correspond to the cut ``jumping'' to either the right or the left side, where the choice depends on the over-represented label and the parity of the index of the blanket sensor agent. It is straightforward (as before) to interpret these positions as boolean values. The main idea in \cite{FG17} is that if the blanket sensor agents are active, then this information will ``over-ride'' the circuit $C_{VT}$ and generate an imbalance of labels in the feedback provided to the c-e agents directly, essentially bypassing the output gates of $C_{VT}$. We use the same principle here, but since we now have many blanket sensor agents, extra care must be taken to make sure that no artificial solutions are introduced when colouring the domain. The details on how the input from the blanket-sensors affects the feedback to the c-e agents are presented in Section \ref{sec:bsa}.

\subsection*{Encoding of the circuit $C_{VT}$ of \nvhdt.}

Before we explain how the circuit of \nvhdt\ is encoded using the circuit encoders $C_i$, we present the main building blocks for simulating boolean circuits based on bit-detection gadget, first presented in \cite{FFGZ18} and later adapted and used in \cite{FG17}.

Consider a boolean gate that is an AND, an OR, or a NOT gate, denoted $g_\land$, $g_\lor$ and $g_\neg$ respectively. Let $in_1$, $in_2$ and $out$ be intervals such that $|in_1| = |in_2| = |out| = 1$. We will encode these gates using gate gadgets, shown in Figure \ref{fig:boolean-gate-gadgets} (from \cite{FG17}).

Note that the gadget corresponding to the NOT gate only has one input, whereas the gadgets for the AND and OR gates have two inputs. In the interval $out$, each gadget has two bit detection gadgets - in the case of the NOT gate these are even, but in the case of the AND and OR gates, they are uneven. Also note that for the inputs, as well as the output of the NOT gate, the label on the left-hand side of the cut is $\lplus$ and the label on the right-hand side will be $\lminus$, whereas for the outputs to the OR and AND gate, the label on the left-hand side of the cut is $\lminus$ and the label on the right hand side is $\lplus$ . This can be achieved with the appropriate use of parity gadgets, i.e., simple valuation blocks that force cuts to lie in specific positions, only to change the parity of the cut sequence (see \cite{FG17} for more details). 

As we mentioned earlier, the bit-detection gagdets allow us to extract boolean values corresponding to the positions of the cuts in the c-e region - this is achieved via the use of the sensor agents. In the next step, these values (referred earlier as the ``raw data'') are supplied into a gadget that is referred to as the pre-processing circuit in \cite{FG17}. The role of this circuit is to convert this information to coordinates of points on the M\"{o}bius-simplex, which then will go through a coordinate transformation (see Section \ref{sec:coord-sys})  and will be used as inputs to the encoding of $C_{VT}$. The pre-processing circuit can be implemented using the boolean gate gadgets described above. The same is true for the actual circuit $C_{VT}$ as well, which can be simulated using the same set of gadgets, using the principle described in Figure \ref{fig:gates} (from \cite{FG17}). The outputs of the pre-preprocessing circuit are inputed to the input gates of $C_{VT}$ and their outputs, in turn, are inputted to the gates on the next level and this process carries on until the output gates of $C_{VT}$. 

\begin{figure}
	\centering
	\begin{minipage}{0.35\textwidth}
		\begin{tikzpicture}[scale=1,transform shape]
		\node (a_1) at (-10pt,0pt) {\small{$G_{\neg}(in_1,out)$}}; 
		\node (a_2) at (110pt, 0pt) {};
		\draw (a_1)--(a_2);
		\draw[fill=lightgray] (36pt,0pt) rectangle (57pt,5pt);
		\draw[fill=lightgray] (78pt,0pt) rectangle (84pt,20pt);
		\draw[fill=lightgray] (93pt,0pt) rectangle (99pt,20pt);
		
		\draw [
		thick,
		decoration={
			brace,
			mirror,
			raise=5pt
		},
		decorate
		] (36pt,0pt) -- (57pt,0pt)
		node [pos=0.5,anchor=north,yshift=-10pt] {\small{${in_1}$}}; 
		
		\draw [
		thick,
		decoration={
			brace,
			mirror,
			raise=5pt
		},
		decorate
		] (78pt,0pt) -- (99pt,0pt)
		node [pos=0.5,anchor=north,yshift=-10pt] {\small{$out$}};

		\draw[dashed,color=blue] (36pt,30pt) -- (36pt, -10pt);
		\draw[dashed,color=blue] (96pt,30pt) -- (96pt, -10pt);
		
		\draw[dashed,color=red] (57pt,30pt) -- (57pt, -10pt);
		\draw[dashed,color=red] (81pt,30pt) -- (81pt, -10pt);
		
		\node[color=blue] at (25pt,40pt) {\scriptsize{$\lplus$}};
		\node[color=red] at (50pt,40pt) {\scriptsize{$\lplus$}};
		
		\node[color=blue] at (90pt,40pt) {\scriptsize{$\lplus$}};
		\node[color=red] at (75pt,40pt) {\scriptsize{$\lplus$}};
		
		\end{tikzpicture}
	\end{minipage}
	\begin{minipage}{0.35\textwidth}
		\begin{tikzpicture}[scale=1,transform shape]
		\node (a_1) at (-10pt,0pt) {\small{$G_{\lor}(in_1,in_2,out)$}}; 
		\node (a_2) at (150pt, 0pt) {};
		\draw (a_1)--(a_2);
		\draw[fill=lightgray] (36pt,0pt) rectangle (57pt,5pt);
		\draw[fill=lightgray] (78pt,0pt) rectangle (99pt,5pt);
		\draw[fill=lightgray] (120pt,0pt) rectangle (126pt,20pt);
		\draw[fill=lightgray] (135pt,0pt) rectangle (141pt,25pt);
		
		\draw [
		thick,
		decoration={
			brace,
			mirror,
			raise=5pt
		},
		decorate
		] (36pt,0pt) -- (57pt,0pt)
		node [pos=0.5,anchor=north,yshift=-10pt] {\small{${in_1}$}}; 
		
		\draw [
		thick,
		decoration={
			brace,
			mirror,
			raise=5pt
		},
		decorate
		] (78pt,0pt) -- (99pt,0pt)
		node [pos=0.5,anchor=north,yshift=-10pt] {\small{$in_2$}};

		\draw [
		thick,
		decoration={
			brace,
			mirror,
			raise=5pt
		},
		decorate
		] (121pt,0pt) -- (141pt,0pt)
		node [pos=0.5,anchor=north,yshift=-10pt] {\small{$out$}};

		\draw[dashed,color=blue] (36pt,30pt) -- (36pt, -10pt);
		\draw[dashed,color=red] (37pt,30pt) -- (37pt, -10pt);
		\draw[dashed,color=red] (99pt,30pt) -- (99pt, -10pt);
		\draw[dashed,color=blue] (78pt,30pt) -- (78pt, -10pt);

		\draw[dashed,color=blue] (124pt,30pt) -- (124pt, -10pt);
		\draw[dashed,color=red] (137pt,30pt) -- (137pt, -10pt);
		
		\node[color=blue] at (25pt,40pt) {\scriptsize{$\lplus$}};
		\node[color=red] at (25pt,30pt) {\scriptsize{$\lplus$}};
		
		\node[color=red] at (95pt,40pt) {\scriptsize{$\lplus$}};
		\node[color=blue] at (70pt,40pt) {\scriptsize{$\lplus$}};
		
		\node[color=red] at (135pt,40pt) {\scriptsize{$\lminus$}};
		\node[color=blue] at (117pt,40pt) {\scriptsize{$\lminus$}};
		
		\end{tikzpicture}
	\end{minipage}
	%
	\begin{tikzpicture}[scale=1,transform shape]
	\node at (10pt,40pt) {};
	\node (a_1) at (-10pt,0pt) {\small{$G_{\land}(in_1,in_2,out)$}}; 
	\node (a_2) at (150pt, 0pt) {};
	\draw (a_1)--(a_2);
	\draw[fill=lightgray] (36pt,0pt) rectangle (57pt,5pt);
	\draw[fill=lightgray] (78pt,0pt) rectangle (99pt,5pt);
	\draw[fill=lightgray] (120pt,0pt) rectangle (126pt,25pt);
	\draw[fill=lightgray] (135pt,0pt) rectangle (141pt,20pt);
	
	\draw [
	thick,
	decoration={
		brace,
		mirror,
		raise=5pt
	},
	decorate
	] (36pt,0pt) -- (57pt,0pt)
	node [pos=0.5,anchor=north,yshift=-10pt] {\small{${in_1}$}}; 
	
	\draw [
	thick,
	decoration={
		brace,
		mirror,
		raise=5pt
	},
	decorate
	] (78pt,0pt) -- (99pt,0pt)
	node [pos=0.5,anchor=north,yshift=-10pt] {\small{$in_2$}};
	
	\draw [
	thick,
	decoration={
		brace,
		mirror,
		raise=5pt
	},
	decorate
	] (121pt,0pt) -- (141pt,0pt)
	node [pos=0.5,anchor=north,yshift=-10pt] {\small{$out$}};
	
	\draw[dashed,color=blue] (36pt,30pt) -- (36pt, -10pt);
	\draw[dashed,color=red] (37pt,30pt) -- (37pt, -10pt);
	\draw[dashed,color=red] (99pt,30pt) -- (99pt, -10pt);
	\draw[dashed,color=blue] (78pt,30pt) -- (78pt, -10pt);

	\draw[dashed,color=blue] (124pt,30pt) -- (124pt, -10pt);
	\draw[dashed,color=red] (122pt,30pt) -- (122pt, -10pt);
	
	\node[color=blue] at (25pt,40pt) {\scriptsize{$\lplus$}};
	\node[color=red] at (25pt,30pt) {\scriptsize{$\lplus$}};
	
	\node[color=red] at (95pt,40pt) {\scriptsize{$\lplus$}};
	\node[color=blue] at (70pt,40pt) {\scriptsize{$\lplus$}};
	
	\node[color=red] at (115pt,30pt) {\scriptsize{$\lminus$}};
	\node[color=blue] at (115pt,40pt) {\scriptsize{$\lminus$}};
	\end{tikzpicture}
	\caption{\small{The Boolean Gate Gadgets encoding the NOT, OR and AND gates. For visibility, the valuation blocks are not according to scale. For the NOT gate, the input has value $0.25$ and the output blocks have volume $0.375$ each. For the OR (respectively AND) gate, the input blocks have value $0.125$ each and the output blocks have value $0.3125$ and $0.4375$ (respectively $0.4375$ and $0.3125$). The cuts corresponding to pairs or triples of inputs and outputs have the same colour, and the labels on the left-side of these cuts are shown and colour-coded in the same way. For the NOT gate, when the input cut sits of the left (the blue cut), then the output cut must sit on the right (the blue cut), to compensate for the excess of $\lminus$ and oppositely for when the input cut sits of the right (the red cut). For the OR and AND gates, again the cuts corresponding to two inputs and one output have the same colour. For the OR gate, when both inputs cut sit on the left (the blue cuts), the output cut sits on the left as well, to compensate for the excess of $\lminus$ (notice that the left-hand side of the output cut is labelled $\lminus$. When one input sits on the left and the other one on the right, the inputs detect no discrepancy in the balance of labels and the output jumps to the right, because the output blocks are uneven (the red cuts). The operation of the AND gate is very similar; here the cases shown are those where the inputs are $0$ and $0$ and the ouput is $0$ (the blue cuts) and where the inputs are $0$ and $1$ and the output is still $0$ (the red cuts).}}
	\label{fig:boolean-gate-gadgets}
\end{figure}
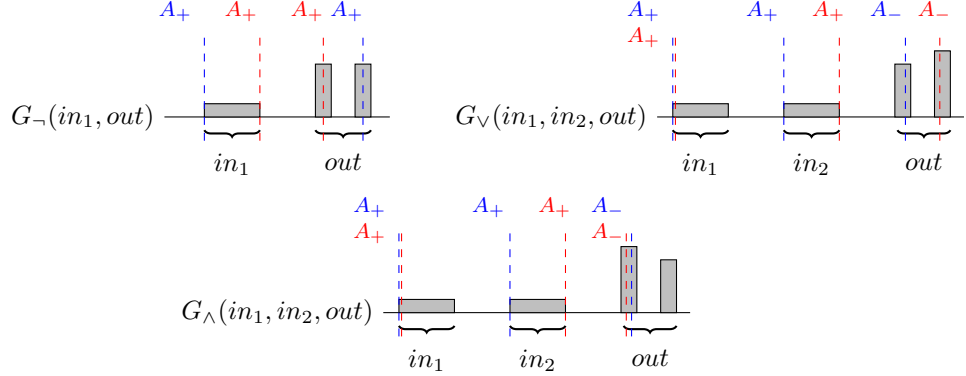

\begin{figure}
	\begin{tikzpicture}[scale=0.85, transform shape]
	\node (a_1) at (84pt,0pt) {}; 
	\node (a_2) at (485pt, 0pt) {};
	\draw (a_1)--(a_2);
	
	\node (a_1) at (84pt,-20pt) {}; 
	\node (a_2) at (485pt, -20pt) {};
	\draw (a_1)--(a_2);

	\node (a_1) at (84pt,-40pt) {}; 
	\node (a_2) at (485pt, -40pt) {};
	\draw (a_1)--(a_2);
	
	\node (a_1) at (84pt,-60pt) {}; 
	\node (a_2) at (485pt, -60pt) {};
	\draw (a_1)--(a_2);
	
	

	
	\draw[fill=lightgray] (110pt,0pt) rectangle (115pt,10pt);
	\draw[fill=lightgray] (120pt,0pt) rectangle (125pt,10pt);
	
	\draw[fill=lightgray] (135pt,-20pt) rectangle (140pt,-10pt);
	\draw[fill=lightgray] (145pt,-20pt) rectangle (150pt,-10pt);
	
	
	
	\draw[fill=lightgray] (115pt,-40pt) rectangle (120pt,-35pt);
	\draw[fill=lightgray] (140pt,-40pt) rectangle (145pt,-35pt);
	
	\draw[fill=lightgray] (230pt,-40pt) rectangle (240pt,-25pt);
	\draw[fill=lightgray] (250pt,-40pt) rectangle (260pt,-35pt);
	
	\draw[fill=lightgray] (240pt,-60pt) rectangle (250pt,-55pt);
	
	\draw[fill=lightgray] (380pt,-60pt) rectangle (390pt,-50pt);
	\draw[fill=lightgray] (400pt,-60pt) rectangle (410pt,-50pt);

	\draw [
	thick,
	decoration={
		brace,
		raise=5pt
	},
	decorate
	] (485pt,15pt) -- (485pt,0-20pt)
	node [pos=0.5,anchor=west,xshift=8pt] {\scriptsize{Outputs of sensor agents}};
	
	\draw [
	thick,
	decoration={
		brace,
		raise=5pt
	},
	decorate
	] (485pt,-25pt) -- (485pt,-40pt)
	node [pos=0.5,anchor=west,xshift=8pt] {\scriptsize{An AND input gate gadget}};
	
	\draw [
	thick,
	decoration={
		brace,
		raise=5pt
	},
	decorate
	] (485pt,-45pt) -- (485pt,-60pt)
	node [pos=0.5,anchor=west,xshift=8pt] {\scriptsize{A NOT output gate gadget}};

	\draw[color=blue,dashed] (137.5pt,20pt) -- (137.5pt,-80pt);
	\draw[color=blue,dashed] (122.5pt,20pt) -- (122.5pt,-80pt);
	\draw[color=blue,dashed] (235.5pt,20pt) -- (235.5pt,-80pt);
	\draw[color=blue,dashed] (402.5pt,20pt) -- (402.5pt,-80pt);
	
	\draw[color=red,dashed] (147.5pt,20pt) -- (147.5pt,-80pt);
	\draw[color=red,dashed] (123.5pt,20pt) -- (123.5pt,-80pt);
	\draw[color=red,dashed] (255.5pt,20pt) -- (255.5pt,-80pt);
	\draw[color=red,dashed] (387.5pt,20pt) -- (387.5pt,-80pt);

	\end{tikzpicture}
	\caption{\small{The basic idea behind the gate-agents encoding the gates of $C_1$. The picture denotes a simplified case where two input bits from the bit-encoders are supplied to an enconding of an input AND gate of $C_i$ and the output bit of this gate is in turn supplied to the encoding of a NOT output gate of $C_i$. Note that if for example the sensor agents detect the values $1$ and $0$ respectively (the blue cuts), then the output of the AND gate is $0$ (i.e. the blue cut sits at the left of the AND gate agent's bit detector) and the output of the circuit is $1$ (again, see the blue cut that sit on the rightmost valuation block of the NOT gate agent. Similarly, if the sensor agents detect values $1$ and $1$ (the red cuts), then the output of the AND gate is $1$ and the output of the circuit is $0$.}} 
	\label{fig:gates}
\end{figure}
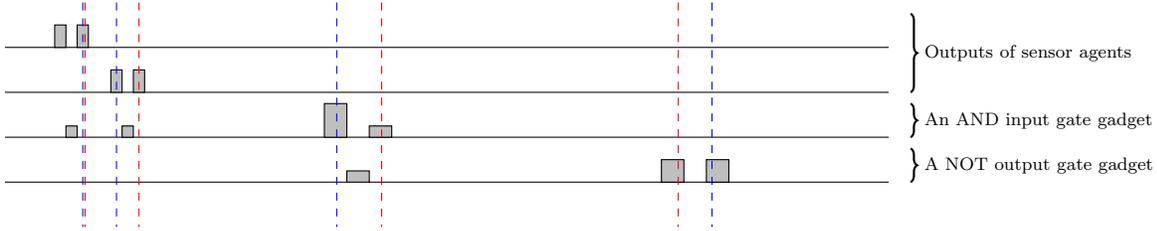

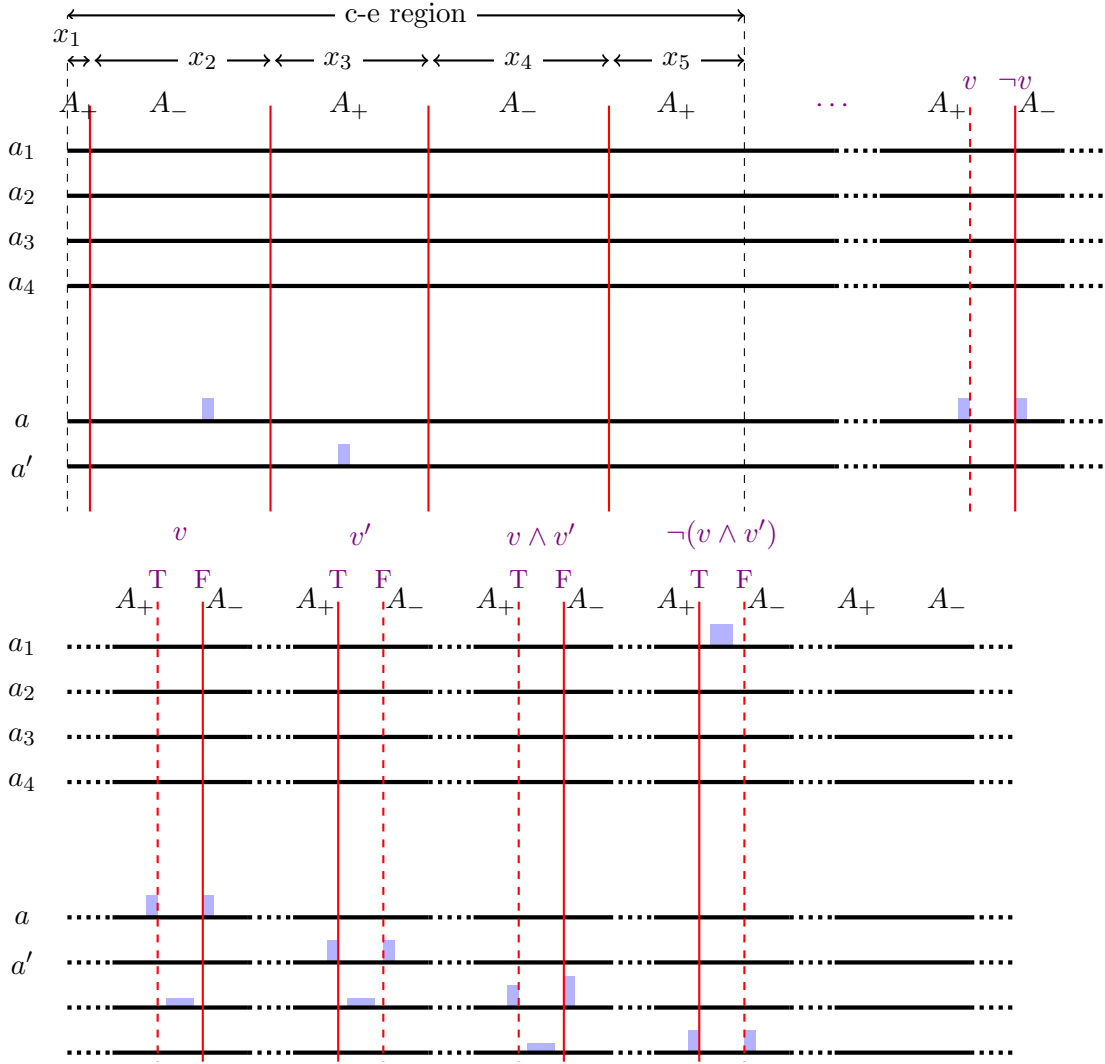
\begin{figure}
	\center{
		\begin{tikzpicture}[scale=0.6]
		\tikzstyle{line-style}=[black,ultra thick]
		\tikzstyle{cut-style}=[red,thick]
		\tikzstyle{valuestyle}=[fill=blue,opacity=0.3]
		
		\draw[dashed](0,23)--(0,12);\draw[dashed](15,23)--(15,12);
		\draw[thick,<->](0,23)--(15,23);\draw[thick,<->](0,22)--(0.5,22);
		\draw[thick,<->](0.6,22)--(4.5,22);\draw[thick,<->](4.6,22)--(8,22);
		\draw[thick,<->](8.1,22)--(12,22);\draw[thick,<->](12.1,22)--(15,22);
		
		\node[fill=white]at(7.5,23){c-e region};\node[fill=white]at(0,22.5){$x_1$};
		\node[fill=white]at(3,22){$x_2$};\node[fill=white]at(6,22){$x_3$};
		\node[fill=white]at(10,22){$x_4$};\node[fill=white]at(13.5,22){$x_5$};
		
		\fill[valuestyle,yshift=0cm](3,14)--(3,14.5)--(3.25,14.5)--(3.25,14)--cycle;

		\foreach \x/\y in {2/3,6/2,14/0,20/14}{
			\fill[valuestyle,xshift=\x cm,yshift=\y cm](0,0)--(0,0.5)--(-0.25,0.5)--(-0.25,0)--cycle;
			\fill[valuestyle,xshift=\x cm,yshift=\y cm](1,0)--(1,0.5)--(1.25,0.5)--(1.25,0)--cycle;
		}
		
		\foreach \x/\y in {10/1}{
			\fill[valuestyle,xshift=\x cm,yshift=\y cm](0,0)--(0,0.5)--(-0.25,0.5)--(-0.25,0)--cycle;
			\fill[valuestyle,xshift=\x cm,yshift=\y cm](1,0)--(1,0.7)--(1.25,0.7)--(1.25,0)--cycle;
		}
		
		\foreach \x/\y in {2/1,6/1,10/0}{
			\fill[valuestyle,xshift=\x cm,yshift=\y cm](0.2,0)--(0.2,0.2)--(0.8,0.2)--(0.8,0)--cycle;
		}
		
		\fill[valuestyle,yshift=0cm](6,13)--(6,13.5)--(6.25,13.5)--(6.25,13)--cycle;
		
		\foreach \y in {20,19,18,17,14,13}{
			\draw[line-style](0,\y)--(17,\y);
			\draw[ultra thick,dotted](17,\y)--(18,\y);
			\draw[line-style](18,\y)--(22,\y);
			\draw[ultra thick,dotted](22,\y)--(23,\y);
		}
		
		\node[yshift=0cm]at(-1,20){$a_1$};\node[yshift=0cm]at(-1,19){$a_2$};
		\node[yshift=0cm]at(-1,18){$a_3$};\node[yshift=0cm]at(-1,17){$a_4$};
		\node[yshift=0cm]at(-1,14){$a$};\node[yshift=0cm]at(-1,13){$a'$};

		\foreach \x in {0.5,4.5,8,12,21}{\draw[cut-style,yshift=0cm](\x,12)--(\x,21);}
		\draw[cut-style,dashed,yshift=0cm](20,12)--(20,21);
		\node[color=black,yshift=0cm]at(0.25,21){$\lplus$};\node[color=black,yshift=0cm]at(2.25,21){$\lminus$};\node[color=black,yshift=0cm]at(6.25,21){$\lplus$};
		\node[color=black,yshift=0cm]at(10,21){$\lminus$};
		\node[color=black,yshift=0cm]at(13.5,21){$\lplus$};
		\node[color=violet,yshift=0cm]at(17,21){$\cdots$};
		\node[color=black,yshift=0cm]at(19.5,21){$\lplus$};
		
		\node[color=violet,yshift=0cm]at(20,21.5){$v$};
		\node[color=violet,yshift=0cm]at(21,21.5){$\lnot v$};

		\node[color=black,yshift=0cm]at(21.5,21){$\lminus$};

		\fill[valuestyle,xshift=-4cm,yshift=-11cm](18.25,20)--(18.25,20.5)--(18.75,20.5)--(18.75,20)--cycle;
		
		\foreach \y in {9,8,7,6,3,2,1,0}{
			\draw[line-style,dotted](0,\y)--(1,\y);
		}
		\foreach \x in {1,5,9,13,17}{
			\foreach \y in {9,8,7,6,3,2,1,0}{
				\draw[line-style](\x,\y)--(3+\x,\y);\draw[ultra thick,dotted](3+\x,\y)--(4+\x,\y);
			}}
			\foreach \x in {1,5,9,13,17}{
				\node[color=black]at(\x+0.5,10){$\lplus$};\node[color=black]at(\x+2.5,10){$\lminus$};
			}
			
			\node[color=violet]at(2.5,11.5){$v$};
			\node[color=violet]at(2,10.5){\small T};\node[color=violet]at(3,10.5){\small F};
			
			\node[color=violet]at(6.5,11.5){$v'$};
			\node[color=violet]at(6,10.5){\small T};\node[color=violet]at(7,10.5){\small F};
			
			
			\node[color=violet]at(10.5,11.5){$v\land v'$};
			\node[color=violet]at(10,10.5){\small T};\node[color=violet]at(11,10.5){\small F};
			
			\node[color=violet]at(14.5,11.5){$\lnot(v\land v')$};
			\node[color=violet]at(14,10.5){\small T};\node[color=violet]at(15,10.5){\small F};

			\node[yshift=0cm]at(-1,9){$a_1$};\node[yshift=0cm]at(-1,8){$a_2$};
			\node[yshift=0cm]at(-1,7){$a_3$};\node[yshift=0cm]at(-1,6){$a_4$};
			\node[yshift=0cm]at(-1,3){$a$};\node[yshift=0cm]at(-1,2){$a'$};

			\foreach \x in {3,6,11,14}{\draw[cut-style,yshift=0cm](\x,-0.2)--(\x,10);}
			\foreach \x in {2,7,10,15}{\draw[cut-style,dashed,yshift=0cm](\x,10)--(\x,-0.2);}

			\end{tikzpicture}
			\caption{\small{Example showing gate simulation: $n=4$;
					agents $a$ and $a'$ have corresponding propositional variables $v$ and $v'$
					that are two inputs to a circuit.
					$v=\false$ since $a$'s sensor-value lies in a region labelled $\lminus$, similarly
					$v'=\true$ since $a'$'s sensor-value lies in a region labelled $\lplus$.
					Gate-encoding blocks have cuts (shown in red) at two possible positions corresponding to
					\true\ and \false; a dashed-line shows the alternative position ({\em not} taken)
					by the cut (itself shown as a solid line).
					c-e agent $a_1$ receives feedback based on the conjunction of 2 input bits.}}\label{fig:gate-example}}
	\end{figure}
	
	\subsection*{Unreliable circuits and the stray cut}
	
	In the main text, we mentioned that we are guaranteed that at most $n$ cuts will lie in the c-e region (Observation \ref{obs:cer}) and at least $n-1$ cuts will lie in the c-e region, as otherwise some blanket sensor agent would be active (Proposition \ref{prop:s-r-radius}). If $n-1$ cuts lie in the c-e region, this means that one of the $n$ c-e cuts has moved into the circuit encoding region; following \cite{FG17}, we will refer to that cut as a \emph{stray cut}. As we highlighted in \cite{FG17}, the presence of a stray cut may have the following two consequences.
	
	\begin{enumerate}
		\item It intersects the circuit-encoding region $R_i$ of some circuit encoder $C_i$, for $i \in \{1,\ldots,p^C\}$.
		\item It flips the parity of the circuit encoders $C_i$, with $R_i < c$, where $c$ is the position of the stray cut in $R_{i-1}$. In other words, if
		the first cut in $R_i$ was expecting to see $\lplus$ on its left-hand side, it now sees $\lminus$ and vice-versa.
	\end{enumerate}  
	The first effect is not much of a problem, we will simply assume that the corresponding circuit is unreliable. As explained in the main text, an unreliable circuit can have an arbitrary (or even adversarial) effect on the volume of \lplus and \lminus supplied as feedback to the c-e agents, but since there will be at most $1$ unreliable circuit, its effect will be ``outvoted'' by the remaining reliable circuits (circuits that do not receive reliable inputs will also be outvoted by those that do, since there are at most $n$ of them, see Observation \ref{obs:reliable}).
	
	The parity flip that the stray cut induces on other circuit-encoders however seems potentially more worrisome., since it could flip the output of the sensor agents that detect the positions of the cuts. This is taken care of by the presence of the reference sensor agent (Definition \ref{def:ref}) which achieves the disorientation of the domain. As explained in Section \ref{sec:c1}, the outputs of circuit $C_i$ are taken with reference to the value $s_{1,1}$ of the reference senso agent,
	in the sense that after simulating $C_{VT}$ we take the exclusive-or with $s_{1,1}$; this can be implemented in the circuit using the boolean gate gadgets explained above, in a manner similar to \cite{FG17} (via the use of the XOR sub-circuit, see Section 4.4.2 in \cite{FG17}). This allows us to use a similar ``double-negative lemma'' as the one we used in \cite{FG17} (see Lemma 5.4).
	
	\newpage
	
	\section{Membership in PPA}\label{sec:inppa}
	
	We show that \dhs\ belongs to \ppa, which appears to be folklore that has not,
	to out knowledge, been written down.
	(In fact, in \cite{Pap} the problem was claimed to belong to \ppad, which is now seen to be incorrect
	subject to $\ppad\not=\ppa$, by the result in \cite{ABB}).
	Since Theorem~\ref{thm:dhs} reduces from 2-thief \ns\ to \dhs,
	it follows immediately that 2-thief \ns\ belongs to \ppa.
	In Section~\ref{sec:nsinppa} we go a bit further for \ns:
	we show that \ns\ belongs to \ppa\ whenever the number of thieves is a power of 2.

	\subsection{\dhs\ is in \ppa}\label{sec:hsinppa}
	
	We use Freund and Todd's \cite{FT81} construction of an undirected graph
	with known degree-1 vertex, based on a ``special triangulation'' of a high-dimensional $L_1$ ball
	(i.e. a high-dimensional octahedron, also known as the crosspolytope~\cite{M08}).
	Given an instance $I$ of \dhs, we show how to construct a suitable triangulation.
	$I$ contains $n$ sets of points $\{S_1,\ldots,S_n\}$ in $n$-dimensional space;
	we assume coordinates are represented
	as fractions whose numerators and denominators are give via standard binary expansions.
	(Recall that we leave it as an open problem whether \dhs\ remains
	\ppa-hard for points presented in unary. Of course, the ``in \ppa'' result
	follows immediately for that restricted version.)
	
	Based on $I$ we identify an exponentially-large collection of ``candidate hyperplanes''
	as follows, which contains a solution to $I$.
	A candidate hyperplane $H$ is represented by a gradient vector $g_H$ whose
	coordinates are assumed to be normalised to 1 ($L_1$ norm; their absolute values
	sum to 1).
	Given $g_H$, $H$ is obtained by using binary
	search to efficiently find a hyperplane with gradient $g_H$ that bisects the union of the sets $S_i$;
	it is then easy to check whether $H$ is a solution.
	Note that there exists an integer $N$ whose binary expansion
	has length polynomial in $|I|$ such that entries
	of some solution $g_H$ can be assumed to be multiples of $1/N$.
	
	Fix a point $p\in \rset^n$ such that $p$ is not contained in any candidate solution to $I$,
	e.g. $p=(1/2N,1,\ldots,1)$.
	Given any $H$, the ``positive'' side of $H$ is the side that contains $p$.
	Assume we chose $N$ large enough so that if two hyperplanes $H$ and $H'$ have
	gradients $g_{H}$ and $g_{H'}$ whose coordinates differ by at most $1/N$,
	then for any point $x$ in $I$ where they disagree, $x$ lies in $H$ or $H'$
	(and does not lie strictly on the positive side of one and the negative side of the other).
	
	For any $H$, label it as follows. Find the set $S_i$ that is most unevenly split by $H$
	breaking ties lexicographically. Label $H$ with $i$ if most of $S_i$ lies on the positive side,
	otherwise label $H$ with $-i$.
	Each hyperplane $H$ has a $g_H$ that is a point on the $L_1$-norm sphere $S^{n-1}$.
	By construction, antipodal points receive opposite labels.
	We can use these as the vertices of a special triangulation of the octahedral ball $B^n$,
	in which the origin is used as an additional point and is connected to all the
	points on $S^{n-1}$ that correspond to candidate solutions.
	The graph defined in \cite{FT81} is a degree 2 undirected graph with a single
	known degree-1 vertex (the origin), and for which any other degree-1 vertex
	represents a pair of hyperplanes that bisect all the $S_i$.

	\subsection{\ns\ is in \ppa, if the number of thieves is a power of 2}\label{sec:nsinppa}
	
	The version of the problem with two thieves, 2-thief \ns, belongs to \ppa\ since we reduced it to \dhs\ which is shown in Section~\ref{sec:hsinppa} to belong to \ppa. We can extend the membership to PPA for $k$-\ns, when $k$ is a power of $2$, by using the argument of Proposition 3.2 of Alon \cite{Alon87} (here, we take both $k$ and $l$ to be $2$). In particular, we will reduce \ns\ to $4$-\ns.
	
	We start from an instance of $4$-\ns\ (with $4$ thieves) and we regard it as an instance of \ns\ (with $2$ thieves), which we solve using an algorithm for the latter problem. The solution is a sequence of intervals defined by the endpoints of the necklace and $n$ cuts, each belonging to one of the two collections (corresponding to the two thieves), such that each collection contains exactly half of the beads of each colour. Then, we set the beads that lie in intervals belonging to each collection aside and form two new instances of \ns\ (essentially by ``gluing'' the different sub-intervals of the same collection together); note that each new instance will have an even number of beads of each colour, since we initial number of beads from each colour was a multiple of $4$. Then we run the algorithm again on the resulting instances of \ns\ to obtain a partition into $4$ collections ($2$ for each individual instance), which consitutes a partition of the $4$-\ns\ into $4$ collections according to the definition of the problem. If $n$ is the number of colours, the total number of cuts is (at most) $3n$, and therefore this partition is a solution to $4$-\ns. 
	
	The above is a Turing reduction, which can be extended straightforwardly
	to the case of $k$ a power of 2.
	We can convert such a reduction into a many-one reduction by applying
	Theorem 6.1 of Buss and Johnson~\cite{BJ12}, which shows that \ppa\ and
	some related complexity classes are closed under Turing reductions.

\end{document}